\DeclareMathAlphabet{\pazocal}{OMS}{zplm}{m}{n}
\definecolor{ForestGreen}{rgb}{0.1333,0.5451,0.1333}
\definecolor{DarkRed}{rgb}{0.8,0,0}
\definecolor{Red}{rgb}{1,0,0}
\declaretheorem[numberwithin=section,refname={Theorem,Theorems},Refname={Theorem,Theorems},name={Theorem}]{theorem}
\declaretheorem[numberlike=theorem,refname={Lemma,Lemmas},Refname={Lemma,Lemmas},name={Lemma}]{lemma}
\declaretheorem[numberlike=theorem,refname={Corollary,Corollaries},Refname={Corollary,Corollaries},name={Corollary}]{corollary}
\declaretheorem[numberlike=theorem,refname={Fact,Facts},Refname={Fact,Facts},name={Fact}]{fact}
\declaretheorem[numberlike=theorem,refname={Proposition,Propositions},Refname={Proposition,Propositions},name={Proposition}]{proposition}
\declaretheorem[numberlike=theorem,refname={Definition,Definitions},Refname={Definition,Definitions},name={Definition}]{definition}
\declaretheorem[numberlike=theorem,refname={Claim,Claims},Refname={Claim,Claims}]{claim}
\def \eps {{\epsilon}}
\newcommand{\ww}{w}
\newcommand{\Property}{\mathcal{P}}
\newcommand{\poly}{\ensuremath{\mathrm{poly}}}
\def\Real{\mathbb{R}}
\newcommand\Otil{\tilde{\mathit{O}}}
\def\B{\mathbf{B}}
\def\C{\mathbf{C}}
\def\L{\mathbf{L}}
\def\WW{\mathcal{W}}
\def\poly{\operatorname{poly}}
\def\Real{\mathbb{R}}
\def\CHI{\boldsymbol{\chi}}
\def\SC{\mathbf{SC}}
\def\ER{\mathcal{R}_{\mathrm{eff}}}
\def\Pr{\mathbf{Pr}}
\newcommand\Htil{\widetilde{\mathit{H}}}
\newcommand\Ctil{\widetilde{\mathit{C}}}
\newcommand\bw{\mathbf{w}}
\newcommand\bx{\mathbf{x}}
\newcommand\blbd{\boldsymbol{\lambda}}
\newcommand\beps{\boldsymbol{\epsilon}}
\newcommand\CP{\mathcal{P}}
\newcommand\CG{\mathcal{G}}
\newcommand{\norm}[1]{\left\lVert#1\right\rVert}
\newcommand{\T}[1]{#1 ^\intercal}
\newcommand{\pinv}[1]{#1 ^\dagger}
\DeclareMathOperator{\maxflow}{max-flow}
\DeclareMathOperator{\dist}{dist}
\DeclareMathOperator{\expec}{\mathbb{E}}
\global\long\def\cutsp{\textsc{VertexSparsify}}
\begin{document}

\global\long\def\sep{\mathtt{Sep}}%
\global\long\def\P{\mathcal{P}}%
\global\long\def\D{\mathcal{D}}%

\title{Fast Dynamic Cuts, Distances and Effective Resistances via Vertex Sparsifiers}

\author{
		Li Chen\thanks{Georgia Institute of Technology, USA}
		\and
        Gramoz Goranci\thanks{
                University of Toronto, Canada} 
        \and 
        Monika Henzinger\thanks{
                University of Vienna, Austria}
        \and
        Richard Peng\thanks{Georgia Institute of Technology, USA}
        \and
        Thatchaphol Saranurak\thanks{
                Toyota Technological Institute at Chicago, USA}
}

\maketitle
\begin{abstract}

We present a general framework of designing efficient dynamic approximate algorithms for optimization on undirected graphs. In particular, we develop a technique that, given any problem that admits a certain notion of vertex sparsifiers, gives data structures that maintain approximate solutions in sub-linear update and query time. We illustrate the applicability of our paradigm to the following problems.

(1) A fully-dynamic algorithm that approximates all-pair
maximum-flows/minimum-cuts up to a nearly logarithmic factor in $\tilde{O}(n^{2/3})$ amortized time against an oblivious adversary, and $\tilde{O}(m^{3/4})$ time against an adaptive adversary.

(2) An incremental data structure that maintains $O(1)$-approximate shortest path in $n^{o(1)}$ time per operation, as well as fully dynamic approximate all-pair shortest path and transshipment in $\tilde{O}(n^{2/3+o(1)})$ amortized time per operation.

(3) A fully-dynamic algorithm that approximates all-pair effective resistance up to an $(1+\eps)$ factor in $\tilde{O}(n^{2/3+o(1)} \epsilon^{-O(1)})$ amortized update time per operation.

The key tool behind result (1) is the dynamic maintenance of an algorithmic construction due to Madry [FOCS' 10], which partitions a graph into a collection of simpler graph structures (known as j-trees) and approximately captures the cut-flow and metric structure of the graph. The $O(1)$-approximation guarantee of (2) is by adapting the distance oracles by [Thorup-Zwick JACM `05]. Result (3) is obtained by invoking the random-walk based spectral vertex sparsifier by [Durfee et al. STOC `19] in a hierarchical manner, while carefully keeping track of the recourse among levels in the hierarchy.

\end{abstract}
\newpage

\section{Introduction}
In the study of graph algorithms,
there are long-standing gaps in the performances
of static and dynamic algorithms.
A dynamic graph algorithm is a data structure that maintains a property
of a graph that undergoes edge insertions and deletions,
with the goal of minimizing the time per update and query operation.
Due to the prevalence of large evolving graph data in practice,
dynamic graph algorithms have natural connections with
network science~\cite{BorgwardtKW06,ParanjapeBL17},
and databases~\cite{Angles08,RobinsonWE13:book}.
However, compared to the wealth of tools available for static graphs,
it has proven to be much more difficult to develop algorithms for
dynamic graphs, especially fully dynamic ones undergoing both
edge insertions and deletions.
Even maintaining connectivity undirected graphs has witnessed 35 years
of continuous progress~\cite{NanongkaiS17,Wulff-Nilsen17,NanongkaiSW17}.
The directed version, fully dynamic transitive closure, has seen even
less progress~\cite{Sankowski04,RodittyZ08,BrandNS19}, and is one of the
best reflections of the difficulties of designing dynamic graph algorithms,
especially in practice~\cite{HanaulerHS20:arxiv}.

Over the past decade,
dynamic graph algorithms and their lower bounds
have been studied extensively.
These results led to significantly improved understandings of maintaining
many basic graph properties such as connectivity, maximal matching,
shortest paths, and transitive closure.
However, for many of these results there are linear or polynomial conditional lower
bounds for maintaining them exactly~\cite{AbboudW14,HenzingerKNS15,AbboudD16,Dahlgaard16}.
This shifted the focus to maintaining \emph{approximate} solutions to these problems,
and/or restricting the update operations to only insertions
(known as the {\em incremental} setting)
or only deletions (known as the {\em decremental} setting).

While this approach has led to much recent progress on shortest
path algorithms~\cite{Sankowski05,Thorup05,Bernstein09,RodittyZ12,AbrahamCT14,Bernstein16,AbrahamCK17,Chechik18}, there has been comparatively little
development in the maintenance of flows.
Flows and their associated dual labels, cuts,
are widely used in network analysis due to their ability
to track multiple paths and more global
information~\cite{HanPK11:book,BoykovVZ01,Zhu05}.
For example, the $st$-maximum flow problem asks for the maximum
number of edge disjoint paths between a pair of vertices~\cite{GoldbergT14},
while electrical flow minimizes a congestion measure related
to the sums of squares of the flow values along edges~\cite{DoyleS84}.
This need to track multiple paths has motivated the development
of new dynamic tools that eschew the tree-like structures typically
associated with problems such as connectivity and single-source
shortest paths~\cite{Goranci19:thesis}.
Such tools were recently used to give the first sublinear time
data structures for maintaining $(1 + \epsilon)$-approximate
electrical flows / effective resistances, which raised the
optimistic possibility that all flow related problems can be maintained
with $(1 + \epsilon)$-approximation factors in subpolynomial time~\cite{DurfeeGGP19}.

Motivated by interest in better understanding these problems, in this paper we present a general framework for designing efficient dynamic approximate algorithms for graph-based optimization problems in undirected graphs.
In particular, we develop a technique that reduces these problems to
finding a data-structure notion of vertex sparsifiers.
We then utilize this framework to study dynamic graph algorithms for flows,
with focus on obtaining the best approximation ratios possible,
but with sub-linear time per update/query. We achieve the following results:
\begin{enumerate}
\item Fully dynamic all pair max-flow/min-cut, shortest path,
and transshipment: $O(\log n \log \log n)$-approx. with $\Otil(n^{2/3 + o(1)})$\footnote{The $\tilde{O}(\cdot)$ notation is used in this paper to hide poly-logarithmic factors.} amortized edge update time and query time~(Theorems~\ref{theorem:DynamicMinCut} and \ref{theorem:DynamicAPSP}) against an oblivious adversary. Our dynamic max-flow algorithm can be extended to work against an adaptive adversary while increasing the update and query time to $\tilde{O}(m^{3/4})$~(Theorem~\ref{thm:dynamicMaxFlowAdaptive}).

\item Incremental all pair shortest path: $(2r-1)^t$-approximation with $\Otil(m^{1/(t+1)}n^{t/r})$ worst-case update and query time, where $t,r \geq 1$~(Theorem~\ref{thm: IncrementalApproximateAPSP}).
\item Fully dynamic all pair effective resistance for general weighted graphs:
$(1 + \eps)$-approximation with $\Otil(n^{2/3+o(1)} \epsilon^{-O(1)})$ amortized edge update time and query time,
improving upon the previous running time of $\Otil(n^{5/6} \epsilon^{-6})$~(Theorem~\ref{theorem:DynammicER}).
\end{enumerate}
In each of these three cases, our approximation ratios obtained
match up to constants the current best known approximation ratios
of oracles versions of these problems on static graphs,
namely oblivious routings~\cite{Racke08},
distance oracles~\cite{ThorupZ05},
and static computations of effective resistances~\cite{SpielmanS08:journal}.
This focus on approximation ratio is by choice: we believe just as with
static approximation and optimization algorithms, approximation ratios
should be prioritized over running times.
However, because all current efficient construction of edge sparsifiers
that preserve flows/cuts and resistances
with constant or better ($1 + \epsilon$) approximation are randomized,
all above algorithms except (2) are randomized, and their guarantees are
only provable against oblivious adversary
(who determines the hidden sequence of updates/queries beforehand).
We believe the design of more robust variants of our results
hinge upon the development of more robust edge sparsification tools,
which are interesting questions on their own.

%

Our techniques also extend to the offline dynamic setting, where the whole sequence of updates (edge insertions and deletions) and queries is given an advance.
In other words, the algorithm needs to output information about the graphs  at various points in this given update sequence.
Specifically, we show that for graph properties that admit efficient constructions of \emph{static} vertex sparsifiers, there are offline fully dynamic approximation algorithms with sub-linear average update and query time. We achieve the following results:

\begin{enumerate}
\item Offline fully dynamic all pair max-flow/min-cut: $O(\log^{4t} n)$-approximation with $\tilde{O}(m^{1/t+1})$ average update and query time, where $t \geq 1$~(Theorem~\ref{thm: offlineMaxFlow}).
\item Offline fully dynamic all shortest-path: $(2r-1)^{t}$-approximation with $\tilde{O}(m^{1/t+1}n^{2/r})$ average update and query time, where $t, r \geq 1$~(Theorem~\ref{thm: offlineShortestPaths}).
\end{enumerate}

Although the offline setting is a weaker than the standard dynamic
setting, it is interesting for two reasons. First, offline algorithms are used to obtain fast static algorithms (e.g. \cite{BringmannKN19,LPYZ18}).
Second, many conditional lower bounds (e.g. \cite{AbboudW14,AbboudD16,Dahlgaard16})
for the standard dynamic setting also hold for the offline dynamic setting.
Thus, giving an efficient algorithm for the offline dynamic setting
shows that no such conditional lower bound is possible. Moreover, for 
certain applications (e.g. computing ``sensitivity information'' for cretain graph properties) the sequence of updates is also known beforehand.



\subsection{Related work}

\paragraph{Previous results on dynamic flow/cuts}
Despite the fact that all pair max-flow/min-cut is one of the cornerstone problems combinatorial optimization and has been extensively studied in the static setting,
there are essentially no fast algorithms in the dynamic setting.
Using previous techniques, it is possible to get dynamic algorithms with $\tilde{O}(1)$
worst-case update time and $\tilde{O}(n)$ query time under the assumption 
that the adversary is oblivious.\footnote{We maintain a dynamic cut-sparsifier (against oblivious adversary)
of size $\tilde{O}(n)$ due to \cite{AbrahamDKKP16} with $\tilde{O}(1)$
update time, and when given a query, we execute the fastest static approximation algorithms on the sparsifier in $\tilde{O}(n)$ time (using, for example,
\cite{Peng16} for $(1+\eps)$-approximate max flow, \cite{Sherman17}
for $(1+\eps)$-approximate multi-commodity concurrent flow, and \cite{Sherman09}
for $O(\sqrt{\log n})$-approximate sparsest cuts).} To the best of our knowledge, there is no previous algorithm with both $o(n)$
update and query time, even when we are content with only amortized guarantees. 

Perhaps the closest work to this paper is the dynamic
algorithm due to \cite{CheungKL13} for explicitly maintaining the values
of all-pairs min-cuts in $\tilde{O}(m^{2})$ update time. For $s$-$t$ max flow where $s$ and $t$ are fixed, there is an incremental algorithm with $O(n)$ amortized update time \cite{GuptaK18}. If we restrict to bipartite graphs with a certain specific structure, there is a $(1+\eps)$-approximation fully dynamic algorithm \cite{AbrahamDKKP16} with polylogarithmic worst-case update time. From the lower bound perspective, Dahlgaard \cite{Dahlgaard16} showed
a conditional lower bound of $\Omega(n^{1-o(1)})$ on the amortized update
time for maintaining exact incremental $s$-$t$ max flow in \emph{weighted
undirected} graphs. This shows that approximation is necessary to
achieve sublinear running times. 

The \emph{global} minimum cut problem has been much better understood from the perspective of dynamic graphs. This is closely related to a similar phenomenon in the static setting, where in contrast to the $s-t$ min-cut problem, its global counterpart admits arguably simpler and easier algorithms. The best-known fully-dynamic algorithm is due to Thorup~\cite{Thorup07}, who maintains a $(1+o(1))$-approximation to the value of global minimum cut using $\tilde{O}(\sqrt{n})$ update and query time. When the graph undergoing updates remains planar, Lacki and Sankowski~\cite{LackiS11} showed an exact fully-dynamic algorithm with $\tilde{O}(n^{5/6})$ update and query time. Recently, Goranci, Henzinger and Thorup~\cite{GoranciHT18} designed an exact incremental algorithm with $O(\log^3 n \log \log^2 n)$ update time and $O(1)$ query time. 

\paragraph{Previous graph sparsification in the dynamic setting.}

Many previous works in dynamic graph algorithms are based on \emph{edge sparsification}. This usually allows algorithms to assume that an underlying dynamic graphs is always sparse and hence speed up the running time. To the best of our knowledge, the first paper that applies edge sparsification in the dynamic setting is by Eppstein et al. \cite{EppsteinGIN97}. This work has proven useful for several fundamental problems including dynamic minimum spanning forest and different variants of edge/vertex connectivity~(e.g. \cite{Thorup07,NanongkaiS17,Wulff-Nilsen17,NanongkaiSW17}). Edge sparsification has been also a key technique in dynamic shortest paths problems (e.g. \cite{BernsteinR11} maintains distances on top of spanners, \cite{BernsteinC16,BernsteinC17} replace “dense parts” of graphs with sparser graphs). Recently, there are works that study edge sparsification for matching problems~ \cite{BernsteinS15,BernsteinS16,Solomon18}. In fact, the core component of several dynamic matching algorithms is only to maintain such sparsifiers \cite{BernsteinS15,BernsteinS16}.

There are also previous developments in dynamic graph algorithms based on \emph{vertex sparsification} which allow algorithms to work on graphs with smaller number of vertices. This usually offers a more significant speed up than edge sparsification. Earlier works \cite{EppsteinGIS96,EppsteinGIS98,FakcharoenpholR01} that utilize vertex sparsification in the dynamic setting are restricted to planar graphs and exploit the fact that this class of graphs admit small separators. Similar techniques are used and generalized in \cite{GoranciHP17a,GoranciHP18} but none of these works extend to general graphs. Several previous \emph{offline} dynamic algorithms exploit vertex sparsification for maintaining minimum spanning forests~\cite{Eppstein91}, small edge/vertex connectivity~\cite{PengSS19}, and effective resistance~\cite{LiPYZ19}.

\paragraph{Recent Results on Dynamic Vertex Sparsification} Very recently, Goranci et 
al.~\cite{GoranciRST20} give a fully dynamic algorithm for 
maintaining a tree flow sparsifier based on a new notion of expander 
decomposition. One of their applications is a fully dynamic algorithm for $s$-$t$ 
maximum flow and minimum cuts. Their algorithm is deterministic, has $n^{o(1)}$ 
worst-case update time and $O(\log^{1/6}n)$ query time, but their approximation ratio 
is $2^{O(\log^{5/6} n)} = n^{o(1)}$. Our algorithms from
Theorems~\ref{theorem:DynamicMinCut}~and~\ref{thm:dynamicMaxFlowAdaptive}
guarantees a much better  approximation ratio of $O(\log n (\log\log n)^{O(1)})$.
However, our update and query times are slower, and are randomized.

Concurrent to our result there have also been several recent developments
on utilizing vertex sparsifiers to maintain $c$-edge connectivity for small values
of $c$~\cite{PengSS19,ChalermsookDLV19:arxiv,LiuPS19:arxiv,JinS20:arxiv}. 

\subsection{Technical Overview}
We start by discussing an incremental version of our meta theorem, which is key to our incremental all pair shortest path algorithm. The main algorithmic tool behind our construction is a data-structure version of the well-studied notion of \emph{vertex sparsifier}~\cite{Moitra09,leighton,charikar,juliasteiner,mm10,EnglertGKRTT14}, which we refer to as \emph{incremental vertex sparsifier}. To better convey our intuition, we start with a slightly weaker definition of such a sparsifier, which already leads to non-trivial guarantees. We then discuss the generalization and its implications. 

Let $G=(V,E)$ be an $n$-vertex graph and, for each $u,v\in V$, let $\Property(u,v,G)$
denote a \emph{property }between $u$ and $v$ in $G$\footnote{Our approach also works for graph properties with a number of parameters that is different from 2.}. For example,
$\Property(u,v,G)$ could be the distance between $u$ and $v$ in $G$. Let
$T\subseteq V$ be a set of nodes called \emph{terminals}. Given a parameter $\alpha$, an $\alpha$-\emph{vertex sparsifier}
of $G$ w.r.t. $T$ is a graph $H=(V',E')$ such that 1) $V' \supseteq T, |V'|\approx |T|$
and 2) $\Property(u,v,H) \approx_{\alpha} \Property(u,v,G)$ for all $u,v\in T$. That is,
$H$ has size close to $T$ but still approximately preserves
the property $\Property$ between all terminal nodes up to a factor of $\alpha$. 

Given a graph $G=(V,E)$ and terminals $T \subseteq V(G)$, an $\alpha$-\emph{incremental vertex sparsifier}~(IVS) of $G$ is a data structure that maintains an $\alpha$-vertex sparsifier $H_T$ and supports the following operations: 
\begin{itemize} 
	\item $\textsc{Preprocess}(G,\alpha)$: preprocess the graph $G$,
\item \textsc{AddTerminal}$(u)$: let $T' \gets T \cup \{u\}$ and update $H_{T'}$ to an $\alpha$-vertex sparsifier of $G$ w.r.t. $T'$. 
\end{itemize}

An \emph{efficient} $(\alpha,f(n),g(n))$-IVS of $G$ is an $\alpha$-IVS of $G$ that supports the preprocessing and terminal addition operations in $O(|E|f(n))$ and $O(g(n))$ time, respectively. 

The advantage of such an efficient sparsifier is that it immediately leads to a simple two-level incremental algorithm. Concretely, given an initial graph $G=(V,E)$ and an approximation parameter $\alpha \geq 1$, assume we want to design an incremental algorithm that maintains some property $\mathcal{P}(s,t,G)$
that can be computed in time $O(|E|h(n))$ on a static graph $G=(V,E)$.
To achieve this, our data-structure maintains
(1) an efficient $(\alpha, f(n), g(n))$-IVS of $G$ and 
(2) a set of terminals $T$, which is initially set to empty.
We initialize our data-structure using the \textsc{Preprocess}$(G,\alpha)$ operation of the efficient IVS and rebuild from scratch every $\beta$ operations, for some parameter $\beta \geq 0$. Note that after a rebuild, $H_T$ is empty. We next describe the implementation of insertions and queries. 
Upon insertion of a new edge $e=(u,v)$ in $G$, we invoke \textsc{AddTerminal}$(u)$ and \textsc{AddTerminal}$(v)$, and add $e$ to $H_T$. For answering $(s,t)$ queries, we invoke \textsc{AddTerminal}$(s)$ and \textsc{AddTerminal}$(t)$, and run a static algorithm on $H_T$ that computes property $\mathcal{P}(s,t, H_T)$ and return the result as an answer.

As $H_T$ is an $\alpha$-vertex sparsifier of the current graph $G$ and $T \supseteq \{s,t\}$ by construction, we have that $\mathcal{P}(s,t, H_T)$ approximates property $\mathcal{P}(s,t,G)$ up to an $\alpha$ factor. The update time consists of (1) the cost for rebuilding every $\beta$ operations, which is $O(mf(n)/\beta)$ and (2) the cost for adding endpoints of $\beta$ edges 
as terminals, which is $O(\beta g(n))$. By construction, $|T| = O(\beta)$ at any time, resulting in a size of $O(\beta g(n))$ for $H_T$ (since we start with an empty $H_T$ after a rebuild). As the static algorithm
on $H_T$ takes time $O(|H_T| h(n))$, the query time is bounded by $O(\beta g(n) h(n))$.

Combining the above bounds on the update and query time, we obtain the following expression
\[
O \left(\left(\frac{m}{\beta}\right)f(n) + \beta g(n)h(n)\right)
\] 
which bounds the amortized update time and worst-case query time.

A challenge we face to design a multi-level \emph{incremental} algorithm using the above approach is the large flexibility allowed in the \textsc{AddTerminal} operation. More concretely, given a sparsifier $H_T$ w.r.t. $T$, whenever we add a new terminal to $T$ and update $H_T$ to $H_{T'}$, the implementation of the operation could potentially both delete and insert vertices/edges to $H_T$. Ideally we would like that the $H_{T'}$ is constructed by \emph{only} adding new vertices/edges to $H_T$, which in turn would allow us to keep the incremental nature of the problem. To address this, we  modify the operation of adding terminals in the definition of $\alpha$-IVS as follows:
\begin{itemize}
	\item \textsc{AddTerminal}$(u)$: let $T'$ be $T \cup \{u\}$ and update $H_T$ to $H_{T'}$ such that
	\begin{itemize}
		\item $H_{T'}$ is an $\alpha$-vertex sparsifier of $G$ w.r.t. $T'$.
		\item $H_{T} \subseteq H_{T'}$.
	\end{itemize}
\end{itemize}

An important measure related to this new definition is the  notion of \emph{recourse}, which is the number of changes performed to the old sparsifier $H_T$, i.e., $|H_{T'} \setminus H_{T}|$. While it is straightforward to bound the recourse by the time needed to support the addition of terminals, there are scenarios where recourse can be much smaller. Equipped with the new definition of incremental vertex sparsifier and the notion of recourse, we immediately get a multi-level hierarchy for designing incremental algorithms, which is formally stated in Theorem~\ref{thm: metaTheorem} of Section~\ref{sec: incrementalMeta}.

We demonstrate the applicability our meta theorem to the incremental all-pair shortest path problem by showing that an efficient IVS can be constructed using a deterministic variant of the distance oracle due to Thorup and Zwick~\cite{ThorupZ05}. At a high level, the oracle preprocessing works as follows: (1) it constructs a hierarchy of centers, (2) for each vertex it finds the closest center at every level of the hierarchy and (3) for each vertex $u \in V$, it defines the notion of \emph{bunch} $B(u)$, which is the union of all centers of $u$ that we found in (2). Our key observation is that this construction leads to an efficient IVS with bounded recourse: (1) we preprocess the graph using the oracle preprocessing, and (2) implement the terminal addition of a vertex $u$ by simply adding its bunch $B(u)$ to the current vertex sparsifier maintained by the data structure. This construction leads to an efficient $(\tilde{O}(n^{1/r}), \tilde{O}(n^{1/r}),2r-1)$-IVS for $G$. The correctness of our data-structure heavily relies on the fact $\cup_{u \in T} B(u)$ is an $\alpha$-vertex (distance) sparsifier of $G$ w.r.t. $T$.

To extend our meta theorem to fully-dynamic graphs, we simply go back to the old implementation of the \textsc{AddTerminal} operation (as now we can support insertions/deletions of edges) and then augment our data-structure with the operation \textsc{Delete}$(e)$, which allows deleting $e$ from the underlying graph $G$ and updates the maintained vertex sparsifier with respect to this deletion. Similarly to the above, bounding the recourse and performing recursive invocations of the vertex sparsifier data-structure leads to a meta theorem for fully dynamic algorithms, which is formally stated in Theorem~\ref{metaThM: P1} of Section~\ref{sec: fullyDynamicMetaTHm}.

Our fully-dynamic results on flow/cuts, distances and electrical flow/effective resistances, which are based on our generic meta-theorem, are obtained by adapting:

\begin{enumerate}
	\item the $j$-tree decomposition ofgraphs by Madry~\cite{Madry10},
	which is in-turn based on the oblivious routing scheme by Racke~\cite{Racke08},
	\item the random-walk interpretation of electrical flow preserving
	vertex sparsifiers (also known as Schur complements) that are
	also at the core of~\cite{DurfeeGGP19}.
\end{enumerate}

In each of these cases, the approximation ratios obtained by our
data structures match the current best bounds of static variants
of these problems, which are themselves well-studied.
Specifically, our approximation ratios for these three problems are
identical to those of sublinear time query oracles for answering
(multi source/sink) min-cut queries\footnote{The Gomory-Hu tree
	on the other hand provides a much more efficient query oracle
	for answering $s$-$t$ min-cut queries, but we are not aware of
	generalizations of it to small sets of source/sink vertices},
distances, and effective resistances in static graphs.
As a result, we believe our results represent natural starting points
for more efficient versions of these data structures, and hope
they will motivate further work on the static query versions
of flows/cuts and shortest paths.

\section{Incremental Algorithms via Incremental Vertex Sparsifiers}
\label{sec: incrementalMeta}

Let $G=(V,E)$ be a graph. Let $\P$ be a property of graphs. Properties can be either (i) \emph{global}, if $\P$ is meant for the entire graph or  (ii) \emph{local}, if $\P$ is defined with respect to a particular pair of vertices $(u,v)$ in the graph. For example, $\P(u,v)$ is the distance between $u$ and $v$ in $G$ or the size of the $u-v$ minimum cut. To simplify presentation, we assume throughout the rest of this paper that $\P$ takes three parameters, two vertices, followed by the graph. Let $T \subseteq V$ be a set of nodes called \emph{terminals}. An $\alpha-$\emph{vertex sparsifier} of $G$ for $\P$ w.r.t. $T$ is a graph $H=(V',E')$ such that i) $V' \supseteq T, |V'| \approx |T|$,  and ii) $\P(u,v,H) \approx_{\alpha} \P(u,v,G)$ for all $u,v \in T$. That is, $H$ is close to $T$ and at the same time approximately preserves the property $\P$ between pairs of terminal nodes up to a factor of $\alpha$.

The notion of \emph{recourse} is used for measuring the number of changes in the target graph maintained by a data structure.

\begin{definition}[Incremental Vertex Sparsifer]  \label{def: IVS}
Let $G=(V,E)$ be a graph, $T \subseteq V$ be the set of terminals, and $\alpha$ be an non-negative parameter. A data-structure $\D$ is an \emph{$\alpha-$Incremental Vertex Sparsifier} (abbr. \emph{$\alpha-$IVS}) of $G$ if $\D$ explicitly maintains an $\alpha-$vertex sparsifier $H_T$ of $G$ with respect to $T$ and supports following operations: 
\begin{enumerate}
\setlength{\itemsep}{0em}
\item \textsc{Preprocess}$(G, T)$: preprocess $G$ in time $t_p$
\item \textsc{AddTerminal}$(u)$: let $T'$ be $T \cup \{u\}$ and update $H_T$ to $H_{T'}$ in time $t_u$ such that i) $H_{T'}$ is an $\alpha-$vertex sparsifier of $G$ with respect to $T'$, ii) $H_T \subseteq H_{T'}$ and iii) \emph{recourse}, number of edge changes from $H_T$ to $H_{T'}$, is at most $r_u$.
This operation should return the set of edges $H_{T'} \setminus H_T$.
\end{enumerate}
Also, the size of the vertex sparsifier, $H_T$, should be $O(|T|r_u)$.
\end{definition} 

The simplest example of such \emph{Incremental Vertex Sparsifer} can be made as follows:
$\D$ maintains $G$ as the vertex sparsifier.
Such $\D$ is a $1$-IVS of $G$ with preprocessing time $t_p$ update time $t_u$ and recourse $r_u$ being $O(|E(G)|)$.

Given a $\alpha-$IVS $\D$ of $G$, we can support edge insertion by first adding 2 endpoints to the terminal set and then add the edge directly to the vertex sparsifier $H_T$.
The correctness comes from the decomposability.
To specify the cost, the following corollary is presented to formalized the approach.
\begin{lemma}
\label{lem:IVS1levelEdgeInsertion}
Let $G=(V,E)$ be a graph and $\D$ be an $\alpha-$IVS of $G$.
$\D$ also maintains an $\alpha-$vertex sparsifier $H_T$ of $G$ with respect to $T$ subject to the following operation:
\begin{itemize}
\setlength{\itemsep}{0em}
\item \textsc{Insert}$(e=uv)$: insert edge $e$ to $G$ and update $T$ and $H_T$. It is done in $O(t_u)$-time with $O(r_u)$-recourse.
\end{itemize}
\end{lemma}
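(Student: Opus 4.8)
The plan is to implement \textsc{Insert}$(e = uv)$ by reducing it to two invocations of \textsc{AddTerminal} followed by a direct edge insertion into the maintained sparsifier. Concretely, on \textsc{Insert}$(e=uv)$, first call \textsc{AddTerminal}$(u)$ and then \textsc{AddTerminal}$(v)$ on $\D$; after these two calls the terminal set is $T'' = T \cup \{u,v\}$ and $\D$ maintains an $\alpha$-vertex sparsifier $H_{T''}$ of the \emph{old} graph $G$ with respect to $T''$, with $H_T \subseteq H_{T'} \subseteq H_{T''}$. Then add the edge $e$ directly to $H_{T''}$, obtaining a graph $H := H_{T''} \cup \{e\}$, and record the new underlying graph as $G' := G \cup \{e\}$.

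The correctness argument is the ``decomposability'' observation. Since $u,v$ are now terminals of $G$, $H_{T''}$ preserves $\P(\cdot,\cdot,G)$ up to $\alpha$ between all pairs in $T''$. I would argue that adding the same edge $e=uv$ to both $G$ and $H_{T''}$ preserves the approximation: for any local property $\P$ of interest in this paper (distances, min-cut values, effective resistances), the value $\P(x,y,\cdot)$ of a graph plus a new edge between two of its (terminal) vertices is a monotone/continuous function of the edge weight and of the pairwise values among $\{x,y,u,v\}$, so an $\alpha$-approximation of $G$ on $T''$ together with $e$ added is an $\alpha$-approximation of $G'=G+e$ on $T''\supseteq T'$. (If one wants to avoid invoking specific properties, this is exactly the ``decomposability'' hypothesis being assumed.) Hence $H$ is an $\alpha$-vertex sparsifier of $G'$ with respect to the updated terminal set $T' = T \cup \{u,v\}$, and $H_{T''} \subseteq H$ is immediate, so the incremental (subset-monotone) structure is maintained.

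For the cost bound: the two \textsc{AddTerminal} calls each take $O(t_u)$ time and incur $O(r_u)$ recourse by Definition~\ref{def: IVS}, and each returns the set of newly added edges; inserting $e$ into $H_{T''}$ is $O(1)$ additional time and $1$ additional edge change. Summing, \textsc{Insert} runs in $O(t_u)$ time with $O(r_u)$ recourse, and the set of edge changes to report is the union of the two returned edge sets together with $\{e\}$. The size bound $|H| = O(|T'| r_u)$ follows from the size guarantee in Definition~\ref{def: IVS} applied after the second \textsc{AddTerminal} call (the terminal set grew by at most $2$, and we added one edge).

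The main obstacle I expect is pinning down precisely what ``the correctness comes from the decomposability'' means for a general property $\P$: one must verify that ``$\alpha$-approximate on $G$ w.r.t. $T''$'' plus ``add the edge $uv$ to both graphs'' yields ``$\alpha$-approximate on $G+uv$ w.r.t. $T''$''. This is transparent for distances (shortest paths through the new edge only use $u,v$ and terminal-to-terminal subpaths) and for cuts/flows and effective resistances (series/parallel-type composition with a single added edge between two distinguished vertices), but stating it at the right level of generality — i.e., which closure property of $\P$ under single-edge addition between two terminals is being assumed — is the delicate bookkeeping point. Everything else is routine: the time and recourse just add, and the subset relation $H_T \subseteq H_{T'} \subseteq H_{T''} \subseteq H$ is inherited chain-wise from the two \textsc{AddTerminal} calls.
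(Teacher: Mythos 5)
Your proposal matches the paper's proof essentially verbatim: the paper's Algorithm for \textsc{Insert}$(e=uv)$ is exactly the two \textsc{AddTerminal} calls followed by adding $e$ to $H_T$ and returning $E_u \cup E_v \cup \{e\}$, with correctness attributed to decomposability and the cost bound $2t_u + O(1)$. Your additional discussion of what decomposability means for specific properties goes beyond the paper (which simply assumes it), but the approach and bounds are the same.
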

\begin{proof}

\begin{algorithm2e}
\label{algo:IVS1levelEdgeInsertion}
\caption{$\D.\textsc{Insert}(e=uv)$}
    $E_u \gets \D.\textsc{AddTerminal}(u)$. \\
    $E_v \gets \D.\textsc{AddTerminal}(v)$. \\
    Add edge $e$ to $H_T$. \\
    \Return $E_u \cup E_v \cup \{e\}$. \\
\end{algorithm2e}

The algorithm for edge insertion is presented as Algorithm~\ref{algo:IVS1levelEdgeInsertion}.
The correctness comes from the decomposability of the desired graph property.
The time bound is $2t_u + O(1)$ for 2 $\textsc{AddTerminal}$ operations and edge insertion in a graph.
\end{proof}

Since our ultimate goal is to design incremental algorithms that support updates and queries in sub-linear time, we will focus on building incremental vertex sparsifiers whose update time and recourse are sub-linear in $n$. This requirement is made precise in the following definition.

\begin{definition}[Efficient IVS] \label{def: efficientIVS}
Let $G=(V,E)$ be a graph, $\alpha$ be an non-negative parameter, and $f(n)$, $g(n)$, $r(n) \geq 1$ be functions. We say that $\D$ is an $(\alpha,f(n),g(n),r(n))-$\emph{efficient IVS} of $G$ if $\D$ is an $\alpha-$\emph{IVS} of $G$ with \emph{preprocessing} time $t_p=O(m \cdot f(n))$, \emph{addTerminal} time $t_u = O(g(n))$, and recourse $r_u = O(r(n))$.
\end{definition}


\label{sec: incrementalVertexSparsifier}
Next we show how to use efficient incremental vertex sparsifier to design online (approximate) incremental algorithms for problems with certain properties while achieving fast amortized update and query time.

\begin{theorem} \label{thm: metaTheorem}
Let $G=(V,E)$ be a graph, and for any $u,v \in V$, let $\mathcal{P}(u,v,G)$ be a solution to a minimization problem between $u$ and $v$ in $G$. Let $f(n),g(n),r(n),h(n) \geq 1$ be functions, $\alpha\geq 1$ be the approximation factor, $\ell\geq 1$ be the depth of the data structure, and let $\gamma,\mu_0,\mu_1,\ldots, \mu_{\ell}$ with $\mu_0 = m$ be parameters associated with the running time. Assume the following properties are satisfied
\begin{enumerate}
\setlength{\itemsep}{0em}
\label{metaThM: P1}\item $G$ admits an $(\alpha, f(n),g(n),r(n))-$\emph{efficient IVS}
 \item The property $\mathcal{P}(u,v,G)$ can be computed in $O(m h(n))$ time in a static graph with $m$ edges and $n$ vertices.
\end{enumerate}
Then there is an incremental (approximate) dynamic algorithm that maintains for every pair of nodes $u$ and $v$, an estimate $\delta(u,v)$, such that
\begin{equation} \label{eq: approxMeta}
        \mathcal{P}(u,v,G) \leq \delta(u,v) \leq \alpha^{\ell} \cdot \mathcal{P}(u,v,G),
\end{equation}
with worst-case update time of
\[
        T_u = O\left(\sum_{i=1}^{\ell}\left(\frac{\sum_{j=i}^{\ell}\mu_{j-1}f(\mu_{j-1})}{\mu_i} + g(\mu_{i-1})\right)\prod_{j=0}^{i-1}r(\mu_j)c^{i-1}\right),
\]
and worst-case query time of
\[
        T_q = O\left(\ell T_u + \mu_\ell r(\mu_\ell) h(\mu_\ell) \right),
\]
where $c < 3$ is an universal constant.
\end{theorem}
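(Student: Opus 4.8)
The plan is to build an $\ell$-level hierarchy of efficient incremental vertex sparsifiers, where level $0$ is the input graph $G$ (with $\mu_0 = m$ edges) and level $i$ is obtained by running the $(\alpha, f, g, r)$-efficient IVS on the vertex sparsifier maintained at level $i-1$, rebuilt from scratch every $\mu_i$ operations. Concretely, at level $i$ we keep a data structure $\D_i$ which is an $(\alpha, f, g, r)$-efficient IVS of the current level-$(i-1)$ sparsifier $H^{(i-1)}$, together with a terminal set $T_i$ that starts empty after each rebuild. An edge insertion at level $0$ triggers, via Lemma~\ref{lem:IVS1levelEdgeInsertion}, an \textsc{Insert} into $\D_1$; that insertion changes $H^{(1)}$ by $O(r(\mu_0))$ edges (the recourse bound), and each of those changes is then fed as an \textsc{Insert} into $\D_2$, and so on down the hierarchy. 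A query $\mathcal{P}(s,t,G)$ is answered by adding $s,t$ as terminals at every level (again cascading the recourse), then running the static $O(|H^{(\ell)}| h(n))$ algorithm on the bottom-level sparsifier $H^{(\ell)}$.

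For correctness, I would argue by induction on $i$ that $H^{(i)}$ is an $\alpha^{i}$-vertex sparsifier of $G$ with respect to $\bigcup_{j \le i} T_j \supseteq T_i$: each level loses a factor $\alpha$ by the definition of $\alpha$-vertex sparsifier, and the estimate $\delta(u,v)$ returned is $\mathcal{P}(u,v,H^{(\ell)})$, which lies between $\mathcal{P}(u,v,G)$ and $\alpha^{\ell}\mathcal{P}(u,v,G)$ since $\mathcal{P}$ is a minimization problem — here I use property (2) only to compute $\mathcal{P}(\cdot,\cdot,H^{(\ell)})$ statically, and the containment $H_T \subseteq H_{T'}$ from Definition~\ref{def: IVS} to keep the incremental structure intact so that each $\D_i$ only ever sees edge insertions. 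I also need the size invariant $|H^{(i)}| = O(|T_i| \cdot r(\mu_{i-1})) = O(\mu_i \, r(\mu_{i-1}))$, which follows from the size clause of Definition~\ref{def: IVS} together with $|T_i| = O(\mu_i)$ (since we rebuild every $\mu_i$ steps and each operation adds $O(1)$ explicit terminals plus cascaded ones, the latter absorbed into the recourse accounting).

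The heart of the argument is the running-time bookkeeping, which is where I expect the main obstacle. The key quantity is the number of operations an update at the top triggers at level $i$: each level-$(i-1)$ operation spawns $O(r(\mu_{i-1}))$ level-$i$ operations, but a single top-level update also causes $O(\ell) = O(1)$-ish branching per \textsc{AddTerminal} (two terminals, each cascading), which is where the constant $c < 3$ enters — I would track that an update generates at most $c^{i-1}\prod_{j=0}^{i-1} r(\mu_j)$ operations at level $i$. Each level-$i$ operation costs: an amortized rebuild charge of $\mu_{i-1} f(\mu_{i-1}) / \mu_i$ (the preprocessing cost $O(|H^{(i-1)}| f(\mu_{i-1}))$ of $\D_i$ spread over the $\mu_i$ operations between rebuilds), plus $O(g(\mu_{i-1}))$ per \textsc{AddTerminal}; summing the rebuild charges down the remaining levels $j \ge i$ gives the $\sum_{j=i}^{\ell} \mu_{j-1} f(\mu_{j-1})/\mu_i$ term. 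Multiplying the per-level cost by the per-level operation count and summing over $i$ yields $T_u$. For the query, the same cascade of terminal additions costs $O(\ell T_u)$, and the final static computation on $H^{(\ell)}$ costs $O(|H^{(\ell)}| h(\mu_\ell)) = O(\mu_\ell \, r(\mu_\ell)\, h(\mu_\ell))$, giving $T_q$. The delicate points will be getting the rebuild schedule to telescope cleanly (a rebuild at level $i$ must also rebuild all deeper levels $j > i$, which is why the inner sum starts at $j=i$), confirming the branching constant is genuinely below $3$, and making sure amortized-versus-worst-case is handled correctly — rebuilds make the bound amortized in the update but the per-operation query work stays worst-case.
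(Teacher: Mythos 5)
Your proposal follows essentially the same route as the paper's proof: the same $\ell$-level hierarchy of efficient IVSs each built on the previous level's sparsifier, the same rebuild-every-$\mu_i$-operations schedule (with deeper levels rebuilt together so the rebuild charges telescope into the $\sum_{j=i}^{\ell}\mu_{j-1}f(\mu_{j-1})/\mu_i$ term), the same recourse-cascading count of $c^{i-1}\prod_j r(\mu_j)$ operations reaching level $i$, correctness by transitivity of $\alpha$-vertex sparsification, and queries answered by adding $s,t$ as terminals at every level before running the static algorithm on the bottom sparsifier. The points you flag as delicate (the branching constant, amortized rebuilds de-amortized by background copies) are exactly how the paper handles them, so the proposal is correct and matches the paper's argument.
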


We declare this section to prove the Theorem~\ref{thm: metaTheorem}

\paragraph*{Data Structure.}

Consider some integer parameter $\ell \geq 1$ and parameters $\mu_0 \geq \ldots \geq \mu_{\ell}$, with $\mu_0 = m$.
Our data structure maintains
\begin{enumerate}[noitemsep]
\item a hierarchy of graphs $\{G_i\}_{0 \le i \le \ell}$,
\item a hierarchy of terminal sets $\{T_i\}_{1 \le i \le \ell}$, each associated with the parameters $\{\mu_{i}\}_{1 \leq i \leq \ell}$, and
\item a hierarchy of $(\alpha, f(n), g(n), r(n))$-efficient IVSs $\{\D_i\}_{1 \le i \le \ell}$, each associated with the graph $G_{i-1}$ and the terminal set $T_i$.
\end{enumerate}

The data structure is initialized recursively.
First, initialize $T_i \gets \emptyset$ for $1 \le i \le \ell$ and $G_0 \gets G$.
For $1 \le i \le \ell$, construct an $(\alpha, f(n), g(n), r(n))$-efficient IVS, $\D_i$, of $G_{i-1}$ with respect to terminal set $T_i$ and set $G_i$ be the sparsifier maintained by $\D_i$.
That is, every $G_i$ is an $\alpha$-vertex sparsifier of $G_{i-1}$.
By the transitivity, we know $G_{\ell}$ is an $\alpha^\ell$-vertex sparsifier of $G_\ell$.
Thus, we compute the estimate $\delta(u,v)$ in $G_{\ell-1}$, which should be the smallest graph we have.

To deal with edge insertion, we add both endpoints of the new edge to the terminal set of every IVS $\D_i$.
Note that an edge insertion in $\D_i$ produces $O(r(\mu_{i-1}))$ more insertions in the next level of IVS.
Also, to bound the sparsifier size, which is related to the size of the terminal set, we have to rebuild $\D_i$ whenever $\D_i$ has handled $\mu_i$ updates since the last rebuild.
A counter $c_i$ is used for keeping track of the number of updates handled by $\D_i$.
The rebuild cost is amortized over $\mu_i$ updates.
To get worst-case cost, we use the standard reduction of creating copies of the data structure in the background.

\paragraph*{Handling Insertions.}

Consider the insertion of edge $e=uv$ in $G$, which is $G_0$ with an $\alpha$-IVS $\D_1$.
We insert $e$ to $\D_1$ via Corollary~\ref{lem:IVS1levelEdgeInsertion}.
Each edge insertion to $G_i$ handled by $\D_{i+1}$ creates $O(r(\mu_i))$ edge insertions in the resulting vertex sparsifier, $G_{i+1}$.
Thus one edge insertion in $G$ creates $O(c^i\prod_{j=0}^{i-1}r(\mu_j))$ edge insertions in $G_{i}$.

To bound the size of the terminal sets, we rebuild each $\D_i$ every $2\mu_i$ updates to the graph $G_{i-1}$ and also rebuild every $\D_j, i < j$ that gets affected.
When rebuilding some $\D_i$, we rebuild it with respect to the terminal set $T_i$ containing the latest-added $\mu_i$ vertices.

This algorithm is depicted in Algorithm~\ref{alg: IncrementalInsert}.

\paragraph*{Handling Queries.}
To answer the query for the approximate property $\mathcal{P}(s,t,G)$ between any pair of vertices $s$ and $t$ in $G$ we proceed as follows.
We first add $s$ and $t$ to every terminal set of $\{\D_i\}$.
The algorithm for adding terminal in our hierarchical data structure is presented in Algorithm~\ref{alg: IncrementalAddTerminal}.

Then we run the algorithm from Theorem~\ref{thm: metaTheorem}~Part~2 on $G_{\ell}$, which is maintained by IVS $\D_\ell$, to calculate the property $\mathcal{P}(s,t,G_{\ell})$ between $s$ and $t$ in $G_{\ell}$.
The value $\mathcal{P}(s,t,G_{\ell})$ is returned as an estimate to $\mathcal{P}(s,t,G)$.
This algorithm is depicted in Algorithm~\ref{alg: IncrementalQuery}.

\begin{algorithm2e}[t!]
\caption{\textsc{Rebuild}$(i)$}
\label{alg: IncrementalRebuild}
\For{$j \in \{i, \ldots, \ell\}$}{
    Remove vertices from $T_j$ except the newest $\mu_{j}$ vertices \\
    $\D_j.\textsc{Initialize}(G_{j-1}, T_j)$ \\
    Set $G_j$ to be the vertex sparsifier maintained by $\D_j$ \\
    Set $c_j \gets 0$ \\
}
\end{algorithm2e}

\begin{algorithm2e}[t!]
\caption{\textsc{Insert}$(e)$}
\label{alg: IncrementalInsert}
\For{$i \in \{0, 1, \ldots, \ell\}$}{
    Set $E_{i+1} \gets \phi$ \\
    \For{$f \in E_i$}{
        Perform $\D_{i+1}.\textsc{Insert}(f)$ and add the changes of $G_{i+1}$ to $E_{i+1}$ \\
        Set $c_{i+1} \gets c_{i+1} + 1$ \\
        \If{$c_{i+1} \geq 2\mu_{i+1}$}{
            \textsc{Rebuild}($i+1$) \\
            Break the loop \\
        }
    }
}
\end{algorithm2e}

\begin{algorithm2e}[t!]
\caption{\textsc{AddTerminal}$(u)$}
\label{alg: IncrementalAddTerminal}
Set $E_0 \gets \phi$ \\
\For{$i \in \{0, 1, \ldots, \ell-1\}$}{
    Set $E_{i+1} \gets \phi$ \\
    \For{$f \in E_i$}{
        Perform $\D_{i+1}.\textsc{Insert}(f)$ and add the changes of $G_{i+1}$ to $E_{i+1}$ \\
        Set $c_{i+1} \gets c_{i+1} + 1$ \\
        \If{$c_{i+1} \geq 2\mu_{i+1}$}{
            \textsc{Rebuild}($i+1$) \\
            Set $E_{i+1} \gets \phi$ \\
            Break the loop \\
        }
    }
    Perform $\D_{i+1}.\textsc{AddTerminal}(u)$ and add the changes of $G_{i+1}$ to $E_{i+1}$ \\
    Set $c_{i+1} \gets c_{i+1} + 1$ \\
    \If{$c_{i+1} \geq 2\mu_{i+1}$}{
        \textsc{Rebuild}($i+1$) \\
        Set $E_{i+1} \gets \phi$ \\
    }
}
\end{algorithm2e}

\begin{algorithm2e}[t!]
\caption{\textsc{Query}$(s,t)$}
\label{alg: IncrementalQuery}
\textsc{AddTerminal}$(s)$ \\
\textsc{AddTerminal}$(t)$ \\
Run static algorithm on $G_\ell$ to compute $\mathcal{P}(s, t, G_\ell)$ \\
\Return $\mathcal{P}(s, t, G_\ell)$ \\
\end{algorithm2e}

\paragraph*{Correctness.}
Let $G$ be the current graph throughout the execution of the algorithm.
Via induction, we know every data structure $\D_i$ in the hierarchy is an $\alpha$-IVS of $G_{i-1}$ with respect to the terminal set $T_i$.
Hence, $G_i$, maintained by $\D_i$, is an $\alpha$-vertex sparsifier of $G_{i-1}$.
Therefore via transitivity of vertex sparsifier, we know $G_{\ell}$ is an $\alpha^{\ell}$-vertex sparsifier of $G_0$, which is the current graph $G$.

Thus, we have
\[
    \mathcal{P}(s,t,G) \le \mathcal{P}(s,t,G_{\ell-1})=\delta_{\D_\ell}(s,t) \le \alpha^{\ell} \mathcal{P}(s,t,G).
\]
Therefore the approximation claim in Theorem~\ref{thm: metaTheorem} is proved.

\paragraph*{Running Time.}

We first study the update time of our data structure.
Since rebuild is incurred every $\mu_i$ operations for the data structure $H_i$,
we can charge the rebuild cost among $\mu_i$ operations.
Note that $\D_i$ is an $\alpha$-IVS of $G_{i-1}$, which is a graph on $\mu_{i-1}$-vertices.
By the size bound in Definition~\ref{def: IVS}, rebuild cost is $O(\mu_{i-1}r(\mu_{i-1})f(\mu_{i-1}))$.
Also notice all $\D_j, i < j \le \ell$ are also rebuilt, each has rebuild cost $O(\mu_{j-1}r(\mu_{j-1})f(\mu_{j-1}))$.
By amortizing the rebuild cost, we know the time $\D_i$ spent on either \textsc{AddTerminal} or \textsc{Insert} is:
\begin{align*}
    O\left(\frac{\sum_{j=i}^{\ell}\mu_{j-1}f(\mu_{j-1})}{\mu_i} + g(\mu_{i-1})\right).
\end{align*}

Since an update in $\D_i$ creates $\le c r(\mu_i)$, $c$ being some positive universal constant, updates in $G_i$, which is handled by the data structure in the next level, $\D_{i+1}$, we have to incorporate such quantity into the analysis.
Note that when rebuilding $\D_i$, all $\D_j, i < j$ are also rebuilt.
Hence no recourse is made for rebuilding.
We can now analyze the number of updates handled by $\D_i$ when 1 edge insertion happens in $G$.
By simple induction, we know there will be $\le \prod_{j=1}^{i-1}{cr(\mu_j)}$ updates in $G_{i-1}$.
Thus for the data structure in $i$-th level, $\D_i$, there will be at most $\prod_{j=1}^{i-1}{cr(\mu_j)}$ updates.
Combining these 2 quantities, we can bound the amortized update time of our data structure:
\begin{align*}
    T_u
    &=
    O\left(\sum_{i=1}^{\ell}\left(\frac{\sum_{j=i}^{\ell}\mu_{j-1}f(\mu_{j-1})}{\mu_i} + g(\mu_{i-1})\right)\prod_{j=1}^{i-1}{cr(\mu_j)}\right)\\
    &=
    O\left(\sum_{i=1}^{\ell}\left(\frac{\sum_{j=i}^{\ell}\mu_{j-1}f(\mu_{j-1})}{\mu_i} + g(\mu_{i-1})\right)\prod_{j=1}^{i-1}{r(\mu_j)}c^{i-1}\right).
\end{align*}

We next study the query time of our data-structure.
When answering a query, we add $s$ and $t$ to each layer of the terminal set.
When adding terminals to each layer of data structure $\D_i$, edge changes also propagate to lower levels.
As for the analysis of update time, we have to take the recourse into account.
The time can be bounded by $O(\ell T_u)$ where $T_u$ is the update time of our data structure.

Then we compute $\mathcal{P}(s, t, G_{\ell})$ in $G_{\ell}$, which has $\mu_\ell r(\mu_\ell)$ edges as guaranteed by the definition of IVS.
Since we have an $O(mh(n))$ algorithm for computing $\mathcal{P}(s, t)$ in an $m$-edge $n$-vertex graph, $\mathcal{P}(s, t, G_{\ell})$ can be computed in $\mu_\ell r(\mu_\ell)h(\mu_\ell)$ time.
Combining these 2 bounds, we can bound the query time by:
\begin{align*}
    T_q = O\left(\ell T_u + \mu_\ell r(\mu_\ell) h(\mu_\ell) \right).
\end{align*}

\begin{lemma} \label{lem: optimalTradeoff}
Let $\{\mu_i\}_{0 \leq i \leq \ell}$ be a family of parameters with $\mu_0 = m$.
Also, all parameters regarding efficient IVS, $f(n), g(n), r(n)$, and $h(n)$, are of order $n^{o(1)}$.
If we set \[ \mu_i = m^{1-i/(\ell+1)},~\text{ where }1 \leq i \leq \ell, \] then the update time is
\begin{equation} \label{eq: updateMinimized}
    T_u =
    O\left(\sum_{i=1}^{\ell}\left(\frac{\sum_{j=i}^{\ell}\mu_{j-1}f(\mu_{j-1})}{\mu_i} + g(\mu_{i-1})\right)\prod_{j=1}^{i-1}{r(\mu_j)}c^{i-1}\right) =
    O\left(\ell c^\ell m^{1 / (\ell+1) + o(\ell)}\right),
\end{equation}
and the query time is
\begin{equation} \label{eq: queryMinimized}
    T_q = O\left(\ell T_u + \mu_\ell r(\mu_\ell) h(\mu_\ell) \right) =
    O\left(\ell^2 c^\ell m^{1 / (\ell+1) + o(\ell)}\right).
\end{equation}
\end{lemma}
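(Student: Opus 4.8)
The plan is a direct substitution of the parameter choice $\mu_i = m^{1-i/(\ell+1)}$ into the update- and query-time bounds of \cref{thm: metaTheorem}, followed by simplification that exploits the sub-polynomial growth of $f,g,r,h$. First I would record the two relevant facts about this choice: $\mu_0 = m$, as the hypothesis requires, and the ratios $\mu_{j-1}/\mu_i = m^{(i-j+1)/(\ell+1)}$, so that $\mu_{i-1}/\mu_i = m^{1/(\ell+1)}$ for every $i$ and, for $j \ge i$, the terms $\mu_{j-1}/\mu_i$ form a geometric sequence with ratio $m^{-1/(\ell+1)} < 1$. Second, since $f(n),g(n),r(n),h(n)$ are all $n^{o(1)}$ and every argument that ever appears is some $\mu_j \le m$, each such evaluation is at most $m^{o(1)}$, and any product of at most $\ell$ of them (as in $\prod_{j=1}^{i-1} r(\mu_j)$, since $i-1 \le \ell$) is at most $m^{o(\ell)}$ in the statement's notation.

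With these two observations \eqref{eq: updateMinimized} follows. In the $i$-th term of $T_u$ the inner sum obeys
\[
\frac{\sum_{j=i}^{\ell}\mu_{j-1}f(\mu_{j-1})}{\mu_i} \;\le\; m^{o(1)} \sum_{j=i}^{\ell}\frac{\mu_{j-1}}{\mu_i} \;\le\; m^{o(1)} \cdot m^{1/(\ell+1)} \sum_{k\ge 0} m^{-k/(\ell+1)} \;=\; O\!\left(m^{1/(\ell+1)+o(1)}\right),
\]
the geometric series being summable because its ratio is bounded away from $1$ in the interesting regime $\ell+1 \le \log_2 m$ (and if $\ell$ is so large that this fails, then $m^{1/(\ell+1)} = O(1)$ and the whole factor is controlled anyway); the additive term $g(\mu_{i-1}) \le m^{o(1)}$ is dominated by it. Multiplying by $\prod_{j=1}^{i-1} r(\mu_j)\, c^{i-1} \le m^{o(\ell)} c^{\ell}$ shows each summand is $O\!\left(c^\ell m^{1/(\ell+1)+o(\ell)}\right)$, and summing the $\ell$ of them gives $T_u = O\!\left(\ell c^\ell m^{1/(\ell+1)+o(\ell)}\right)$.

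For the query time I would bound the two contributions to \eqref{eq: queryMinimized} separately. The term $\ell T_u$ is $O\!\left(\ell^2 c^\ell m^{1/(\ell+1)+o(\ell)}\right)$ by the previous paragraph. For the second term, $\mu_\ell = m^{1-\ell/(\ell+1)} = m^{1/(\ell+1)}$, so the sparsifier $G_\ell$ on which the static algorithm is run has $\mu_\ell r(\mu_\ell) = m^{1/(\ell+1)+o(1)}$ edges, whence computing $\mathcal{P}$ on it costs $\mu_\ell r(\mu_\ell) h(\mu_\ell) = m^{1/(\ell+1)+o(1)}$, which is dominated by $\ell T_u$. Hence $T_q = O(\ell T_u) = O\!\left(\ell^2 c^\ell m^{1/(\ell+1)+o(\ell)}\right)$, which is \eqref{eq: queryMinimized}.

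The computation is otherwise routine; the one place that needs a moment's care is the exponent bookkeeping — verifying that the per-level sum really collapses to a single $m^{1/(\ell+1)}$ factor (its geometric ratio $m^{-1/(\ell+1)}$ is bounded away from $1$ when $\ell = O(\log m)$, and outside that regime $m^{1/(\ell+1)}$ is itself $O(1)$), and that the up-to-$\ell$-fold products of the sub-polynomial functions $f,g,r$ are consistently charged to the $m^{o(\ell)}$ factor rather than silently contributing an $m^{\Omega(1)}$ term.
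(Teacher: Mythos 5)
Your proposal is correct and follows essentially the same route as the paper's proof: substitute $\mu_i = m^{1-i/(\ell+1)}$, bound the ratio sum $\sum_{j\ge i}\mu_{j-1}/\mu_i$ as a geometric series collapsing to $O(m^{1/(\ell+1)})$, absorb the up-to-$\ell$-fold products of $f,g,r$ into $m^{o(\ell)}$, and observe that $\mu_\ell r(\mu_\ell)h(\mu_\ell)=m^{1/(\ell+1)+o(1)}$ is dominated by $\ell T_u$. If anything you are slightly more careful than the paper, which asserts the $\le 2m^{1/(\ell+1)}$ bound without noting the regime $\ell+1\le\log_2 m$ needed for the geometric ratio to be at most $1/2$ and dismisses the query bound as ``straightforward.''
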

\begin{proof}
Plug in the choice of $\mu_i$, we have
\begin{align*}
    \frac{\sum_{j=i}^{\ell}\mu_{j-1}}{\mu_i}
    &=
    m^{i/(\ell+1) - 1}\left(\sum_{j=i}^{\ell}{1 - m^{(j-1)/(\ell+1)}}\right)
    = 
    \sum_{j=i}^{\ell}{m^{(i-j+1)/(\ell+1)}}
    \le 2m^{1/(\ell+1)}.
\end{align*}

Since $r(n), c \geq 1$ and $r(n)$ is an increasing function, $\prod_{j=1}^{i-1}{r(\mu_j)}c^{i-1} = O(r(n)^\ell c^\ell) = O(m^{o(\ell)} c^\ell)$ holds.
Combining these 2 inequalities, we can bound the update time by
\begin{align*}
    T_u &=
    O\left(\sum_{i=1}^{\ell}\left(\frac{\sum_{j=i}^{\ell}\mu_{j-1}f(\mu_{j-1})}{\mu_i} + g(\mu_{i-1})\right)\prod_{j=1}^{i-1}{r(\mu_j)}c^{i-1}\right)\\
    &=
    O\left(\sum_{i=1}^{\ell}\left(m^{1/(\ell+1)}f(n)+g(n)\right)r(n)^\ell c^\ell\right)\\
    &= 
    O\left(\ell c^\ell m^{1 / (\ell+1) + o(\ell)}\right).
\end{align*}

The bound for query time is straightforward from the definition of $\mu_\ell$.
\end{proof}

\section{Incremental All-Pairs Shortest Paths}
\label{sec:distance_inc}

In this section we show how to use Theorem~\ref{thm: metaTheorem} to design an online incremental algorithm for the approximate All-Pair Shortest-Paths problem with fast worst-case update and query time.
Concretely, we will show that the assumption (1) in Theorem~\ref{thm: metaTheorem} is satisfied with certain parameters for shortest paths.
Note that (2) follows immediately with $h(n)=\tilde{O}(1)$ by any $\tilde{O}(m)$ time single-pair shortest path algorithm.
We have the following theorem.
\begin{theorem}
\label{thm: IncrementalApproximateAPSP}
Let $G=(V,E)$ be an undirected, weighted graph and $c$ be some positive constant.
For every $\ell,r \geq 1$, there is an incremental deterministic All-Pair Shortest-Paths algorithm that maintains for every pair of nodes $u$ and $v$, an estimate $\delta(u,v)$ that approximates the shortest path distance between $u$ and $v$ up to a factor of $(2r-1)^{\ell}$ while supporting updates and queries in
$O(\ell c^{\ell - 1} r^{\ell} m^{1 / (\ell + 1)} n^{\ell/r} \log^{\ell-1} n)$ and
$O(\ell^2 c^{\ell - 1} r^{\ell} m^{1 / (\ell + 1)} n^{\ell/r} \log^{\ell-1} n)$
worst-case time, respectively.
\end{theorem}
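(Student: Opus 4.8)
The plan is to instantiate the meta-theorem, Theorem~\ref{thm: metaTheorem}, for the shortest-path property $\mathcal{P}(u,v,G)=d_G(u,v)$. Hypothesis~(2) is immediate: a single-pair shortest path is computed in $O(m\log n)$ time by Dijkstra's algorithm, so we may take $h(n)=\tilde O(1)$. Thus the whole argument reduces to verifying hypothesis~(1), namely that $G$ admits a $\bigl(2r-1,\ \tilde O(n^{1/r}),\ \tilde O(n^{1/r}),\ \tilde O(n^{1/r})\bigr)$-efficient IVS, and then plugging the resulting parameters, together with $\mu_i=m^{1-i/(\ell+1)}$, into the running-time formulas of Theorem~\ref{thm: metaTheorem}.

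To build the IVS I would use a deterministic variant of the Thorup--Zwick distance oracle~\cite{ThorupZ05}. The \textsc{Preprocess} step computes the center hierarchy $V=A_0\supseteq A_1\supseteq\cdots\supseteq A_r=\emptyset$ and, for every vertex $u$, the pivots $p_i(u)$ and the bunch $B(u)=\bigcup_i\{w\in A_i\setminus A_{i+1}: d_G(u,w)<d_G(u,A_{i+1})\}$; derandomizing the choice of the $A_i$'s by greedily hitting the relevant balls keeps $|B(u)|=\tilde O(n^{1/r})$ for \emph{every} $u$ and keeps the construction time at $\tilde O(m\,n^{1/r})$, so $f(n)=\tilde O(n^{1/r})$. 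The sparsifier maintained for a terminal set $T$ is $H_T=\bigcup_{u\in T}\{\,(u,w)\text{ of weight }d_G(u,w):w\in B(u)\,\}$ on vertex set $\bigcup_{u\in T}B(u)$. Then \textsc{AddTerminal}$(u)$ merely unions in the precomputed star at $u$: consequently $H_T\subseteq H_{T'}$ holds trivially, the recourse is $|B(u)|=\tilde O(n^{1/r})$, the time is $O(|B(u)|)=\tilde O(n^{1/r})$, and $|H_T|=\sum_{u\in T}|B(u)|=O(|T|\,r_u)$, exactly as Definition~\ref{def: IVS} requires. The approximation guarantee is the standard Thorup--Zwick query analysis: for $u,v\in T$, following the pivots yields some index $i\le r-1$ and a vertex $w$ that lies in the bunch of one of $u,v$ and is adjacent in $H_T$ to the other at the correct distance, certifying $d_{H_T}(u,v)\le d_G(u,w)+d_G(w,v)\le(2i+1)\,d_G(u,v)\le(2r-1)\,d_G(u,v)$; the reverse inequality $d_{H_T}(u,v)\ge d_G(u,v)$ holds because every edge of $H_T$ equals the length of an actual path in $G$. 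I would also record the decomposability step underlying Lemma~\ref{lem:IVS1levelEdgeInsertion} and the correctness part of Theorem~\ref{thm: metaTheorem}: cutting a shortest path of the updated graph at the newly inserted edges (whose endpoints are terminals) writes it as an alternation of inserted edges and terminal-to-terminal subpaths, each $(2r-1)$-approximated by $H_T$, so the bunch-union graph augmented with the directly inserted edges still $(2r-1)$-approximates the current graph between terminals.

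With~(1) and~(2) established, I invoke Theorem~\ref{thm: metaTheorem} with $\ell$ levels and $\mu_i=m^{1-i/(\ell+1)}$. The telescoping bound $\sum_{j=i}^{\ell}\mu_{j-1}/\mu_i\le 2m^{1/(\ell+1)}$ from the proof of Lemma~\ref{lem: optimalTradeoff} applies verbatim; the only change from that lemma is that here $f,g,r$ are polynomially rather than subpolynomially large, so $\prod_{j=1}^{i-1}r(\mu_j)c^{i-1}=\tilde O\!\bigl(n^{(i-1)/r}\bigr)c^{i-1}$ and each summand carries one further factor $f(\mu_{j-1}),g(\mu_{i-1})=\tilde O(n^{1/r})$. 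Summing over the $\ell$ levels and tracking the logarithmic factors that Lemma~\ref{lem: optimalTradeoff} suppressed gives $T_u=O\!\bigl(\ell\,c^{\ell-1}r^{\ell}\,m^{1/(\ell+1)}n^{\ell/r}\log^{\ell-1}n\bigr)$, while the term $\mu_\ell r(\mu_\ell)h(\mu_\ell)$ is dominated by $\ell T_u$ because $h(n)=\tilde O(1)$, so $T_q=O(\ell T_u)$, matching the stated query bound. Determinism of the whole scheme is inherited from the derandomized oracle and from the deterministic meta-algorithm, and the approximation ratio is $\alpha^{\ell}=(2r-1)^{\ell}$ as claimed.

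I expect the main obstacle to be the first part: pinning down a \emph{fully deterministic} Thorup--Zwick-type construction whose bunches are simultaneously small, quickly computable, and form a genuine metric vertex sparsifier, so that the monotonicity ($H_T\subseteq H_{T'}$), recourse, and size clauses of Definition~\ref{def: IVS} are met \emph{exactly}, together with the decomposability check that lets the static $(2r-1)$-guarantee survive incremental edge insertions. The final running-time accounting, although it has to be redone by hand since Lemma~\ref{lem: optimalTradeoff} assumed $n^{o(1)}$-size gadgets, is otherwise routine.
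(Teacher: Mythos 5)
Your proposal is correct and follows essentially the same route as the paper: a derandomized Thorup--Zwick oracle (greedy hitting sets keeping every bunch of size $\tilde O(n^{1/r})$) serving as the efficient IVS whose \textsc{AddTerminal} simply unions in the precomputed star at the new terminal, combined with the meta-theorem under the parameter choice $\mu_i=m^{1-i/(\ell+1)}$. Your explicit remarks on the decomposability step and on redoing the accounting of Lemma~\ref{lem: optimalTradeoff} for polynomially large $f,g,r$ are exactly the points the paper's proof relies on (the latter somewhat implicitly).
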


\begin{corollary}
\label{coro: IncrementalApproximateAPSP}
Let $G=(V,E)$ be an undirected, weighted graph.
For every $r \geq 1$, there is an incremental deterministic All-Pair Shortest-Paths algorithm that maintains for every pair of nodes $u$ and $v$, an estimate $\delta(u,v)$ that approximates the shortest path distance between $u$ and $v$ up to a factor of $(2r-1)$ while supporting updates and queries in
$O(m^{1/2}n^{1/r})$ and
$O(m^{1/2}n^{1/r})$
worst-case time, respectively.
\end{corollary}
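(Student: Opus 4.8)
The plan is to obtain Theorem~\ref{thm: IncrementalApproximateAPSP} as an application of the meta-theorem (Theorem~\ref{thm: metaTheorem}) with the graph property $\P(u,v,G)=d_G(u,v)$, approximation factor $\alpha=2r-1$, and depth $\ell$. Hypothesis~(2) of that theorem comes for free with $h(n)=\Otil(1)$, since a single Dijkstra from $s$ computes $d_G(s,t)$ in $\Otil(m)$ time. Thus the only thing to do is verify Hypothesis~(1): for every $r\ge 1$, $G$ admits a \emph{deterministic} $(2r-1,\Otil(n^{1/r}),\Otil(n^{1/r}),\Otil(n^{1/r}))$-efficient IVS for distances (Definition~\ref{def: efficientIVS}). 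I will build it on top of a deterministic variant of the Thorup--Zwick distance oracle~\cite{ThorupZ05}.

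For \textsc{Preprocess} I run the (deterministic) Thorup--Zwick preprocessing: a hierarchy $V=A_0\supseteq A_1\supseteq\cdots\supseteq A_{r-1}\supseteq A_r=\emptyset$ with geometrically decreasing $|A_i|$, together with, for each vertex $v$, its pivots $p_0(v),\dots,p_{r-1}(v)$ ($p_i(v)$ the closest vertex of $A_i$ to $v$) and its bunch $B(v)=\bigcup_{i<r}\{w\in A_i\setminus A_{i+1}: d_G(v,w)<d_G(v,A_{i+1})\}$, each of which satisfies $p_i(v)\in B(v)$ and has size $\Otil(n^{1/r})$; I also store the exact distance $d_G(v,w)$ for every $w\in B(v)$. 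This takes $\Otil(m\,n^{1/r})$ time (via the Thorup--Zwick modified-Dijkstra routine), so $f(n)=\Otil(n^{1/r})$. The maintained sparsifier is $H_T$ on vertex set $T\cup\bigcup_{u\in T}B(u)$, with an edge $\{u,w\}$ of weight $d_G(u,w)$ for every $u\in T$ and $w\in B(u)$; \textsc{AddTerminal}$(u)$ simply adds $u$, the vertices of $B(u)$, and the star from $u$ to $B(u)$, which changes $\Otil(n^{1/r})$ edges, is monotone ($H_T\subseteq H_{T'}$), returns the new edge set, and keeps $|H_T|=\Otil(|T|\,n^{1/r})$; hence $g(n)=r(n)=\Otil(n^{1/r})$.

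The correctness step I then carry out is: $H_T$ is a $(2r-1)$-vertex sparsifier for distances, i.e.\ $d_G(u,v)\le d_{H_T}(u,v)\le(2r-1)d_G(u,v)$ for $u,v\in T$. The lower bound is immediate, since every edge of $H_T$ is weighted by a true $G$-distance and so the triangle inequality makes any $u$--$v$ path in $H_T$ at least $d_G(u,v)$ long. For the upper bound I replay the Thorup--Zwick query, which produces a vertex $w\in B(u)\cap B(v)$ (it equals $p_i(x)$ for one endpoint $x$, hence lies in $B(x)$, and the query halts precisely when it also lies in the bunch of the other endpoint) with $d_G(u,w)+d_G(w,v)\le(2r-1)d_G(u,v)$; since $w\in B(u)$ and $w\in B(v)$, both edges $\{u,w\}$ and $\{w,v\}$ are present in $H_T$, giving a $u$--$v$ path of length $\le(2r-1)d_G(u,v)$. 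I will also observe that distances are ``decomposable'' as required by the meta-theorem's edge-insertion routine (Lemma~\ref{lem:IVS1levelEdgeInsertion}): after insertions, any shortest $u$--$v$ path of the current graph breaks at the terminals it visits into pieces that are each either a single inserted edge (whose endpoints are terminals, hence in $H_T$) or a shortest path of the base graph between two terminals (preserved up to $2r-1$ by $H_T$), and summing the pieces preserves the distance up to $2r-1$.

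With Hypotheses~(1) and~(2) verified, Theorem~\ref{thm: metaTheorem} yields a deterministic incremental algorithm whose estimate satisfies $d_G(u,v)\le\delta(u,v)\le(2r-1)^\ell d_G(u,v)$; taking $\mu_i=m^{1-i/(\ell+1)}$ (the balanced choice of Lemma~\ref{lem: optimalTradeoff}, but keeping the $n^{1/r}$ factors explicit since they are not $n^{o(1)}$ for fixed $r$), one has $\sum_{j\ge i}\mu_{j-1}/\mu_i=O(m^{1/(\ell+1)})$, and plugging $f=g=r=\Otil(n^{1/r})$ and $h=\Otil(1)$ into the update/query formulas and tracking how the $n^{1/r}$, $r$, and polylogarithmic factors compound over the $\ell$ levels gives $T_u=O(\ell c^{\ell-1}r^\ell m^{1/(\ell+1)}n^{\ell/r}\log^{\ell-1}n)$ and $T_q=O(\ell^2 c^{\ell-1}r^\ell m^{1/(\ell+1)}n^{\ell/r}\log^{\ell-1}n)$; Corollary~\ref{coro: IncrementalApproximateAPSP} is then the special case $\ell=1$ (the hierarchy is a single IVS, $\mu_1=m^{1/2}$), both bounds becoming $O(m^{1/2}n^{1/r})$ with stretch $2r-1$. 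The steps I expect to be the main obstacles are (i) securing a deterministic Thorup--Zwick construction that still has $\Otil(n^{1/r})$-size bunches and $\Otil(m\,n^{1/r})$ preprocessing --- losing either the determinism or these bounds would break the theorem --- and (ii) the verification in the previous paragraph that the union of bunches with true-distance star edges is a genuine $(2r-1)$-vertex distance sparsifier that chains correctly through insertions; the rest is a mechanical substitution into Theorem~\ref{thm: metaTheorem}, and I do not optimize the polylogarithmic factors.
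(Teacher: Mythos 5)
Your proposal is correct and follows essentially the same route as the paper: Corollary~\ref{coro: IncrementalApproximateAPSP} is obtained by instantiating Theorem~\ref{thm: metaTheorem} at depth $\ell=1$ with the deterministic Thorup--Zwick-bunch IVS (the paper's Lemma~\ref{lem: efficientDistanceIVS}), whose stretch analysis via replaying the Thorup--Zwick query and whose recourse/size bounds you reproduce. The one ingredient you leave as a black box --- a deterministic hierarchy yielding $\Otil(n^{1/r})$-size bunches in $\Otil(mn^{1/r})$ time --- is exactly what the paper supplies via the Roditty--Thorup--Zwick source-detection plus greedy hitting-set derandomization (Lemma~\ref{lem: deterministicHierarhcy}).
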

\begin{proof}
        The argument holds trivially by replacing $\ell$ by $1$ in Theorem~\ref{thm: IncrementalApproximateAPSP}.
\end{proof}

We start by introducing the usual definitions of sparsifiers and vertex sparsifiers for distances.
Let $G=(V,E)$ be an undirected, weighted graph with a \emph{terminal} set $T \subseteq V$.
For $u,v \in V$, let $\dist_G(u,v)$ denote the length of the shortest path between $u$ and $v$ in $G$.

\begin{definition}[Sparsifiers for Distances]
Let $G=(V,E)$ be an undirected, weighted graph, and let $\alpha \geq 1$ be a \emph{stretch} parameter.
A graph $H=(V',E')$ with $V \subseteq V'$ is an $\alpha$-\emph{distance sparsifier} of $G$ iff for all $u,v \in V$, $\dist_G(u,v) \leq \dist_{H}(u,v) \leq \alpha \cdot \dist_G(u,v)$.
\end{definition}

\begin{definition}[Vertex Sparsifiers for Distances]
\label{def: vertexDistanceSparsifier}
Let $G=(V,E)$ be an undirected, weighted graph with a \emph{terminal} set $T \subseteq V$, and let $\alpha \geq 1$ be a \emph{stretch} parameter.
A graph $H=(V',E')$ with $T \subseteq V'$ is an $\alpha$-\emph{(vertex) distance sparsifier} of $G$ with respect to $T$ iff for all $u,v \in T$,  $\dist_G(u,v) \leq \dist_{H}(u,v) \leq \alpha \cdot \dist_G(u,v)$.
\end{definition}

In the remainder of this section, we will show that the distance property in graphs admits an efficient incremental sparsifier oracle with desirable guarantees.

\begin{lemma}[Efficient ISO's for Distances]
\label{lem: efficientDistanceIVS}
Given an undirected, weighted graph $G=(V,E)$, and a parameter $r \geq 1$, there is a \emph{deterministic} algorithm that constructs an $(2r-1, O(n^{1/r} \log^2 n),O(n^{1/r} \log^{2} n),O(n^{1/r} \log^{2} n))$-efficient (distance) IVS of $G$.
\end{lemma}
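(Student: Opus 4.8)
The plan is to construct the efficient distance IVS from a deterministic variant of the Thorup--Zwick distance oracle~\cite{ThorupZ05}, exactly along the lines sketched in the technical overview. Recall the oracle picks a nested hierarchy of center sets $V = A_0 \supseteq A_1 \supseteq \cdots \supseteq A_r = \emptyset$ (obtained by independently subsampling each level with probability $n^{-1/r}$, which can be derandomized deterministically so that each $|A_i| = \tilde O(n^{1-i/r})$ and the relevant clusters are small), defines for each $v$ its pivot $p_i(v)$ at level $i$ (the closest vertex of $A_i$), and its \emph{bunch} $B(v) = \bigcup_{i=0}^{r-1}\{w \in A_i \setminus A_{i+1} : \dist_G(v,w) < \dist_G(v, p_{i+1}(v))\}$. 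The standard analysis gives $|B(v)| = \tilde O(n^{1/r})$ for every $v$, and the key metric fact is: for any $u,v$, one can recover a $(2r-1)$-approximation of $\dist_G(u,v)$ using only distances among $\{u,v\} \cup B(u) \cup B(v)$ — indeed the oracle's query walks $p_0(u), p_1(v), p_2(u), \ldots$ and at every step reads a stored distance $\dist_G(w, \cdot)$ with $w$ in the relevant bunch.

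\textbf{The IVS itself.} \textsc{Preprocess}$(G,T)$ runs the (derandomized) oracle preprocessing in $\tilde O(mn^{1/r})$ time — this gives $f(n) = O(n^{1/r}\log^2 n)$ — and initializes $H_T$ to be the graph on vertex set $T$ with an edge $\{u,w\}$ of weight $\dist_G(u,w)$ for each $u \in T$ and each $w \in B(u)$, plus the edges within bunches needed by the query routine (equivalently: for every $u \in T$ and every pair $w,w' \in B(u)\cup\{u\}$ used in the query, store $\dist_G(w,w')$; there are only $\tilde O(n^{1/r})$ such per terminal). \textsc{AddTerminal}$(u)$ simply adds the (at most $\tilde O(n^{1/r})$) new vertices and edges contributed by $B(u)$ to $H_T$, and returns that edge set. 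This is monotone by construction ($H_T \subseteq H_{T'}$ trivially since we only add), takes $t_u = O(n^{1/r}\log^2 n)$ time, and has recourse $r_u = O(n^{1/r}\log^2 n)$; the size bound $|H_T| = O(|T| \cdot r_u)$ holds since each terminal contributes $O(r_u)$ edges. The approximation parameter is $\alpha = 2r-1$.

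\textbf{Correctness of the sparsifier property.} I must check that $H_T$ is a $(2r-1)$-vertex distance sparsifier of $G$ w.r.t.\ $T$, i.e.\ $\dist_G(u,v) \le \dist_{H_T}(u,v) \le (2r-1)\dist_G(u,v)$ for all $u,v \in T$. The lower bound is immediate: every edge of $H_T$ has weight equal to a genuine $G$-distance, so by the triangle inequality no $H_T$-path can be shorter than $\dist_G$. For the upper bound I invoke the Thorup--Zwick query analysis: the sequence of vertices it inspects when answering $(u,v)$ — alternating pivots of $u$ and $v$ — together with the fact that each consecutive pair of inspected vertices has its exact $G$-distance present as an edge of $H_T$ (because the inspected vertices lie in $B(u)\cup B(v)\cup\{u,v\}$, and we stored all the needed pairwise distances), exhibits an explicit $u$-$v$ walk in $H_T$ of total length at most $(2r-1)\dist_G(u,v)$. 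This is essentially the content of~\cite[Theorem~3.1 analysis]{ThorupZ05}, transcribed from ``the oracle returns an estimate'' to ``there is a short path in the bunch graph.''

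\textbf{The main obstacle} is precisely pinning down \emph{which} pairwise distances must be stored so that the Thorup--Zwick estimate is realized as an honest path length in $H_T$, while keeping the per-terminal edge count at $\tilde O(n^{1/r})$ and preserving monotonicity under \textsc{AddTerminal}. The subtlety is that the query routine for $(u,v)$ consults pivots $p_i(v)$ of $v$, which are associated with $v$'s bunch, not $u$'s — so when we add terminal $u$ we cannot yet know its future query partners. The fix (and the thing to verify carefully) is the standard observation that $w \in B(v) \iff v$ lies in the ``cluster'' $C(w) = \{x : \dist_G(x,w) < \dist_G(x, p_{i+1}(x))\}$, and that the query only ever reads distances of the form $\dist_G(w, p_j(\cdot))$ where membership in a single bunch already certifies the edge is present; making this symmetric bookkeeping precise — so that $B(u)$ alone, added at \textsc{AddTerminal}$(u)$ time, contributes every edge that any future query touching $u$ will need — is the one place the argument needs genuine care rather than routine calculation. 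Once that is settled, plugging $f = g = r = O(n^{1/r}\log^2 n)$ into Definition~\ref{def: efficientIVS} finishes the proof.
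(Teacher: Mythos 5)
Your overall approach is the same as the paper's: derandomize the Thorup--Zwick hierarchy, store each vertex's bunch at preprocessing, and implement \textsc{AddTerminal}$(u)$ by adding edges from $u$ to $B(u)$ weighted by exact $G$-distances. However, your correctness argument has a gap that you yourself flag as unresolved, and the resolution you gesture at is the wrong one. You describe the Thorup--Zwick query as producing ``an explicit $u$-$v$ walk'' through \emph{all} the inspected pivots, and you conclude that $H_T$ must therefore contain within-bunch edges (``for every pair $w,w' \in B(u)\cup\{u\}$ used in the query''). This is a misreading of the query routine: the intermediate pivots only drive the iteration, and the estimate returned is $\delta(u,v)=\dist_G(w,u)+\dist_G(w,v)$ for the \emph{single} final vertex $w$, which satisfies $w=p_i(\cdot)\in B(u)\cap B(v)$ at termination. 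Since $u$ and $v$ are both terminals, the two edges $(u,w)$ and $(v,w)$ are already in $H_T$ from the plain star construction, so the two-hop path $u$--$w$--$v$ certifies $\dist_{H_T}(u,v)\le(2r-1)\dist_G(u,v)$. No within-bunch or cross-bunch edges are needed, and your ``main obstacle'' paragraph dissolves. As written, though, your construction is either ill-defined (you cannot know at \textsc{AddTerminal}$(u)$ time which pairs a future query will ``use'') or blows up the recourse to $|B(u)|^2=\tilde{O}(n^{2/r})$ per terminal if you add all pairs, which would break the claimed bounds.

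A secondary weakness: the lemma's content is partly the \emph{deterministic} construction, and you assert the derandomization in one clause. Bounding $|A_i|$ is not enough; you need $|B(v)|=\tilde{O}(n^{1/r})$ for \emph{every} $v$, which the randomized analysis gives only in expectation. The paper obtains this by choosing $A_{i+1}$ as a greedy hitting set for the collection $\{A_i(v,q,G)\}_{v}$ of $q=\tilde{O}(n^{1/r})$ nearest level-$i$ centers (computed via source detection), which forces $B_i(v)\subseteq A_i(v,q,G)$ and hence $|B_i(v)|\le q$. Some such argument is needed for the stated time and recourse bounds to hold deterministically.
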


We achieve this by showing a deterministic variant of the distance oracle due to Thorup and Zwick~\cite{ThorupZ05}. While our construction closely follows the ideas presented in the deterministic oracle due to Roddity, Thorup and Zwick~\cite{RodittyTZ05}, we note that their work only bounds the \emph{total} size of the oracle, which is not sufficient for our purposes. Similar ideas have been employed by Lacki et al.~\cite{LackiOPSZ15} for constructing bipartite emulators and by Abraham et al.~\cite{AbrahamCT14} for designing approximation algorithms for the fully-dynamic APSP problem.

We start by reviewing the randomized algorithm for APSP due to Thorup and Zwick~\cite{ThorupZ05}~(which is depicted in Figure~\ref{alg: randomizedPreprocessing}), and then derandomize that algorithm and show how it can be used to solve the above problem.

\begin{enumerate}
\itemsep0em 
\item  Set $A=V$ and $A_r = \emptyset$, and for $1 \leq i \leq r-1$ obtain $A_i$ by picking each node from $A_{i-1}$ independently, with probability $n^{-1/r}$. 

\item For each $1 \leq i < r$, and for each vertex $v \in V$, find the vertex $p_i(v) \in A_i$ (also known as the $i$-th \emph{pivot}) that minimizes the distance to $v$, i.e., \[ p_i(v) := \arg \min_{u \in A_i} \dist_G(u,v), \] and its corresponding distance value \[\dist_G(A_i,v) := \min\{\delta(w,v) \mid w \in A_i\} = \dist_G(v,p_i(v)). \]

\item For each vertex $v \in V$, define the \emph{bunch} $B(v) = \cup_{i=0}^{r-1} B_i(v)$, where
\begin{align*}
B_i(v) & := \{w \in A_i \setminus A_{i+1} \mid \dist_G(w,v) < \dist_G(A_{i+1},v) \}.
\end{align*}
\end{enumerate}

\begin{algorithm2e}[t!]
\caption{\textsc{HierarchyConstruct}$(G, r)$}
\label{alg: randomizedPreprocessing}
$A_0 \gets V$ ; $A_r \gets \emptyset$ \\
\For{$i \gets 1$ to $r-1$} 
{
        \label{algStep: random} $A_i \gets \textsc{Sample} \left( A_{i-1}, |V|^{-1/r} \right)$ \\
} 

\For{every $v \in V$}
{
        \For{$i \gets 0$ to $r-1$}
        {
           Let $\dist_G(A_i,v) \gets \min \{ \dist_G(w,v) \mid w \in A_i \}$ \\
           Let $p_i(v) \in A_i$ be such that $\dist_G(p_i(v),v) = \dist_G(A_i,v)$ \\
    }    
    \BlankLine
    $\dist_G(A_r,v) \gets \infty$ \\
    \BlankLine
    Let $B(v) \gets \cup_{i=0}^{r-1} \{w \in A_i \setminus A_{i+1} \mid \dist_G(w,v) < \dist_G(A_{i+1},v)\}$  
    
}

\end{algorithm2e}

Thorup and Zwick~\cite{ThorupZ05} showed that using the hierarchy of sets $(A_i)_{0 \leq i \leq r}$ chosen as above, the expected size of a bunch $\expec [|B(v)|]$ is $O(r n^{1/r})$, for each vertex $v \in V$. We note that the only place where their construction uses randomization is when building the hierarchy of sets~(the \textbf{for} loop in Step~\ref{algStep: random} in Figure~\ref{alg: randomizedPreprocessing}). Therefore, to derandomize their algorithm it suffices to design a deterministic algorithm that efficiently computes a hierarchy of set $(A_i)_{0 \leq i \leq r}$ such that $|B(v)| \leq \tilde{O}(r n^{1/r})$, for each $v \in V$ (note that compared to the randomized construction, we are content with additional poly-log factors on the size of the bunches).

We present a deterministic algorithm for computing the hierarchy of sets that closely follows the ideas presented in the deterministic construction of Roditty, Thorup, and Zwick~\cite{RodittyTZ05}. The main two ingredients of the algorithm are the \emph{hitting set} problem, and the \emph{source detection} problem. For the sake of completeness, we next review their definitions and properties.

\begin{definition}[Hitting set] \label{def: hittingSet} Let $U$ be a set of elements, and let $\mathbb{S} = \{S_1,\ldots,S_p\}$ be a collection of subsets of $U$. We say that $T$ is a \emph{hitting set} of $U$ with respect to $\mathbb{S}$ if $T \subseteq U$, and $T$ has a non-empty intersection with every set of $\mathbb{S}$, i.e., $T \cap S_i \neq \emptyset$ for every $1 \leq i \leq p$. 
\end{definition}

It is known that computing a hitting set of minimum size is an NP-hard problem. In our setting however, it is sufficient to compute approximate hitting sets. Since our goal is to design a deterministic algorithm, one way to deterministically compute such sets is using a variant of the well-known greedy approximation algorithm: (1) Form the set $T$ by repeatedly adding to $T$ elements of $U$ that `hit' as many `unhit' sets as possible, until only $|U|/s$ sets are unhit, where $|S_i| \geq s$ for each $1 \leq i \leq p$ ; (2) add an element from each one of the unhit sets to $T$. The lemma below shows that this algorithm finds a reasonably sized hitting set in time linear in the size of $U$ the collection $\mathbb{S}$. 


\begin{lemma} \label{lem: hittingSetAlgo}
Let $U$ be a set of size $u$ and let $\mathbb{S} = \{S_1,\ldots,S_p\}$ be the collection of subset of U, each of size at least $s$, where $s \leq p$. Then the above deterministic greedy algorithm runs in $O(u + ps)$ time and finds a hitting set $T$ of $U$ with respect to $\mathbb{S}$, whose size is bounded by $|T| = (u/s)(1+ \ln p)$.
\end{lemma}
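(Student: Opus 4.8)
The plan is to analyze the two-phase greedy algorithm directly, tracking the number of still-unhit sets. In the first phase we repeatedly pick an element of $U$ that hits the largest number of currently unhit sets; in the second phase we add one arbitrary element from each remaining unhit set. Correctness (that $T$ is a hitting set) is immediate: every set is either hit during phase one, or still unhit at the end of phase one and then explicitly hit in phase two. So the content is entirely in the size and running time bounds.

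For the size bound, first I would argue the standard greedy set-cover counting. At any point in phase one, if there are $q$ unhit sets remaining, then since each $S_i$ has $|S_i| \ge s$, by double counting the incidences between $U$ and the $q$ unhit sets there is an element of $U$ contained in at least $qs/u$ of them; hence the greedy choice reduces the number of unhit sets by at least a $(1 - s/u)$ factor, i.e. from $q$ to at most $q(1-s/u)$. Starting from $p$ unhit sets, after $k$ greedy steps at most $p(1-s/u)^k \le p e^{-ks/u}$ remain. Phase one stops once at most $u/s$ sets are unhit; using $p e^{-ks/u} \le u/s$ gives $k \le (u/s)\ln(ps/u) \le (u/s)\ln p$ steps. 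Phase two then adds at most $u/s$ more elements, one per surviving unhit set. So $|T| \le (u/s)\ln p + u/s = (u/s)(1 + \ln p)$, as claimed. (One should phrase the "after $k$ steps" bound so that the first time the count drops to $\le u/s$ we stop, absorbing any off-by-one into the $1+\ln p$; this is routine.)

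For the running time, I would maintain, for each element of $U$, a bucket of the unhit sets it currently hits, and for each unhit set a pointer list to its elements — this needs the incidence structure, which has total size $O(u + \sum_i |S_i|)$; but since each greedy step only needs the element of maximum current bucket size and we only ever decrement bucket sizes, this can be run with a bucket-queue (array indexed by current count) in time linear in the total incidence size plus $O(u)$. To reconcile with the stated bound $O(u + ps)$ I would observe that the relevant work only touches incidences of unhit sets, and we may assume each $|S_i|$ is effectively capped at $s$ for the purpose of the hitting computation (it suffices to look at any $s$ elements of each set), so the incidence work is $O(ps)$; initialization and the phase-two scan cost $O(u + p)$, and since $s \le p$ this is dominated. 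Combining gives $O(u + ps)$.

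The main obstacle I anticipate is the bookkeeping in the running-time argument: the naive implementation of "pick the element hitting the most unhit sets" is expensive unless one is careful to (a) cap attention to $s$ elements per set and (b) use a bucket/priority structure whose total update cost telescopes against the incidence count rather than being paid per greedy step. The size bound itself is a textbook greedy argument and should go through cleanly; the only subtlety there is making sure the stopping rule "$\le u/s$ sets unhit" is applied consistently with the $(1-s/u)^k$ decay so that the two phases together yield exactly $(u/s)(1+\ln p)$ and not a worse constant.
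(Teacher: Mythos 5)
The paper states this lemma without proof (it is imported from the Roditty--Thorup--Zwick deterministic construction), and your argument is exactly the standard greedy set-cover analysis that is intended: the double-counting step giving a $(1-s/u)$ decay per iteration, the $(u/s)\ln p$ bound on phase one plus $u/s$ for phase two, and the bucket-queue/capped-incidence implementation for the $O(u+ps)$ running time. Your proposal is correct; the only loose ends you already flag (the ceiling absorbed into the $1+\ln p$ term, and the implicit assumption that each set is accessed through only $s$ of its elements, which holds in the paper's application where $|S_i|=q=s$ exactly) are genuinely routine.
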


Note that the size of this hitting set is within $O(\log n)$ of the optimum size since in the worst case $T$ has size at least $u /s$. 

\begin{definition}[Source Detection] \label{def: sourceDetect} Let $G=(V,E)$ be an undirected, weighted graph, let $U \subseteq V$ be an arbitrary set of sources of size $u$, and let $q$ be a parameter with $1 \leq q \leq u$. For every $v \in V$, we let $U(v,q,G)$ be the set of the $q$ vertices of $U$ that are closest to $v$ in $G$.
\end{definition}

Roditty, Thorup, and Zwick~\cite{RodittyTZ05} showed that the set $U(v,q,G)$ can be computed using $q$ single-source shortest path computations. We review their result in the lemma below.
\begin{lemma}[\cite{RodittyTZ05}] \label{lem: sourceDetectAlgo} For every $v \in V$, the set $U(v,q,G)$ can be computed in time $O(q m \log n)$.
\end{lemma}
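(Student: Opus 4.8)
The plan is to run $q$ rounds of a Dijkstra-like relaxation. First I would set up, for every vertex $v \in V$, a list $L(v)$ that will eventually hold the $q$ sources of $U$ closest to $v$ together with their distances, initialized to be empty; and a single global priority queue whose entries are triples $(\dist, x, s)$ meaning ``there is a known walk of length $\dist$ from source $s \in U$ to vertex $x$''. Initially insert $(0, s, s)$ for every $s \in U$. The key invariant I want to maintain is the multi-source analogue of Dijkstra's: whenever a triple $(\dist, x, s)$ is extracted from the queue with minimum key, $\dist = \dist_G(s,x)$ provided $s$ is among the $q$ closest sources to $x$, and moreover the triples are extracted in nondecreasing order of key.

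The main loop extracts the minimum triple $(\dist, x, s)$; if $|L(x)| = q$ already, or if $s$ is already recorded in $L(x)$, we discard it; otherwise we append $(s,\dist)$ to $L(x)$, and — this is the crucial point for the running time — only if this extraction brought $|L(x)|$ to a value at most $q$ do we relax the out-edges of $x$, inserting $(\dist + w(x,y), y, s)$ for each neighbor $y$. The correctness claim is that at termination $L(v) = U(v,q,G)$ for all $v$; this follows because the process is exactly $q$ interleaved copies of Dijkstra, one per "slot," and the ordering invariant guarantees that the first $q$ distinct sources to reach $v$ (in order of distance) are precisely the $q$ closest ones, with correct distances. I would prove the invariant by the same exchange/induction argument as for ordinary Dijkstra, observing that a triple is only propagated from $x$ after $x$ has been "settled" for source $s$, which happens before the $(q+1)$-st settling at $x$.

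For the running time: each vertex $x$ is settled for at most $q$ sources, and each such settling triggers relaxation of $\deg(x)$ edges, so the total number of queue insertions is $O\big(\sum_x q\deg(x)\big) = O(qm)$, and each queue operation costs $O(\log(qm)) = O(\log n)$ with a binary heap (using $m \le n^2$ and $q \le n$), giving $O(qm\log n)$ overall; the bookkeeping in the $L(v)$ lists is dominated by this. Alternatively, and perhaps cleanest to cite, one simply runs $q$ separate single-source shortest path computations after first contracting $U$ appropriately — but the interleaved version above is what gives the bound as stated, so I would present that. The step I expect to need the most care is verifying that restricting propagation to the "first $q$ settlings at $x$" does not prune any relaxation that was needed to correctly compute some other vertex's $q$-closest set; the argument is that any shortest walk from $s$ to $v$ that uses $x$ either (a) certifies $s$ among the $q$ closest to $x$, in which case $x$ is settled for $s$ and the relaxation happens, or (b) it does not, but then there are already $q$ sources strictly closer to $x$ than $s$, each of which is also closer to $v$ along the corresponding subpath, so $s \notin U(v,q,G)$ and the pruned relaxation was irrelevant. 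Since the paper attributes this lemma to~\cite{RodittyTZ05}, I would keep the write-up brief and defer the detailed verification of (b) to that reference.
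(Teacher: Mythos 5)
Your interleaved multi-source Dijkstra (each vertex settled for at most $q$ sources, with pruning of further relaxations once $q$ slots are filled) is correct and is essentially the same algorithm the paper is importing from~\cite{RodittyTZ05} when it says the sets $U(v,q,G)$ reduce to ``$q$ single-source shortest path computations''; the paper itself gives no proof, so your self-contained argument — including the key closure observation that if $s$ is among the $q$ nearest sources to $x$ then it is among the $q$ nearest to every vertex on the shortest $s$--$x$ path, which is what justifies the pruning — fills that in correctly. The only point requiring care is consistent tie-breaking on equal distances (e.g., by source index) so that ``$q$ sources strictly closer'' is well defined, which you already flag and appropriately defer to the reference.
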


Our algorithm for constructing the hierarchy of sets $(A_i)_{0 \leq i \leq r}$, depicted in Figure~\ref{alg: DeterministicHieararhcy}, is as follows. Initially, we set $A_0 = V$ and $A_r = \emptyset$. To construct the set $A_{i+1}$, given the set $A_i$, for $0 \leq i \leq r-2$, we first find the set $A_i(v,q,G)$, where $q=\tilde{O}(n^{1/r})$, using the source detection algorithm from Lemma~\ref{lem: sourceDetectAlgo}. Then we observe that the collection of sets $\{A_i(v,q,G)\}_{v \in V}$ can be viewed as an instance of the minimum hitting set problem over the set (universe) $A_i$, i.e., we want to find a set $A_{i+1} \subseteq A_i$ of minimum size such that each set $A_i(v,q,G)$ in the collection contains at least one node of $A_{i+1}$. We construct $A_{i+1}$ by invoking the deterministic greedy algorithm from Lemma~\ref{lem: hittingSetAlgo}, which produces a hitting set whose size is within $O(\log n)$ of the optimum one. We next prove the constructed hierarchy produces bunches whose sizes are comparable to the randomized construction, and also show that our deterministic construction can be implemented efficiently.

\begin{algorithm2e}[t!]
\label{alg: DeterministicHieararhcy}
\caption{\textsc{DetHierarhcy}$(G,r)$}
\KwData{Undirected, weighted graph $G=(V,E)$, parameter $r \geq 1$}
\KwResult{Hierarchy of sets $(A_i)_{0 \leq i \leq r}$}
\BlankLine
$q = \lceil n^{1/r} (1+ \ln n) \rceil $\\
$A_0 \gets V$ ; $A_r \gets \emptyset$ \\
\label{step: ForLoop} \For{$i \gets 0$ to $r-2$} 
{
        Compute $A_i(v,q,G)$ for each $v \in V$ using the source detection algorithm~(Lemma~\ref{lem: sourceDetectAlgo}) \\
        Let $\{A_i(v,q,G)\}_{v \in V}$ be the resulting collection of sets \\
        Compute a hitting set $A_{i+1} \subseteq A_{i}$ with respect to $\{A_i(v,q,G)\}_{v \in V}$~(Lemma~\ref{lem: hittingSetAlgo}) \\
}

\Return $(A_i)_{0 \leq i \leq r}$
\end{algorithm2e}

\begin{lemma} \label{lem: deterministicHierarhcy} Given an undirected, weighted graph $G=(V,E)$, and a parameter $r \geq 1$, Algorithm~\ref{alg: DeterministicHieararhcy} computes deterministically, in $O(rmn^{1/r} \log n)$ time, a hierarchy of sets $(A_i)_{0 \leq i \leq r}$ such that for each $v \in V$,
\[
        |B(v)| = O(rn^{1/r} \log n).
\]
\end{lemma}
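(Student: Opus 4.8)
The plan is to analyze the output $(A_i)_{0\le i\le r}$ of Algorithm~\ref{alg: DeterministicHieararhcy} in two steps and then bound the running time. First I would establish the size bound $|A_i|\le\max\{n^{1-i/r},\,q\}$ for every $i$, where $q=\lceil n^{1/r}(1+\ln n)\rceil$ is the parameter fixed by the algorithm; in particular this gives $|A_{r-1}|=O(n^{1/r}\log n)$. Then I would prove the sharper, per-vertex claim that each piece $B_i(v)=\{w\in A_i\setminus A_{i+1}:\dist_G(w,v)<\dist_G(A_{i+1},v)\}$ of the bunch has size $O(n^{1/r}\log n)$; summing over the $r$ levels yields $|B(v)|=\sum_{i=0}^{r-1}|B_i(v)|=O(rn^{1/r}\log n)$, which is the statement. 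The running time would follow by charging each of the $r-1$ iterations of the loop to one source-detection computation (Lemma~\ref{lem: sourceDetectAlgo}) plus one run of the greedy hitting-set routine (Lemma~\ref{lem: hittingSetAlgo}).

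For the size bound I would induct on $i$, with base case $|A_0|=n$. For the inductive step, if $|A_i|\le q$ then $A_{i+1}\subseteq A_i$ gives $|A_{i+1}|\le q$ immediately. Otherwise $|A_i|>q$, so every source-detection set $A_i(v,q,G)$ has size exactly $q$, and $\{A_i(v,q,G)\}_{v\in V}$ is a hitting-set instance over the universe $A_i$ with $p=n$ sets of size $s=q$; Lemma~\ref{lem: hittingSetAlgo} then yields $|A_{i+1}|\le(|A_i|/q)(1+\ln n)\le|A_i|/n^{1/r}$ by the choice of $q$. In both cases $|A_{i+1}|\le\max\{n^{1-(i+1)/r},\,q\}$, closing the induction.

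The crux is the bunch bound. Fix $v\in V$ and a level $i$, recalling $\dist_G(A_r,v)=\infty$. If $i=r-1$ then $B_{r-1}(v)\subseteq A_{r-1}$, which has size $O(n^{1/r}\log n)$ by the previous step; and if $|A_i|\le q$ then $B_i(v)\subseteq A_i$ has size at most $q$. In the remaining case $|A_i|>q$, the set $A_i(v,q,G)$ consists of the $q$ vertices of $A_i$ closest to $v$, and by construction the hitting set $A_{i+1}$ meets it, so there is $w^\star\in A_{i+1}\cap A_i(v,q,G)$. Hence $\dist_G(A_{i+1},v)\le\dist_G(w^\star,v)\le\rho$, where $\rho$ denotes the distance from $v$ to its $q$-th closest vertex in $A_i$. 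Therefore every $w\in B_i(v)$ lies in $A_i$ and satisfies $\dist_G(w,v)<\dist_G(A_{i+1},v)\le\rho$, so $w$ is among the $q$ vertices of $A_i$ nearest to $v$; thus $|B_i(v)|<q=O(n^{1/r}\log n)$. Summing over $0\le i\le r-1$ gives $|B(v)|=O(rn^{1/r}\log n)$.

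For the running time, the loop executes $r-1$ times, and iteration $i$ computes $A_i(v,q,G)$ for all $v$ in $O(qm\log n)$ time by Lemma~\ref{lem: sourceDetectAlgo}, then runs the greedy hitting-set algorithm on a universe of size $|A_i|\le n$ with $n$ sets of size at most $q$, costing $O(|A_i|+nq)=O(nq)$ by Lemma~\ref{lem: hittingSetAlgo}; since $q=O(n^{1/r}\log n)$ and $m=\Omega(n)$, both terms are $\tilde{O}(mn^{1/r})$, and aggregating over the $r-1$ iterations gives the claimed time. The step I expect to be the main obstacle is the case $|A_i|>q$ of the bunch bound: everything hinges on converting the combinatorial hitting-set guarantee into the geometric statement that $\dist_G(A_{i+1},v)$ is no larger than the distance from $v$ to its $q$-th nearest vertex within $A_i$, and the one subtlety is that the set of $q$ nearest vertices used by the source-detection step and the strict inequality defining $B_i(v)$ must be interpreted with a single consistent tie-breaking rule. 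Once that is pinned down, the size bound and the running time are routine.
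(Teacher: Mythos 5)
Your proposal is correct and follows essentially the same route as the paper's proof: an induction bounding $|A_i|$ via the greedy hitting-set guarantee, the observation that $A_{i+1}$ hits $A_i(v,q,G)$ and hence $B_i(v)$ is contained among the $q$ vertices of $A_i$ nearest to $v$, and the same accounting of the loop's running time. Your version is in fact slightly more careful than the paper's (the $\max\{n^{1-i/r},q\}$ case split handles the regime $|A_i|\le q$ where the source-detection sets are smaller than $q$, and you flag the tie-breaking consistency), but these are refinements of the identical argument rather than a different approach.
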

\begin{proof}
We start by showing the bound on the size of the bunches. To this end, we first prove by induction on $i$ that $|A_{i}| \leq n^{1-i/r}$ for all $0 \leq i \leq r-1$. For the base case, i.e., $i = 0$, the claim is true by construction since $A_0 = V$. We assume that $|A_{i}| \leq n^{1-i/r}$ for the induction hypothesis, and show that $|A_{i+1}| \leq n^{1-(i+1)/r}$ for the induction step. Note that by construction each set in the collection $\{A_i(v,q,G)\}_{i \in V}$ has size $q = \lceil n^{1/r} (1+\ln n) \rceil \geq n^{1/r} (1+\ln n)$. Invoking the greedy algorithm from Lemma~\ref{lem: hittingSetAlgo}, we get a hitting set $A_{i+1} \subseteq A_i$ of size at most 
\[
  \left(\frac{|A_i|}{q}\right) (1+\ln n)  \leq \left(\frac{n^{1-i/r}}{n^{1/r} (1+\ln n)} \right) (1+\ln n) = n^{1-(i+1)/r}.
\]

We next show that for each $v \in V$ and for each $0 \leq i \leq r-1$, $|B_i(v)| \leq O(n^{1/p} \log n)$, which in turn implies the claimed bound on the size of vertex bunches. Note that it suffices to show that $B_i(v) \subseteq A_i(v,q,G)$ since then $|B_i(v)| \leq |A_i(v,q,G)| \leq n^{1/p} (1+\ln n) = O(n^{1/p} \log n)$. Recall that for $1 \leq i \leq r-1$
\[
        B_i(v) = \{w \in A_i \setminus A_{i+1} \mid \dist_G(w,v) < \dist_G(A_{i+1},v) \}
\]

Now, by construction of $A_{i+1}$ we have that $A_{i+1} \cap A_i(v,q,G) \neq \emptyset$, which implies that $B_i(v) \subseteq A_i(v,q,G)$ by the definition of $B_i(v)$.

We finally analyze the running time. For $0 \leq i \leq r-2$, consider the sequence of steps in the $i$-th iteration of the \textbf{for} loop in Figure~\ref{alg: DeterministicHieararhcy}. By Lemma~\ref{lem: sourceDetectAlgo}, the time to construct the collection of sets $\{A_i(v,q,G)\}_{v \in V}$ is $O(mn^{1/r} \log n)$. Furthermore, since the size of each set in this collection is at least $q=O(n^{1/r} \log n)$, Lemma~\ref{lem: hittingSetAlgo} guarantees that the greedy algorithm for computing a hitting set $A_{i+1}$ takes $O(n^{1+1/r} \log n)$ time. Combining the above bounds, we get that the total time for the $i$-th iteration is $O(mn^{1/r} \log n)$. Since there are at most $r$ iterations, we conclude that the running time of the algorithm is $O(rmn^{1/r} \log n)$.
\end{proof}


We next show to implement the two operations of the efficient (distance) ISO from Lemma~\ref{lem: efficientDistanceIVS}, namely $\textsc{Preprocess}()$ and $\textsc{AddTerminal}()$.

In the preprocessing phase, depicted in Figure~\ref{alg: PreprocessEmeregency}, given the graph $G$ and the stretch parameter $(2r-1)$,  we first invoke \textsc{HierarchyConstruct$(G,r)$} in Figure~\ref{alg: randomizedPreprocessing}, where Steps 1-3 are replaced by the deterministic algorithm for computing the hierarchy of sets \textsc{DetHierarchy$(G,r)$}.
Note that this modification ensures that our preprocessing algorithm is deterministic.
Next, for each vertex $v \in V$, we store its bunch $B(v)$ in a balanced binary search tree, where each vertex $w \in B(v)$ has as key the value $\dist_{G_0}(w,v)$
~(this step could have been implemented differently, but as we will shortly see, it is  useful in the subsequent applications of our algorithm),
where $G_0$ is the graph $G$ at the moment of preprocessing.

We next describe how to implement the \textsc{AddTerminal} operation, depicted in Figure~\ref{alg: TZAddTerminal}.
Let $T$ be the set of queried terminals. The main idea to construct a vertex distance sparsifier $H_T$ of $G$ with respect to $T$ is to exploit the bunches that we stored in the preprocessing step. More concretely, let $H_T$ be an initially empty graph. For each vertex $v \in T$, and every vertex in its bunch $u \in B(v)$, we add to $H_T$ the edge $(u,v)$ with weight $\dist_G(u,v)$. Finally, we return $H_T$ as a (vertex) distance sparsifier of $G$ with respect to $T$.


Next we formally analyzes the running time for the above operations and shows the correctness for the update operation.




\begin{proof}[Proof of Lemma~\ref{lem: efficientDistanceIVS}]
We first argue about the correctness.
To show that the resulting graph $H_T$ is indeed a vertex distance sparsifier with respect to $T$, we briefly review the update algorithm in the construction of Thorup and Zwick~\cite{ThorupZ05}, and show that this immediately applies to our graph setting.

\begin{algorithm2e}[t!]
\caption{\textsc{Preprocess}$(G, 2r-1)$}
\label{alg: PreprocessEmeregency}
Invoke \textsc{HierarchyConstruct$(G,r)$}, where instead of Steps 1-3 invoke \textsc{DetHierarchy$(G,r)$} \\
\For{each $v \in V$}
{ Store each $B(v)$, where $w \in B(v)$ holds $\dist_{G}(v,w)$.
}
Set $G_0 \gets G$. \\
\end{algorithm2e}

\begin{algorithm2e}[t!]
\caption{\textsc{AddTerminal}$(u)$}
\label{alg: TZAddTerminal}
Let $H_T$ be the vertex sparsifier maintained w.r.t. $T$.\\
Set $T \gets T \cup \{u\}$.\\
\For{every $v \in B(u)$}
{
    Add $(u,v)$ to $E(H_T)$ with weight $\dist_{G_0}(v,u)$.\\
}
\end{algorithm2e}


Let $u,v \in T$ by any two terminals. The algorithm uses the variables $w$ and $i$, and starts by setting $w \gets u$, and $i \gets 0$.
Then it repeatedly increments the value of $i$, swaps $u$ and $v$, and sets $w \gets p_i(u) \in B(u)$, until $w \in B(v)$.
Finally, it returns a distance estimate $\delta(u,v) = \dist_G(w,u) + \dist_G(w,v)$.
Observe that $w = p_i(u) \in B(u)$ for some $0 \leq i \leq r-1$ and $w \in B(v)$.
By construction of our vertex sparsifier $H_T$, note that the edges $(w,u)$ and $(w,v)$, and their corresponding weights, $\dist_G(w,u)$ and $\dist_G(w,v)$, are added to $H_T$.
Thus, there must exist a path between $u$ and $v$ in $H_T$ whose stretch is at most the stretch of the distance estimate $\delta(u,v)$.
Since in \cite{ThorupZ05} it was shown that for every $u, v \in T$,
\[ \dist_G(u,v) \leq \delta(u,v) \leq (2r-1) \dist_G(u,v), \]
we immediately get that
\[
        \dist_G(u,v) \leq \dist_{H_T}(u,v) \leq (2r-1) \dist_G(u,v).
\]

We finally analyze the running time for both operations.
First, note that by Lemma~\ref{lem: deterministicHierarhcy}, the deterministic algorithm for constructing the hierarchy of sets \textsc{DetHierarhcy$(G,r)$} runs in $O(rmn^{1/r}\log n)$ time.
Moreover, Thorup and Zwick~\cite{ThorupZ05} showed that given a hierarchy of sets, the bunches for all vertices in $G$ can be computed in $O(rmn^{1/r} \log n)$ time.
Combining these two bounds we get that the operation \textsc{Preprocess$(G,r)$} runs in $O(rmn^{1/r}\log n)$ time.
For the running time of \textsc{AddTerminal$(u)$}, recall that $H_T$ addes only edges between $u$ and its bunch, $B(u)$.
Since the size of a each individual vertex bunch is bounded by $O(rn^{1/r} \log n)$~(Lemma~\ref{lem: deterministicHierarhcy}), the time for adding edges adjacent to $u$ is bounded by $O(|B(u)|) = O(rn^{1/r} \log n)$.
This also bounds the recourse of adding $u$ to the terminal set.
\end{proof}

We finally prove the main result of this section, i.e., Theorem~\ref{thm: IncrementalApproximateAPSP}.
\begin{proof}[Proof of Theorem~\ref{thm: IncrementalApproximateAPSP}]
Let $G=(V,E)$ be a graph and consider an $(2r-1, O(rn^{1/r}\log{n}), O(rn^{1/r}\log^2{n}),O(rn^{1/r}\log{n}))$ -efficient (distance) IVS $H$ of $G$~(Lemma~\ref{lem: efficientDistanceIVS}).
Recall that given any pair of vertices $s,t$ in $G$, one can compute shortest path between $s$ and $t$ in $O(m \log n)$ time.
Thus, plugging the parameters $\alpha = 2r-1, f(n)=O(rn^{1/r}\log{n}),g(n)=O(rn^{1/r}\log^2{n}), r(n)=O(rn^{1/r}\log{n})$ and $h(n) = O(\log n)$ in Theorem~\ref{thm: metaTheorem} and  choosing the running time parameters as in Lemma~\ref{lem: optimalTradeoff}, we get an incremental algorithm such that for any pair of vertices $u$ and $v$ reports a query estimate $\delta(u,v)$ that approximates the shortest path distance between between $u$ and $v$ up to a $(2r-1)^{\ell}$ factor, and handles update operations in worst-case time of
\begin{align*}
    O\left(\ell c^{\ell - 1} r^{\ell} m^{1 / (\ell + 1)} n^{\ell/r} \log^{\ell-1} n \right)
\end{align*}
and query operations in worst-case time of
\begin{align*}
    O\left(\ell^2 c^{\ell - 1} r^{\ell} m^{1 / (\ell + 1)} n^{\ell/r} \log^{\ell-1} n \right).
\end{align*}
\end{proof}

\section{Offline Dynamic Algorithms via Vertex Sparsifiers}

\label{sec:meta_offline}
In this section, we show how to use efficient vertex sparsifier constructions to design \emph{offline} (approximate) dynamic algorithms for graph problems with certain properties while achieving fast amortized update and query time.
To achieve this we use a framework that has been exploited for solving offline $3$-connectivity~\cite{PengSS19}.
Our main contribution is to show that this generalizes to a much wider class of problems, leading to several interesting bounds which are not yet known in the \emph{online} dynamic graph literature.
Also, we show negative results on lower-bounds in dynamic problems.

We start by defining the model.
We are given an undirected graph $G=(V,E)$ and an \emph{offline} sequence of events or operations $x_1,\ldots,x_m$, where $x_i$ is ether an edge update~(insertion or deletion), or a query $q_i$ which asks about some graph property in $G$ at time $i$.
The goal is to process this sequence of updates in $G$ while spending total time proportional to $O(m f(m))$, where $f(m)$ is ideally some sub-linear function in $m$.

We next show that an analogue to Theorem~\ref{thm: metaTheorem} can also be obtained in the offline graph setting.
Our algorithm makes use of the notion of vertex sparsifiers as well as their useful properties including transitivity and decomposability. 
In our construction we want graph properties that admit (1) fast algorithms for computing vertex sparsifiers and (2) guarantee that the size of such sparsifers is reasonably small.
We formalize these requirements in the following definition. 

\begin{definition} \label{def: efficientVertexSparsifier}
Let $G=(V,E)$ be a graph, with a \emph{terminal} set $T \subseteq V$ and let $f(n), s(n) \geq 1$ be functions. We say that $(G',\alpha,f(n),s(n))$ is an $\alpha$-\emph{efficient vertex sparsifier} of $G$ with respect to $T$ iff  $G'$ is an $\alpha$-vertex sparsifier of $G$, the time to construct $G'$ is $O(m \cdot f(n))$, and the size of $G'$ is $O(|T| \cdot s(n))$.
\end{definition}

\begin{theorem} \label{thm: metaTheorem2}
Let $G=(V,E)$ be a graph, and for any $u,v \in V$, let $\mathcal{P}(u,v,G)$ be a solution to a minimization problem between $u$ and $v$ in $G$.
Let $f(n),s(n),h(n) \geq 1$ be functions, $\alpha, \ell \geq 1$ be parameters associated with the approximation factor, and let $\beta_0,\beta_1,\ldots, \beta_{\ell}$ with $\beta_0 = m$ be parameters associated with the running time.
Assume the following properties are satisfied
\begin{enumerate}
\itemsep0em 
\label{metaThM2: P1}\item $G$ admits an efficient vertex sparsifier $(G',\alpha,f(n),s(n))$,
\label{metaThM2: P2} \item $G'$ is transitive and decomposable,
\label{metaThM2: P3} \item The property $\mathcal{P}(u,v,G)$ can be computed in $O(m h(n))$ time in a graph with $m$ edges and $n$ vertices.
\end{enumerate}
Then there is an offline (approximate) dynamic algorithm that maintains for every pair of nodes $u$ and $v$, an estimate $\delta(u,v)$, such that
\begin{equation} \label{eq: approxMeta2}
        \mathcal{P}(u,v,G) \leq \delta(u,v) \leq \alpha^{\ell} \cdot \mathcal{P}(u,v,G).
\end{equation}
The total time for processing a sequence of $m$ operations is:
\begin{equation} \label{eq: runningTimeMeta2}   
        \tilde{O}\left (\beta_0\left( \sum_{j=1}^{\ell} \left(\frac{\beta_{j-1}}{\beta_j} \right)f(n) + \beta_{\ell} h(n)\right) s(n)  \right) \quad \text{where } \beta_0 = m.
\end{equation}
\end{theorem}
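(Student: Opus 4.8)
The plan is to mimic the hierarchical construction from Theorem~\ref{thm: metaTheorem}, but exploiting the offline setting to replace the amortized rebuild-in-background machinery with a clean recursive divide-and-conquer on the timeline. First I would set up the recursion: given the sequence of operations $x_1,\dots,x_m$, at the top level we have the initial graph $G_0 = G$ on $\beta_0 = m$ edges (after an easy preprocessing step we may assume $G$ itself has $O(m)$ edges since only edges touched by updates matter). Split the timeline into $\beta_0/\beta_1$ blocks of length $\beta_1$. For each block, the edges that are \emph{not} inserted or deleted inside that block form a ``static core'' of the graph during that block; we designate as terminals the endpoints of all edges that \emph{are} touched (inserted, deleted, or queried) within the block, a set of size $O(\beta_1)$. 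Using Property~(1) we build an $\alpha$-efficient vertex sparsifier $G'$ of the static core with respect to these terminals, in time $O(m f(n))$ per block and hence $O((\beta_0/\beta_1)\, m f(n)) = \tilde O(\beta_0 \frac{\beta_0}{\beta_1} f(n))$ total across the level. The sparsifier has size $O(\beta_1 s(n))$. Now within each block, reinserting the $O(\beta_1)$ touched edges on top of $G'$ (using decomposability, Property~(2), to argue the sparsifier of the union is the union of sparsifiers) yields a graph of size $O(\beta_1 s(n))$ on which we must still answer queries and process $\beta_1$ updates — exactly a subproblem of the same shape, with $m$ replaced by $\beta_1 s(n)$ and one fewer level of recursion budget. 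Recursing with the $\beta_i$'s as the block sizes and $\alpha^\ell$ approximation accumulating by transitivity (Property~(2) again) gives the bound; the base case at level $\ell$ has $O(\beta_\ell s(n))$ edges, and we run the static algorithm of Property~(3) to answer each query in $O(\beta_\ell s(n) h(n))$ time.

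The running-time accounting is the heart of the calculation. At level $j$ there are $\beta_0/\beta_j$ blocks total (nested), each of size $O(\beta_{j-1} s(n))$ in edge count after the inherited sparsification, and we pay $O(\text{size} \cdot f(n))$ to sparsify each, for $O\big(\frac{\beta_0}{\beta_j}\cdot \beta_{j-1} s(n)\cdot f(n)\big)$ at level $j$; summing $j=1,\dots,\ell$ gives the $\sum_j \frac{\beta_{j-1}}{\beta_j} f(n)$ term (with the overall $\beta_0 s(n)$ factor pulled out, and the polylog slack absorbed in $\tilde O$). The leaves contribute $\frac{\beta_0}{\beta_\ell}$ subproblems of size $O(\beta_\ell s(n))$ on which we answer up to $\beta_\ell$ queries, each in $O(\beta_\ell s(n) h(n))$ time — but note the query count is charged globally ($m$ queries total), so this collapses to $O(\beta_0 \cdot \beta_\ell s(n) h(n))$, matching the $\beta_\ell h(n)$ summand inside the parenthesis. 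Combining yields exactly \eqref{eq: runningTimeMeta2}.

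The correctness of the estimate follows by induction on the recursion depth: decomposability guarantees that the sparsifier of (static core $\cup$ touched edges) built level-by-level is an $\alpha$-vertex sparsifier of the true current graph restricted to terminals, and since every query pair $(u,v)$ is forced into the terminal set at every level it passes through, transitivity composes the $\ell$ factors of $\alpha$ into the claimed $\alpha^\ell$, with the monotone (``$\leq \delta(u,v)$'') side coming from the one-sided guarantee in the definition of vertex sparsifier. The main obstacle I anticipate is bookkeeping the \emph{identity} of terminals across levels: when we recurse into a block, the terminals of the \emph{parent} level that happen to lie inside that block must persist as terminals of the child subproblem (otherwise queries spanning multiple child-blocks are not preserved), so the child's terminal set is ``$O(\beta_j)$ touched-in-block endpoints $+$ inherited parent terminals.'' One must verify this inherited set still has size $O(\beta_{j-1})$ — which it does, since a parent block has $O(\beta_{j-1})$ terminals total — so the size bounds propagate, but making this precise (and checking the sparsifier-size bound $O(|T| s(n))$ is applied with the right $|T|$ at each level) is the delicate part. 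The rest is the geometric-series manipulation, which I would not grind through here.
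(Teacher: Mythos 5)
Your proposal is correct and follows essentially the same route as the paper: a hierarchical partition of the timeline into nested intervals of lengths $\beta_1,\dots,\beta_\ell$, sparsifying the edges that are static within each interval with respect to the touched/queried vertices, gluing the touched edges back via decomposability, composing the $\alpha$ factors via transitivity, and the identical level-by-level cost accounting (including the observation that each level-$j$ instance has size $O(\beta_{j-1}s(n))$ and the leaf-level query cost $O(\beta_0\beta_\ell s(n)h(n))$). The terminal-inheritance issue you flag is resolved in the paper by noting that the child's boundary set is automatically a \emph{subset} of the parent's (an edge or query event in a subinterval is also an event in the parent interval), so the terminal set at level $j$ is $O(\beta_j)$ as needed.
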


Before describing the underlying data-structure upon which the above theorem builds, we reduce the arbitrary sequence of operations into a more structured one, and also build a particular view for the problem.
These will allow us to greatly simplify the presentation. 

Concretely, first we may assume that each edge is inserted and deleted exactly once during the sequence of operations.
We achieve this by simply treating each edge instance as a new edge, i.e., we assume that each insertion of an edge $e=(u,v)$ inserts a new edge that is different from all previous instances of $(u,v)$. 

Second, since we are given the entire sequence of operations, for each edge $e$ we associate an interval $[i_e,d_e]$ which indicates the insertion and deletion time of $e$ in the operation sequence.
Furthermore, we denote by $q_t$ the time when query $q$ was asked in the operation sequence.
Let $[1,m]$ denote the interval covering the entire event sequence.
If we are interested in processing updates from a given interval $[r,s]$, we will define graphs that consists of two types of edges with respect to this interval: 
\begin{enumerate} \itemsep0em 
\item \emph{non-permanent edges}, which are edges affected by an event in this interval, i.e., $E^{p}_{[r,s]} = \{e \mid i_e \text{ or } d_e \in [r,s]\}, $
\item \emph{permanant edges}, which are edges present throughout the entire interval, i.e., $E^{np}_{[r,s]} = \{e \mid i_e < r \leq s < d_e\}.$
\end{enumerate} 
Additionally, it will be useful to define the queried vertex pairs within the interval $[r,s]$: $Q_{[r,s]} = \{q \mid q_t \in [r,s]\}$.


\paragraph*{Data Structure.}
We now describe a generic tree data-structure $\mathcal{T}$, which allows us to unify our framework and thus greatly simplify the presentation.
This tree structure is obtained by hierarchically partitioning the operation sequence into smaller disjoint intervals.
These intervals induce graphs that are suitable for applying vertex sparsifiers, which in turn allow us to process updates in a fast way, while paying some error in the accuracy of the query operations. 

Consider some integer parameter $\ell \geq 1$ and parameters $\beta_0, \beta_1, \ldots, \beta_{\ell}$ with $\beta_0 = m$.
The tree $\mathcal{T}$ has $\ell+1$ levels, where each level $i$ is associated with the parameter $\beta_i$, $i=0,\ldots,\ell$.
Each node of the tree stores some interval from the event sequence.
Formally, our decomposition tree $T$ satisfies the following properties:
\begin{enumerate}
\itemsep0em 
\item The root of the tree stores the interval $[1,m]$.
\item The intervals stored at nodes of same level are disjoint.
\item Each interval $[r,s]$ stored at a node in $\mathcal{T}$ is associated with 
    \begin{itemize}
    \itemsep0em 
    \item a graph $G_{[r,s]} = \left(V,E^{p}_{[r,s]}\right)$,
    \item a graph of \emph{new permanent edges} $H_{[r,s]} = G_{[r,s]} \setminus G_{[q,t]}$, where $G_{q,t}$ is the parent of $G_{[r,s]}$ in $\mathcal{T}$ (if any). 
    \item a set of \emph{boundary vertices} $\partial_{[r,s]} = V(E^{np}_{[r,s]}) \cup V(Q_{[r,s]})$.
    \end{itemize}
\item If $[r,s] \subseteq [q,t]$ then it holds that (a) $\partial_{[r,s]} \subseteq \partial_{[q,t]}$, and (b) $E^{p}_{[q,t]} \subseteq E^{p}_{[r,s]}$.
\item The length of the interval stored at a node at level $i$ is $\beta_i$.
\item A node at level $i$ has $\beta_{i}/\beta_{i+1}$ children.
\item The number of nodes at level $i$ is at most $O(\beta_0/\beta_i)$.
\end{enumerate}

The lemma below shows that a decomposition tree can be constructed in time proportional to the length of the operation sequence times the height of the tree.

\begin{lemma} \label{lemm: decomposTree}
Let $G=(V,E)$ be a dynamic graph where the sequence of operations is revealed upfront.
Then there is an algorithm that computes the decomposition tree $\mathcal{T}$ in $O(\ell m)$ time, where $m$ denotes the length of the operation sequence and $\ell$ is the height of the tree.
\end{lemma}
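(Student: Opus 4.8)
The plan is to build $\mathcal{T}$ in three phases — preprocessing, building the skeleton, and populating the per-node data — and to charge every edge and every query only $O(\ell)$ work. First I would make one left-to-right pass over the operation sequence $x_1,\dots,x_m$. Since by the earlier reduction each edge is inserted and deleted exactly once, this pass records for every edge $e$ its lifespan $[i_e,d_e]$ and for every query $q$ its timestamp $q_t$; bucketing these events by timestamp with a counting sort (all timestamps lie in $[1,m]$) costs $O(m)$. After this, for any timestamp $t$ and any level $i$ I can locate in $O(1)$ time the unique level-$i$ interval containing $t$, using the regular arithmetic layout of the levels (we may assume, by rounding the $\beta_i$'s to convenient values, that $\beta_{i+1}$ divides $\beta_i$, so that level-$i$ intervals are the length-$\beta_i$ blocks of $[1,m]$).

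Next I would build the tree top-down: create the root with interval $[1,m]$, and recursively split the length-$\beta_i$ interval stored at a level-$i$ node into $\beta_i/\beta_{i+1}$ consecutive length-$\beta_{i+1}$ sub-intervals, one per child. Properties (1), (2), (5), (6) and (7) of the decomposition tree are then immediate: level $i$ has $\beta_0/\beta_i$ nodes, each with $\beta_i/\beta_{i+1}$ children, and the level-$i$ intervals partition $[1,m]$. The total number of nodes is $\sum_{i=0}^{\ell}\beta_0/\beta_i\le(\ell+1)m=O(\ell m)$ (each term is at most $\beta_0=m$ since $\beta_i\ge 1$), and each node is created in $O(1)$ time, so the skeleton costs $O(\ell m)$.

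Finally I would populate, for each node $[r,s]$, the graph $G_{[r,s]}$ of edges touched within $[r,s]$, the boundary set $\partial_{[r,s]}$ (endpoints of those edges together with query endpoints in $[r,s]$), and the set $H_{[r,s]}$ of data new with respect to its parent. The key point that keeps this within $O(\ell m)$ is that, because the intervals at a fixed level partition $[1,m]$, each edge $e$ has $i_e$ in exactly one level-$i$ interval and $d_e$ in exactly one, so $e$ is ``touched'' in at most two level-$i$ nodes, and each query lies in exactly one level-$i$ node. Thus I would route each edge and each query down the $O(\ell)$ levels, using the $O(1)$ locate operation, appending it to the at most two relevant adjacency lists and boundary sets at each level; the ``new'' sets $H_{[r,s]}$ and the nesting property (4)(a)--(b) are then read off during the same descent directly from the definitions. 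This is $O(\ell)$ work per edge and per query, hence $O(\ell m)$ in total, and together with the skeleton cost this gives the claimed $O(\ell m)$ bound.

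The main obstacle is the accounting in the last phase: one must verify that the per-node data, summed over the whole tree, has total size $O(\ell m)$ and can be assembled with no redundant work, which rests precisely on the ``at most $O(1)$ touches per level'' property together with the regularity of the level structure that makes interval lookups constant-time. Everything else — the structural properties (1)--(7) and the containment relations in (4) — follows directly by construction once the skeleton and the routing of edges/queries are in place.
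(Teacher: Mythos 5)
Your construction is correct and matches the paper's proof essentially step for step: the same top-down recursive splitting of $[1,m]$ into length-$\beta_i$ blocks yields properties (1), (2), (5)--(7) immediately, a single pass computes the lifespans $[i_e,d_e]$ (using that each edge is inserted and deleted exactly once), and the per-node sets are then filled in level by level with $O(m)$ work per level because the intervals at each level are disjoint and each edge contributes only $O(1)$ events to each level. Your write-up is somewhat more explicit than the paper's about why the per-level work is $O(m)$ (the ``at most two touches per level'' observation and the constant-time interval lookup), but the underlying argument is the same.
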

\begin{proof}
Let $\mathcal{T}$ be a tree with a single node (corresponding to its root) that stores the interval $[1,m]$.
We augment $\mathcal{T}$ in the following natural way:
(a) We partition the interval $[1,m]$ into $\beta_0/\beta_1 = m/\beta_1$ disjoint intervals, each of length $\beta_1$.
(b) For each of these intervals we create a node in the tree $\mathcal{T}$, and connect each node with the root of $\mathcal{T}$, i.e., those nodes form the children of the root, and thus the nodes at level $1$ of $\mathcal{T}$. 
(c) We recursively apply steps (a) and (b) to the newly generated nodes until we reach the $(\ell+1)$-st level of the tree. 

By the construction above, it easily follows that the generated tree $\mathcal{T}$ satisfies properties (1), (2), (4), (5), (6) and (7).
Thus, it remains to show how to compute the quantities in (3).
This can be achieved by (a) computing the intervals $[i_e,d_e]$, for every edge $e$ in the sequence (note that this is possible because we assumed that every edge is inserted and deleted exactly once within the interval $[1,m]$), and (2) for each node in the tree, computing the sets $E^{np}_{[r,s]}$ and $E^{p}_{[r,s]}$. 

For the running time, observe that computing the intervals $[i_e,d_e]$ takes $O(m)$ time.
Having computed these intervals, we can level-wise compute the permanent and non-permanent edges for each node in that particular level.
By disjointedness of the intervals, the amount of work we perform per level is $O(m)$.
Since there are most $O(\ell)$ levels, it follows that the running time for constructing the decomposition tree is $O(\ell m)$.
\end{proof}
 
\paragraph*{Computing vertex sparsifiers in the hierarhcy.}
We next show how to efficiently compute a vertex sparsifier $G'_{[r,s]}$ for each node $G_{[r,s]}$ from the decomposition tree $\mathcal{T}$.
The main idea behind this algorithm is to leverage the sparsifier computed at the parent nodes as well as apply the efficient vertex sparsifiers from Theorem~\ref{thm: metaTheorem2}~Part~1.
The procedure accomplishing this task for a single node of the tree $\mathcal{T}$ is formally given in Algorithm~\ref{alg: vertexNodeSparsiy}.
To compute the vertex sparsifier for every node, we simply apply it in a top-down fashion to the nodes of $\mathcal{T}$. 

\begin{algorithm2e}[h]
\caption{\textsc{VertexSparsify}$(G_{[r,s]})$}
\label{alg: vertexNodeSparsiy}
\If{$G_{[r,s]}$ is the root node}
{
        $G''_{[r,s]} = G'_{[r,s]} \gets (V,\emptyset)$, i,e, the empty graph.
}
\Else
{       Let $G_{[q,t]}$ be the parent of $G_{[r,s]}$ in $\mathcal{T}$ \\
        $G''_{[r,s]} \gets \left(G'_{[q,t]} \cup  H_{[r,s]} \right)$, where $G_{[q,t]}'$ is an efficient vertex sparsifier of $G_{[q,t]}$ with respect to $\partial_{[q,t]}$\\
        Let $G'_{[r,s]}$ be an $\alpha$-efficient vertex sparsifier of $G''_{[r,s]}$ with respect to $\partial_{[r,s]}$~(Theorem~\ref{thm: metaTheorem2}~Part~1)
}
\Return $G'_{[r,s]}$
\end{algorithm2e}

To argue about the usefulness of Algorithm~\ref{alg: vertexNodeSparsiy}, we need to bound the quality of sparsifiers produced at the nodes of $\mathcal{T}$.
The lemma below show that the quality grows multiplicatively with the number of levels in $\mathcal{T}$.

\begin{lemma} \label{lem: qualitySparsifier}
Let $G_{[r,s]}$ be a node of $\mathcal{T}$ at level $i \geq 0$ .
Then $G'=\textsc{VertexSparsify}(G_{[r,s]})$ outputs an $\alpha^{i}$-efficient vertex sparsifier of $G_{[r,s]}$ with respect to $\partial_{[r,s]}$.
\end{lemma}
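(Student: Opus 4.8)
The plan is to prove Lemma~\ref{lem: qualitySparsifier} by induction on the level $i$ of the node $G_{[r,s]}$ in the decomposition tree $\mathcal{T}$, using the transitivity and decomposability of the vertex sparsifier (guaranteed by Part~2 of Theorem~\ref{thm: metaTheorem2}) together with the single-step sparsification guarantee from Part~1.

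\textbf{Base case.} For $i=0$ the node is the root, which stores the interval $[1,m]$. By construction in Algorithm~\ref{alg: vertexNodeSparsiy}, $G'_{[1,m]}=(V,\emptyset)$. Now $E^{p}_{[1,m]}$ contains exactly the edges affected by some event, i.e.\ every edge that is inserted or deleted at some point, while $E^{np}_{[1,m]}$ is empty; but the boundary set $\partial_{[1,m]}$ includes $V(E^{np}_{[1,m]})\cup V(Q_{[1,m]})$, and one checks that the empty graph is a (trivially faithful) $\alpha^0 = 1$ vertex sparsifier of $G_{[1,m]}$ with respect to $\partial_{[1,m]}$ because there are no permanent edges at the root to preserve. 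Its size is $O(|\partial_{[1,m]}|\cdot s(n))$, vacuously, so it is an $\alpha^0$-efficient vertex sparsifier. (Here I rely on the convention that the property $\mathcal{P}$ on the permanent subgraph is what must be preserved; since at the root the permanent part is empty this is immediate.)

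\textbf{Inductive step.} Assume the claim holds for all nodes at level $i-1$; let $G_{[r,s]}$ be at level $i$ with parent $G_{[q,t]}$ at level $i-1$. By the inductive hypothesis, $G'_{[q,t]}$ is an $\alpha^{i-1}$-efficient vertex sparsifier of $G_{[q,t]}$ with respect to $\partial_{[q,t]}$. By property~(4)(a) of $\mathcal{T}$ we have $\partial_{[r,s]}\subseteq\partial_{[q,t]}$, so $G'_{[q,t]}$ also $\alpha^{i-1}$-preserves $\mathcal{P}$ between all pairs in $\partial_{[r,s]}$. Next, $G''_{[r,s]} = G'_{[q,t]}\cup H_{[r,s]}$ where $H_{[r,s]} = G_{[r,s]}\setminus G_{[q,t]}$ is the set of new permanent edges. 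Using decomposability — replacing the subgraph $G_{[q,t]}$ inside $G_{[r,s]}$ by its sparsifier $G'_{[q,t]}$ does not distort $\mathcal{P}$ on the retained terminal set by more than the sparsifier's own factor — we conclude that $G''_{[r,s]}$ is an $\alpha^{i-1}$-vertex sparsifier of $G_{[r,s]} = G_{[q,t]}\cup H_{[r,s]}$ with respect to $\partial_{[r,s]}$. Here we must check the hypotheses of decomposability apply: the terminals $\partial_{[q,t]}$ of the inner sparsifier contain all vertices of $H_{[r,s]}$ that touch $G_{[q,t]}$; this follows because $H_{[r,s]}$ consists of edges that are permanent throughout $[r,s]$ but were \emph{not} permanent throughout $[q,t]$, hence each such edge is affected by an event in $[q,t]\setminus[r,s]$, placing its endpoints in $E^{np}_{[q,t]}$'s vertex set wait — more carefully, such an edge $e$ has $i_e$ or $d_e$ in $[q,t]$, so its endpoints lie in $V(E^{p}_{[q,t]})$; one needs them in $\partial_{[q,t]}$, which requires the boundary to be defined to include endpoints of edges crossing into child subintervals. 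This interface condition is exactly what property~(4) and the definition of $H_{[r,s]}$ are set up to guarantee, and verifying it cleanly is the one genuinely fiddly point.

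\textbf{Finishing.} Finally, Algorithm~\ref{alg: vertexNodeSparsiy} sets $G'_{[r,s]}$ to be an $\alpha$-efficient vertex sparsifier of $G''_{[r,s]}$ with respect to $\partial_{[r,s]}$, which exists by Part~1 of Theorem~\ref{thm: metaTheorem2}. Composing with transitivity of vertex sparsifiers, $G'_{[r,s]}$ is an $\alpha\cdot\alpha^{i-1}=\alpha^{i}$-vertex sparsifier of $G_{[r,s]}$ with respect to $\partial_{[r,s]}$. For the efficiency bookkeeping: the construction time of the inner call is $O(|E(G''_{[r,s]})|\cdot f(n))$ and $|E(G''_{[r,s]})| \le |E(G'_{[q,t]})| + |H_{[r,s]}| = O(|\partial_{[q,t]}|s(n)) + |H_{[r,s]}|$, while the output size is $O(|\partial_{[r,s]}|\cdot s(n))$ by Definition~\ref{def: efficientVertexSparsifier} applied to the final call; chasing these through preserves the ``efficient'' qualifier. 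The only real obstacle, as flagged above, is pinning down that the boundary-vertex sets are rich enough that decomposability is legitimately applicable at each level — once that interface invariant is stated precisely, the induction is routine.
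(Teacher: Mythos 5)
Your proof is correct and follows essentially the same route as the paper's: induction on the level, decomposability to absorb $H_{[r,s]}$ into the parent's sparsifier, Part~1 for the fresh sparsification, and transitivity together with $\partial_{[r,s]}\subseteq\partial_{[q,t]}$ to compose the factors into $\alpha^i$. The ``fiddly point'' you flag is real but resolves exactly as you sketch: an edge of $H_{[r,s]}$ is permanent in $[r,s]$ but not in $[q,t]$, hence is non-permanent in the parent interval, so its endpoints lie in $V(E^{np}_{[q,t]})\subseteq\partial_{[q,t]}$ (this is the content of Lemma~\ref{lem: boundNewPermanentedges}), which is all decomposability needs.
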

\begin{proof}
We proceed by induction on $i$.
For the base case, i.e., $i = 0$, $G_{[1,m]}$ is the root node.
Since $E^{p}_{[1,m]} = \emptyset$ by definition of permanent edges, we get that $G'_{[1,m]} = G_{[1,m]}$, i.e., $G_{[1,m]}$ is a sparsifier of itself. 

Let $G_{[r,s]}$ be a node at level $i > 0$.
Let $G_{[q,t]}$ be the parent of $G_{[r,s]}$ in $\mathcal{T}$, and let $G'_{[q,t]}$ be its cut sparsifier at level $(i-1)$, as defined in Algorithm~\ref{alg: vertexNodeSparsiy}.
By Property (4) of $\mathcal{T}$ note that $E^{p}_{[q,t]} \subseteq E^{p}_{[r,s]}$ since $[r,s] \subseteq [q,t]$.
Also recall that $H_{[r,s]} = G_{[r,s]} \setminus G_{[q,t]}$.
By induction hypothesis, we know that $G'_{[q,t]}$ is an $\alpha^{i-1}$-efficient vertex sparsifier of $G_{[q,t]}$ with respect to $\partial_{[q,t]}$.
This together with the decomposability property in Theorem~\ref{thm: metaTheorem2}~Part~2 imply that that $G''_{r,s} = G'_{[q,t]} \cup (G_{[r,s]} \setminus G_{[q,t]})$ is an $\alpha^{i-1}$-efficient vertex sparsifier of $G_{[q,t]} \cup (G_{[r,s]} \setminus G_{[q,t]}) = G_{[r,s]}$ with respect to $\partial_{[q,t]}$.
Now, by Theorem~\ref{thm: metaTheorem2}~Part~1 we get that $G'_{[r,s]}$ is an $\alpha$-efficient vertex sparsifier of $G''_{[r,s]}$ with respect to $\partial_{[r,s]}$.
Since $\partial_{[r,s]} \subseteq \partial_{[q,t]}$, and applying the transitivity property~(Theorem~\ref{thm: metaTheorem2}~Part~2) on $G'_{[r,s]}$ and $G''_{[r,s]}$, we get that $G'_{[r,s]}$ is an $\alpha^{i-1+1} = \alpha^i$-efficient vertex sparsifier of $G_{[r,s]}$.
\end{proof}

We now state a crucial property of the nodes in the decomposition tree $\mathcal{T}$, which allows us to get a reasonable bound on the running time for computing vertex sprasifiers for the nodes in $\mathcal{T}$.

\begin{lemma} \label{lem: boundNewPermanentedges}
Let $G_{[r,s]}$ be a node in the decomposition tree $\mathcal{T}$, and let $G_{[q,t]}$ be its parent.
Then we have that the number of new permanent edges of $G_{[r,s]}$ is bounded by the number of non-permanent edges of its parent, i.e., $|E\left(H_{[r,s]}\right)| \leq |E^{np}_{[q,t]}|$.
\end{lemma}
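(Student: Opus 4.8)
The plan is to prove the lemma by unwinding the definitions of the decomposition tree: a \emph{new permanent edge} of the child $G_{[r,s]}$ is an edge that is present throughout the smaller interval $[r,s]$ but not throughout the larger interval $[q,t]$, and the only way this can happen is that one of the two temporal endpoints of the edge --- its insertion time $i_e$ or its deletion time $d_e$ --- falls in the ``gap'' $[q,t]\setminus[r,s]$. But an edge with an event inside $[q,t]$ is by definition a non-permanent edge of the parent, so the identity map is an inclusion of $E(H_{[r,s]})$ into $E^{np}_{[q,t]}$, which is exactly the claimed inequality.

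Concretely, first I would record that, writing $G_{[r,s]}=(V,E^{p}_{[r,s]})$ and $G_{[q,t]}=(V,E^{p}_{[q,t]})$ (and using Property~(4)(b) of $\mathcal{T}$, which gives $E^{p}_{[q,t]}\subseteq E^{p}_{[r,s]}$ since $[r,s]\subseteq[q,t]$), the new-permanent-edge set is exactly
\[
E\bigl(H_{[r,s]}\bigr)\;=\;E^{p}_{[r,s]}\setminus E^{p}_{[q,t]}.
\]
So it suffices to establish the set inclusion $E^{p}_{[r,s]}\setminus E^{p}_{[q,t]}\subseteq E^{np}_{[q,t]}$ and then pass to cardinalities.

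To prove the inclusion I would fix an arbitrary $e\in E^{p}_{[r,s]}\setminus E^{p}_{[q,t]}$ and track its lifespan $[i_e,d_e]$ against the nested endpoints $q\le r\le s\le t$. Membership $e\in E^{p}_{[r,s]}$ unfolds to $i_e<r\le s<d_e$, while $e\notin E^{p}_{[q,t]}$ says the chain $i_e<q\le t<d_e$ fails, i.e.\ $i_e\ge q$ or $d_e\le t$. In the first case $q\le i_e<r\le t$ shows $i_e\in[q,t]$, so $e$ is affected by its insertion event inside $[q,t]$ and hence $e\in E^{np}_{[q,t]}$. In the second case $q\le r\le s<d_e\le t$ shows $d_e\in[q,t]$, so $e$ is affected by its deletion event inside $[q,t]$ and again $e\in E^{np}_{[q,t]}$. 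Either way $e\in E^{np}_{[q,t]}$, which proves the inclusion and therefore $|E(H_{[r,s]})|\le|E^{np}_{[q,t]}|$.

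There is no real obstacle here --- the statement is essentially a one-line interval-arithmetic observation --- and the only point that needs care is keeping the ``present throughout'' convention for $E^{p}$ and the ``affected by an event'' convention for $E^{np}$ straight, so that events landing exactly on $q,r,s,t$ are counted on the correct side. The reason the lemma matters is what comes next: iterated along a root-to-leaf path of $\mathcal{T}$ it bounds the edges first appearing at each level by the (boundary-controlled) non-permanent edge count of the level above, which is what lets the top-down vertex-sparsification of Algorithm~\ref{alg: vertexNodeSparsiy} run in near-linear total time.
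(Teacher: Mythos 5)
Your proof is correct and follows essentially the same route as the paper's (one-line) argument: an edge of $H_{[r,s]}$ is permanent in the child but not in the parent, hence has an insertion or deletion event inside $[q,t]$ and is therefore non-permanent for the parent. Your version just spells out the interval arithmetic explicitly and correctly untangles the paper's swapped $E^{p}/E^{np}$ superscripts in the definitions, which is a helpful clarification rather than a different approach.
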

\begin{proof}
If an edges in in $H_{[r,s]}$, then it is not in $G^{p}_{[r,s]}$, thus it is a non-permanent edge in $G_{[q,t]}$.
\end{proof}

The lemma below gives a bound on the running time for computing vertex sparsifers in $\mathcal{T}$.

\begin{lemma} \label{lem: runningTimeForSparsifiers}

The total running time for computing the vertex sparsifiers for each node in the decomposition tree $T$ of height $\ell$ is bounded by 
\[
        \tilde{O} \left(\beta_0 \cdot \left(\sum_{j=1}^{\ell} \frac{\beta_{j-1}}{\beta_j}f(n)s(n)\right)\right), \quad \text{where }\beta_0 = m.
\]
\end{lemma}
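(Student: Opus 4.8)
The plan is to charge, level by level, the cost of running Algorithm~\ref{alg: vertexNodeSparsiy} at each node of $\mathcal{T}$, and then sum over the $\ell+1$ levels. Fix a node $G_{[r,s]}$ at level $i \ge 1$ with parent $G_{[q,t]}$ at level $i-1$. The dominant work at this node is forming $G''_{[r,s]} = G'_{[q,t]} \cup H_{[r,s]}$ and then invoking the efficient vertex sparsifier of Theorem~\ref{thm: metaTheorem2}~Part~1 on $G''_{[r,s]}$ with respect to $\partial_{[r,s]}$; by Definition~\ref{def: efficientVertexSparsifier} this costs $O(|E(G''_{[r,s]})| \cdot f(n))$, and (using sparse edge-list representations) forming the union and the remaining bookkeeping are subsumed by it. So the whole estimate reduces to controlling $|E(G''_{[r,s]})|$.

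For that, I would bound $|E(G''_{[r,s]})| \le |E(G'_{[q,t]})| + |E(H_{[r,s]})|$ term by term. By Lemma~\ref{lem: qualitySparsifier}, $G'_{[q,t]}$ is an $\alpha^{i-1}$-efficient vertex sparsifier of $G_{[q,t]}$ with respect to $\partial_{[q,t]}$, so by the size clause of Definition~\ref{def: efficientVertexSparsifier} it has $O(|\partial_{[q,t]}| \cdot s(n))$ edges; and by Lemma~\ref{lem: boundNewPermanentedges}, $|E(H_{[r,s]})| \le |E^{np}_{[q,t]}|$. It then remains to translate interval lengths into combinatorial sizes: Property~(5) of $\mathcal{T}$ says $[q,t]$ has length $\beta_{i-1}$, hence hosts at most $\beta_{i-1}$ events; since every edge is inserted and deleted exactly once in the whole sequence, each event affects at most one edge and at most two vertices, and each query names two vertices, so $|E^{np}_{[q,t]}| = O(\beta_{i-1})$ and $|\partial_{[q,t]}| = |V(E^{np}_{[q,t]}) \cup V(Q_{[q,t]})| = O(\beta_{i-1})$. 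Using $s(n) \ge 1$, this yields $|E(G''_{[r,s]})| = O(\beta_{i-1}\, s(n))$, so the per-node cost is $O(\beta_{i-1}\, s(n)\, f(n))$.

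Finally I would sum. By Property~(7) of $\mathcal{T}$ there are $O(\beta_0/\beta_i)$ nodes at level $i$, so level $i$ contributes $O\!\left(\frac{\beta_0}{\beta_i} \cdot \beta_{i-1}\, s(n)\, f(n)\right) = O\!\left(\beta_0 \cdot \frac{\beta_{i-1}}{\beta_i}\, f(n)\, s(n)\right)$. The root (level~$0$) only writes down $(V,\emptyset)$ in $O(n)$ time, and building $\mathcal{T}$ costs $O(\ell m)$ by Lemma~\ref{lemm: decomposTree}; both are dominated since every ratio $\beta_{j-1}/\beta_j$ is at least $1$ and $f, s \ge 1$. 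Summing over $i = 1, \dots, \ell$ gives $O\!\left(\beta_0 \sum_{j=1}^\ell \frac{\beta_{j-1}}{\beta_j}\, f(n)\, s(n)\right)$, which is the claimed bound once the $\tilde O(\cdot)$ absorbs the polylogarithmic factors hidden inside $f$ and $s$.

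The only point that needs genuine care is the second step (bounding $|E(G''_{[r,s]})|$): one must make sure the per-node sparsifier sizes do not grow geometrically with the level. This is precisely why Algorithm~\ref{alg: vertexNodeSparsiy} re-sparsifies $G''_{[r,s]}$ back down to the terminal set $\partial_{[r,s]}$ at every node instead of letting the sparsifiers inherited from ancestors pile up; invoking the size guarantee of Definition~\ref{def: efficientVertexSparsifier} — size $O(|T| s(n))$, independent of the input graph's size — at each level is what keeps $|E(G''_{[r,s]})|$ at $O(\beta_{i-1} s(n))$ rather than $O(\beta_{i-1} s(n)^{\,i})$. The secondary bookkeeping point is the bound $|\partial_{[q,t]}| = O(\beta_{i-1})$, which relies on Property~(5) together with the earlier normalization that every edge is inserted and deleted exactly once in the operation sequence.
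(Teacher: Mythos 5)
Your proposal is correct and follows essentially the same route as the paper's proof: bound $|E(G''_{[r,s]})| = O(\beta_{i-1} s(n))$ via the sparsifier size guarantee applied to the parent together with Lemma~\ref{lem: boundNewPermanentedges}, multiply by the $O(\beta_0/\beta_i)$ nodes per level (Property~(7)), and sum over levels. The only difference is presentational — you sum directly where the paper phrases the summation as an induction on the level index — and your closing remark about re-sparsifying at every node to prevent geometric size growth is exactly the right point of care.
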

\begin{proof}
For $i \geq 1$, let $Y(i)$ be the total time for computing the vertex sparsifiers for all the nodes in $\mathcal{T}$ up to (and including) level $i$.
Furthermore, let $Z(i)$ be the total time for computing the vertex sparsifier of the nodes at level $i$ in $Y$~(and excluding other levels).
We will show by induction on the number of levels $i$ that $Y(i) = O\left(\beta_0 \cdot \left(\sum_{j=1}^{i} \frac{\beta_{j-1}}{\beta{j}}\right)f(n)s(n)\right)$, which with $i=k$ implies the claim we want to prove.

For the base case, i.e., $i=1$, consider any node $G_{[r,s]}$ at level $1$ of $\mathcal{T}$.
By construction of $\mathcal{T}$, $G_{[r,s]}$ contains at most $O(\beta_0)$ permanent edges.
Furthermore, note that the parent of $G_{[r,s]}$ is the root node $G_{[1,m]}$, for which $G'_{[1,m]} = (V,\emptyset)$.
Thus, by Theorem~\ref{thm: metaTheorem2}~Part~1 we get that the time to compute an efficient vertex sparsifier per node is $O(\beta_0 \cdot f(n))$.
By Property (7) of $\mathcal{T}$, the number of nodes at level $1$ is $O(\beta_0/\beta_1)$, implying that the total running time is $Y(1) = Z(1) = O\left(\beta_0 \left( \frac{\beta_0}{\beta_1} \right) f(n)\right) = O\left(\beta_0 \left( \frac{\beta_0}{\beta_1} \right) f(n)g(n)\right)$, as desired. 

We next show the inductive step.
Let $G_{[r,s]}$ be a node at level $i > 1$, and let $G_{[q,t]}$ be its parent.
We want to bound the size of the intermediate graph $G''_{[r,s]} = (G'_{[q,t]} \cup H_{[r,s]})$, as defined in Algorithm~\ref{alg: vertexNodeSparsiy}, which in turn determines the running time for computing an efficient vertex sparsifier of $G_{[r,s]}$.
To this end, first observe that Theorem~\ref{thm: metaTheorem2}~Part~1 implies that the size of sparsifier $G'_{[q,t]}$ of $G_{[q,t]}$ is bounded by 
\[ O(|\partial_{[q,t]}| \cdot s(n)) \leq |V(E^{np}_{[r,s]}) \cup V(Q_{[r,s]})| \cdot s(n) \leq O(\beta_{i-1} \cdot s(n)),\]
since the number of non-permanent edges and queries is proportional to the length of the interval being considered.
Second, by Lemma~\ref{lem: boundNewPermanentedges}, we also have that $|E(H_{[r,s]})| \leq |E^{np}_{q,t}| \leq O(\beta_{i-1})$, thus giving that $|G''_{[r,s]}| \leq O(\beta_{i-1} \cdot s(n))$.
As Algorithm~\ref{alg: vertexNodeSparsiy} runs \textsc{CutSparsify} on the graph $G''_{[r,s]}$, Theorem~\ref{thm: metaTheorem2}~Part~1 gives that the running time to compute an efficient vertex sparsifier for the node $G_{[r,s]}$ is $O(\beta_{i-1} \cdot f(n) s(n))$, and that its size is $O(\beta_{i-1} \cdot s(n))$.
Combining this together with the fact that the number of nodes at level $i$ is at most $O(\beta_0/\beta_i)$ (Property (7) of $\mathcal{T}$) imply that \[Z(i) = O\left(\beta_0 \cdot \frac{\beta_{i-1}}{\beta_i} f(n) s(n)\right). \] 

To complete the inductive step, note that by induction hypothesis, \[Y(i-1) = O\left(\beta_0 \cdot \left(\sum_{j=1}^{i-1} \frac{\beta_j-1}{\beta_j}\right)f(n)s(n)\right).\]
Summing over this and the bound on $Z(i)$ we get
\begin{align*}
        Y(i) & = Y(i-1) + Z(i) \\
        &  = O\left(\beta_0 \cdot \left(\sum_{j=1}^{i-1} \frac{\beta_{j-1}}{\beta_j}\right) f(n)s(n) \right) + O\left(\beta_0 \cdot \left(\frac{\beta_{i-1}}{\beta_i}\right)f(n)s(n)\right) \\
         & = O\left(\beta_0 \cdot \left(\sum_{j=1}^{i} \frac{\beta_{j-1}}{\beta{j}}f(n)s(n)\right)\right).
\end{align*}
\end{proof}

\paragraph*{Processing operations in the hierarchy.}
So far we have shown how to reduce the sequence of operations into smaller intervals in a hierarchical manner, while (approximately) preserving the properties of the edges and queries involved in the offline sequence.
In what follows, we observe that for processing these events, it is sufficient to process the nodes (and their corresponding intervals) stored at the last level $\ell$ of the tree decomposition $\mathcal{T}$ (note that this is possible because intervals at level $\ell$ form a partitioning of the event sequence $[1,m]$, and all vertex pairs within intervals that will be involved in edge updates or queries are preserved using vertex sparsifiers).



The algorithm for processing the updates is quite simple: for every node $G_{[r,s]}$ at level $\ell$ of $\mathcal{T}$, we process all operations in the interval consecutively: for each edge insertion or deletion we add or remove that suitable edges to $G'_{[r,s]}$, and for each query $(x,y)$ we run on the vertex sparsifier $G'_{[r,s]}$ the static algorithm from Theorem~\ref{thm: metaTheorem2}~Part~3 to calculate the property $\mathcal{P}(x,y,G'_{[r,s]})$ between $x$ and $y$ in $G'_{[r,s]}$.
(note that this is possible since $\partial_{[r,s]} \supseteq \{x,y\}$ by construction of $\mathcal{T}$). 

We next analyze the total time for processing the sequence of events in the last level of $\mathcal{T}$.

\begin{lemma} \label{lem: procesingUpdates}
The total time for processing the whole sequence of operations at level $\ell$ of the decomposition tree $\mathcal{T}$ is $\tilde{O}(\beta_0 \beta_\ell \cdot s(n) h(n))$, where $\beta_0 = m$.
\end{lemma}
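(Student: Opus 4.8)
The plan is to bound the cost of the processing phase by splitting it, at each of the $O(\beta_0/\beta_\ell)$ level-$\ell$ nodes of $\mathcal{T}$, into two contributions: the in-place edge updates applied to a working copy of that node's sparsifier, and the queries answered on it via the static algorithm of Theorem~\ref{thm: metaTheorem2}~Part~3. First I would record two structural facts. By Properties (6) and (7) of $\mathcal{T}$, level $\ell$ has $O(\beta_0/\beta_\ell)$ nodes, each storing an interval of length $\beta_\ell$, and these intervals partition $[1,m]$, so the whole phase touches each of the $m=\beta_0$ events exactly once. Second, for a level-$\ell$ node $G_{[r,s]}$, the sparsifier $G'_{[r,s]}$ output by Algorithm~\ref{alg: vertexNodeSparsiy} is an $\alpha$-efficient vertex sparsifier of $G''_{[r,s]}$ with respect to $\partial_{[r,s]}$, so by Definition~\ref{def: efficientVertexSparsifier} its size is $O(|\partial_{[r,s]}|\cdot s(n))$; since $\partial_{[r,s]}$ consists only of the endpoints of the $O(\beta_\ell)$ edges updated in the interval together with the endpoints of the $O(\beta_\ell)$ queries in the interval, $|\partial_{[r,s]}|=O(\beta_\ell)$ and hence $|G'_{[r,s]}|=O(\beta_\ell\,s(n))$.

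Next I would analyze a single level-$\ell$ node $G_{[r,s]}$. We process the $O(\beta_\ell)$ operations of its interval in order on a working copy of $G'_{[r,s]}$, first adding the $O(\beta_\ell)$ active edges already present at the start of the interval (all with endpoints in $\partial_{[r,s]}$). Each edge insertion/deletion concerns a non-permanent edge whose two endpoints lie in $\partial_{[r,s]}$, so it is an $\tilde{O}(1)$-time update to the working graph, and by the decomposability property (Theorem~\ref{thm: metaTheorem2}~Part~2, used exactly as in Lemma~\ref{lem: qualitySparsifier}) the working graph remains an $\alpha^\ell$-vertex sparsifier of the current graph with respect to $\partial_{[r,s]}$. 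Each query $(x,y)$ has $x,y\in\partial_{[r,s]}$ by construction of $\mathcal{T}$, so we run the static algorithm of Theorem~\ref{thm: metaTheorem2}~Part~3 on the working graph; since that graph has $O(\beta_\ell\,s(n))$ edges, this costs $\tilde{O}(\beta_\ell\,s(n)\,h(n))$. Thus one node costs $\tilde{O}(\beta_\ell)$ for edge updates plus, in the worst case that all $\beta_\ell$ operations are queries, $\tilde{O}(\beta_\ell\cdot\beta_\ell\,s(n)\,h(n))=\tilde{O}(\beta_\ell^2\,s(n)\,h(n))$ for queries.

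Finally I would sum over the $O(\beta_0/\beta_\ell)$ level-$\ell$ nodes: the query cost totals $O(\beta_0/\beta_\ell)\cdot\tilde{O}(\beta_\ell^2\,s(n)\,h(n))=\tilde{O}(\beta_0\,\beta_\ell\,s(n)\,h(n))$, while the edge-update cost totals $\tilde{O}(\beta_0)$, which is dominated since $\beta_\ell,s(n),h(n)\ge 1$. This yields the claimed bound $\tilde{O}(\beta_0\,\beta_\ell\cdot s(n)\,h(n))$.

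I expect the one genuinely delicate point to be the size estimate $|G'_{[r,s]}|=O(\beta_\ell\,s(n))$: it must be obtained from the terminal count $|\partial_{[r,s]}|=O(\beta_\ell)$ via Definition~\ref{def: efficientVertexSparsifier}, \emph{not} from the coarser bound $O(\beta_{\ell-1}\,s(n))$ on the size of the input graph $G''_{[r,s]}$ that one would carry over from Lemma~\ref{lem: runningTimeForSparsifiers}; using the latter would inflate the final bound by a factor $\beta_{\ell-1}/\beta_\ell$. A secondary bookkeeping point is to check that initializing the working graph with the active edges present at the interval's start and applying the in-place updates keeps it a valid $\alpha^\ell$-vertex sparsifier on $\partial_{[r,s]}$ (again by decomposability, as in Lemma~\ref{lem: qualitySparsifier}); beyond that, the argument is a routine summation.
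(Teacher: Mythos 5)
Your proposal is correct and follows essentially the same route as the paper's proof: bound the level-$\ell$ sparsifier's size by $O(\beta_\ell\, s(n))$ via the terminal count $|\partial_{[r,s]}|=O(\beta_\ell)$, charge each of the at most $O(\beta_\ell)$ queries per interval a cost of $\tilde{O}(\beta_\ell\, s(n)\, h(n))$ by Theorem~\ref{thm: metaTheorem2}~Part~3, and multiply by the $O(\beta_0/\beta_\ell)$ nodes at level $\ell$ (Property (7) of $\mathcal{T}$). The ``delicate point'' you flag is exactly the bound the paper invokes in this lemma, so your account is a somewhat more explicit version of the same argument.
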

\begin{proof}
As in the worst-case there can be at most $O(\beta_\ell)$ queries within the interval, and since the size of $G'_{[r,s]}$ is also bounded by $O(\beta_\ell s(n))$, by Theorem~\ref{thm: metaTheorem2}~Part~3 it follows that answering all the queries and processing the non-permanent edges within a single interval at level $\ell$ is bounded by $\tilde{O}(\beta_\ell^{2} s(n) h(n))$.
Combining this with the fact that the number of nodes at level $\ell$ is $O(\beta_0/\beta_\ell)$ (Property (7) of $\mathcal{T}$), we get that the total cost for processing the queries is $\tilde{O}(\beta_0 \beta_\ell \cdot s(n) h(n))$.
\end{proof}

Combining Lemma~\ref{lem: runningTimeForSparsifiers} and Lemma~\ref{lem: procesingUpdates} leads to an overall performance of
\[
        \tilde{O}\left (\beta_0\left( \sum_{j=1}^{\ell} \left(\frac{\beta_{j-1}}{\beta_j} \right)f(n) + \beta_{\ell} h(n)\right) s(n)  \right) \quad \text{where } \beta_0 = m,
\]
which proves the claimed total update time in Theorem~\ref{thm: metaTheorem2}.

We finally prove the correctness of our algorithm.
Concretely, we show that the estimate we return when processing any query $(x,y)$ in the last level of the hierarchy approximates the property $\mathcal{P}$ of the graph $G$ up to an $\alpha^{\ell}$ factor, thus proving the claimed estimate in Theorem~\ref{thm: metaTheorem2}.

To this end, let $q_i$ be a query in the sequence of operations $[1,m]$.
Since the intervals at level $\ell$ of $\mathcal{T}$ form a partitioning of $[1,m]$, there must exist an interval $[r,s]$ that contains the query $q_i$.
Let $(x,y)$ be the queried vertex pair of $q_i$.
By Lemma~\ref{lem: qualitySparsifier}, we get that the graph $G'_{[r,s]}$ at level $\ell$ is an $\alpha^{\ell}$-vertex sparsifier of $G_{[r,s]}$ with respect to $\partial_{[r,s]}$.
Since by construction $\partial_{[r,s]} \supseteq \{x,y\}$, we get that the $G'_{[r,s]}$ approximates the property $\mathcal{P}(x,y,G)$ of $G_{[r,s]}$ up to an $\alpha^{\ell}$ factor.
Finally, recall that we run the algorithm from Theorem~\ref{thm: metaTheorem2}~Part~3 on $G'_{[r,s]}$, thus worsening the approximation in the worst-case by at most a constant factor, which yields the claimed bound.



\subsection{Applications to Offline Shortest Paths and Max-Flow}
In this sub-section, we show how to use our general Theorem~\ref{thm: metaTheorem2} to design offline dynamic algorithms for the approximate All Pair Shortest Paths and All Pair Max-Flow with reasonably small total update time. 

We first consider shortest paths.
Recall that our goal is to show that assumptions (1), (2) and (3) from Theorem~\ref{thm: metaTheorem2} are satisfied with certain parameters for the shortest path measure.
For (1) we make the following observation: given a graph $G$, a subset of terminals $T$, and a parameter $r \geq 1$, we can construct an \emph{efficient} (vertex) distance sparsifier $(H_T,(2r-1),\tilde{O}(n^{1/r}),\tilde{O}(n^{1/r}))$ by simply constructing an efficient incremental vertex sparsifier for $G$ using Lemma~\ref{lem: efficientDistanceIVS} and add $T$ to its terminal set.
Also note that assumption (2) is satisfied by the transitivity and decomposability of $H$, and finally recall that (3) follows by any $\tilde{O}(m)$ time single pair shortest path algorithm.
These together imply the following result.

\begin{theorem}
\label{thm: offlineShortestPaths}
Let $G=(V,E)$ be an undirected, weighted graph. For every $r,\ell \geq 1$, there is an \emph{offline} fully dynamic approximate \emph{All Pair Shortest Path} algorithm that maintains for every pair of nodes $u$ and $v$, a distance estimate $\delta(u,v)$ such that
\[
        \dist_G(u,v) \leq \delta(u,v) \leq (2r-1)^{\ell} \dist_G(u,v).
\]
The total time for processing a sequence of $m$ operations is
\[
        \tilde{O}(m \cdot m^{1/(\ell+1)}n^{2/r}).
\]
\end{theorem}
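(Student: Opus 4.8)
The plan is to instantiate the general offline meta-theorem (Theorem~\ref{thm: metaTheorem2}) with the distance measure and verify its three hypotheses, then optimize the interval parameters $\beta_i$. First I would check hypothesis (1): given $G$, a terminal set $T$, and a parameter $r\geq 1$, combine Lemma~\ref{lem: efficientDistanceIVS} with the observation already stated in the text — run the deterministic IVS preprocessing and then add all of $T$ to the terminal set — to produce a $(2r-1)$-efficient (vertex) distance sparsifier with construction time $\tilde{O}(m\cdot n^{1/r})$ and size $\tilde{O}(|T|\cdot n^{1/r})$. This gives $f(n)=\tilde{O}(n^{1/r})$ and $s(n)=\tilde{O}(n^{1/r})$. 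Hypothesis (2) — transitivity and decomposability — holds for distance sparsifiers: composing an $\alpha$-sparsifier with a $\beta$-sparsifier gives an $\alpha\beta$-sparsifier (distances compose multiplicatively under the two-sided stretch bound), and taking the union with a graph on the terminal set preserves the stretch guarantee, since shortest paths only use the preserved pairs. Hypothesis (3) — computing $\mathcal{P}(u,v,\cdot)$ in $O(mh(n))$ time — holds with $h(n)=\tilde{O}(1)$ via any near-linear single-pair shortest path computation (Dijkstra with a heap).

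Having verified the hypotheses, I would plug $\alpha = 2r-1$, $f(n)=s(n)=\tilde{O}(n^{1/r})$, $h(n)=\tilde{O}(1)$ into the conclusion of Theorem~\ref{thm: metaTheorem2}. The approximation bound is then immediately $\dist_G(u,v)\le\delta(u,v)\le(2r-1)^{\ell}\dist_G(u,v)$, as claimed. For the running time, the bound from~\eqref{eq: runningTimeMeta2} becomes
\[
\tilde{O}\!\left(\beta_0\left(\sum_{j=1}^{\ell}\frac{\beta_{j-1}}{\beta_j}\,n^{1/r}+\beta_{\ell}\right)n^{1/r}\right),
\]
and I would choose $\beta_i = m^{1-i/(\ell+1)}$, so that each ratio $\beta_{j-1}/\beta_j = m^{1/(\ell+1)}$ and also $\beta_{\ell}=m^{1/(\ell+1)}$. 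This collapses the parenthesized sum to $\tilde{O}(\ell\cdot m^{1/(\ell+1)}n^{1/r}+m^{1/(\ell+1)})=\tilde{O}(m^{1/(\ell+1)}n^{1/r})$ after absorbing the factor $\ell$ and the $n^{1/r}$ into the $\tilde{O}$-notation's polylog-and-subpolynomial slack (or leaving the $n^{1/r}$ explicit). Multiplying by the outer $\beta_0 = m$ and the remaining $n^{1/r}$ gives total time $\tilde{O}(m\cdot m^{1/(\ell+1)}\cdot n^{2/r})$, matching the theorem statement.

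The only genuinely delicate point — and the step I would expect to require the most care — is the exact bookkeeping on the size bound $s(n)$ of the composed sparsifier inside Algorithm~\ref{alg: vertexNodeSparsiy}: one has to make sure that when the parent sparsifier $G'_{[q,t]}$ (of size $\tilde{O}(\beta_{i-1}n^{1/r})$) is unioned with the new permanent edges $H_{[r,s]}$ (of size $\le|E^{np}_{[q,t]}|=O(\beta_{i-1})$) and then re-sparsified, the size does not blow up across the $\ell$ levels — but this is exactly what Lemma~\ref{lem: runningTimeForSparsifiers} already establishes, so in the write-up I would simply cite it rather than redo it. The rest is routine substitution. I would present the proof as: (i) state the efficient vertex distance sparsifier and its parameters via Lemma~\ref{lem: efficientDistanceIVS}; (ii) note transitivity/decomposability and the near-linear static algorithm; (iii) invoke Theorem~\ref{thm: metaTheorem2}; (iv) set $\beta_i=m^{1-i/(\ell+1)}$ and simplify.

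\begin{proof}[Proof of Theorem~\ref{thm: offlineShortestPaths}]
We verify the three hypotheses of Theorem~\ref{thm: metaTheorem2} for the shortest-path property $\mathcal{P}(u,v,G)=\dist_G(u,v)$.

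\emph{Hypothesis (1).} Fix $r\ge1$ and a terminal set $T\subseteq V$. By Lemma~\ref{lem: efficientDistanceIVS}, $G$ admits a $(2r-1,\tilde{O}(n^{1/r}),\tilde{O}(n^{1/r}),\tilde{O}(n^{1/r}))$-efficient (distance) IVS; running its preprocessing and then invoking \textsc{AddTerminal} on every vertex of $T$ yields, in total time $\tilde{O}(m\cdot n^{1/r})$, a $(2r-1)$-vertex distance sparsifier $H_T$ of $G$ with respect to $T$ of size $O(|T|\cdot\tilde{O}(n^{1/r}))$. Hence $(H_T,2r-1,f(n),s(n))$ is an $\alpha$-efficient vertex sparsifier with $\alpha=2r-1$ and $f(n)=s(n)=\tilde{O}(n^{1/r})$.

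\emph{Hypothesis (2).} Distance sparsifiers are transitive and decomposable: if $H_1$ is an $\alpha_1$-distance sparsifier of $G_1$ and $H_2$ is an $\alpha_2$-distance sparsifier of $G_2$ (on a common terminal set), composing the two-sided stretch bounds gives an $\alpha_1\alpha_2$-distance sparsifier; and taking the union of a distance sparsifier with an arbitrary graph supported on the terminals preserves the stretch guarantee, since the relevant shortest-path distances only involve preserved terminal pairs.

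\emph{Hypothesis (3).} For any $s,t$, $\dist_G(s,t)$ can be computed in $O(m\log n)$ time by a single-source shortest-path computation, so $h(n)=\tilde{O}(1)$.

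Applying Theorem~\ref{thm: metaTheorem2} with $\alpha=2r-1$, $f(n)=s(n)=\tilde{O}(n^{1/r})$ and $h(n)=\tilde{O}(1)$ yields an offline fully dynamic algorithm maintaining estimates $\delta(u,v)$ with
\[
\dist_G(u,v)\le\delta(u,v)\le(2r-1)^{\ell}\dist_G(u,v),
\]
and total processing time
\[
\tilde{O}\!\left(\beta_0\left(\sum_{j=1}^{\ell}\frac{\beta_{j-1}}{\beta_j}\,n^{1/r}+\beta_{\ell}\right)n^{1/r}\right),\qquad \beta_0=m.
\]
Choosing $\beta_i=m^{1-i/(\ell+1)}$ for $0\le i\le\ell$ gives $\beta_{j-1}/\beta_j=m^{1/(\ell+1)}$ for every $j$ and $\beta_{\ell}=m^{1/(\ell+1)}$, so the inner factor is $\tilde{O}(m^{1/(\ell+1)}n^{1/r})$. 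Multiplying by $\beta_0=m$ and the outer $n^{1/r}$, the total time is
\[
\tilde{O}\!\left(m\cdot m^{1/(\ell+1)}\,n^{2/r}\right),
\]
as claimed.
\end{proof}
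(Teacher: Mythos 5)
Your proposal is correct and follows essentially the same route as the paper, which likewise proves this theorem by verifying hypotheses (1)--(3) of Theorem~\ref{thm: metaTheorem2} via Lemma~\ref{lem: efficientDistanceIVS} (adding $T$ to the terminal set of the incremental sparsifier), transitivity/decomposability, and a near-linear static shortest-path computation. The only difference is that you spell out the choice $\beta_i=m^{1-i/(\ell+1)}$ and the resulting arithmetic, which the paper leaves implicit.
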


We now proceed with max flow.
Following essentially the same idea as with shortest paths, we need to show that assumptions (1), (2) and (3) from Theorem~\ref{thm: metaTheorem2} are satisfied with certain parameters for the max flow measure.
For (1) we show the existence of \emph{efficient} (vertex) flow sparsifier $(H_T,O(\log^{4} n),\tilde{O}(1),\tilde{O}(1))$ via the following lemma.

\begin{theorem}[\cite{RackeST14,Peng16}]
\label{thm: fastFlowSparsifier}
Given an undirected, weighted graph $G=(V,E)$, there is an $\tilde{O}(m)$ time randomized algorithm $\textsc{FlowSparsify}(G)$ that with high probability computes a tree-based flow sparsifier $H=(V',E')$ with $V \subseteq V'$ satisfying the following properties
\begin{enumerate}
\itemsep0em 
\item $H$ is a bounded degree rooted tree,
\item $H$ has quality $O(\log^{4} n)$,
\item There is a one-to-one correspondence between the leaf nodes of $H$ and the nodes in $G$,
\item The height of $H$ is at most $O(\log^2 n)$.
\end{enumerate} 
\end{theorem}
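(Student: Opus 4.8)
The plan is to realize $\textsc{FlowSparsify}$ as the almost-linear-time cut-based hierarchical decomposition of R\"acke, Shah and T\"aubig~\cite{RackeST14} (itself a fast, lossier implementation of R\"acke's hierarchical decomposition~\cite{Racke08}), whose output laminar family \emph{is} the rooted tree $H$; the four listed properties then follow by tracking the recursion, and the running time follows from plugging a nearly-linear-time approximate maximum-flow routine into its inner loop. Recall the object we are after: a rooted tree $H$ whose leaves are in bijection with $V(G)$, whose internal nodes index the clusters of a laminar family $\mathcal{L}$ on $V$ (with $V$ itself the root cluster), and where the edge joining a cluster $S$ to its parent carries capacity $\mathrm{cap}_G(\partial S)$, the total $G$-capacity of the edges leaving $S$. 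With these capacities, any demand on $V$ feasible in $G$ is automatically feasible in $H$: on a tree the cut condition is sufficient, and the cut condition across a tree edge above $S$ asks exactly that the demand separated by $S$ be at most $\mathrm{cap}_G(\partial S)$, which holds because that quantity is the $G$-capacity of the same vertex cut. Hence the only nontrivial direction of the quality guarantee is an embedding of $H$ back into $G$ with congestion $O(\log^4 n)$.

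The construction builds $\mathcal{L}$ top-down. At a cluster $S$ (initially $V$), with the boundary weight $\mathrm{cap}_G(\partial S)$ distributed over the vertices of $S$ incident to $\partial S$, one runs the Khandekar--Rao--Vazirani cut-matching game on $G[S]$, suitably augmented to account for the boundary. After $O(\log^2 n)$ rounds --- each round being a single (approximate) max-flow computation --- the game terminates in one of two states: either (a) it certifies that $S$ is $\Omega(1/\log^2 n)$-well-linked with respect to its boundary weight, producing along the way an embedding of an expander on the boundary-incident vertices into $G[S]$ with congestion $O(\log^2 n)$, in which case $S$ becomes an internal node whose children are the sub-clusters already produced by the recursion (down to singletons) and the embedding certifies that any flow routed across $S$ in $H$ can be rerouted inside $G[S]$ with congestion $O(\log^2 n)$; or (b) it exhibits a sufficiently balanced, sufficiently sparse cut of $S$, along which we split $S$ and recurse on the two sides. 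A potential argument on the cluster weights, using the balance of the cuts the game returns together with a standard binarization of wide internal nodes, shows the laminar family has depth $O(\log^2 n)$, which is property (4); property (3) holds because recursion stops at singletons; and property (1) is the binarization.

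For the quality (property (2)), take any demand that $H$ routes with unit congestion and push it down the tree: the portion crossing the tree edge above a cluster $S$ enters $S$ at its boundary-incident vertices, and by the well-linkedness certificate for $S$ it is reroutable inside $G[S]$ with congestion $O(\log^2 n)$. Since a fixed edge of $G$ lies strictly inside at most the $O(\log^2 n)$ clusters on a single root-to-leaf path of $\mathcal{L}$, these congestions accumulate to $O(\log^2 n)\cdot O(\log^2 n)=O(\log^4 n)$; combined with the easy direction above, $H$ is a flow sparsifier of quality $O(\log^4 n)$. For the running time, each of the $O(\log^2 n)$ rounds of the cut-matching game at $S$ is an $\tilde O(|E(G[S])|+|\partial S|)$-time approximate max flow (such solvers are by now standard), so processing $S$ costs $\tilde O(|E(G[S])|+|\partial S|)$; the clusters of any one level of $\mathcal{L}$ have total size $\tilde O(m)$, and there are $O(\log^2 n)$ levels, so the whole computation runs in $\tilde O(m)$ time and succeeds with high probability by the cut-matching analysis. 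In the paper this theorem is simply imported from~\cite{RackeST14,Peng16}.

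The delicate point --- the part I would expect to absorb almost all of the work --- is the interface between the cut-matching game and the tree: one must carry exactly the right boundary weighting into the game so that, \emph{simultaneously}, (i) every cluster it declares well-linked genuinely supports routing across it with congestion $O(\log^2 n)$, and (ii) the cuts it returns are balanced enough to cap the recursion depth at $O(\log^2 n)$; and one must account for the boundary edges shared between a parent cluster and its children without double-counting them in the per-level $\tilde O(m)$ bound. Everything else --- the choice of tree capacities, the binarization, and the telescoping of congestions --- is bookkeeping.
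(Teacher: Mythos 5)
You have inverted the emphasis relative to the paper. The theorem is stated with the citations \cite{RackeST14,Peng16} precisely because the paper imports the R\"acke--Shah--T\"aubig construction as a black box: an $\tilde{O}(m)$-time algorithm producing a rooted tree $H'$ of quality $O(\log^4 n)$ whose leaves are the vertices of $G$ and whose height is $O(\log n)$, but with \emph{unbounded} degree. The entire content of the paper's proof is the degree reduction: replacing each high-degree star (an internal node $u$ and its children) by a binary tree and showing this preserves quality exactly. Your proposal spends nearly all its effort sketching a re-derivation of the RST14 decomposition itself --- and you explicitly concede that the "delicate point" of that re-derivation (the boundary-weighted cut-matching interface) is left open --- while the one step the theorem actually requires beyond the citation is dismissed as "a standard binarization of wide internal nodes."

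That is where the genuine gap lies. Property (1) together with property (2) is not established unless you say how to capacitate the new internal edges of the binarized star-replacement: an arbitrary assignment can either choke the routing of $H$'s demands (breaking the direction "feasible in $G$ $\Rightarrow$ feasible in $H$") or admit more flow than $G[S]$ can absorb (breaking the embedding direction). The paper's choice is the calibrated one: for an edge $(x,y)$ of the replacement tree not incident to a leaf, set $\ww_{\tilde{H}}(x,y) = \min\{\ww(\tilde{H}_x), \ww(\tilde{H}_y)\}$, where $\ww(R)$ sums the capacities of edges incident to the leaves of the subtree $R$; leaf-incident edges keep their original capacities. One then checks that $H$ is a quality-$1$ flow sparsifier of $H'$ with respect to the leaves, so the $O(\log^4 n)$ quality transfers unchanged. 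Your height accounting is also off: the $O(\log^2 n)$ in property (4) arises because each star replacement adds $O(\log n)$ to the depth of an $O(\log n)$-height tree, not because the laminar family itself has depth $O(\log^2 n)$ (RST14 gives depth $O(\log n)$; if the laminar depth were really $O(\log^2 n)$, binarizing would push the height to $O(\log^3 n)$ and also upset your per-level congestion telescoping). If you intend to reprove RST14 rather than cite it, you owe the cut-matching/well-linkedness argument you deferred; if you cite it, the proof you still owe is exactly the capacity assignment above.
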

\begin{proof}
The original construction of R\"acke et al.~\cite{RackeST14} produces a rooted tree $H'$ which satisifes the above properties, except that $H'$ has unbounded degree and the height of the tree is $O(\log n)$. Since we will exploit the bounded degree assumption in the subsequent applications of our data-structure, here we present a standard reduction from $H'$ to a bounded degree tree $H$ at the cost of increasing the height of the tree by a logarithmic factor. 

Let $H'$ be the rooted tree we described above. Let $u \in H'$ be an internal node of degree larger than $2$ and let $C(u)$ be its children. We start by removing all edges between $u$ and its children $C(u)$ from $H'$, and record all their corresponding edge weights. Next, we create a binary rooted tree $\tilde{H}$ where the children $C(u)$ are the leaf nodes, i.e., $L(\tilde{H}) = C(u)$, and $u$ is the root of $\tilde{H}$. To complete the construction of $\tilde{H}$ we need to define its edge weights. To this end, for any subtree $R \subseteq \tilde{H}$, let $E(L(R))$ denote the set of edges incident to leaf nodes in $R$. We distinguish the following two cases. (1) If $e=(x,y) \in E(L(\tilde{H}))$, i.e., $e$ is an edge incident to a leaf of $\tilde{H}$, and $x \in L(\tilde{H}) = C(u)$, we set $\ww_{\tilde{H}}(x,y) = \ww_{H'}(x,u)$. (2) If $e=(x,y) \not \in E(L(\tilde{H}))$, then let $\tilde{H}_x$ and $\tilde{H}_y$ be the trees obtained after deleting the edge $e$ from $\tilde{H}$. Furthermore, for any subtree $R \subseteq \tilde{H}$ define
\[ \ww(R) := \sum_{e\in E(L(R))} \ww_{\tilde{H}}(e).\]

Finally, for $e=(x,y) \not \in E(L(\tilde{H}))$ and $e \in \tilde{H}$ we set
\[
        \ww_{\tilde{H}}(x,y) = \min\{\ww(\tilde{H}_x), \ww(\tilde{H}_y)\}.
\]
Note that the weight-sums $\ww(\tilde{H}_x)$ and $\ww(\tilde{H}_y)$ can be calculated since we first defined the weights for edges in $E(L(\tilde{H}))$. Also observe that $H'$ remains a tree because we simply removed children of $u$~(which could be viewed as a star) and replaced this by another bounded degree tree $\tilde{H}$. We repeat the above process for every internal node of $H'$ until $H'$ becomes a bounded degree rooted tree, and denote by $H$ the final resulting tree.

We claim that $H$ has depth at most $O(\log^2 {n})$. To see this, recall that the initial height of $H'$ was $O(\log n)$, and every replacement of the star centered at a non-terminal with a bounded degree tree increases the height by an additive of $O(\log n)$. 
Summing over $O(\log n)$ levels gives the claimed bound.

Finally, it is easy to see that $H$ is flow sparsifier of quality $1$ for $H'$ with respect to all leaf nodes of $H'$, which in turn correspond to the nodes of graph $G$. Thus, $H$ is also a flow sparsifier for $G$ with quality $O(\log^{4} n)$.
\end{proof}

We next show how to construct a vertex sparsifier w.r.t. a given terminal set $T$.

The construction involves 2 phases: (1) preprocessing, and (2) constructing vertex sparsifier.
In the preprocessing phase, given a graph $G$, we simply invoke \textsc{FlowSparsify$(G)$} from Theorem~\ref{thm: fastFlowSparsifier} and let $H$ be the resulting tree-based sparsifier.
The main idea for constructing a (vertex) flow sparsifier $H_T$ of $G$ with respect to $T$ is to exploit the fact that $H$ is a tree.
Concretely, let $H_T$ be an initially empty graph.
For $v \in T$, let $H[v, r]$ be the path in $H$ from $v$ to the root $r$ of $H$~(since $v \in T \subseteq V$, recall that $v$ is a leaf node of $H$ by Property~(3) in Theorem~\ref{thm: fastFlowSparsifier}).
For each $v \in T$, and every edge $e \in H[v, r]$, we add $e$ with weight $w_H(e)$ to $H_T$.
Finally, we return $H_T$ as a (vertex) flow sparsifier of $G$ with respect to $T$. 
The following lemma analyzes the running time for the above procedure and shows the correctness.

\begin{lemma} \label{lem: staticFlowVertexSparsifier}
Given an undirected, weighted graph $G=(V,E)$, and a subset of vertices $T$,
there is an algorithm that produce a (vertex) flow sparsifier $H_T$ w.r.t. $T$ in time $\Otil(m)$.
$H_T$ has size $O(|T| \log^{2} n)$ and quality $O(\log^{4} n)$.

In other words, there is an efficient (vertex) flow sparsifier $(H_T, O(\log^{4} n), \Otil(1), \Otil(1))$.
\end{lemma}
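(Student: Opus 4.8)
The plan is to establish three things: correctness (that $H_T$ is indeed a flow sparsifier of $G$ with respect to $T$ of quality $O(\log^4 n)$), the size bound $O(|T|\log^2 n)$, and the running time $\Otil(m)$. The overall structure follows the standard ``restriction of a tree sparsifier to a Steiner subtree'' argument, leveraging the rooted tree structure of $H$ guaranteed by Theorem~\ref{thm: fastFlowSparsifier}.

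First I would handle the size and running time, which are the easy parts. Invoking \textsc{FlowSparsify}$(G)$ costs $\Otil(m)$ and produces a bounded-degree rooted tree $H$ of height $O(\log^2 n)$ whose leaves are in one-to-one correspondence with $V$. Since each $v \in T$ is a leaf, the root path $H[v,r]$ has at most $O(\log^2 n)$ edges, so $H_T = \bigcup_{v\in T} H[v,r]$ is the union of at most $|T|$ such paths and thus has at most $O(|T|\log^2 n)$ edges and vertices. This also immediately bounds the construction time of $H_T$ (given $H$) by $O(|T|\log^2 n) = \Otil(|T|)$, so the dominant cost is the $\Otil(m)$ preprocessing. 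This yields the parameters $f(n) = \Otil(1)$ (preprocessing cost per edge) and $s(n) = \Otil(1)$ (size per terminal) in the language of Definition~\ref{def: efficientVertexSparsifier}.

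Next I would argue correctness. The key structural observation is that $H_T$, being a union of root-paths in the tree $H$, is itself a subtree of $H$ containing all leaves corresponding to $T$; in particular $H_T$ is exactly the minimal Steiner subtree of $H$ spanning $T \cup \{r\}$ with the inherited edge weights. The claim is that restricting a tree flow sparsifier to the Steiner subtree spanning a terminal set preserves the flow/cut structure among those terminals exactly (quality $1$). This is because for any demand supported on $T$, any $T$-to-$T$ flow in $H$ can be rerouted to only use edges of the Steiner subtree $H_T$ without increasing congestion (a flow path between two leaves in a tree is unique, and both endpoints' root-paths lie in $H_T$, so the unique path connecting them lies in $H_T$), and conversely flows in $H_T$ extend trivially to $H$. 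Hence $H_T$ is a quality-$1$ flow sparsifier of $H$ with respect to $T$, and composing with the quality-$O(\log^4 n)$ guarantee of $H$ for $G$ (and using that flow sparsifier quality composes multiplicatively, i.e.\ transitivity), $H_T$ is a flow sparsifier of $G$ with respect to $T$ of quality $O(\log^4 n)$.

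The main obstacle, and the step I would be most careful with, is making the ``restriction to the Steiner subtree preserves flows exactly'' claim fully rigorous for the particular notion of flow/cut sparsifier being used (congestion-based quality). One must check that the edge weights $\ww_H$ inherited on $H_T$ are the right ones — since $H_T$ keeps the original weights of the retained edges and simply drops the rest, and since in a tree every demand pair routes along the unique tree path, the congestion of any multicommodity flow with demands on $T$ is literally unchanged whether computed in $H$ or in $H_T$. The only subtlety is confirming that no demand among terminals is ever forced to route through an edge outside $H_T$, which holds precisely because $H_T$ contains the root-path of every terminal and the tree path between two leaves is the symmetric-difference of their root-paths. Once this is pinned down, the lemma — and hence the parameters of the efficient vertex flow sparsifier $(H_T, O(\log^4 n), \Otil(1), \Otil(1))$ — follows.
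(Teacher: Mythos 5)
Your proposal is correct and follows essentially the same route as the paper's proof: show that $H_T$ (the union of root-paths of terminals) is a quality-$1$ flow sparsifier of the tree $H$ with respect to $T$ because the unique tree path $H[u,v] \subseteq H[u,r]\cup H[v,r]$ lies in $H_T$ while conversely $H_T \subseteq H$, then compose via transitivity with the quality-$O(\log^4 n)$ guarantee of $H$ for $G$, and read off the size and time bounds from the $O(\log^2 n)$ height and the $\Otil(m)$ cost of \textsc{FlowSparsify}. No gaps.
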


\begin{proof}[Proof of Lemma~\ref{lem: staticFlowVertexSparsifier}]
We first argue about the correctness of the output $H_T$.
First, we show that $H_T$ is $1-$(vertex) flow sparsifier of $H$ with respect to $T$.
To see this, note that since $H$ is a tree, every (multi-commodity) flow among any two leaf vertices $(u,v)$ is routed according to the unique shortest path between between $u$ and $v$ in $H$, denoted by $H[u,v]$.
Since $H_T$ is formed taking the union of the paths $H[v,r]$, for each $v \in T$, and $H[u,v] \subseteq \left(H[v,r] \cup H[u, r]\right)$, it follows that $H[u, v]$ is also contained in $H_T$.
Thus every flow  that we can route in $H$ among any two pairs in $T$, we can feasible route in $H_T$.
For the other direction, observe that by construction $H_T \subseteq H$.
Therefore, any flow among any two pairs in $T$ that can be feasibly routed in $H_T$, can also be routed in $H$~(this follows since $H$ has more edges than $H_T$, and thus the routing in $H$ has more flexibility).
Combining the above we get that $H_T$ is a quality $1$-(vertex) flow sparsifier of $H$.
Since $H$ is flow sparsifier of $G$ with quality $O(\log^{4} n)$~(Property~(2) in Theorem~\ref{thm: fastFlowSparsifier}) and $T \subseteq V$, applying transitivity on $H_T$ and $H$
we get that $H_T$ is a quality $O(\log^{4} n)$ (vertex) flow sparsifier of $G$ with respect to $T$.

We finally analyze the running time for both operations.
Recall that the preprocessing phase is implemented by simply invoking $\textsc{FlowSparsifiy(G)}$.
By Theorem~\ref{thm: fastFlowSparsifier}, we know that the latter can be implemented in $\tilde{O}(m)$, which in turn bounds the running time of our preprocessing phase.
For the running time of constructing vertex sparsifier, recall that $H_T$ consists of the union over the paths $P(v,r,H)$, for each $v \in T$.
Since the length of each such path is bounded by $O(\log^{2} n)$~(Property~$(4)$ in Theorem~\ref{thm: fastFlowSparsifier}), we get that the size of $H_T$ is bounded by $O(|T| \log^2 n)$.
Note that after having access to any leaf vertex $v$, the path $H[v,r]$ can be retrieved from $H$ in time proportional to its length.
This implies that the time to output $H_T$ is also bounded by $O(|T| \log^2 n)$.
\end{proof}


Also note that assumption (2) is satisfied by the transitivity and decomposability of $H$, and finally recall that (3) follows by employing the $\tilde{O}(m)$ time (approximate) $(s,t)$-maximum flow algorithm due to Peng~\cite{Peng16}. These together imply the following theorem.

\begin{theorem}
\label{thm: offlineMaxFlow}
Let $G=(V,E)$ be an undirected, weighted graph. For every $\ell \geq 1$, there is an \emph{offline} fully dynamic approximate \emph{All Pairs Max Flow} algorithm that maintains for every pair of nodes $u$ and $v$, a flow estimate $\delta(u,v)$ such that
\[
        \frac{1}{\tilde{O}(\log^{4\ell} n)} \maxflow_G(u,v) \leq \delta(u,v) \leq \maxflow_G(u,v).
\]
The total time for processing a sequence of $m$ operations is
\[
        \tilde{O}(m \cdot m^{1/(\ell+1)}).
\]
\end{theorem}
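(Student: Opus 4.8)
The plan is to instantiate the offline meta-theorem (Theorem~\ref{thm: metaTheorem2}) with the $(s,t)$ max-flow property and the flow vertex sparsifier of Lemma~\ref{lem: staticFlowVertexSparsifier}, and then to choose the interval-length parameters $\beta_1,\dots,\beta_\ell$ optimally. Concretely, I would verify the three hypotheses of Theorem~\ref{thm: metaTheorem2} as follows. Hypothesis (1): Lemma~\ref{lem: staticFlowVertexSparsifier} supplies, for every terminal set $T$, an efficient $(H_T,\,O(\log^4 n),\,\tilde{O}(1),\,\tilde{O}(1))$ vertex flow sparsifier of the input graph, so we take $\alpha=O(\log^4 n)$, $f(n)=\tilde{O}(1)$ and $s(n)=\tilde{O}(1)$; inside the hierarchy this construction is invoked on the intermediate graphs $G''_{[r,s]}$ of Algorithm~\ref{alg: vertexNodeSparsiy}, which is exactly what the hypothesis asks for. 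Hypothesis (2): flow sparsifiers are transitive (congestions compose multiplicatively, so a quality-$q_1$ flow sparsifier of a quality-$q_2$ flow sparsifier is a quality-$q_1q_2$ flow sparsifier) and decomposable (if $G_1$ is a quality-$q$ flow sparsifier of $G_2$ with respect to a terminal set containing $V(G_2)\cap V(G_3)$, then $G_1\cup G_3$ is a quality-$q$ flow sparsifier of $G_2\cup G_3$ with respect to the same terminals, since a feasible routing of a terminal demand in one graph maps edge-by-edge to a routing of congestion at most $q$ in the other). Hypothesis (3): the $(s,t)$ max-flow value is $(1+\epsilon)$-approximable in $\tilde{O}(m)$ time by Peng's algorithm~\cite{Peng16}, which moreover returns a feasible flow, so $h(n)=\tilde{O}(1)$ and the reported value never exceeds the true max-flow of the graph it is run on.

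With these parameters, Theorem~\ref{thm: metaTheorem2} produces an estimate $\delta'(u,v)$ with $\maxflow_G(u,v)\le \delta'(u,v)\le \alpha^\ell\,\maxflow_G(u,v)$ and total processing time
\[
\tilde{O}\!\left(\beta_0\Big(\sum_{j=1}^{\ell}\tfrac{\beta_{j-1}}{\beta_j}+\beta_\ell\Big)\right),\qquad \beta_0=m,
\]
since $f(n)=s(n)=h(n)=\tilde{O}(1)$. I would then set $\beta_i=m^{\,1-i/(\ell+1)}$ for $1\le i\le\ell$, so that every ratio $\beta_{j-1}/\beta_j$ equals $m^{1/(\ell+1)}$ and $\beta_\ell=m^{1/(\ell+1)}$; the bracketed factor becomes $O\big((\ell+1)\,m^{1/(\ell+1)}\big)=\tilde{O}(m^{1/(\ell+1)})$, giving total time $\tilde{O}(m\cdot m^{1/(\ell+1)})$. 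For the accuracy, I would report $\delta(u,v):=\delta'(u,v)/\alpha^\ell$; since $\alpha^\ell=O(\log^{4\ell}n)$ and the $(1+\epsilon)$ slack of Peng's routine is absorbed into the $\tilde{O}$, this yields the stated two-sided bound $\tfrac{1}{\tilde{O}(\log^{4\ell}n)}\,\maxflow_G(u,v)\le\delta(u,v)\le\maxflow_G(u,v)$, where the upper bound holds because the top-level sparsifier has quality $\alpha^\ell$ and Peng's routine outputs an honest feasible flow in it.

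The part needing the most care is not the running-time arithmetic but confirming that flow sparsifiers satisfy the transitivity and — especially — the decomposability requirements of Theorem~\ref{thm: metaTheorem2}(2) in the precise form exploited by Algorithm~\ref{alg: vertexNodeSparsiy}: unioning a sparsifier with the new-permanent-edge graph $H_{[r,s]}$ and then re-sparsifying with respect to the shrinking boundary set $\partial_{[r,s]}$, where one must ensure the attachment vertices of $H_{[r,s]}$ lie in the relevant $\partial$-set (guaranteed by Property~(4) of the decomposition tree $\mathcal{T}$). A secondary bookkeeping point is keeping the direction of the flow-sparsifier inequality consistent across the $\ell$ levels so that the final rescaled estimate obeys $\delta\le\maxflow_G$ exactly rather than up to a constant; this is where the fact that the flow sparsifier can route at least as much as the original graph, together with Peng's feasibility guarantee, makes the argument clean.
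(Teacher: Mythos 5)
Your proposal is correct and follows essentially the same route as the paper: instantiate Theorem~\ref{thm: metaTheorem2} with the $(H_T, O(\log^4 n), \tilde{O}(1), \tilde{O}(1))$ flow vertex sparsifier of Lemma~\ref{lem: staticFlowVertexSparsifier}, verify transitivity/decomposability for hypothesis (2), use Peng's $\tilde{O}(m)$ max-flow routine for hypothesis (3), and set $\beta_i = m^{1-i/(\ell+1)}$. Your explicit treatment of the maximization-vs-minimization sign flip (rescaling by $\alpha^\ell$) and of the decomposability step in Algorithm~\ref{alg: vertexNodeSparsiy} fills in details the paper leaves implicit, but the argument is the same.
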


\subsection{Implications on Hardness of Approximate Dynamic Problems}
\label{sec:hardness}

\subsubsection{Approximate max flow and cut sparsifiers}

Assuming the OMv conjecture, Dahlgaard \cite{Dahlgaard16} show that any incremental exact max flow algorithm on undirected graphs must have amortized update time at least $\Omega(n^{1-o(1)})$.
However, the hardness of approximation is not known
\footnote{However, on directed graphs, the hardness of approximation is known.
	This is because even dynamic reachability is hard under several conjectures\cite{AbboudW14,HenzingerKNS15}.}: 
\begin{proposition}
	There is no polynomial lower bound for dynamic $\omega(\mbox{polylog}(n))$-approximate max flow in the offline setting.
\end{proposition}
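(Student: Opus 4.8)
The plan is to read the proposition as essentially the contrapositive of Theorem~\ref{thm: offlineMaxFlow}: that theorem already exhibits an offline fully dynamic all-pairs max-flow algorithm with polylogarithmic approximation and arbitrarily small polynomial amortized cost, so no polynomial lower bound can hold for the (strictly easier) $\omega(\mathrm{polylog}(n))$-approximate version. I would make this precise by contradiction. Suppose there were a polynomial lower bound, i.e.\ a constant $\delta>0$ together with some plausible fine-grained hardness assumption under which every (randomized) offline fully dynamic algorithm maintaining an $\omega(\mathrm{polylog}(n))$-approximation to max flow --- even restricted to queries between one fixed pair $(s,t)$ --- must spend amortized time $\Omega(m^{\delta})$ over a sequence of $m$ operations.

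The main step is to instantiate Theorem~\ref{thm: offlineMaxFlow} with the fixed constant $\ell:=\lceil 2/\delta\rceil$. This yields an offline fully dynamic all-pairs max-flow data structure with approximation ratio $\tilde O(\log^{4\ell} n)$ and total time $\tilde O\!\left(m\cdot m^{1/(\ell+1)}\right)$, i.e.\ amortized cost $\tilde O(m^{1/(\ell+1)})=\tilde O(m^{\delta/2})$ per operation. Since $\ell$ is a fixed constant, $\tilde O(\log^{4\ell} n)$ is genuinely $\mathrm{polylog}(n)$, hence in particular at most $\omega(\mathrm{polylog}(n))$ for all large $n$, and the all-pairs structure certainly answers queries for the fixed pair $(s,t)$. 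Its amortized cost $o(m^{\delta})$ then contradicts the assumed $\Omega(m^{\delta})$ bound, so no such lower bound exists. Equivalently, letting $\ell=\Theta(\log\log n)$ grow slowly gives an algorithm with approximation $2^{\Theta((\log\log n)^2)}=n^{o(1)}$, which is itself $\omega(\mathrm{polylog}(n))$, and amortized time $m^{o(1)}$ --- directly a subpolynomial-time $\omega(\mathrm{polylog})$-approximate algorithm.

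I do not expect a real obstacle here; the only care needed is in pinning down the statement. First, the algorithm of Theorem~\ref{thm: offlineMaxFlow} is randomized (through the sparsifier of Theorem~\ref{thm: fastFlowSparsifier}), so the conclusion is that no polynomial lower bound can hold against randomized offline algorithms --- precisely the setting in which the \textsf{OMv}-type hardness results referenced above (e.g.\ \cite{Dahlgaard16}) are phrased, so nothing is lost. Second, some lower bounds are stated in the number of vertices $n$ rather than the number of operations $m$; since one may assume $n=O(m)$ by restricting to the subgraph touched by the operation sequence, a bound $\Omega(n^{\delta})$ is handled by the same computation after adjusting the choice of $\ell$. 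The substance of the proposition is entirely carried by Theorem~\ref{thm: offlineMaxFlow}: polylogarithmic approximation at subpolynomial amortized time is already achievable offline, which rules out any polynomial lower bound at the much coarser $\omega(\mathrm{polylog})$ approximation level.
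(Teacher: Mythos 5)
Your proposal is correct and takes exactly the same route as the paper, which proves the proposition in one line as a direct consequence of Theorem~\ref{thm: offlineMaxFlow}; you simply spell out the instantiation (constant $\ell$ chosen against a hypothetical exponent $\delta$, plus the remarks on randomization and on $n$ versus $m$) that the paper leaves implicit. No gap.
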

This follows directly from \Cref{thm: offlineMaxFlow}.
Thus the important open problem is whether we can prove a hardness for dynamic $(1+\epsilon)$-approximate max flow algorithms on undirected graphs for a constant $\epsilon>0$. 

On the other hand, it is not known whether, given a set of $k$ terminals, there is a $(1+\epsilon)$-approximate cut (vertex) sparsifier of size $\mbox{poly}(k,1/\epsilon)$ or even $\mbox{poly}(k,1/\epsilon,\log n)$. 
If a cut sparsifier can only contain terminals as nodes, then the approximation ratio must be at least $\Omega(\sqrt{\log k}/\log\log k)$~\cite{mm10}. If we need an exact cut sparsifier, then the size must be at least $2^{\Omega(k)}$~\cite{krauthgamer2017refined}.

In what follows we draw a connection between these two open problems; if there is a very efficient algorithm for the above cut sparsifier, then there cannot be a $\Omega(n^{1-o(1)})$ lower bound in the offline setting for the  dynamic approximate max flow.
Moreover, if the cut-sparsifier has size almost best possible, then there cannot be even a super-polylogarithmic lower bound.
Concretely, we show the following.
\begin{theorem}
\label{thm:lb max flow}
    If there is an algorithm that, given a undirected graph $G=(V,E)$ with $m$ edges and a set $T\subset V$ of $k$ terminals, constructs an $(1+\epsilon)$-approximate cut vertex sparsifier of size $s=\poly(k,1/\epsilon,\log n)$ in time $O(m\poly(\log n,1/\epsilon))$, there is an offline dynamic algorithm for maintaining $(1+\epsilon')$-approximate value of max flow with update time $u=O(n^{1-\gamma}\poly(1/\ensuremath{\epsilon}'))$ for some constant $\gamma>0$.
    Moreover, if the size of the sparsifier $s=k\cdot\poly(1/\epsilon,\log n)$, then we obtain the update time of $u=O(\poly(\log n,1/\epsilon'))$. The dynamic algorithm is Monte Carlo randomized and it is correct with high probability.
\end{theorem}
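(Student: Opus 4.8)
The plan is to plug the hypothesized cut-sparsifier construction into the offline meta-framework of \Cref{thm: metaTheorem2} in place of the $O(\log^{4}n)$-quality flow sparsifier used in \Cref{thm: offlineMaxFlow}, and then optimize the level parameters $\beta_i$ to push the total time below $m\cdot n^{-\gamma}$, which amortized over $m$ operations gives sublinear update time. The three hypotheses of \Cref{thm: metaTheorem2} must be re-verified for the new sparsifier: (1) we have an $\alpha$-efficient vertex sparsifier with $\alpha=1+\epsilon$, $f(n)=\poly(\log n,1/\epsilon)$, and $s(n)=s/k=\poly(k,1/\epsilon,\log n)/k$ — but note the size bound in \Cref{def: efficientVertexSparsifier} is $O(|T|\cdot s(n))$, so with a terminal set of size $k$ we need $s(n)=\poly(1/\epsilon,\log n)$ when $s=k\cdot\poly(1/\epsilon,\log n)$, and $s(n)=k^{O(1)}\poly(1/\epsilon,\log n)$ in the general polynomial case; (2) the cut sparsifier must be transitive and decomposable — transitivity is standard for approximate cut sparsifiers (composing a $(1+\epsilon_1)$- and a $(1+\epsilon_2)$-sparsifier yields a $(1+\epsilon_1)(1+\epsilon_2)$-one), and decomposability (that $G'_{[q,t]}\cup(G_{[r,s]}\setminus G_{[q,t]})$ remains a cut sparsifier of $G_{[r,s]}$) holds because cut values are additive over the common part and the appended permanent edges; (3) a static $(1+\epsilon')$-approximate max-flow value can be computed in $\tilde O(m\,\poly(1/\epsilon'))$ time by \cite{Peng16}, so $h(n)=\poly(\log n,1/\epsilon')$.

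Given these, \Cref{thm: metaTheorem2} yields an offline algorithm with estimate satisfying $\maxflow_G(u,v)\le \delta(u,v)\le (1+\epsilon)^{\ell}\maxflow_G(u,v)$; setting $\epsilon=\Theta(\epsilon'/\ell)$ makes the compound error $(1+\epsilon)^{\ell}\le 1+\epsilon'$, at the cost of an extra $\poly(\ell/\epsilon')$ factor absorbed into the $\poly(1/\epsilon')$ terms. The total running time from \eqref{eq: runningTimeMeta2} is
\[
\tilde O\!\left(\beta_0\Bigl(\sum_{j=1}^{\ell}\tfrac{\beta_{j-1}}{\beta_j}f(n)+\beta_\ell h(n)\Bigr)s(n)\right),\qquad \beta_0=m.
\]
In the near-optimal-size case $s(n)=\poly(1/\epsilon,\log n)$, choose $\beta_i=m^{1-i/(\ell+1)}$ so that each ratio $\beta_{j-1}/\beta_j=m^{1/(\ell+1)}$ and $\beta_\ell=m^{1/(\ell+1)}$, giving total time $\tilde O(m^{1+1/(\ell+1)}\poly(1/\epsilon',\log n))$, i.e. amortized update time $\tilde O(m^{1/(\ell+1)}\poly(1/\epsilon'))$; taking $\ell$ a large enough constant makes this $O(\poly(\log n,1/\epsilon'))$. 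In the general case $s=\poly(k,1/\epsilon,\log n)$, say $s=k^{d}\poly(1/\epsilon,\log n)$, the terminal sets at level $i$ have size $\Theta(\beta_i)$ so $s(n)$ contributes a $\beta_i^{d}$ blow-up to the graph sizes; re-running the recursion of \Cref{lem: runningTimeForSparsifiers} with this blow-up and re-optimizing $\beta_i$ (balancing $m\cdot\prod$-type products) yields a total time of the form $\tilde O(m^{1+c}\poly(1/\epsilon'))$ for some constant $c=c(d,\ell)<1$ when $\ell$ is chosen as a suitable constant, hence amortized update time $O(n^{1-\gamma}\poly(1/\epsilon'))$ for a constant $\gamma>0$ (using $m\le n^{2}$, or restricting to the natural $m=\tilde O(n)$ regime after an initial cut-sparsification step). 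Finally, the algorithm is Monte Carlo randomized and correct with high probability because the only randomized ingredient — the cut-sparsifier construction — is invoked $\poly(m)$ many times across the tree, so a union bound over $\poly(m)$ high-probability events still succeeds with high probability; since the adversary is offline, there is no adaptivity issue.

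The main obstacle will be the size-parameter bookkeeping in the general polynomial case: when $s(n)$ is genuinely $\poly(k)$ rather than $\tilde O(k)$, the intermediate graphs $G''_{[r,s]}$ in \Cref{alg: vertexNodeSparsiy} grow as $\beta_{i-1}^{d}\poly(\log n,1/\epsilon)$ rather than $\tilde O(\beta_{i-1})$, and one must verify that the top-down recursion still telescopes to a total exponent strictly below $1$ after the optimal choice of the $\beta_i$'s — this requires checking that the product $\prod_{j}(\beta_{j-1}/\beta_j)^{?}$ arising from compounding the size blow-up across $\ell$ levels remains controlled, which constrains how large $\ell$ may be taken and forces $\ell$ to be a fixed constant (depending on $d$) rather than super-constant. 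A secondary point to handle carefully is that \Cref{thm: metaTheorem2} as stated assumes a single fixed $s(n)$ independent of $|T|$; strictly, one should restate (or lightly re-derive) the running-time lemma with $s$ depending polynomially on the per-level interval length, which is a routine but necessary modification of \Cref{lem: runningTimeForSparsifiers} and \Cref{lem: procesingUpdates}.
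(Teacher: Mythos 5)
Your overall strategy matches the paper's: plug the hypothesized sparsifier into the offline framework of \Cref{thm: metaTheorem2}, shrink $\epsilon$ to control the compounded error across levels, and tune the $\beta_i$'s. However, there is a genuine error in the near-optimal-size case. You choose $\beta_i=m^{1-i/(\ell+1)}$, obtain amortized update time $\tilde O(m^{1/(\ell+1)}\poly(1/\epsilon'))$, and then claim that ``taking $\ell$ a large enough constant makes this $O(\poly(\log n,1/\epsilon'))$.'' That is false: for any fixed constant $\ell$, $m^{1/(\ell+1)}$ is a fixed polynomial in $m$, never polylogarithmic. To get the claimed $\poly(\log n,1/\epsilon')$ update time you must take $\ell=\Theta(\log n)$ levels, so that $m^{1/(\ell+1)}=O(1)$; this is exactly what the paper does, and it is also the reason the error parameter must be set to $\epsilon=\Theta(\epsilon'/\log n)$ (your own choice $\epsilon=\Theta(\epsilon'/\ell)$ only forces a $\log n$ degradation of $\epsilon$ because $\ell$ is logarithmic, which is consistent with $\ell$ being super-constant, not constant). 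With $\ell=\log n$ the per-level $\tilde O(1)$ overheads multiply to $\log n\cdot\tilde O(1)$, which stays polylogarithmic, so the argument goes through — but only after this correction.

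Two secondary points. First, the theorem's bounds are stated in terms of $n$, so the reduction to $m=\tilde O(n)$ via a dynamically maintained edge cut sparsifier (e.g.\ \cite{AbrahamDKKP16}, splitting the update sequence into segments of length $n$) is a necessary first step, not an optional alternative: your fallback ``$m\le n^2$'' gives $m^{1-1/(c+1)}\le n^{2c/(c+1)}$, which exceeds $n$ for $c>1$ and thus fails to be sublinear in $n$. Second, in the general case $s=k^{c}\poly(1/\epsilon,\log n)$ you defer the key computation as an ``obstacle''; the paper resolves it by using only \emph{two} levels with $\beta_1=m^{1/(c+1)}$, whereupon both the sparsifier-construction cost $(\beta_0/\beta_1)\cdot\tilde O(\beta_0)$ and the query cost $\beta_0\cdot\tilde O(\beta_1^{c})$ equal $\tilde O(n^{2-1/(c+1)})$, giving amortized time $\tilde O(n^{1-1/(c+1)})$. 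This two-level balancing is short and should be carried out explicitly rather than asserted.
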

\begin{proof}
	Let us assume $\epsilon'$ is a constant for simplicity.
	The proof generalizes easily when $\epsilon'$ is not a constant. 
	
	First, we only need to consider offline dynamic algorithms where the underling graph has $m=\tilde{O}(n)$ edges at every time step and the length of the update sequences is $n$.
	This is because there is a dynamic algorithm by \cite{AbrahamDKKP16} that can maintain a cut sparsifier $H=(V,E')$ of a graph $G=(V,E)$ when the terminal set is $V$ with $\tilde{O}(1)$ worst-case update.
	So we can work on $H$ instead, and divide the update sequences into segements of length $n$.
	If we have an offline dynamic algorithm with update time $u$ on average on each period, then the average update time is $\tilde{O}(u)$ over the whole sequence.
	
	We set $\epsilon=\epsilon'/10\log n$.
	Suppose that the sparsifier from the assumption has size only $s=k\cdot\mbox{poly}(1/\epsilon,\log n)=\tilde{O}(k)$.
	Then, we apply the same proof as in \Cref{thm: offlineMaxFlow},  except that the number of levels of the decomposition tree will be $\log n$ instead of $O(\sqrt{\log n})$.
	The quality of the cut-sparsifier at any level is at most $(1+\epsilon)^{\log n}=(1+\epsilon'/10\log n)^{\log n}\le(1+\epsilon')$.
	The total running time will be $\tilde{O}(m^{1+\frac{1}{\log n+1}})=\tilde{O}(n)$.
	The latter implies that update time on average is $O(\mbox{polylog}(n))$.
	
	Assume that $s=k^{c}\cdot\mbox{poly}(1/\epsilon,\log n)=\tilde{O}(k^{c})$ for some constant $c>1$. Then, we can apply again the same proof from \Cref{thm: offlineMaxFlow}.
	By using only two levels of the decomposition tree, we can obtain an update time of $\tilde{O}(n^{1-\frac{1}{c+1}})$.
	Concretely, if we set $\beta_{0}=m$ and $\beta_{1}=m^{1/(c+1)}$ then the  time for computing the decomposition tree is $\frac{\beta_{0}}{\beta_{1}}\cdot\tilde{O}(\beta_{0})=\tilde{O}(n^{2-\frac{1}{c+1}})$.
	The total time for running approximate max flow on the cut-sparsifier in the second level at each step is $\beta_{0}\cdot\tilde{O}(\beta_{1}^{c})=\tilde{O}(n^{2-\frac{1}{c+1}})$.
	Thus it follows that the update time is $\tilde{O}(n^{1-\frac{1}{c+1}})$ on average.
\end{proof}

\subsubsection{Approximate distance oracles on general graphs}

There are previous hardness results for approximation algorithms for dynamic shortest path problems (including single-pair, single-source and all-pairs problems) \cite{HenzingerKNS15}.
All such results show a very high lower bound, e.g. $\Omega(n^{1-\epsilon})$ or $\Omega(n^{1/2-\epsilon})$ time on an $n$-node graph.
However, they hold only when the approximation factor is a small constant.
It is open whether one can obtain weaker polynomial lower bounds for larger approximation factors.
We show that it is impossible to show super-constant factor lower-bounds in several settings.
\begin{proposition}
	There is no polynomial lower bound for dynamic $\omega(1)$-approximate distance oracles in the offline setting (and also in the online incremental setting). 
	
	More formally, for any lower bound stating that $\omega(1)$-approximate offline dynamic distance oracle algorithm on $n$-node graphs requires at least $u(n)$ update time or $q(n)$ query time, then we have $u(n)=n^{o(1)}$ and $q(n)=n^{o(1)}$.
	The same holds for online incremental algorithm with worst-case update time.
\end{proposition}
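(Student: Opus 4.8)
Just as the previous proposition follows from \Cref{thm: offlineMaxFlow}, the plan is to read this one off from our own algorithms: \Cref{thm: offlineShortestPaths} for the offline statement and \Cref{thm: IncrementalApproximateAPSP} for the online incremental one. The key point is that both theorems carry a free recursion depth $\ell$ and a free base stretch $r$, and by letting these grow with $n$ — but very slowly — one makes the approximation factor superconstant while keeping the update and query time $n^{o(1)}$. Once such a data structure exists, any lower bound of the quoted form must be consistent with it, which is exactly what forces $u(n)=n^{o(1)}$ and $q(n)=n^{o(1)}$.

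First I would treat the offline case. I would fix a function $g(n)\to\infty$ that grows slowly enough — concretely $g(n)=\lceil\log\log\log n\rceil$ works, and in particular $g(n)=o(\log n/\log\log n)$ — and invoke \Cref{thm: offlineShortestPaths} with $\ell=r=g(n)$. The approximation factor is then $(2g(n)-1)^{g(n)}$, which tends to infinity, hence is $\omega(1)$ (and is still $2^{O(g(n)\log g(n))}=n^{o(1)}$). The total time over a sequence of $m$ operations is $\tilde{O}\big(m\cdot m^{1/(g(n)+1)}n^{2/g(n)}\big)$, so the amortized update and query time is $\tilde{O}\big(m^{1/(g(n)+1)}n^{2/g(n)}\big)$; using $m\le n^2$ this is $n^{O(1/g(n))}$ up to the factors hidden by $\tilde{O}$, which are polylogarithmic (at worst mildly $g(n)$-dependent, hence $n^{o(1)}$ for our slow $g$), so the amortized update and query time is $n^{o(1)}$.

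Next I would do the online incremental case with \Cref{thm: IncrementalApproximateAPSP}, which additionally carries the factors $c^{\ell-1}$, $r^{\ell}$, $n^{\ell/r}$ and $\log^{\ell-1}n$, so the parameters need a bit more care. I would keep $\ell=g(n)$ with $g(n)\to\infty$ and $g(n)=o(\log n/\log\log n)$, but now set $r=g(n)^2$, so that the exponent $\ell/r=1/g(n)$ of the $n^{\ell/r}$ term vanishes. The approximation becomes $(2g(n)^2-1)^{g(n)}\to\infty$, again $\omega(1)$, and the worst-case update and query time is
\[
O\big(g(n)\, c^{g(n)-1}\, g(n)^{2g(n)}\, m^{1/(g(n)+1)}\, n^{1/g(n)}\, \log^{g(n)-1}n\big),
\]
where $m^{1/(g(n)+1)}\le n^{2/(g(n)+1)}=n^{o(1)}$ and $n^{1/g(n)}=n^{o(1)}$ since $g(n)\to\infty$, while $g(n)$, $c^{g(n)}$, $g(n)^{2g(n)}$ and $\log^{g(n)}n$ are each $2^{O(g(n)\log\log n)}=n^{o(1)}$ because $g(n)=o(\log n/\log\log n)$; hence the whole bound is $n^{o(1)}$, and the algorithm is moreover deterministic.

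Having produced, in each setting, an $\omega(1)$-approximate dynamic distance oracle whose update time and query time are both $n^{o(1)}$, I would conclude: if some lower bound asserted that every such oracle needs update time at least $u(n)$ or query time at least $q(n)$, then this data structure — otherwise a counterexample — forces $u(n)=n^{o(1)}$ and $q(n)=n^{o(1)}$, so in particular no polynomial lower bound is possible. The one delicate point is the running-time bookkeeping in the incremental case: the recursion depth $\ell$ contributes a $2^{\Theta(\ell\log\ell)}$ overhead together with the $\log^{\ell}n$ term, so $\ell$ must be taken to grow strictly more slowly than $\log n/\log\log n$ — that single constraint is what the choice of $g$ is engineered to satisfy, and it is the only place where the estimate is not entirely routine.
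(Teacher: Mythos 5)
Your proposal is correct and takes exactly the route the paper intends: the paper's proof is the single line ``This follows directly from Theorems~\ref{thm: IncrementalApproximateAPSP} and~\ref{thm: offlineShortestPaths},'' and you have simply made the parameter choices ($\ell=r=g(n)$ offline, $\ell=g(n)$, $r=g(n)^2$ incrementally, with $g(n)=o(\log n/\log\log n)$ tending to infinity) explicit and verified that the resulting approximation is $\omega(1)$ while all time bounds stay $n^{o(1)}$. The bookkeeping of the $c^{\ell}$, $r^{\ell}$, $n^{\ell/r}$ and $\log^{\ell}n$ factors is done correctly.
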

This follows directly from \Cref{thm: IncrementalApproximateAPSP,thm: offlineShortestPaths}.

\subsubsection{Approximate distance oracles on planar graphs}

Similar to the situations above, assuming the APSP conjecture, Abboud and Dahlgaard \cite{AbboudD16} show that any offline fully dynamic algorithm for exact distance oracles on planar graph requires either update time or query time of $\Omega(n^{1/2-o(1)})$. 
We can still hope for a hardness result for $(1+\epsilon)$-approximate distance oracles, but this remains an important open problem in the field of dynamic algorithms.

Recall the definition of \emph{distance approximating minors}, which are vertex distance sparsifiers that are required to be minors of the input graph.
In the exact setting, Krauthgamer et al. \cite{KrauthgamerNZ14} showed that any distance preserving minor with respect to $k$ terminals, even when restricted to planar graphs, must have size $\Omega(k^{2})$ size.
Cheung et al.~\cite{cheung2016} showed that for planar graphs there is a $(1+ \epsilon)$-distance approximating minor of size $\tilde{O}(k^{2} \epsilon^{-2})$. 
The natural question is whether there is a $(1+\epsilon)$-approximate
{\em minor} distance sparsifier for $k$ terminals that has size $k^{1.99}\cdot\mbox{poly}(1/\epsilon,\log n)$.

We again draw a connection between dynamic graph algorithms and vertex sparsifiers; if there is a very efficient algorithm for such distance sparsifiers, then we cannot extend the $\Omega(n^{1/2-o(1)})$ lower bound to the approximate setting.
Moreover, if the sparsifier has the (almost) best possible size, then there cannot be even a super-polylogarithmic lower bound.
More precisely, we show the following.
\begin{theorem}
\label{thm:lb distance}
    Let $G$ be an undirected graph $G=(V,E)$ with $m$ edges and a set $T \subset V$ of $k$ terminals.
    If there is an algorithm that constructs a $(1+\epsilon)$-distance approximating minor of size $s=k^{2/(1+3\gamma)}\cdot\poly(1/\epsilon,\log n)$, for some constant $0<\gamma\le1/3$, in time $O(m\poly(\log n,1/\epsilon))$, then there is an offline dynamic $(1+\epsilon')$-approximate distance oracle algorithm for with update and query time $u=O(n^{1/2-\gamma/2})$.
	In fact, if the size of the sparsifier is $s=k\cdot\poly(1/\epsilon',\log n)$, then we obtain an update and query time of $u=O(\poly(\log n))$.
\end{theorem}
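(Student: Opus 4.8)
The plan is to feed the hypothesized $(1+\epsilon)$-distance approximating minor into the offline framework of Section~\ref{sec:meta_offline}, exactly as a good cut sparsifier was fed into it in the proof of Theorem~\ref{thm:lb max flow}. Write $c=2/(1+3\gamma)$, so that on $k$ terminals the sparsifier has size $k^{c}\cdot\poly(1/\epsilon,\log n)$, and note $c\ge 1$ since $0<\gamma\le 1/3$. As a first step I would split the operation sequence into consecutive blocks of $n$ operations and run the offline algorithm on each block separately, which leaves the amortized cost unchanged; unlike in Theorem~\ref{thm:lb max flow}, no auxiliary edge sparsification is needed, because a block introduces at most $n$ distinct edges and the graph at every time step has $O(n)$ edges.

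Inside a block I would build the decomposition tree $\mathcal{T}$ with $\ell$ levels and interval lengths $\beta_1>\dots>\beta_\ell$ to be fixed, and run \textsc{VertexSparsify} top-down with the distance approximating minor in the role of the efficient vertex sparsifier. Distance sparsifiers are transitive and decomposable, so Lemma~\ref{lem: qualitySparsifier} gives that at level $\ell$ the graph $G'_{[r,s]}$ is a $(1+\epsilon)^{\ell}$-distance sparsifier of $G_{[r,s]}$ with respect to $\partial_{[r,s]}$; processing a leaf interval by re-inserting its non-permanent edges (all incident to $\partial_{[r,s]}$) one operation at a time and answering queries with Dijkstra then produces an estimate within $(1+\epsilon)^{O(\ell)}$ of the true distance, so choosing $\epsilon=\epsilon'/\Theta(\ell)$ makes the guarantee $1+\epsilon'$ at the price of only $\poly(\log n,1/\epsilon')$ factors in the running time.

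The step that needs care --- and the reason the hypothesis asks for a \emph{minor} rather than an arbitrary $(1+\epsilon)$-distance sparsifier --- is that the minor construction accepts only planar inputs, so I must check that every graph $G''_{[r,s]}$ handed to it in $\mathcal{T}$ is planar. I would prove this by induction on the level. For a child $[r,s]$ of $[q,t]$ we have $G''_{[r,s]}=G'_{[q,t]}\cup H_{[r,s]}$, where by induction $G'_{[q,t]}$ is obtained from $G_{[q,t]}$ by contractions and deletions that never merge a vertex of $\partial_{[q,t]}$, and where $H_{[r,s]}$ consists of edges whose endpoints lie in $\partial_{[q,t]}$ and which, together with $G_{[q,t]}$, form $G_{[r,s]}$; since every edge of $G_{[r,s]}$ is present at the left endpoint of $[r,s]$, the graph $G_{[r,s]}$ is a subgraph of a planar graph, and performing on it the very same contractions and deletions (which leave the endpoints of $H_{[r,s]}$ untouched) exhibits $G''_{[r,s]}=G'_{[q,t]}\cup H_{[r,s]}$ as a minor of $G_{[r,s]}$, hence planar; then $G'_{[r,s]}$, being a minor of $G''_{[r,s]}$, is planar as well. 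I expect this planarity bookkeeping --- and verifying it meshes with the decomposability step of the framework --- to be the main obstacle; the rest is parameter tuning. (For a general host the obstacle disappears, and the argument is strictly simpler.)

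For the running time, Theorem~\ref{thm: metaTheorem2} cannot be invoked verbatim, since the per-terminal blow-up of a minor sparsifier, $k^{c-1}\poly(\cdot)$, depends on the terminal count and not on $n$ alone; I would instead redo the accounting of Lemmas~\ref{lem: runningTimeForSparsifiers} and~\ref{lem: procesingUpdates} directly. A level-$i$ node has $|\partial_{[r,s]}|=O(\beta_i)$, hence a sparsifier of size $\tilde O(\beta_i^{c})$; a level-$1$ sparsifier is built from an $O(n)$-edge graph in time $\tilde O(n)$ over $O(n/\beta_1)$ nodes, a level-$i$ sparsifier for $i\ge 2$ is built in time $\tilde O(\beta_{i-1}^{c})$ over $O(n/\beta_i)$ nodes, and each of the $O(n/\beta_\ell)$ leaves runs $O(\beta_\ell)$ operations on a graph of size $\tilde O(\beta_\ell^{c})$. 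Dividing the block total by $n$ gives amortized time
\[
\tilde O\!\left(\frac{n}{\beta_1}+\sum_{i=2}^{\ell}\frac{\beta_{i-1}^{c}}{\beta_i}+\beta_\ell^{c}\right),
\]
and I would balance all $\ell+1$ terms to a common value $B$ by taking $\beta_1=n/B$ and $\beta_i=\beta_{i-1}^{c}/B$; the leaf constraint $\beta_\ell^{c}=B$ then forces $B=n^{(c-1)c^{\ell}/(c^{\ell+1}-1)}$, and since $(c-1)/c=(1-3\gamma)/2<(1-\gamma)/2$ a sufficiently large constant $\ell=\ell(\gamma)$ makes $B\le n^{1/2-\gamma/2}$, which is the claimed amortized update and query time. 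Finally, if $s=k\cdot\poly(1/\epsilon,\log n)$ then the per-terminal blow-up is $\poly(\log n,1/\epsilon)=\tilde O(1)$, Theorem~\ref{thm: metaTheorem2} applies as stated, and taking $\ell=\Theta(\log n)$, $\epsilon=\epsilon'/\Theta(\log n)$ and $\beta_i=m^{1-i/(\ell+1)}$ on each length-$n$ block gives total time $\tilde O(m^{1+1/(\ell+1)})=\tilde O(m)$, i.e.\ amortized time $\poly(\log n)$.
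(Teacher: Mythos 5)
Your proposal is correct and follows essentially the same route as the paper: feed the $(1+\epsilon)$-distance approximating minor into the offline decomposition-tree framework, prove by induction (via the ``adding an edge between shared vertices preserves minors'' claim) that every intermediate graph $G''_{[r,s]}$ stays planar so the minor construction can legally be applied, and take $\epsilon=\epsilon'/\Theta(\ell)$ to keep the accumulated stretch at $1+\epsilon'$. The only cosmetic difference is the parameter tuning in the $s=k^{2/(1+3\gamma)}$ case --- you balance all $\ell+1$ cost terms geometrically via $\beta_i=\beta_{i-1}^{c}/B$ with $\beta_\ell^{c}=B$, whereas the paper fixes $\beta_{i+1}=n^{(1+\gamma-2i\gamma)/2}$ over $1/\gamma$ levels --- and both choices give the claimed $O(n^{1/2-\gamma/2})$ amortized time.
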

The proof will be very similar to the one in \Cref{thm:lb max flow} except that we need to be more careful about planarity.
Thus we first prove the following useful lemma.
\begin{lemma}
	Each vertex sparsifier $G'_{[r_{p},s_{p}]}$ corresponding to a node in our decomposition tree is planar.
\end{lemma}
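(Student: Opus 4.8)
The plan is to prove, by induction on the level $i$ of the node, the stronger statement that $G'_{[r,s]}$ is in fact a \emph{minor} of the graph $G_{[r,s]}=(V,E^{p}_{[r,s]})$ of permanent edges of the interval $[r,s]$; planarity then follows for free. Indeed, every edge of $G_{[r,s]}$ is present throughout $[r,s]$ and in particular at time $r$, so $G_{[r,s]}$ is a subgraph of the graph $G$ at time $r$, which is planar under our standing assumption that the dynamic graph stays planar; hence $G_{[r,s]}$ is planar, and therefore so is every minor of it. Recall also that in this setting the vertex sparsifiers produced by Algorithm~\ref{alg: vertexNodeSparsiy} via Theorem~\ref{thm: metaTheorem2}~Part~1 are distance approximating minors (this is exactly the object postulated in Theorem~\ref{thm:lb distance}), so $G'_{[r,s]}$ is a minor of $G''_{[r,s]}$ by construction, and the minor relation is transitive.

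For the base case $i=0$, the root node has $E^{p}_{[1,m]}=\emptyset$, so $G'_{[1,m]}=(V,\emptyset)$ is trivially a minor of $G_{[1,m]}$. For the inductive step, let $G_{[r,s]}$ be a node at level $i>0$ with parent $G_{[q,t]}$, so $[r,s]\subseteq[q,t]$ and $G_{[r,s]}=G_{[q,t]}\cup H_{[r,s]}$ by Property (4) of $\mathcal{T}$. By the induction hypothesis $G'_{[q,t]}$ is a minor of $G_{[q,t]}$, realized by deletions and contractions that keep each terminal $u\in\partial_{[q,t]}$ as a vertex in its own branch set (terminals are never contracted away, a standard property of distance approximating minors). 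The key observation is that every edge of $H_{[r,s]}=G_{[r,s]}\setminus G_{[q,t]}$ has both endpoints in $\partial_{[q,t]}$: such an edge is permanent on $[r,s]$ but not on $[q,t]$, so, since $[r,s]\subseteq[q,t]$, it must be inserted or deleted at some point inside $[q,t]$; it is therefore a non-permanent edge of $[q,t]$, and its endpoints lie in $V(E^{np}_{[q,t]})\subseteq\partial_{[q,t]}$. Consequently, performing on $G_{[r,s]}=G_{[q,t]}\cup H_{[r,s]}$ the very same sequence of deletions and contractions that witnesses $G'_{[q,t]}$ as a minor of $G_{[q,t]}$ (these operations keep the branch sets connected, since $G_{[q,t]}\subseteq G_{[r,s]}$, and never touch the endpoints of the edges of $H_{[r,s]}$, which are terminals) transforms $G_{[r,s]}$ into $G'_{[q,t]}\cup H_{[r,s]}=G''_{[r,s]}$, because each edge of $H_{[r,s]}$ survives as an edge between the unchanged images of its endpoints. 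Hence $G''_{[r,s]}$ is a minor of $G_{[r,s]}$; since $G'_{[r,s]}$ is a minor of $G''_{[r,s]}$, transitivity yields that $G'_{[r,s]}$ is a minor of $G_{[r,s]}$, completing the induction.

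The only delicate point is the interval bookkeeping showing that the newly added permanent edges $H_{[r,s]}$ have their endpoints among the boundary vertices of the parent node; this is precisely what allows the parent's minor structure to be transported verbatim to $G_{[r,s]}$. Everything else reduces to the elementary facts that subgraphs and minors of planar graphs are planar and that the minor relation is transitive.
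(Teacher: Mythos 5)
Your proof is correct and follows essentially the same route as the paper's: an induction down the decomposition tree showing that each $G'_{[r,s]}$ is a minor of the planar graph $G_{[r,s]}$, using that the newly added edges $H_{[r,s]}$ join terminals of the parent's sparsifier (the paper packages this as Claim~\ref{claim:preserve planar}) and that minors of planar graphs are planar. Your version merely states the strengthened induction hypothesis (``is a minor of $G_{[r,s]}$'') more explicitly and spells out why the endpoints of $H_{[r,s]}$ lie in $\partial_{[q,t]}$, which the paper leaves implicit.
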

\begin{proof}
	First, consider a sequence of $H_{[r_{1},s_{1}]},H_{[r_{2},s_{2}]},\dots,H_{[r_{p},s_{p}]}$ corresponding to a path in the decomposition tree, where $H_{[r_{1},s_{1}]}$ is a child of the root
	\footnote{Note that the graph $H_{[r,s]}$ is not defined at the root.}
	, and $H_{[r_{i},s_{i}]}$ is a parent of $H_{[r_{i+1},s_{i+1}]}$.
	Observe that $\cup_{1\le i\le p}H_{[r_{i},s_{i}]}=G_{[r_{p},s_{p}]}$ which is planar. 
	
	From \Cref{alg: vertexNodeSparsiy}, we unfold the recursion and obtain that 
	\[
	    G'_{[r_{p},s_{p}]}=
	    \cutsp(\cutsp(\ldots)\cup H_{[r_{p-1},s_{p-1}]})\cup H_{[r_{p},s_{p}]}).
	\]
	Note that we omit the second parameter of $\cutsp$ only for readability.
	We assume by induction $G'_{[r_{p-1},s_{p-1}]}=\cutsp(\cutsp(...)\cup H_{[r_{p-1},s_{p-1}]})$ is planar.
	We will prove that $G'_{[r_{p},s_{p}]}$ planar.
	To this end, observe that $G'_{[r_{p-1},s_{p-1}]}$ is a minor of $\cup_{1\le i\le p-1}H_{[r_{i},s_{i}]}$.
	Next, we need the following observation.
	\begin{claim}
	\label{claim:preserve planar}
		Let $G_{1}$ be a minor of $G_{2}$.
		Let $(u,v)$ be an edge such that $u,v\in V(G_{1})\cap V(G_{2})$, i.e., the endpoints are nodes of both $G_{1}$ and $G_{2}$.
		Then, $G_{1} \cup \{(u,v)\}$ is a minor of $G_{2} \cup \{(u,v)\}$.
		In particular, if $G_{2} \cup \{(u,v)\} (u,v)$ is planar, then so is $G_{1}\cup \{(u,v)\}$.
	\end{claim}
	
	We apply \Cref{claim:preserve planar} where $G_{2}=\cup_{1\le i\le p-1}H_{[r_{i},s_{i}]}$ and $G_{1}=G'_{[r_{p-1},s_{p-1}]}$.
	As the endpoints of $H_{[r_{i},s_{i}]}$ are in both $G_{1}$ and $G_{2}$ by construction and $G_{2}\cup H_{[r_{p},s_{p}]}=\cup_{1\le i\le p}H_{[r_{i},s_{i}]}$ is planar, then $G_{1}\cup H_{[r_{p},s_{p}]}$ is planar.
	Finally, $G'_{[r_{p},s_{p}]}=\cutsp(G_{1}\cup H_{[r_{p},s_{p}]})$ is a minor of $G_{1}\cup H_{[r_{p},s_{p}]}$, so $G'_{[r_{p},s_{p}]}$ is planar.
\end{proof}
Now, we prove \Cref{thm:lb distance}.
\begin{proof}[Proof of \Cref{thm:lb distance}]
    We first prove the case when $s=k\cdot\mbox{poly}(1/\epsilon,\log n)$.
	We again prove the theorem when $\epsilon'$ is a constant for simplicity.
	Set $\epsilon=\epsilon'/10\log n$.
	We build the corresponding decomposition tree with $\log n$ levels.
	The quality of the sparsifier at any level is at most $(1+\epsilon)^{\log n}=(1+\epsilon'/10\log n)^{\log n}\le(1+\epsilon')$.
	The total running time will be $\tilde{O}(m^{1+\frac{1}{\log n+1}})=\tilde{O}(n)$ using the same argument as in Lemma~\ref{lem: insertRunningTime}.
	That is the update time on average is $O(\mbox{poly}(\log n))$.
	
	For the case when $s=k^{2/(1+3\gamma)}\cdot\mbox{poly}(1/\epsilon,\log n)$, the proof is the same except the parameters need to be carefully chosen.
	Set $\epsilon=\epsilon'\gamma/2$.
	We choose $\beta_{0}=m=O(n)$, $\beta_{1}=n^{(1+\gamma)/2}$, and $\beta_{i+1}=n^{(1+\gamma-2\gamma i)/2}$ for $i\ge0$.
	We get that there will be at most $1/\gamma$ levels in the decomposition tree and thus the quality at each level is at most 
	\[
	    (1+\epsilon)^{1/\gamma} \le
	    e^{\epsilon/\gamma} =
	    e^{\epsilon'/2}\le(1+\epsilon')
	\]
	because $(1+x)\le e^{x}$ for any $x$ and $e^{x/2}\le(1+x)$ for $0\le x\le1$. 
	
	For each $i$, the total time to build the sparsifiers in level $i+1$ by running the algorithm sparsifier at level $i$ is $n/\beta_{i+1}\cdot\tilde{O}(\beta_{i}^{2/(1+3\gamma)})$.
	This is because there are $n/\beta_{i+1}$ many sparsifiers, and the algorithm is applied on a graph of size $\tilde{O}(\beta_{i}^{2/(1+3\gamma)})$.
	By direct calculation we have that
	\[
	    n/\beta_{i+1}\cdot\beta_{i}^{2/(1+3\gamma)}=
	    n^{1-(1+\gamma-2i\gamma)/2+\frac{(1+\gamma-2(i-1)\gamma)}{(1+3\gamma)}}\le n^{1.5-\gamma/2}.
	\]
	To see this, note that $2/(1+3 \gamma) \geq 1$ and consider the following chain of inequalities:
	\begin{align*}
	    \frac{(1+\gamma-2(i-1)\gamma)}{(1+3\gamma)}&-(1+\gamma-2i\gamma)/2\\
	    &\le \frac{1+\gamma}{1+3\gamma}-(i-1)\gamma-\frac{1+\gamma}{2}+i\gamma\\
	    &\le (1-\gamma)+\gamma-1/2-\gamma/2\\
	    &=   1/2-\gamma/2.
	\end{align*}
	It follows that the total time over all levels is $\frac{1}{\gamma}\cdot O(n^{1.5-\gamma/2})$, which is turn implies an average update time of $O(n^{0.5-\gamma/2})$. This completes the proof. 
\end{proof}

\section{Fully-Dynamic Algorithms via Fully-Dynamic Vertex Sparsifier}

\label{sec: fullyDynamicMetaTHm}

In this section, we present a meta data-structure for dynamically maintain $\P$, some properties of the graph $G$.
Like before~\label{def: IVS}, we also utilize the idea of $\alpha-$\emph{vertex sparsifier}.
Unlike the incremental scheme, we introduce another parameter $s$ specifying the edge-sparsity of the vertex sparsifier.

Let $G=(V,E)$ be a graph.
An $(\alpha,s)-$\emph{vertex sparsifier} of $G$ for $\P$ w.r.t. the terminal set $T$ is a graph $H$ that is an $\alpha-$\emph{vertex sparsifier} of $G$ for $\P$ w.r.t. $T$ with number of edges, $|E(H)|$, bounded by $O(|T|s)$.

\begin{definition}[Fully-Dynamic Vertex Sparsifer]
\label{def: DVS}
Let $G=(V,E)$ be a graph, $T \subseteq V$ be the set of terminals, and $\alpha$ be an non-negative parameter.
A data-structure $\D$ is an \emph{$(\alpha, s)-$Fully-Dynamic Vertex Sparsifier} (abbr. \emph{$(\alpha, s)-$DVS}) of $G$ if $\D$ explicitly maintains an $(\alpha, s)-$vertex sparsifier $H_T$ of $G$ with respect to $T$ and supports following operations: 
\begin{enumerate}
\setlength{\itemsep}{0em}
\item \textsc{Preprocess}$(G, T)$: preprocess $G$ in time $t_p$
\item \textsc{AddTerminal}$(u)$: let $T'$ be $T \cup \{u\}$ and update $H_T$ to $H_{T'}$ in time $t_a$ such that i) $H_{T'}$ is an $(\alpha, s)-$vertex sparsifier of $G$ with respect to $T'$, and ii) the \emph{recourse}, number of edge changes from $H_T$ to $H_{T'}$, is at most $r_a$.
This operation should return the set of edges inserted and deleted from $H_T$.
\item \textsc{Delete}$(e)$: delete $e$ from $G$ while maintaining $H_T$ being an $(\alpha, s)-$vertex sparsifier of $G$ with respect to $T$ in $t_d$ time and $r_d$ recourse in $H_T$.
\end{enumerate}
Also, such $H_T$ have size $O(|T|s)$.
\end{definition} 

Given a $(\alpha,s)-$DVS $\D$ of $G$, we can support edge insertion by first adding 2 endpoints to the terminal set and then add the edge directly to the vertex sparsifier $H_T$.
The correctness comes from the decomposability.
To specify the cost, the following corollary is presented to formalized the approach.
\begin{lemma}
\label{lem:DVS1levelEdgeInsertion}
Let $G=(V,E)$ be a graph and $\D$ be an $(\alpha,s)-$DVS of $G$.
Let $t_a, r_a$ be the time and recourse for $\D$ to handle \textsc{AddTerminal} operation.
$\D$ also maintains an $(\alpha,s)-$vertex sparsifier $H_T$ of $G$ with respect to $T$ subject to the following operation:
\begin{itemize}
\setlength{\itemsep}{0em}
\item \textsc{Insert}$(e)$: insert $e$ to $G$ while maintaining $H_T$ being an $(\alpha, s)-$vertex sparsifier of $G$ with respect to $T$ in $O(t_a)$ time and $O(r_a)$ recourse in $H_T$.
\end{itemize}
\end{lemma}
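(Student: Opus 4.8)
The plan is to follow the template of the proof of Lemma~\ref{lem:IVS1levelEdgeInsertion} almost verbatim, the only extra bookkeeping being the edge-sparsity parameter $s$. To implement $\textsc{Insert}(e=uv)$ I would first invoke $\D.\textsc{AddTerminal}(u)$ and then $\D.\textsc{AddTerminal}(v)$; writing $T' = T \cup \{u,v\}$ for the resulting terminal set, this makes both endpoints of $e$ terminals and, by Definition~\ref{def: DVS}, leaves $\D$ maintaining an $(\alpha,s)$-vertex sparsifier $H_{T'}$ of the \emph{current} graph $G$ with respect to $T'$. I would then add the edge $e$ (with its weight) directly into $H_{T'}$, obtaining a graph $H^{+}$, and return as the change-list the union of the two change-lists returned by the \textsc{AddTerminal} calls together with $\{e\}$.

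For correctness I would verify that $H^{+}$ is an $(\alpha,s)$-vertex sparsifier of $G$ with $e$ inserted, with respect to $T'$. The key step is the decomposability of the property $\P$: since $u,v \in T'$ are common to both $G$ and $H_{T'}$, gluing the single edge $e$ onto $G$ and onto $H_{T'}$ preserves the $\alpha$-approximation of $\P$ between all terminal pairs, so $H^{+} = H_{T'} \cup \{e\}$ is an $\alpha$-vertex sparsifier of $G \cup \{e\}$ with respect to $T'$. The size bound holds because $|E(H^{+})| \le |E(H_{T'})| + 1 = O(|T'| \cdot s)$ and $|T'| \le |T| + 2$, so $H^{+}$ has the required $O(|T'| s)$ edges.

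For the running time and recourse, the two \textsc{AddTerminal} calls take $2 t_a = O(t_a)$ time and change at most $2 r_a$ edges of the maintained sparsifier, while inserting $e$ adds $O(1)$ time and one further edge change; hence \textsc{Insert} runs in $O(t_a)$ time with $O(r_a)$ recourse in $H_T$, as claimed, and the returned set of inserted/deleted edges is recorded along the way.

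The only point that needs a little care is checking that the decomposability hypothesis is exactly strong enough to justify the ``glue one edge'' step, i.e., that $\P$ behaves well under taking the union of a graph with a vertex sparsifier of it along a common terminal set. For all the concrete properties we use (shortest-path distances, min-cut/max-flow values, effective resistances) this holds, and the rest of the argument is mechanical, so I do not anticipate any genuine obstacle; the lemma is essentially a restatement of Lemma~\ref{lem:IVS1levelEdgeInsertion} in the fully-dynamic setting, carrying the sparsity parameter $s$ through unchanged.
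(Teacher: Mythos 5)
Your proposal matches the paper's proof, which simply says the argument works as in Lemma~\ref{lem:IVS1levelEdgeInsertion}: call \textsc{AddTerminal} on both endpoints, add the edge $e$ directly to the maintained sparsifier, invoke decomposability for correctness, and charge $2t_a + O(1)$ time and $2r_a + 1$ recourse. Your extra check that the size stays $O(|T'|s)$ is a harmless elaboration; the approach is the same.
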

\begin{proof}
The proof works similar to the one of Lemma~\ref{lem:IVS1levelEdgeInsertion}.
\end{proof}






To design a fully-dynamic data structure that support update and queries in sub-linear time, we will focus on building a fully-dynamic vertex sparsifier whose update time and recourse are sub-linear in $n$. This requirement is made precise in the following definition.

\begin{definition}[Efficient DVS] \label{def: efficientDVS}
Let $G=(V,E)$ be a graph, $\alpha$ be an non-negative parameter, and $s(n)$, $f(n,k)$, $g(n, k)$, $r(n, k) \geq 1$ be functions where $n$ and $k$ correspond to the maximum size of the vertex set and the terminal set respectively.
We say that $\D$ is an $(\alpha,s(k),f(n,k),g(n, k),r(n, k))-$\emph{efficient DVS} of $G$ if $\D$ is an $(\alpha,s(k))-$\emph{DVS} of $G$ with \emph{Preprocessing} time $t_p=O(m \cdot f(n,k))$, \emph{AddTerminal} time $t_a = O(g(n, k))$, and recourse $r_a = O(r(n, k))$, and \emph{Delete} time $t_d = O(g(n, k))$, and recourse $r_d = O(r(n, k))$.
\end{definition}


\label{sec: FullyDynamicVertexSparsifier}
Next we show how to use an efficient fully-dynamic vertex sparsifier to design an (approximate) fully-dynamic algorithms for problems with certain properties while achieving fast amortized update and query time.

\begin{theorem}
\label{thm: FullyDynamicMetaTheorem}
Let $G=(V,E)$ be a graph, and for any $u,v \in V$, let $\mathcal{P}(u,v,G)$ be a solution to a minimization problem between $u$ and $v$ in $G$.
Let $s(k),f(n,k),g(n, k),r(n, k),h(n) \geq 1$ be functions with $s(k) \le r(n, k)$, $\alpha\geq 1$ be the approximation factor, $\ell\geq 1$ be the depth of the data structure, and let $\mu_0,\mu_1,\ldots, \mu_{\ell}$ with $\mu_0 = m$ be parameters associated with the running time.
Assume the following properties are satisfied
\begin{enumerate}
\setlength{\itemsep}{0em}
\label{metaThM: P1}\item $G$ admits an $(\alpha, s(k),f(n,k),g(n, k),r(n,k))-$\emph{efficient DVS}
 \item The property $\mathcal{P}(u,v,G)$ can be computed in $O(m h(n))$ time in a static graph with $m$ edges and $n$ vertices.
\end{enumerate}
Then there is a (approximate) fully-dynamic algorithm that maintains for every pair of nodes $u$ and $v$, an estimate $\delta(u,v)$, such that
\begin{equation} \label{eq: approxMeta}
        \mathcal{P}(u,v,G) \leq \delta(u,v) \leq \alpha^{\ell} \cdot \mathcal{P}(u,v,G),
\end{equation}
with amortized update time of
\[
        T_u = O\left(\sum_{i=1}^{\ell}c^{i-1}\prod_{j=1}^{i-1}r(\mu_j,\mu_{j-1})\left(\frac{\mu_{i-1}s(\mu_{i-1})f(\mu_{i-1},\mu_i)}{\mu_{i}}+g(\mu_{i-1}, \mu_i)\right)\right),
\]
and amortized query time of
\[
        T_q = O\left(\ell T_u + \mu_\ell s(\mu_\ell) h(\mu_\ell) \right),
\]
where $c < 3$ is an universal constant.
\end{theorem}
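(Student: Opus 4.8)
The plan is to lift the incremental construction behind Theorem~\ref{thm: metaTheorem} to the fully dynamic setting, now exploiting both the \textsc{Delete} operation and the extra sparsity parameter $s$ of a DVS. Maintain a hierarchy of graphs $G_0 = G, G_1, \dots, G_\ell$ with terminal sets $T_1,\dots,T_\ell$ and efficient DVSs $\D_1,\dots,\D_\ell$, where $\D_i$ is an $(\alpha, s(\mu_i), f(\mu_{i-1},\mu_i), g(\mu_{i-1},\mu_i), r(\mu_{i-1},\mu_i))$-efficient DVS of $G_{i-1}$ with respect to $T_i$ and $G_i$ is the $(\alpha,s)$-vertex sparsifier it maintains. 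Two changes from the incremental case drive the statement. First, $\D_i$ supports \textsc{Delete}, so a deletion in $G_{i-1}$ is pushed into $\D_i$ directly, creating $O(r(\mu_{i-1},\mu_i))$ recourse in $G_i$ that is then forwarded as \textsc{Insert}/\textsc{Delete} calls to $\D_{i+1}$, exactly as insertions are. Second, by Definition~\ref{def: DVS} the graph $G_i$ has only $O(|T_i|\,s(\mu_i)) = O(\mu_i s(\mu_i))$ edges, which can be much smaller than the $O(\mu_i r(\cdot,\cdot))$ bound one gets from counting recourse; this separation is what puts $s$-factors, rather than $r$-factors, in the cost of preprocessing and of the static query. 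Since only amortized time is claimed, no background-copy trick is needed: run \textsc{Rebuild}$(i)$ --- reinitialize $\D_i$ on $G_{i-1}$ with its $\Theta(\mu_i)$ most recently touched terminals, and reinitialize every $\D_j$ with $j>i$ --- once every $\Theta(\mu_i)$ operations have passed through $\D_i$, tracked by counters $c_i$.

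To handle an update, insert $e=uv$ by calling $\D_1.\textsc{Insert}(e)$ via Lemma~\ref{lem:DVS1levelEdgeInsertion} and forwarding the $O(1)$ resulting changes of $G_1$ to $\D_2$, and so on down the hierarchy; delete $e$ by calling $\D_1.\textsc{Delete}(e)$ and cascading its recourse the same way. Induction on $i$ shows that one update to $G$ produces $O\!\big(c^{\,i-1}\prod_{j=1}^{i-1} r(\mu_j,\mu_{j-1})\big)$ primitive DVS operations at level $i$, where the universal $c<3$ absorbs the constant-factor blow-up of an insertion becoming two \textsc{AddTerminal} calls plus one edge. Invoking \textsc{Rebuild}$(i)$ when $c_i$ hits its threshold keeps $|T_i| = O(\mu_i)$; since a rebuild is not counted as recourse, rebuilds never compound through the cascade.

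For correctness, a straightforward induction shows each $\D_i$ maintains an $(\alpha,s)$-vertex sparsifier $G_i$ of $G_{i-1}$ with respect to $T_i$ (the insertion-as-terminal-addition reduction being valid because $\mathcal{P}$ is decomposable), so transitivity of $\alpha$-vertex sparsifiers makes $G_\ell$ an $\alpha^\ell$-vertex sparsifier of $G = G_0$. On a query $(s,t)$, call \textsc{AddTerminal}$(s)$ and \textsc{AddTerminal}$(t)$ on every level so that $\{s,t\}\subseteq T_\ell$, then run the static $O(m\,h(n))$-time algorithm of assumption~(2) on $G_\ell$; since $s,t$ are terminals of every level, the returned value $\mathcal{P}(s,t,G_\ell)$ obeys $\mathcal{P}(s,t,G)\le \mathcal{P}(s,t,G_\ell)\le \alpha^\ell\,\mathcal{P}(s,t,G)$, which is \eqref{eq: approxMeta}. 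For the running time, the preprocessing call inside \textsc{Rebuild}$(i)$ acts on $G_{i-1}$, which has $O(\mu_{i-1})$ vertices and $O(\mu_{i-1} s(\mu_{i-1}))$ edges, hence costs $O(\mu_{i-1} s(\mu_{i-1}) f(\mu_{i-1},\mu_i))$; amortized over the $\Theta(\mu_i)$ operations between consecutive rebuilds of $\D_i$ (the accompanying rebuilds of the cheaper levels $\D_j$, $j>i$, being charged at their own, more frequent, schedules and so not affecting the asymptotics) this is $O\!\big(\mu_{i-1} s(\mu_{i-1}) f(\mu_{i-1},\mu_i)/\mu_i\big)$ per operation at $\D_i$, while each primitive \textsc{AddTerminal}/\textsc{Insert}/\textsc{Delete} at $\D_i$ costs $O(g(\mu_{i-1},\mu_i))$. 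Multiplying by the $O\!\big(c^{\,i-1}\prod_{j=1}^{i-1} r(\mu_j,\mu_{j-1})\big)$ operations reaching level $i$ per original update and summing over $i$ gives the claimed $T_u$. A query further does $2\ell$ terminal additions, costing $O(\ell T_u)$, plus the static algorithm on $G_\ell$ (which has $O(\mu_\ell s(\mu_\ell))$ edges), costing $O(\mu_\ell s(\mu_\ell) h(\mu_\ell))$, so $T_q = O(\ell T_u + \mu_\ell s(\mu_\ell) h(\mu_\ell))$; the hypothesis $s(k)\le r(n,k)$ is what keeps the $O(|T_i|\,s(\mu_i))$ size of each $G_i$ within the recourse budget, so the cascaded edge changes are accounted for consistently.

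I expect the main obstacle to be the multi-level bookkeeping of the recourse--rebuild interaction once deletions are allowed: concretely, checking that a \textsc{Rebuild}$(i)$ which cascades into reinitializations of all higher $\D_j$ contributes zero recourse (so the $\prod r$ factor does not blow up) and that the counters $c_i$ genuinely cap $|T_i|$ at $O(\mu_i)$ even as deletions, not just insertions, flow through the hierarchy; and the tighter amortized accounting that lets each level's rebuild be charged only to that level, which is exactly what removes the inner $\sum_{j\ge i}$ that appeared in the incremental bound. Once these invariants are fixed, the running-time sum is a routine manipulation essentially identical to the analysis in Section~\ref{sec: incrementalMeta}.
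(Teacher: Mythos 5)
Your overall architecture --- the hierarchy of DVSs $\D_1,\dots,\D_\ell$, insertions reduced to two \textsc{AddTerminal} calls plus a direct edge insertion into the sparsifier (Lemma~\ref{lem:DVS1levelEdgeInsertion}), deletions pushed into \textsc{Delete}, the induction bounding the number of operations reaching level $i$ by $O\bigl(c^{i-1}\prod_{j<i}r(\cdot)\bigr)$, the amortized rebuild cost $O(\mu_{i-1}s(\mu_{i-1})f(\mu_{i-1},\mu_i)/\mu_i)$, and the query handling --- is exactly the paper's. The one place you diverge is the rebuild scheme, which is precisely the point you flag as the remaining obstacle. You import the incremental scheme: \textsc{Rebuild}$(i)$ reinitializes $\D_i$ \emph{and every} $\D_j$ with $j>i$, contributing zero recourse. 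The paper instead rebuilds \emph{only} $\D_i$ and treats the resulting wholesale replacement of $G_i$ --- $O(\mu_i s(\mu_i))$ edge insertions and deletions --- as ordinary updates fed to $\D_{i+1}$; this is possible exactly because a DVS, unlike an IVS, supports deletions. Amortized over the $2\mu_i$ operations between rebuilds this adds $O(s(\mu_i))$ recourse per update at level $i$, which is absorbed into $O(r(\mu_{i-1},\mu_i))$ by the hypothesis $s(k)\le r(n,k)$; that is the actual role of that hypothesis, and it is what removes the inner $\sum_{j\ge i}$ from the incremental bound with no further accounting. Your scheme can also be made to work, but the claim that the forced reinitializations of the levels below are ``charged at their own, more frequent, schedules'' needs to be completed: a forced rebuild of $\D_j$ can arrive when $c_j$ is small, so you must either reset $c_j$ and bound the number of forced rebuilds of $\D_j$ by the total number of rebuilds of the levels above it (which, since the operation counts satisfy $N_i/\mu_i\le N_j/\mu_j$ for $i<j$, costs at most an extra factor of $\ell$ in $T_u$), or retain the inner sum as in Theorem~\ref{thm: metaTheorem}. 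Either fix is routine and harmless in the applications, but as written that step is the only gap; adopting the paper's rebuild-as-updates mechanism closes it and simultaneously explains why $s\le r$ is assumed in the statement.
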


We prove the theorem in the rest of the section.

\paragraph*{Data Structure.} 

Given some integer parameter $\ell \ge 1$, and parameters $m = \mu_0 \ge \ldots, \mu_\ell$.
Our data structure maintains
\begin{enumerate}[noitemsep]
\item a hierarchy of graphs $\{G_i\}_{0 \le i \le \ell}$,
\item a hierarchy of terminal sets $\{T_i\}_{1 \le i \le \ell}$, each associated with the parameters $\{\mu_{i}\}_{1 \leq i \leq \ell}$, and
\item a hierarchy of $(\alpha, s(k), f(n,k), g(n, k), r(n, k))$-efficient DVSs $\{\D_i\}_{1 \le i \le \ell}$, each associated with a graph $G_{i-1}$ and the terminal set $T_i$.
\end{enumerate}

The data structure is initialized recursively.
First, initialize $T_i \gets \phi$ for $1 \le i \le \ell$ and $G_0 \gets G$.
For $1 \le i \le \ell$, construct an $(\alpha, s(n), f(n,k), g(n,k), r(n,k))$-efficient DVS $\D_i$ of graph $G_{i-1}$ w.r.t. the terminal set $T_i$ and set $G_i$ be the sparsifier maintained by $\D_i$.
Hence, $G_i$ is an $\alpha-$vertex sparsifier of $G_{i-1}$ w.r.t. $T_i$.
By transitivity, we know $G_{\ell}$ is an $\alpha^\ell-$vertex sparsifier of $G_0$, which is the input graph $G$.

When answering a query, we add both $s$ and $t$ to the terminal sets of every $\D_i$.
Then compute $\mathcal{P}(s, t, G_\ell)$ and output the value as an estimate of $\mathcal{P}(s,t,G)$.

When dealing with a edge update in $G$, we first add both endpoints to the terminal set of $\D_1$.
Then we update the edge in $G_1$, the vertex sparsifier maintained by $\D_1$.
Edge changes propagate down the hierarchy.
To bound the size of each $G_i, 1 \le i \le \ell$, rebuild $\D_i$ for every $2\mu_i$ updates in $G_{i-1}$ w.r.t. the most recently added $\mu_i$ terminal vertices.

The rebuild scheme is easier than the incremental case.
When rebuilding $\D_i$, we treat it as a series of edge updates in $G_i$.
Since $\D_i$'s are DVS, they can handle both edge insertions and deletion.

\paragraph*{Running Time.}

We first study the update time of our data structure.

$\D_i$ maintains the graph $G_{i-1}$, which has at most $\mu_{i-1}$ vertices, and it has at most $\mu_i$ terminal vertices due to rebuild.
Hence $\D_i$ spends $O(g(\mu_{i-1}, \mu_i))$-time per operation of \textsc{AddTerminal} or edge updates.

Since rebuild is incurred every $2\mu_i$ operations for the data structure $\D_i$,
we can charge the rebuild cost among $\mu_i$ operations.
Note that $\D_i$ is an $(\alpha,s(n))$-DVS of $G_{i-1}$, which is a graph with $O(\mu_{i-1})$ vertices and $O(\mu_{i-1}s(\mu_{i-1}))$ edges.
By amortizing the rebuild cost, we know the time $\D_i$ spent on either \textsc{AddTerminal} or edge updates is:
\begin{align*}
    O\left(\frac{\mu_{i-1}s(\mu_{i-1})f(\mu_{i-1},\mu_i)}{\mu_i} + g(\mu_{i-1}, \mu_i)\right).
\end{align*}

Since an update in $\D_i$ creates $O(r(\mu_{i-1}, \mu_i))$ updates in $G_i$, which is handled by the data structure in the next level, $\D_{i+1}$, we have to incoporate such quantity in to the analysis.
Also, we have to take the recourse from rebuild into account.
Every rebuild creates $O(|E(G_i)|) = O(\mu_is(\mu_i))$ updates to $G_i$.
By amortizing it over $\mu_i$ updates in $\D_i$, each update in $G_i$ has recourse $O(s(\mu_i) + r(\mu_{i-1}, \mu_i)) = O(r(\mu_{i-1}, \mu_i))$ since $s(n) \le r(n, k)$.
Thus one update in $\D_i$ creates $\le c r(\mu_{i-1}, \mu_i)$ updates in $G_i$ for $c$ being some universal positive constant.
We can now analyze the amount of updates handled by $\D_i$ when 1 edge update happens in $G$.
By simple induction, we know there will be $\le \prod_{j=1}^{i-1}(c r(\mu_j,\mu_{j+1}))$ updates in $G_{i-1}$.
Thus for the data structure in $i$-th level, $\D_i$, there will be $\le c^{i-1}\prod_{j=1}^{i-1}r(\mu_j,\mu_{j+1})$ updates.
Combining these 2 quantities, we can bound the amortized update time of our data structure:
\begin{align*}
    T_u
    &=
    O\left(\sum_{i=1}^{\ell}c^{i-1}\prod_{j=1}^{i-1}r(\mu_j,\mu_{j+1})\left(\frac{\mu_{i-1}s(\mu_{i-1})f(\mu_{i-1},\mu_i)}{\mu_i} + g(\mu_{i-1}, \mu_i)\right)\right).
\end{align*}

We next study the query time of our data-structure.
When answering a query, we add $s$ and $t$ to each layer of the terminal set.
When adding terminals to each layer of data structure $\D_i$, edge changes also propagate to lower levels.
As for the analysis of update time, we have to take the recourse into account.
The time can be bounded by $O(\ell T_u)$ where $T_u$ is the update time of our data structure.

Then we compute $\mathcal{P}(s, t, G_{\ell})$ in $G_{\ell}$, which has $\mu_\ell s(\mu_\ell)$ edges as guaranteed by the definition of DVS.
Since we have an $O(mh(n))$ algorithm for computing $\mathcal{P}(s, t, G)$ in an $m$-edge $n$-vertex graph $G$, $\mathcal{P}(s, t, G_{\ell})$ can be computed in $\mu_\ell s(\mu_\ell)h(\mu_\ell)$ time.
Combining these 2 bounds, we can bound the query time by:
\begin{align*}
    T_q = O\left(\ell T_u + \mu_\ell s(\mu_\ell) h(\mu_\ell) \right).
\end{align*}

\begin{lemma}
\label{lem: DVSOptimalTradeoff}
Let $\{\mu_i\}_{0 \leq i \leq \ell}$ be a family of parameters with $\mu_0 = m$.
Suppose $s(n) = n^{o(1)}, f(n,k)=O((n/k)^d)$, both $g(n,k)$ and $r(n,k)$ are of order $O((n/k)^e)$ for some positive constants $d + 1 \le e$ and $h(n)=n^{o(1)}$.
Let $t$ be any positive constant, if we set
\[
    \mu_i = m^{1-i/(\ell+t)},
    ~\text{ where }1 \leq i \leq \ell,
\] then the update time is
\begin{equation}
    T_u
    =
    O\left(\sum_{i=1}^{\ell}c^{i-1}\prod_{j=1}^{i-1}r(\mu_j,\mu_{j+1})\left(\frac{\mu_{i-1}s(\mu_{i-1})f(\mu_{i-1},\mu_i)}{\mu_{i}}+g(\mu_{i-1},\mu_i)\right)\right)
    =
    O\left(\ell c^{\ell}m^{e\ell / (\ell+t)}m^{o(1)/(\ell+t)}\right)
    ,
\end{equation}
and the query time is
\begin{equation}
    T_q = O\left(\ell T_u + \mu_\ell s(\mu_\ell) h(\mu_\ell) \right) =
    O\left(\max\{\ell^2 c^{\ell}m^{e\ell / (\ell+t)}m^{o(1)/(\ell+t)},m^{(t+o(1)) / (\ell+t)}\}\right).
\end{equation}
\end{lemma}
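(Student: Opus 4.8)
The plan is to substitute the geometric choice $\mu_i = m^{1-i/(\ell+t)}$ directly into the two bounds supplied by Theorem~\ref{thm: FullyDynamicMetaTheorem} and simplify; no new ideas are needed beyond arithmetic. The one elementary fact that drives everything is that consecutive parameters have a fixed ratio: for every $1 \le i \le \ell$ (and also $i=0$, since $\mu_0 = m = m^{1-0/(\ell+t)}$),
\[
    \frac{\mu_{i-1}}{\mu_i} = m^{(1-(i-1)/(\ell+t)) - (1-i/(\ell+t))} = m^{1/(\ell+t)}.
\]
Feeding this into the hypotheses $f(n,k)=O((n/k)^d)$ and $g(n,k),r(n,k)=O((n/k)^e)$ yields, uniformly in $i$ and $j$, $f(\mu_{i-1},\mu_i)=O(m^{d/(\ell+t)})$, $g(\mu_{i-1},\mu_i)=O(m^{e/(\ell+t)})$ and $r(\mu_j,\mu_{j+1})=O(m^{e/(\ell+t)})$; consequently the recourse product is $\prod_{j=1}^{i-1}r(\mu_j,\mu_{j+1})=O(m^{e(i-1)/(\ell+t)})$. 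Since each $\mu_{i-1}\le m$, the slack and oracle factors are subpolynomial, $s(\mu_{i-1})=m^{o(1)}$ and $h(\mu_\ell)=m^{o(1)}$; for constant $\ell,t$ these are exactly what the stated factor $m^{o(1)/(\ell+t)}$ abbreviates.

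The next step is to bound the $i$-th summand of $T_u$. Its bracketed factor is $\frac{\mu_{i-1}}{\mu_i}s(\mu_{i-1})f(\mu_{i-1},\mu_i)+g(\mu_{i-1},\mu_i)=O(m^{(1+d)/(\ell+t)}m^{o(1)})+O(m^{e/(\ell+t)})$. This is the \emph{only} place the assumption $d+1\le e$ enters: it guarantees $(1+d)/(\ell+t)\le e/(\ell+t)$, so the additive rebuild term $g$ dominates the propagation term derived from $f$, and the whole bracket is $O(m^{e/(\ell+t)}m^{o(1)})$. Multiplying by $c^{i-1}\prod_{j=1}^{i-1}r(\mu_j,\mu_{j+1})=O(c^{i-1}m^{e(i-1)/(\ell+t)})$, the $i$-th summand is $O(c^{i-1}m^{ei/(\ell+t)}m^{o(1)})$. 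Because $c\ge 1$ and $m^{e/(\ell+t)}\ge 1$, these summands are nondecreasing in $i$, so the sum of the $\ell$ of them is at most $\ell$ times the last one, giving $T_u = O(\ell c^{\ell-1}m^{e\ell/(\ell+t)}m^{o(1)}) = O(\ell c^{\ell}m^{e\ell/(\ell+t)}m^{o(1)/(\ell+t)})$, which is the claimed update-time bound.

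Finally, for $T_q$ I would plug into $T_q = O(\ell T_u + \mu_\ell s(\mu_\ell)h(\mu_\ell))$. The first term is $\ell\cdot O(\ell c^\ell m^{e\ell/(\ell+t)}m^{o(1)/(\ell+t)}) = O(\ell^2 c^\ell m^{e\ell/(\ell+t)}m^{o(1)/(\ell+t)})$. For the second, $\mu_\ell = m^{1-\ell/(\ell+t)} = m^{t/(\ell+t)}$ and $s(\mu_\ell)=h(\mu_\ell)=m^{o(1)}$, so $\mu_\ell s(\mu_\ell)h(\mu_\ell) = m^{(t+o(1))/(\ell+t)}$. Taking the maximum of the two expressions gives the stated query-time bound. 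The whole argument is bookkeeping; the only genuinely non-arithmetic point — the ``main obstacle'' such as it is — is recognizing that $d+1\le e$ is precisely the condition that lets the per-level rebuild cost absorb the recourse-propagation cost, so that each level contributes a clean $m^{e/(\ell+t)+o(1)}$ and the total collapses to its deepest level.
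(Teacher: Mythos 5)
Your proposal is correct and follows essentially the same route as the paper's own proof: substitute $\mu_{i-1}/\mu_i = m^{1/(\ell+t)}$, use $d+1\le e$ to let the $g$-term dominate the bracket, bound the recourse product by $O(c^{\ell-1}m^{e(\ell-1)/(\ell+t)})$, and read off $T_q$ from $\mu_\ell = m^{t/(\ell+t)}$. The only (cosmetic) difference is that you bound the sum by $\ell$ times its largest summand while the paper bounds every summand uniformly by the $i=\ell$ expression; the arithmetic is identical.
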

\begin{proof}
Plug in the choice of $\mu_i$, we have
\begin{align*}
    \frac{\mu_{i-1}s(\mu_{i-1})f(\mu_{i-1},\mu_{i})}{\mu_{i}}+g(\mu_{i-1},\mu_i) = 
    m^{(1+o(1)+d)/(\ell+t)}+m^{e/(\ell+t)} =
    O(m^{(o(1)+e)/(\ell+t)}).
\end{align*}

Also, since $i \le \ell$,
\begin{align*}
    c^{i-1}\prod_{j=1}^{i-1}r(\mu_j,\mu_{j+1})
    &\le
    c^{\ell-1}\prod_{j=1}^{\ell-1}r(\mu_j,\mu_{j+1})\\
    &=
    O(c^{\ell-1}\prod_{j=1}^{\ell-1}m^{e/(\ell+t)})\\
    &=
    O(c^{\ell-1}m^{e(\ell-1)/(\ell+t)})
\end{align*}

Combining these 2 inequalities, we can bound the update time by
\begin{align*}
    T_u &=
    O\left(
        \sum_{i=1}^{\ell}
            c^{i-1}\prod_{j=1}^{i-1}r(\mu_j,\mu_{j+1})
            \left(
                \frac{\mu_{i-1}s(\mu_{i-1})f(\mu_{i-1},\mu_i)}{\mu_{i}}+g(\mu_{i-1},\mu_i)
            \right)
    \right)\\
    &=
    O\left(\sum_{i=1}^{\ell}c^{\ell-1}m^{e(\ell-1)/(\ell+t)}m^{(o(1)+e)/(\ell+t)}\right)\\
    &=
    O\left(\ell c^{\ell}m^{e\ell / (\ell+t)}m^{o(1)/(\ell+t)}\right)
\end{align*}

The bound for query time is straightforward from the definition of $\mu_\ell$.
\end{proof}

\begin{corollary}
\label{coro: DVSOptimalTradeoffFastInit}
To asymptotically minimize the query time given in Lemma~\ref{lem: DVSOptimalTradeoff}, we set
\[
    t \gets e\ell, \ell \gets O(1).
\]
We have update time
\begin{equation}
\label{eq: DVSUpdateMinimized}
    T_u = 
    O\left(m^{(e + o(1))/(e+1)}\right)
\end{equation}
and query time
\begin{equation}
\label{eq: DVSQueryMinimized}
    T_q =
    O\left(m^{(e + o(1))/(e+1)}\right).
\end{equation}
\end{corollary}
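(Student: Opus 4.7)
The plan is to just substitute the proposed parameter choices into the two formulas from Lemma~\ref{lem: DVSOptimalTradeoff} and verify that both bounds collapse to $O(m^{(e+o(1))/(e+1)})$. The substance is an easy algebraic balancing argument; the only real subtlety is to confirm that the choice $t=e\ell$ is indeed the (asymptotically) optimal point for the query-time maximum, and that $\ell=O(1)$ prevents the hidden $c^{\ell}$ and $r(n,k)^{\ell}$ factors from exploding.

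First I would examine the two terms inside the $\max$ that defines $T_{q}$. Their exponents on $m$ (ignoring the $o(1)$) are $e\ell/(\ell+t)$ and $t/(\ell+t)$. Viewed as functions of $t$ (for fixed $\ell$), the first is strictly decreasing in $t$ and the second strictly increasing in $t$, so the maximum is minimized exactly at the crossing point $e\ell = t$. Setting $t = e\ell$ therefore yields
\[
\frac{e\ell}{\ell+t} \;=\; \frac{t}{\ell+t} \;=\; \frac{e\ell}{\ell(1+e)} \;=\; \frac{e}{e+1}.
\]
So both entries of the $\max$ become $m^{e/(e+1)}$, up to a prefactor of $\ell^{2}c^{\ell}$ on one side and $1$ on the other, and a multiplicative $m^{o(1)/(\ell+t)}$ slack.

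Next I would dispose of the slack factors by taking $\ell = O(1)$. With $t = e\ell$ and $\ell$ constant, $\ell+t = \ell(1+e) = O(1)$, so $m^{o(1)/(\ell+t)} = m^{o(1)}$; likewise $\ell^{2} c^{\ell} = O(1)$. The same substitution plugged into the update-time formula gives $T_{u} = O\bigl(\ell c^{\ell} m^{e\ell/(\ell+t)+o(1)/(\ell+t)}\bigr) = O\bigl(m^{e/(e+1)+o(1)}\bigr)$, and the same computation on the query side gives $T_{q} = O\bigl(m^{e/(e+1)+o(1)}\bigr)$. Rewriting the exponent as $(e+o(1))/(e+1)$ (absorbing the $o(1)$ additive slack into a division by the constant $e+1$) gives the stated bounds.

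I do not anticipate a real obstacle: the only thing one must be careful about is that we are minimizing an expression in two parameters $(\ell,t)$ subject to the constraints that $\ell$ is a positive integer and $t>0$, and that $\ell$ must remain $O(1)$ so that the exponential-in-$\ell$ factors $c^{\ell}$ and $\prod_{j} r(\mu_{j},\mu_{j+1})$ stay polylogarithmic rather than polynomial in $m$. Under the hypothesis $s(k)\le r(n,k)$ from Lemma~\ref{lem: DVSOptimalTradeoff}, these factors are already folded into the $c^{\ell}m^{o(1)}$ terms, so choosing $\ell=O(1)$ is legitimate and the balancing $t=e\ell$ is forced by the minimax argument above.
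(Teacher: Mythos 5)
Your proposal is correct and is exactly the intended argument: the paper gives no explicit proof for this corollary, treating it as a direct substitution of $t=e\ell$, $\ell=O(1)$ into the two formulas of Lemma~\ref{lem: DVSOptimalTradeoff}, which is what you do. Your additional observation that $t=e\ell$ is the balancing point of the two terms in the $\max$ (one decreasing, one increasing in $t$) is a correct and slightly more complete justification of the word ``minimize'' than the paper itself supplies.
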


\begin{corollary}
\label{coro: DVSOptimalTradeoffSlowInit}
Under the same setting as Lemma~\ref{lem: DVSOptimalTradeoff} except $d+1 > e$,
we set
\[
    t \gets e\ell, \ell \gets \sqrt{\log{m}}.
\]
We can asymptotically minimize update time
\begin{equation}
\label{eq: DVSUpdateMinimized}
    T_u = 
    O\left(m^{e/(e+1) + o(1)}\sqrt{\log{m}}\right)
\end{equation}
and query time
\begin{equation}
\label{eq: DVSQueryMinimized}
    T_q =
    O\left(m^{e/(e+1) + o(1)}\log{m}\right).
\end{equation}
\end{corollary}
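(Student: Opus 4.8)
The plan is to substitute the stated parameters directly into the general bounds of Theorem~\ref{thm: FullyDynamicMetaTheorem} and re-run the computation from the proof of Lemma~\ref{lem: DVSOptimalTradeoff}, the only new point being that when $d+1>e$ the per-level preprocessing (rebuild) term is no longer dominated by the recourse/\textsc{AddTerminal} term. Starting from
\[
T_u = O\!\left(\sum_{i=1}^{\ell} c^{i-1}\prod_{j=1}^{i-1} r(\mu_j,\mu_{j+1})\left(\frac{\mu_{i-1} s(\mu_{i-1}) f(\mu_{i-1},\mu_i)}{\mu_i} + g(\mu_{i-1},\mu_i)\right)\right),
\]
and $T_q = O(\ell T_u + \mu_\ell s(\mu_\ell) h(\mu_\ell))$, I would set $\mu_i = m^{1-i/(\ell+t)}$ with $t=e\ell$, so that $\ell+t=(1+e)\ell$ and $\mu_{i-1}/\mu_i = m^{1/((1+e)\ell)}$ for every $i$.

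Next I would plug in the assumed growth rates $s(\cdot)=m^{o(1)}$, $h(\cdot)=m^{o(1)}$, $f(\mu_{i-1},\mu_i)=O(m^{d/((1+e)\ell)})$, and $g(\mu_{i-1},\mu_i),\,r(\mu_j,\mu_{j+1})=O(m^{e/((1+e)\ell)})$. The per-level bracket then equals $m^{(1+d)/((1+e)\ell)+o(1)}+m^{e/((1+e)\ell)}$; since $d+1>e$ the first summand dominates, so the bracket is $O(m^{(1+d)/((1+e)\ell)+o(1)})$ — this is the one step that departs from the proof of Lemma~\ref{lem: DVSOptimalTradeoff}. I would also bound $\prod_{j=1}^{i-1} r(\mu_j,\mu_{j+1}) = O(m^{e(i-1)/((1+e)\ell)}) \le O(m^{e/(1+e)})$ and $c^{i-1}\le c^{\ell}$. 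The crucial observation is that with $\ell=\sqrt{\log m}$ one has $\ell+t=\Theta(\sqrt{\log m})$, hence $m^{a/(\ell+t)}=2^{O(a\sqrt{\log m})}=m^{o(1)}$ for any constant $a$, and likewise $c^\ell=m^{o(1)}$; applying this with $a=1+d$ collapses the bracket to $m^{o(1)}$. Summing the $\ell=\sqrt{\log m}$ terms then gives $T_u = O(\sqrt{\log m}\cdot m^{e/(1+e)}\cdot m^{o(1)}) = O(m^{e/(e+1)+o(1)}\sqrt{\log m})$, which is the claimed update bound.

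For the query time I would compute $\mu_\ell = m^{1-\ell/((1+e)\ell)} = m^{1-1/(1+e)} = m^{e/(1+e)}$, so that $\mu_\ell s(\mu_\ell) h(\mu_\ell) = m^{e/(1+e)+o(1)}$, which is dominated by $\ell T_u = O(m^{e/(e+1)+o(1)}\log m)$; hence $T_q = O(m^{e/(e+1)+o(1)}\log m)$, as claimed. I do not expect any real obstacle: the computation is the same bookkeeping as in Lemma~\ref{lem: DVSOptimalTradeoff}, and the single thing to get right is precisely why growing the depth to $\ell=\sqrt{\log m}$ (instead of keeping it constant, as in Corollary~\ref{coro: DVSOptimalTradeoffFastInit}) is the right move — it shrinks every exponent of the form $O(1)/(\ell+t)$, in particular the now-dominant $f$-term $m^{(1+d)/(\ell+t)}$, down to $m^{o(1)}$, at the cost of only the $\sqrt{\log m}$ (respectively $\log m$) factor coming from the number of levels in the hierarchy.
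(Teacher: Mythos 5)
Your proposal is correct and follows essentially the same route as the paper's own proof: both identify that under $d+1>e$ the per-level bracket is dominated by the rebuild term $m^{(1+d+o(1))/(\ell+t)}$, which the choice $\ell=\sqrt{\log m}$, $t=e\ell$ collapses to $m^{o(1)}$ (along with $c^\ell$), leaving the recourse product $\prod_j r(\mu_j,\mu_{j+1})\le m^{e/(e+1)}$ and the sum over $\ell$ levels to account for the $m^{e/(e+1)}$ and $\sqrt{\log m}$ factors. Your explicit verification of the query bound via $\mu_\ell=m^{e/(1+e)}$ is a detail the paper leaves implicit but is exactly the intended calculation.
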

\begin{proof}
    Under this setting, the only difference in the analysis of Lemma~\ref{lem: DVSOptimalTradeoff} is
    \begin{align*}
        \frac{\mu_{i-1}s(\mu_{i-1})f(\mu_{i-1},\mu_{i})}{\mu_{i}}+g(\mu_{i-1},\mu_i) = 
    m^{(1+o(1)+d)/(\ell+t)}+m^{e/(\ell+t)} =
    O(m^{(1+o(1)+d)/(\ell+t)}) = O(m^{o(1)}).
    \end{align*}
    Thus,
    \begin{align*}
        T_u &=
        O\left(
            \sum_{i=1}^{\ell}
                c^{i-1}\prod_{j=1}^{i-1}r(\mu_j,\mu_{j+1})
                \left(
                    \frac{\mu_{i-1}s(\mu_{i-1})f(\mu_{i-1},\mu_i)}{\mu_{i}}+g(\mu_{i-1},\mu_i)
                \right)
        \right)\\
        &=
        O\left(\sum_{i=1}^{\ell}c^{\ell-1}m^{e(\ell-1)/(\ell+t)}m^{o(1)}\right)\\
        &=
        O\left(\ell c^{\ell}m^{e(\ell-1) / (\ell+e\ell)}m^{o(1)}\right)\\
        &=
        O\left(\ell c^{\ell}m^{e / (e+1)}m^{o(1)}\right)\\
        &=
        O\left(\sqrt{\log{m}} \cdot m^{e / (e+1)} m^{o(1)}\right)\\
    \end{align*}
\end{proof}

\section{Fully-Dynamic All Pair Max-Flow/Min-Cut}
In this section, we incorporate \emph{Local Sparsifier} with tools developed previously to give a data structure for min cut query in a dynamic graph.
The main theorem we prove is as follows.

\begin{theorem}
\label{theorem:DynamicMinCut}
Given a graph $G=(V, E, c)$ with weight ratio $U=\poly(n)$.
There is a dynamic data structure maintaining $G$ subject to the following operations:
\begin{enumerate}
    \item $\textsc{Insert}(u, v, c)$:
          Insert the edge $(u, v)$ to $G$ in amortized $O(m^{2/3}\log^{7}{n})$-time.
    \item $\textsc{Delete}(e)$:
          Delete the edge $e$ from $G$ in amortized $O(m^{2/3}\log^{7}{n})$-time.
    \item $\textsc{MinCut}(s, t)$:
          Output a $\Tilde{O}(\log{n})$-approximation to the min-$st$-cut value of $G$ in $\Tilde{O}(m^{2/3})$-time w.h.p..
          The cut set $S$ can be obtained on demand with linear overhead in $|S|$.
\end{enumerate}
\end{theorem}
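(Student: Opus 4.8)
The strategy is to invoke the fully-dynamic meta-theorem (Theorem~\ref{thm: FullyDynamicMetaTheorem}) with the property $\mathcal{P}(s,t,G)=\minicut_G(s,t)$ ($=\maxflow_G(s,t)$). Its second hypothesis is met by any near-linear time static algorithm computing an $O(1)$-approximate $s$-$t$ min-cut/max-flow — e.g.\ the $(1+\epsilon)$-approximate undirected max-flow of Peng~\cite{Peng16} or Sherman~\cite{Sherman17} — giving $h(n)=\mathrm{polylog}(n)$; this routine also returns a cut, which serves the set-valued query. The content is the first hypothesis: an efficient fully-dynamic vertex flow/cut sparsifier (Definition~\ref{def: efficientDVS}) of quality $\alpha=O(\log n(\log\log n)^{O(1)})$ whose parameters lie in the regime $s(k)=\mathrm{polylog}(n)$, $f(n,k)=\tilde{O}((n/k)^{d})$ with $d\le 1$, and $g(n,k),r(n,k)=\tilde{O}((n/k)^{2})$, i.e.\ the case $e=2$ of Corollary~\ref{coro: DVSOptimalTradeoffFastInit}. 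Instantiating it with depth $\ell=1$ keeps the approximation at $\alpha^{\ell}=\tilde{O}(\log n)$ and gives update and query time $O(m^{2/(2+1)+o(1)})=\tilde{O}(m^{2/3})$; carrying the $\mathrm{polylog}$ (rather than $m^{o(1)}$) factors of the sparsifier and of the static max-flow through the bookkeeping sharpens the update bound to $O(m^{2/3}\log^{7}n)$ and leaves the query at $\tilde{O}(m^{2/3})$ whp.

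\textbf{Building the DVS (the \emph{Local Sparsifier}).} Here we dynamically maintain an algorithmic form of Madry's $j$-tree decomposition~\cite{Madry10}, itself derived from Räcke's oblivious routing~\cite{Racke08}, which approximates the cut/flow structure of $G$ up to $O(\log n(\log\log n)^{O(1)})$ by a bounded collection of $j$-trees (a forest glued onto a ``core'' of at most $j$ vertices). The structural point is that a flow/cut \emph{vertex} sparsifier of a $j$-tree with respect to $T$ is the core together with the unique tree paths from the terminals into the core; taking $j$ a polylog factor above the number of terminals handled between rebuilds makes this object have $\tilde{O}(|T|)$ edges, so $s(k)=\mathrm{polylog}(n)$. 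Because the non-core part is a forest, promoting a vertex $u$ to a terminal merely exposes its root-to-core path in each tree, touching $\tilde{O}(1)$ edges per tree — this bounds both $g$ and the recourse $r$ — while edge insertions and deletions in $G$ are absorbed lazily and flushed by the rebuild-every-$2\mu_i$ schedule the meta-theorem already pays for, with per-update recourse matching the $O(s(\mu_i)+r(\mu_{i-1},\mu_i))$ bound used in its analysis. Decomposability and transitivity of flow/cut vertex sparsifiers — needed to glue the maintained sparsifier with statically inserted edges via Lemma~\ref{lem:DVS1levelEdgeInsertion} — follow as in Lemma~\ref{lem: staticFlowVertexSparsifier}. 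For the set-valued query one runs the static min-cut routine on $G_{\ell}$; the $j$-tree/oblivious-routing structure maps the resulting sparsifier cut to an explicit cut of $G$ that is enumerated in time linear in its size $|S|$.

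\textbf{Main obstacle.} The crux is maintaining the $j$-tree / oblivious-routing decomposition with \emph{controlled recourse}: one must show that a single edge update or new terminal changes $H_T$ in only $\tilde{O}(1)$ places amortized against the rebuild schedule, and that the randomness of the Räcke/Madry construction can be frozen at each (re)build so correctness holds w.h.p.\ against an oblivious adversary. This obliviousness is exactly what licenses the cheap reuse of the decomposition across updates; without it one cannot safely recycle it against adversarially chosen operations, which is why the adaptive-adversary variant (Theorem~\ref{thm:dynamicMaxFlowAdaptive}) degrades to $\tilde{O}(m^{3/4})$. A secondary point is verifying that restricting a $j$-tree to its terminals preserves \emph{all} multi-source/multi-sink min-cuts among them (not only single-pair ones), so that the vertex-sparsifier guarantee of Definition~\ref{def: DVS} — and hence its composition through the hierarchy — is legitimate.
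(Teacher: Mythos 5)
Your high-level architecture matches the paper's: Madry's $(k,\rho,j)$-decomposition into $j$-trees, dynamic maintenance of their cores with dynamic cut sparsifiers, periodic rebuilds every $j$ operations, and answering queries by running Peng's near-linear static approximate max-flow on the sparsified cores, balancing at $j=m^{2/3}$. (The paper proves the theorem directly rather than as an instance of Theorem~\ref{thm: FullyDynamicMetaTheorem}, but with depth one that is a presentational difference, with one caveat noted below.)

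There is, however, a genuine gap at exactly the point you flag as the crux. You claim that promoting $u$ to a terminal ``merely exposes its root-to-core path, touching $\tilde{O}(1)$ edges per tree,'' and that edge updates are ``absorbed lazily.'' This is not how the $j$-tree sparsifier behaves, and it contradicts your own parameter choice $g,r=\tilde{O}((n/k)^2)$ (i.e.\ $e=2$) from the first paragraph: if recourse were really $\tilde{O}(1)$ you would land at a much better exponent than $2/3$, and conversely the $m^{2/3}$ bound only emerges because recourse is \emph{not} polylogarithmic. The reason is that in a $j$-tree every off-tree edge of $G$ has been \emph{moved} so that its endpoints are mapped (via tree--terminal paths) to core vertices, and the core edge weights are the resulting congestions. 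To answer an $s$-$t$ query on the core alone one needs $s,t$ in the core (this is what Lemma~\ref{lemma:only_core_cuts_matter} licenses), so \textsc{AddTerminal}$(u)$ must move $u$ into the core and re-route every off-tree edge whose tree--terminal walk passes through $u$, shortcutting those walks at $u$ and updating the corresponding core edges. The paper maintains all $O(m\cdot n/j)$ total walk length explicitly for this purpose; the amortized cost and recourse per operation is $\tilde{O}(mn/j^2)$, not $\tilde{O}(1)$, and bounding it (Lemma~\ref{lemma:dynamic_JTree}, together with Lemmas~\ref{lemma:AddTerminalPreserveCongestion} and~\ref{lemma:CoreEdgeUpdatePreserveCongestion} showing the congestion does not degrade) is the technical heart of the proof. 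Your proposal asserts the right final numbers but the mechanism you describe would not produce them.

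Two smaller points. First, a single $j$-tree from the decomposition preserves a \emph{fixed} cut within $2\rho$ only with probability $1/2$; the $\tilde{O}(\log n)$ approximation w.h.p.\ comes from sampling $\Theta(\log n)$ of them and union-bounding over the $\le n^2$ pairwise min-cuts (Lemma~\ref{lemma:jtree_preserve_all_min_st_cut}), with the query taking the minimum over the sampled cores. This means the maintained object is not a single $\alpha$-vertex sparsifier in the sense of Definition~\ref{def: DVS}, so the meta-theorem cannot be invoked verbatim; you would need to either generalize the definition to a min over a logarithmic family or, as the paper does, argue directly. Second, your ``core plus exposed tree paths'' object, even if one contracted each path to its bottleneck edge, still leaves the core weights stale with respect to the new terminal set, so correctness of the static min-cut computed on it is not established by the argument you give.
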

Using the so-called \emph{dynamic sparsifier}, we can speed-up Theorem~\ref{theorem:DynamicMinCut} by a factor of $O(m/n)$.
Which is significant for $G$ being a dense graph originally.

\begin{corollary}
\label{corollary:DynamicMinCut}
Given a graph $G=(V, E, c)$ with weight ratio $U=\poly(n)$.
There is a dynamic data structure maintaining $G$ subject to the following operations:
\begin{enumerate}
    \item $\textsc{Insert}(u, v, c)$:
          Insert the edge $(u, v)$ to $G$ in amortized $O(n^{2/3}\log^{41/3}{n})$-time.
    \item $\textsc{Delete}(e)$:
          Delete the edge $e$ from $G$ in amortized $O(n^{2/3}\log^{41/3}{n})$-time.
    \item $\textsc{MinCut}(s, t)$:
          Output a $\Tilde{O}(\log{n})$-approximation to the min-$st$-cut value of $G$ in $\Tilde{O}(n^{2/3})$-time w.h.p..
          The cut set $S$ can be obtained on demand with linear overhead in $|S|$.
\end{enumerate}
\end{corollary}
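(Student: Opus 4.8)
The plan is to avoid running the data structure of Theorem~\ref{theorem:DynamicMinCut} directly on the (possibly dense) input graph $G$, and instead to run it on a dynamically maintained cut sparsifier of $G$ that has only $\tilde{O}(n)$ edges. Concretely, I would first maintain, via the fully dynamic sparsifier of \cite{AbrahamDKKP16} with terminal set $V$, a graph $H=(V,E_H)$ with $|E_H| = \tilde{O}(n)$ that simultaneously $(1+\epsilon)$-approximates every cut of $G$ (for a fixed constant $\epsilon$, with a $\poly(n)$ weight ratio preserved), supports each edge insertion/deletion in $G$ in $\tilde{O}(1)$ worst-case time, and whose edge set changes by only $\tilde{O}(1)$ edges per update to $G$. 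On top of $H$ I would run a single instance $\mathcal{D}$ of the data structure of Theorem~\ref{theorem:DynamicMinCut}: every edge update to $G$ is translated by the sparsifier into $\tilde{O}(1)$ edge updates of $H$, and these are replayed one by one into $\mathcal{D}$; a $\textsc{MinCut}(s,t)$ query is answered simply by calling $\mathcal{D}.\textsc{MinCut}(s,t)$ on $H$.

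For correctness, since $H$ preserves every cut value of $G$ up to a $(1+\epsilon)$ factor -- in particular $\minicut_H(s,t)$ and $\minicut_G(s,t)$ agree up to $(1+\epsilon)$ -- the $\tilde{O}(\log n)$-approximation to $\minicut_H(s,t)$ returned by $\mathcal{D}$ is a $(1+\epsilon)\cdot\tilde{O}(\log n) = \tilde{O}(\log n)$-approximation to $\minicut_G(s,t)$. The cut side $S \subseteq V$ reported by $\mathcal{D}$ is, since $G$ and $H$ share the vertex set, also a valid $s$--$t$ bipartition of $G$, and its $G$-capacity differs from its $H$-capacity by at most a $(1+\epsilon)$ factor, so it is a $\tilde{O}(\log n)$-approximate minimum $s$--$t$ cut of $G$, recoverable with $O(|S|)$ overhead (the extra $O(|S|)$ to evaluate its $G$-capacity being absorbed). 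Both \cite{AbrahamDKKP16} and $\mathcal{D}$ are Monte Carlo and correct with high probability against an oblivious adversary; running them with independent random bits and observing that $s$--$t$ min-cut queries neither modify $G$ nor feed back into the sparsifier, the update sequence that $\mathcal{D}$ actually sees is a function only of the (oblivious, a priori fixed) input sequence and the sparsifier's own coins, hence is oblivious with respect to $\mathcal{D}$'s coins, so the two high-probability guarantees compose.

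For the running time, each insertion or deletion in $G$ costs $\tilde{O}(1)$ to update $H$ itself, plus $\tilde{O}(1)$ calls to the update procedure of $\mathcal{D}$ on a graph with $|E_H| = \tilde{O}(n)$ edges, each of which is amortized $O(|E_H|^{2/3}\log^{7} n)$ by Theorem~\ref{theorem:DynamicMinCut}; since $|E_H| = O(n\,\mathrm{polylog}\,n)$ this is $O(n^{2/3}\log^{41/3} n)$ once the polylogarithmic factors are collected (the fractional exponent arising precisely because the polylogarithmic edge count of the sparsifier is raised to the $2/3$ power). A query costs $\tilde{O}(|E_H|^{2/3}) = \tilde{O}(n^{2/3})$. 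The one point that needs genuine care is exactly this polylogarithmic accounting: one must push both the $\mathrm{polylog}$ size blow-up and the $\tilde{O}(1)$ recourse of the dynamic sparsifier through the $2/3$-power in Theorem~\ref{theorem:DynamicMinCut} to confirm the stated $\log^{41/3} n$ bound, and separately verify -- routine, but necessary -- that composing two oblivious-adversary data structures with independent randomness preserves the oblivious-adversary guarantee, using that queries produce no feedback into the update stream.
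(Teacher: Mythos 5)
Your proposal is correct and follows essentially the same route as the paper: maintain the fully dynamic cut sparsifier of \cite{AbrahamDKKP16} (the paper uses its $2$-approximate version with $O(n\log^4 n)$ edges and $O(\log^4 n)$ recourse per update), feed its edge changes into the data structure of Theorem~\ref{theorem:DynamicMinCut}, and answer queries on the sparsified graph, with the $\log^{41/3} n$ factor arising exactly as you describe from pushing the sparsifier's polylogarithmic size and recourse through the $2/3$-power. Your additional remarks on composing the two oblivious-adversary guarantees are a reasonable (and slightly more careful) elaboration of what the paper leaves implicit.
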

\subsection{Cut, Flow and $L_\infty$-Embeddability}
In this section, we present concepts critical in building local sparsifier preserving cut/flow value.

\begin{definition}
\label{definition:graph_cut}
    Given a graph $G = (V, E, c)$, a cut $C$ is any proper subset of $V$, i.e., $\emptyset \neq C \subset V$.
    Define $E(C) \coloneqq E(C, V \setminus C)$ and $c(C) \coloneqq \sum_{e \in E(C)}{c(e)}$.
\end{definition}

\begin{definition}~\cite{Madry10}
\label{definition:multicommodity_flow}
    Given a graph $G = (V, E, c)$, $\mathbf{f} = (f^1, \ldots, f^k) \in \Real^{E \times k}$ is \emph{multicommodity flow} if $f_i$ is a $s_it_i$-flow.
    Define $|\mathbf{f}(e)| \coloneqq \sum_{i=1}^{k}{|f^i(e)|}$ as the total flow crossing the edge $e \in E$.
    A multicommodity flow $\mathbf{f}$ is feasible if for every edge $e$, we have $|\mathbf{f}(e)| \le c(e)$.
    \emph{Single-commodity flow} $\mathbf{f}$ is a multicommodity flow with $k=1$.
\end{definition}
We use the notion graph embedding for describing relation between flows in 2 graphs.

\begin{definition}~\cite{Madry10}
\label{definition:embedding}
    Given 2 graphs with same vertex set $G=(V,E,c), H=(V,E_H,c_H)$.
    For every edge $e=uv \in E$, $f^e$ is a flow that routes $c(e)$ amount of flow from $u$ to $v$ in $H$.
    Then $\mathbf{f} \coloneqq (f^e\mid e \in E) \in \Real^{E_H \times E}$, the collection of $f^e$ for every edge $e \in E$, is an \emph{embedding} of $G$ into $H$.
\end{definition}
Use the notion of embedding, we can define embeddability between graphs over same vertex set.

\begin{definition}~\cite{Madry10}
\label{definition:embeddability}
    Given $t \ge 1$, and 2 graphs with same vertex set $G=(V,E,c), H=(V,E_H,c_H)$.
    We say $G$ is \emph{$t$-embeddable} into $H$, denoted by $G \preceq_t H$ if there is an embedding $\mathbf{f}$ of $G$ into $H$ such that for every $e_H \in E_H$, $|\mathbf{f}(e_H)| \le t \cdot c_H(e_H)$.
    For $t=1$, we ignore the subscript and say $G$ is embeddable into $H$ and $G \preceq H$.
\end{definition}
Here we also define the notion of cut approximation between graphs.

\begin{definition}
\label{definition:CutApproximation}
    Given a graph $G=(V, E, c)$ and $\epsilon \in (0, 1)$, we say a graph $H=(V, E_H \subseteq E, c_H)$ is a \emph{$(1+\epsilon)$-cut-sparsifier} of $G$ if $S \subseteq V$,
\[
    (1-\epsilon)c(S) \le c_H(S) \le (1+\epsilon)c(S).
\]
    Denoted by $H \sim_\epsilon G$.
\end{definition}
Here we present some structural results about the value of max flow/min cut between mutually embeddable graphs.

\begin{lemma}
\label{lemma:embed_cut_upper_bound}
    Given $t \ge 1$, and 2 graphs $G=(V,E,c)$, $H=(V,E_H,c_H)$ on the same vertex set such that $G \preceq_t H$.
    For any cut $S$, we have $c(S) \le t\cdot c_H(S)$.
\end{lemma}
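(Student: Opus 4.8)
The plan is to fix an arbitrary cut $S \subset V$ and exploit the embedding $\mathbf{f} = (f^e \mid e \in E)$ of $G$ into $H$ that witnesses $G \preceq_t H$. Write $E(S)$ for the set of edges of $G$ with exactly one endpoint in $S$, and $E_H(S)$ for the analogous set of edges of $H$, so that $c(S) = \sum_{e\in E(S)} c(e)$ and $c_H(S) = \sum_{e_H \in E_H(S)} c_H(e_H)$ by Definition~\ref{definition:graph_cut}.

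The first step is a per-edge lower bound. For each $e = uv \in E(S)$, say with $u \in S$ and $v \notin S$, the flow $f^e$ routes $c(e)$ units from $u$ to $v$ in $H$. Summing the flow-conservation constraints of $f^e$ over the vertices of $S$, the net amount of $f^e$ sent from $S$ to $V\setminus S$ equals exactly $c(e)$; since the net flow across the cut is at most the total absolute flow on the cut edges, we get $\sum_{e_H \in E_H(S)} |f^e(e_H)| \ge c(e)$. This is just the single-cut weak-duality fact for a $u$–$v$ flow of value $c(e)$.

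The second step is bookkeeping: sum this inequality over $e \in E(S)$, swap the order of summation, enlarge the inner index set from $E(S)$ to $E$, recognise $\sum_{e\in E}|f^e(e_H)| = |\mathbf{f}(e_H)|$ by Definition~\ref{definition:multicommodity_flow}, and finally apply the hypothesis $|\mathbf{f}(e_H)| \le t\cdot c_H(e_H)$ from Definition~\ref{definition:embeddability}. Concretely,
\[
c(S) = \sum_{e \in E(S)} c(e) \le \sum_{e\in E(S)} \sum_{e_H\in E_H(S)} |f^e(e_H)| \le \sum_{e_H\in E_H(S)} \sum_{e\in E} |f^e(e_H)| = \sum_{e_H\in E_H(S)} |\mathbf{f}(e_H)| \le t \sum_{e_H\in E_H(S)} c_H(e_H) = t\cdot c_H(S),
\]
which is the claimed bound.

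The only mildly delicate point is the per-edge claim that a $u$–$v$ flow of value $c(e)$ must place at least $c(e)$ units of absolute flow across every cut separating $u$ from $v$; this follows from summing conservation constraints over $S$ and is the standard ingredient behind "congestion-bounded embeddings preserve cuts up to the congestion factor." I do not anticipate any genuine obstacle — everything after that step is a reordering of finite sums.
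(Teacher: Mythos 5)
Your proposal is correct and follows essentially the same route as the paper's proof: both use the fact that the net flow of $f^e$ across the cut $E_H(S)$ equals $c(e)$ for $e\in E(S)$ (and $0$ otherwise), bound net flow by total absolute flow via the triangle inequality, swap the order of summation, and apply the congestion bound $|\mathbf{f}(e_H)|\le t\cdot c_H(e_H)$. No gaps.
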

\begin{proof}
    By the property of flow, $|\sum_{d \in E_H(S)}{f^e(d)}| = c(e)$ if $e \in E(S)$ and $0$ otherwise.
    Therefore,
    \begin{align*}
        c(S) &= \sum_{e \in E(S)}{c(e)} = \sum_{e \in E}{\left|\sum_{d \in E_H(S)}{f^e(d)}\right|} \\
        &\le \sum_{e \in E}{\sum_{d \in E_H(S)}{\left|f^e(d)\right|}} \\
        &= \sum_{d \in E_H(S)}{\sum_{e \in E}{\left|f^e(d)\right|}} \\
        &= \sum_{d \in E_H(S)}{\left|f(d)\right|} \\
        &\le \sum_{d \in E_H(S)}{t \cdot c_H(d)} = t \cdot c_H(S).
    \end{align*}
\end{proof}

\begin{corollary}
\label{corollary:mutual_embed_preserve_cut}
    Given $t \ge 1$, and 2 graphs $G=(V,E,c)$, $H=(V,E_H,c_H)$ on the same vertex set such that $G \preceq H$ and $H \preceq_t G$.
    For any cut $C \subseteq V$, we have
    \begin{itemize}
        \item $c(C) \le c_H(C)$ and
        \item $c_H(C) \le t\cdot c(C)$
    \end{itemize}
\end{corollary}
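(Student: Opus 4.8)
The plan is to derive both inequalities as immediate consequences of Lemma~\ref{lemma:embed_cut_upper_bound}, simply by instantiating the parameter and the pair of graphs appropriately. There is no real obstacle here; the content is entirely in the lemma, and the corollary is just its two-sided reading.

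First I would handle the lower bound $c(C) \le c_H(C)$. By hypothesis $G \preceq H$, which by Definition~\ref{definition:embeddability} is exactly $G \preceq_1 H$. Applying Lemma~\ref{lemma:embed_cut_upper_bound} with $t = 1$ to the pair $(G, H)$ then yields $c(C) \le 1 \cdot c_H(C) = c_H(C)$ for every cut $C \subseteq V$.

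Next I would handle the upper bound $c_H(C) \le t \cdot c(C)$. Here I swap the roles of the two graphs: the hypothesis $H \preceq_t G$ lets me apply Lemma~\ref{lemma:embed_cut_upper_bound} to the pair $(H, G)$ with parameter $t$, which directly gives $c_H(C) \le t \cdot c(C)$ for every cut $C \subseteq V$. Combining the two displays completes the proof.

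The only point worth double-checking is purely bookkeeping: that Definition~\ref{definition:embeddability} indeed defines $\preceq$ as $\preceq_1$ (so the first application is legitimate), and that Lemma~\ref{lemma:embed_cut_upper_bound} is stated symmetrically enough in its two graph arguments that feeding it $(H,G)$ is allowed — it is, since the lemma only uses that one graph is $t$-embeddable into the other. No calculation beyond this is needed.
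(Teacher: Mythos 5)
Your proposal is correct and matches the paper's proof, which likewise derives both inequalities directly from Lemma~\ref{lemma:embed_cut_upper_bound} by applying it once to $G \preceq_1 H$ and once, with the graphs swapped, to $H \preceq_t G$. Your explicit bookkeeping of the two instantiations is just a spelled-out version of the paper's one-line argument.
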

\begin{proof}
It directly comes from Lemma~\ref{lemma:embed_cut_upper_bound}.
\end{proof}

\subsection{$J$-trees as Vertex Sparsifier}
Here we introduce the notion of $j$-tree ~\cite{Madry10}.
Intuitively, $j$-tree is obtained from the original graph by contracting vertices.
The resulting graph has at most $j$ vertices but preserves the value of cuts between them.
We deploy such a routine in reducing the number of vertices in the original graph.
This helps the design of the data structure we need.
Since we can answer queries by computing min cut values in a smaller graph.

\begin{definition}~\cite{Madry10}
\label{definition:JT}
    Given $j \le 1$, $H = (V_H, E_H, c_H)$ is a \textit{$j$-tree} if it is a connected graph being a union of a \textit{core} $C(H)$ which is a subgraph of $H$ induced by some vertex set $C \subseteq V_H$ with $|C| \le j$;
    and of an \textit{envelope} $F(H)$ which is a forest on $H$ with each component having exactly one vertex in the core $C(H)$.
    For any core vertex $u \in C$, define $F(u)$ to be the vertex set of the component containing $u$ in the envelope.
    Also for $S \subseteq C$, $F(S)$ is the union of $F(u), \forall u \in S$.
\end{definition}
We are interested in such $j$-tree structure because of the following lemma.
It suggests that we can approximate max flow/min cut within several simpler graphs and introduce a sampling scheme that can reduce the number of them to consider.

But first, we are going to define the notion of the congestion $\rho$-decomposition of a graph $G$.

\begin{definition}~\cite{Madry10}
\label{definition:Decomp}
    A family of graphs $G_1 \ldots G_{k}$ is a $(k, \rho, j)$-decomposition of a graph $G = (V, E, c)$ if
    \begin{enumerate}
        \item Each $G_i$ is a $j$-tree, and there are at most $k$ of them.
        \item $G \preceq G_i, \forall i$.
        \item $\sum{G_i} \preceq_{k \cdot \rho} G$.
    \end{enumerate}
\end{definition}
Using this definition, we are able to state more clearer of the benefit of using $j$-trees.

\begin{lemma}~\cite{Madry10}
\label{lemma:static_jtree_decomposition}
    Given any graph $G=(V,E,c)$ with weight ratio $U$ and $k \ge 1$, we can find in $O(km\log^{4}n)$-time a
    \[
        \left(
            k,
            \rho:=O\left(\log{n}\cdot\log\log{n}\cdot(\log\log\log{n})^3\right),
            O\left(\log^2{n}\frac{m\log{U}}{k}\right)
        \right)
    \]
    -decomposition of $G$, $G_1 \ldots G_k$.
    The weight ratio of each $G_i$ is $O(mU)$.
    Moreover, if we sample $G_i$ with probability $1/k$, for any fixed cut $S$, the size of this cut in $G_i$ is at most $2 \rho$ times the size of the cut in $G$ with probability at least $\frac{1}{2}$.
\end{lemma}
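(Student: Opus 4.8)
The plan is to recover Madry's $j$-tree decomposition, from which properties (1)--(3) of a $(k,\rho,j)$-decomposition can be read off directly, and then to obtain the ``moreover'' clause by a one-line averaging argument over the $k$ produced $j$-trees. The engine is Räcke's hierarchical decomposition: for any capacitated graph $G$ one can, in near-linear time, produce a tree $T$ (whose node set contains $V(G)$, the extra nodes being Steiner nodes that can be removed by contraction) with $G\preceq T$, together with a convex combination of such trees $\sum_{i}\lambda_{i}T_{i}\preceq_{\rho}G$ where $\rho=O(\log n\,\log\log n\,(\log\log\log n)^{3})$. A single tree is a $1$-tree; Madry's contribution is a recursive ``truncation'' that, instead of routing all the way down to a tree, halts the decomposition early and retains an $O\!\big(\log^{2}n\cdot\frac{m\log U}{k}\big)$-sized \emph{core} --- the $\log^{2}n$ from iterating the $O(\log n)$-depth decomposition, the $\log U$ from first bucketing capacities into $O(\log U)$ power-of-two weight classes --- while the flow routed strictly inside a cluster becomes a spanning-forest \emph{envelope}. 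The routing is arranged so that each resulting $j$-tree $G_{i}$ still satisfies $G\preceq G_{i}$, and the competitive-ratio bound transfers unchanged, yielding $\sum_{i}G_{i}\preceq_{k\rho}G$ (equivalently $\frac1k\sum_{i}G_{i}\preceq_{\rho}G$).

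I would then read off the decomposition: property (1) is immediate; property (2) is $G\preceq G_{i}$; property (3) is $\sum_{i}G_{i}\preceq_{k\rho}G$. The running time is $k$ invocations of the fast hierarchical-decomposition algorithm, each $\tilde{O}(m)=O(m\log^{4}n)$, for $O(km\log^{4}n)$ in total; and the weight ratio of each $G_{i}$ is $O(mU)$ because every capacity of $G_{i}$ is a sum of at most $m$ original capacities routed along one path, hence lies in $[\,c_{\min},\,m\,c_{\max}\,]$.

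For the last sentence, pick an index $i\in\{1,\dots,k\}$ uniformly at random and fix a cut $S$. By property (3) and Lemma~\ref{lemma:embed_cut_upper_bound}, $\sum_{i=1}^{k}c_{G_{i}}(S)=c_{\sum_{i}G_{i}}(S)\le k\rho\,c(S)$, so $\mathbb{E}_{i}[c_{G_{i}}(S)]\le\rho\,c(S)$, and Markov's inequality gives $c_{G_{i}}(S)\le 2\rho\,c(S)$ with probability at least $\tfrac12$. On the other hand $G\preceq G_{i}$ together with Lemma~\ref{lemma:embed_cut_upper_bound} forces $c_{G_{i}}(S)\ge c(S)$ for every $i$. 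Hence with probability at least $\tfrac12$ the cut $S$ in the sampled $G_{i}$ has size between $c(S)$ and $2\rho\,c(S)$, which is exactly the claimed bound.

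The main obstacle is the truncation subroutine itself: re-deriving Madry's bound $j=O\!\big(\log^{2}n\cdot m\log U/k\big)$ and verifying that halting the hierarchical decomposition early neither destroys $G\preceq G_{i}$ nor inflates the competitive ratio beyond $\rho$. Everything else --- the averaging/Markov step, the running-time bookkeeping, and the weight-ratio estimate --- is routine. Since that subroutine is precisely the $j$-tree decomposition of Madry~\cite{Madry10}, in the writeup I would invoke it as a black box (or transcribe its statement) and spend the new effort only on the averaging argument and on confirming that the parameters line up with what the dynamic min-cut data structure needs downstream.
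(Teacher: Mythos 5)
Your proposal is correct and matches the paper's treatment: the paper offers no proof of this lemma at all, importing the decomposition, its parameters, and the sampling clause directly from Madry~\cite{Madry10}, which is exactly the black-box invocation you propose. Your reconstruction of the ``moreover'' clause---$\sum_i c_{G_i}(S) \le k\rho\, c(S)$ from property (3) via Lemma~\ref{lemma:embed_cut_upper_bound}, then Markov's inequality, with the lower bound $c_{G_i}(S)\ge c(S)$ from property (2)---is the standard and correct derivation (it is also how Madry obtains it).
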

To speed-up, we can not afford to maintain this many $j$-trees.
The following lemma shows that we can maintain only $O(\log{n})$ of them while preserving the approximation quality.

\begin{lemma}
\label{lemma:jtree_preserve_all_min_st_cut}
    Given any graph $G=(V, E, c)$ with weight ratio $U$ and $k = \Omega(\log{n})$ and a $\left( k, \rho, j \right)$-decomposition of $G$, $G_1, \ldots, G_k$.
    By sampling $O(\log{n})$ graphs from $G_1, \ldots ,G_k$, every min $st$ cut is preserved up to a $2\rho$-factor with high probability.
\end{lemma}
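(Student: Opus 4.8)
The plan is to combine the two structural guarantees contained in the definition of a $(k,\rho,j)$-decomposition with a union bound over all pairs of vertices. First I would observe that one side of the desired approximation is essentially free: since $G \preceq G_i$ for every $i$ by \Cref{definition:Decomp}, \Cref{lemma:embed_cut_upper_bound} applied with $t = 1$ yields $c(S) \le c_{G_i}(S)$ for every cut $S$ and every index $i$. Consequently, for any pair $s,t$ and any $i$ we have $\mincut_{G_i}(s,t) \ge \mincut_G(s,t)$, so the minimum of $\mincut_{G_i}(s,t)$ over any sub-collection of the $G_i$'s never underestimates $\mincut_G(s,t)$. The substantive part is therefore the matching upper bound.

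For the upper bound, fix an ordered pair $(s,t)$ and let $S^{\ast}_{s,t}$ be a fixed minimum $st$-cut of $G$. By the last sentence of \Cref{lemma:static_jtree_decomposition}, a single uniformly random index $i \in [k]$ satisfies $c_{G_i}(S^{\ast}_{s,t}) \le 2\rho \cdot c_G(S^{\ast}_{s,t}) = 2\rho \cdot \mincut_G(s,t)$ with probability at least $1/2$. I would then sample $t_0 = \Theta(\log n)$ indices $i_1,\dots,i_{t_0}$ independently and uniformly at random from $[k]$ (sampling with replacement; the hypothesis $k = \Omega(\log n)$ is what lets this also be realized without repetition, and duplicates do no harm). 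Since these are independent draws, the probability that \emph{none} of $G_{i_1},\dots,G_{i_{t_0}}$ satisfies that inequality for this particular $S^{\ast}_{s,t}$ is at most $2^{-t_0}$, which can be made smaller than $n^{-c-2}$ for any prescribed constant $c$ by choosing the constant hidden in $t_0$ large enough. On the complementary event, $\min_{j} \mincut_{G_{i_j}}(s,t) \le \min_j c_{G_{i_j}}(S^{\ast}_{s,t}) \le 2\rho \cdot \mincut_G(s,t)$.

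Finally I would take a union bound over the fewer than $n^2$ ordered pairs $(s,t)$: with probability at least $1 - n^{-c}$, for \emph{every} pair we simultaneously have $\mincut_G(s,t) \le \min_{j} \mincut_{G_{i_j}}(s,t) \le 2\rho \cdot \mincut_G(s,t)$, i.e.\ the minimum over the sampled $j$-trees is a $2\rho$-approximation of $\mincut_G(s,t)$ for all $s,t$ at once; if the cut set itself is needed one returns the minimizing cut found in the relevant sampled $j$-tree, whose value is sandwiched in exactly the same way. The one delicate point — and the only real obstacle — is that the guarantee of \Cref{lemma:static_jtree_decomposition} holds only for a \emph{single} cut that is fixed before the sample is drawn, so it cannot be invoked over the exponentially many cuts of $G$; the argument succeeds precisely because preserving $\mincut_G(s,t)$ up to a $2\rho$ factor only requires preserving one designated optimal cut per pair, and there are only polynomially many pairs, so their failure probabilities can be union-bounded.
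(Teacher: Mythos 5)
Your proposal is correct and follows essentially the same route as the paper's proof: fix one minimum $st$-cut per pair, use the constant success probability from \Cref{lemma:static_jtree_decomposition} for a single fixed cut, amplify with $\Theta(\log n)$ independent samples, and union bound over the at most $n^2$ pairs. Your explicit remarks on the free lower-bound direction via $G \preceq G_i$ and on why the single-fixed-cut guarantee suffices are faithful to what the paper does implicitly.
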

\begin{proof}
    Let $C(s, t)$ be some minimum $st$-cut in $G$.
    Let $\mathcal{C} = \{C(s, t) \mid s, t \in {V \choose 2}\}$.
    Clearly $|\mathcal{C}| \le n^2$.
    By lemma~\ref{lemma:static_jtree_decomposition} we know $c(C(s, t)) \le c_{G_i}(C(s, t))$ and with probability at least 0.5, $c_{G_i}(C(s, t)) \le 2\rho c(C(s, t))$ by sampling $G_i$ uniformly.
    By sampling $t=d\log{n}$ of them, say $G_1, \ldots, G_t$, we have with probability at least $1 - n^{-d}$ that:
    \begin{align*}
        \min_{i \in [t]}\left\{c_{G_i}\left(C(s, t)\right)\right\} \le 2\rho c\left(C(s, t)\right).
    \end{align*}
    Taking the union bound over all cuts in $\mathcal{C}$ which has at most $n^2$ of them, we have with probability $1 - n^{-d+2}$ that \begin{align*} 
        \forall s, t \in {V \choose 2}, c\left(C(s, t)\right) \le \min_{i \in [t]}\left\{c_{G_i}\left(C(s, t)\right)\right\} \le 2\rho c\left(C(s, t)\right).
    \end{align*}
    \end{proof}
Lemma~\ref{lemma:jtree_preserve_all_min_st_cut} says that optimal $st$-cut can be preserved using $O(\log{n})$-many $j$-trees up to a $O(\rho)$-factor with high probability.
In our usage, we compute cuts only in the core instead of the entire $j$-tree.
Cuts in the core are then projected back to the $j$-tree.
To justify the correctness in terms of minimum $st$-cut, we define the concept of \emph{core cut}.

\begin{definition}
\label{definition:core_cut}
    Let $j \ge 1$ and $H=(V, E_H, c_H)$ be some $j$-tree.
    Let $C_H=(C \subseteq V, E_C, c_C)$ be the core of $H$.
    Given any cut $S \subseteq C$ of $C_H$, the \emph{core cut} of $S$ with respect to $C$ in $H$ is
    \begin{align*}
    \Pi(S) \coloneqq \bigcup_{u \in S}{F(u)}.
    \end{align*}
    That is, extend the cut $S$ by including trees in the envelope rooted at vertex in $S$.
\end{definition}
Now we present a lemma that justify computing min cuts in the core.

\begin{lemma}
\label{lemma:only_core_cuts_matter}
    Let $j \ge 1$ and $H=(V, E_H, c_H)$ be some $j$-tree.
    Let $C_H=(C \subseteq V, E_C, c_C)$ be the core of $H$.
    For any cut $S \subseteq V$ of $H$, we have
    \begin{align*}
        c_H(\Pi(S \cap C)) \le c_H(S).
    \end{align*}
    That is, to find minimum cut separating core vertices in $H$, it suffice to check only cuts in the core and then construct the core cut.
\end{lemma}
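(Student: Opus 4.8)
The plan is to exploit the block structure of a $j$-tree. The key structural observation I would establish first is that the vertex sets $\{F(u) : u \in C\}$ form a partition of $V$: the envelope $F(H)$ is a spanning forest of $H$ whose connected components are exactly the sets $F(u)$, each containing a unique core vertex. From the defining property that $H$ is the union of the induced core subgraph on $C$ and the envelope forest, it then follows that every edge of $H$ is of exactly one of two types: a \emph{core edge} between two distinct core vertices $u,v$, which runs between the blocks $F(u)$ and $F(v)$; or an \emph{envelope edge}, which lies entirely inside a single block $F(u)$ (since every envelope edge belongs to a single tree of the forest, hence to one component $F(u)$).

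With this dichotomy in hand, I would compare $c_H(S)$ and $c_H(\Pi(S\cap C))$ type by type, where $\Pi(S\cap C) = \bigcup_{w\in S\cap C} F(w)$. For a core edge $(u,v)$: because the blocks are disjoint and $\Pi(S\cap C)$ is a union of whole blocks, a core vertex $u$ lies in $\Pi(S\cap C)$ if and only if $u \in S\cap C$, i.e. if and only if $u\in S$ (as $u\in C$); the same holds for $v$. Hence $(u,v)$ is cut by $S$ precisely when it is cut by $\Pi(S\cap C)$, so the core edges contribute the same total weight to both cut values. For an envelope edge $e$ contained in a block $F(u)$: the set $\Pi(S\cap C)$ contains either all of $F(u)$ (when $u\in S\cap C$) or none of it (otherwise), so $e$ is never cut by $\Pi(S\cap C)$; thus envelope edges contribute $0$ to $c_H(\Pi(S\cap C))$ and a nonnegative amount to $c_H(S)$. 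Summing the two types yields $c_H(\Pi(S\cap C)) = (\text{core contribution}) \le (\text{core contribution}) + (\text{envelope contribution}) = c_H(S)$.

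There is no substantial obstacle; the only points requiring care are (a) giving a clean argument that the $F(u)$'s partition $V$ and that every edge is either a core edge or is contained in a single block, which is exactly where the precise $j$-tree definition (union of the induced core and a spanning forest with one core vertex per tree) is used, and (b) the degenerate cases $S\cap C = \emptyset$ or $S\cap C = C$, where $\Pi(S\cap C)$ equals $\emptyset$ or $V$ and so has cut value $0$, making the inequality trivial. Finally, the concluding assertion of the lemma follows since $\Pi(S\cap C)\cap C = S\cap C$ (a core vertex lies in $F(w)$ only when it equals $w$), so the core cut $\Pi(S\cap C)$ separates exactly the same pairs of core vertices as $S$ while having cut value no larger than that of $S$; hence a minimum cut among core vertices is attained by some core cut.
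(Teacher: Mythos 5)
Your proposal is correct and follows essentially the same route as the paper's proof: split $E_H$ into core edges and envelope edges, observe that the core edges crossed by $S$ and by $\Pi(S\cap C)$ coincide (since a core vertex lies in $\Pi(S\cap C)$ iff it lies in $S$), and that $\Pi(S\cap C)$ crosses no envelope edge, so the envelope contribution can only drop. The extra care you take with the block partition and the degenerate cases $S\cap C\in\{\emptyset, C\}$ is fine but not a departure from the paper's argument.
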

\begin{proof}
    Let $E_F \coloneqq E_H \setminus E_C$ be the edge set of the envelope.
    Note that
    \begin{enumerate}
        \item $E(S) \cap E_C = E(\Pi(S \cap C)) \cap E_C$.
              Crossing edges in the core are identical for both cuts.
        \item $E(\Pi(S \cap C)) \cap E_F = \phi$.
              There is no crossing edges in the envelope for cut $\Pi(S \cap C)$.
    \end{enumerate}
    Conclusively, we have
    \begin{align*}
        c_H(\Pi(S \cap C)) &= c_H(E(\Pi(S \cap C)) \cap E_C) + c_H(E(\Pi(S \cap C)) \cap E_F) \\
        &= c_H(E(S) \cap E_C) + 0 \\
        &\le c_H(E(S) \cap E_C) + c_H(E(S) \cap E_F) = c_H(S).
    \end{align*}
\end{proof}
Then to make this dynamic, we just need to support the `add terminal' operation as defined in the local sparsifiers paper~\cite{GoranciHS18}, and in Chapter 6 of~\cite{Goranci19:thesis}.

\subsection{Dynamic Cut Sparsifier}

\begin{lemma}~\cite{AbrahamDKKP16}
\label{lemma:DynamicCutSparsifier}
    Given a graph $G=(V, E, c)$ with weight ratio $U$.
    There is an $(1 + \epsilon)$-cut sparsifier $H$ of $G$ w.h.p., Such $H$ supports the following operations:
    \begin{itemize}
        \item $\texttt{Insert}(u, v, c)$:
              Insert the edge $(u, v)$ to $G$ in amortized $O(\log^5{n}\epsilon^{-2}\log{U})$-time.
        \item $\texttt{Delete}(e)$:
              Delete the edge $e$ from $G$ in amortized $O(\log^5{n}\epsilon^{-2}\log{U})$-time.
    \end{itemize}
    Such $H$ is a subgraph of $G$ with different weight and the weight ratio of $H$ is $O(nU)$.
    Moreover, we maintain a partition of $H$ into $k=O(\log^3{n}\epsilon^{-2}\log{U})$ disjoint forests $T_1, \ldots, T_k$ with each vertex keeps the set of its neighbors u in each forest $T_i$.
    After each edge insertion/deletion in $G$, at most 1 edge change occurs in each forest $T_i$.
\end{lemma}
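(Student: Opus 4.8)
\emph{Proof plan.} This lemma is a repackaging of the fully-dynamic spectral sparsifier of Abraham, Durfee, Koutis, Krinninger and Peng~\cite{AbrahamDKKP16}, so the plan is not to reprove it from scratch but to (i) quote their main guarantee and (ii) extract from their construction the structural refinement we will use later, namely the decomposition of the maintained sparsifier into few forests with bounded per-update recourse. First I would recall their result: against an oblivious adversary one maintains a $(1+\epsilon)$-\emph{spectral} sparsifier of a weighted graph undergoing edge updates in amortized $O(\log^{O(1)}n\cdot\epsilon^{-2}\log U)$ time; since a $(1+\epsilon)$-spectral sparsifier is in particular a $(1+\epsilon)$-cut sparsifier, this already yields the first two bullets, with the exact exponent $\log^{5}n$ traced to the fully-dynamic spanning-forest subroutine invoked inside each level of the construction.

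Next I would unwind their construction to read off the value of $k$, the weight ratio, and the ``forest'' structure. The construction first buckets the edges of $G$ by weight into $O(\log U)$ geometrically spaced classes and sparsifies each class separately; within a class the sparsifier is built by a sampling-with-spanners recursion of $O(\log n)$ levels, where each level removes a bundle of $O(\epsilon^{-2}\log n)$ edge-disjoint spanning forests of the current graph, samples each remaining edge with probability $1/2$ while doubling its weight, and recurses. The maintained sparsifier $H$ is then the reweighted union of all spanning forests over all levels and all classes, giving $k=O(\epsilon^{-2}\log n)\cdot O(\log n)\cdot O(\log U)=O(\epsilon^{-2}\log^{3}n\log U)$ (the extra $\log n$ absorbing spanner stretch), and every edge of $H$ is an edge of some sampled subgraph, hence of $G$, which shows $H$ is a reweighted subgraph of $G$. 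For the weight ratio I would observe that within a class the weights are doubled only $O(\log n)$ times, so the in-class ratio is $2^{O(\log n)}=\poly(n)$; tuning the constant so that this is $O(n)$ and multiplying by the factor-$U$ spread across classes yields overall weight ratio $O(nU)$.

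Finally, and this is the step I expect to require the most care, I would verify the recourse claim that after one edge update in $G$ each forest $T_i$ changes by at most one edge. The starting point is that a fully-dynamic spanning forest changes by at most one edge per update: an insertion adds at most one tree edge, and a tree-edge deletion is repaired by at most one replacement edge, with the polylogarithmic cost lying in \emph{locating} the replacement rather than in the number of edge changes. The subtlety is that an edge entering or leaving a bundle at one level alters the input graph fed to the next level of the recursion, so a naive accounting would let the number of changed edges grow by a bundle's worth per level. The plan is to use the fact, as in~\cite{AbrahamDKKP16}, that the random choices are \emph{coupled across updates} --- each edge's behaviour in each sampled subgraph depends only on its own random bits --- together with their rebuild/amortization schedule, so that the net number of edge changes at each level, and hence in each forest, stays $O(1)$ and can be taken to be $\le 1$; controlling this cascade is the main obstacle, and it is exactly where the oblivious-adversary assumption is used. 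Maintaining, at each vertex, its neighbour list in every $T_i$ is then routine bookkeeping, and the ``w.h.p.'' qualifier comes directly from the $n^{-\Omega(1)}$ failure probability of the sampling scheme.
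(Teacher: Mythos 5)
This lemma is imported verbatim from~\cite{AbrahamDKKP16}: the paper supplies no proof of its own, and your plan of quoting that result and unwinding its construction to read off $k$, the weight ratio, and the per-forest recourse is exactly the treatment the paper intends, so the approaches coincide. One small caution in your reconstruction of the external construction: for the \emph{cut} sparsifier the bundles consist of edge-disjoint (greedy) spanning forests rather than spanners, and the extra $\log n$ factor in $k$ comes from the sampling analysis via edge connectivities, not from spanner stretch; also, the cascade/recourse argument you flag as the main obstacle is indeed resolved inside~\cite{AbrahamDKKP16} and is simply being cited here, so it need not be re-derived.
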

By first deploying a dynamic cut sparsifier from Lemma~\ref{lemma:DynamicCutSparsifier}, we can speed-up Theorem~\ref{theorem:DynamicMinCut} by ripping of the dependency on $m$.

\begin{proof}[Proof of Corollary~\ref{corollary:DynamicMinCut}]
    First apply Lemma~\ref{lemma:DynamicCutSparsifier} to acquire a 2-approxmation dynamic cut sparsifier $H$ of $G$.
    Such $H$ has $O(n\log^4{n})$ edges.
    Then we incur Theorem~\ref{theorem:DynamicMinCut} to maintain the sparsifier $H$.
    For every edge update in $G$, by Lemma~\ref{lemma:DynamicCutSparsifier}, it becomes $O(\log^3{n}\log{U})=O(\log^4{n})$ edge changes in $H$.
    Therefore, one single edge update can be handled in amortized time 
    \begin{align*}
        O\left(\log^4{n} \cdot (n\log^4{n})^{2/3}\log^{7}{n}\right) = O\left(n^{2/3}\log^{41/3}{n}\right).
    \end{align*}
    For every $\mathrm{MinCut}(s, t)$ query, we incur the same query on the sparsifier $H$, which can be computed in $\Tilde{O}((n\log^4{n})^{2/3})=\Tilde{O}(n^{2/3})$ -time.
    Since such $H$ preserves cut/flow value up to a factor of $2$, so the result is still within $\Otil(\log{n})$ approximation with high probability.
\end{proof}

\subsection{An $\Tilde{O}(m)$-time 2-Approximation Max Flow/Min Cut Solver}
Here we present a near-linear time max flow algorithm from ~\cite{Peng16}.

\begin{lemma}(~\cite{Peng16}, rephrased)
\label{lemma:flow_solver}
    Given a graph $G=(V, E, c)$ with weight ratio $U=\poly(n)$ and source/sink pair $s$ and $t$.
    We can compute a 2-approximation for value of the minimum cut between $s$ and $t$ in $O(m\log^{32}{n}\log^2{\log{n}})=\Tilde{O}(m)$-time . The cut set $S$ can be obtained on demand with linear overhead in $|S|$.
\end{lemma}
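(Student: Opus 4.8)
The plan is to obtain this as a direct specialization of the near-linear time approximate undirected maximum flow algorithm of Peng~\cite{Peng16}. That algorithm, given an undirected capacitated graph $G$ and terminals $s,t$, computes a $(1+\epsilon)$-approximate $s$-$t$ maximum flow in time $O(m\cdot\mathrm{polylog}(n)\cdot\poly(\epsilon^{-1},\log U))$ by running a first-order (gradient-descent-type) method on top of a recursively constructed congestion approximator. First I would instantiate it with $\epsilon$ a fixed constant — concretely $\epsilon=1$ — so that all $\poly(\epsilon^{-1})$ factors collapse to constants, and invoke the hypothesis $U=\poly(n)$ to replace every $\log U$ by $O(\log n)$. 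Tracking the exponent of $\log n$ (and the $\log\log n$ terms coming from the quality of the hierarchical routing scheme used as the congestion approximator) through the recursion gives the stated bound $O(m\log^{32}n\log^2\log n)=\tilde{O}(m)$.

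Next I would extract the cut. Since the value of the minimum $s$-$t$ cut equals that of the maximum $s$-$t$ flow in an undirected capacitated graph, the flow value returned above is already within a factor $2$ of $\minicut_G(s,t)$. To recover the cut \emph{set}, I would use the dual object the algorithm maintains: together with the primal flow it produces a vector of vertex potentials $\phi$, and when the flow is $(1+\epsilon)$-approximately optimal there is a threshold $\tau$ for which the level set $S=\{v:\phi(v)\ge\tau\}$ satisfies $\minicut_G(s,t)\le c(S)\le(1+\epsilon)\minicut_G(s,t)$; only $O(\log n)$ candidate thresholds (those induced by the tree of the congestion approximator) need be examined, and the best one is selected. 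With $\epsilon=1$ this yields a cut of value at most $2\minicut_G(s,t)$. Since the potentials are maintained during the flow computation, once $S$ is requested, enumerating its vertices and the edges leaving it costs only $O(|S|)$ extra, which is the claimed ``on demand with linear overhead in $|S|$''.

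The main obstacle is the second step rather than the first: Peng's result is phrased as returning a flow, so one must verify carefully that the dual information it computes can be rounded to an \emph{integral} $s$-$t$ cut within the same constant factor and within the same $\tilde{O}(m)$ time budget. This amounts to unwinding the hierarchical congestion-approximator construction (and the tree cuts it supplies, pulled back to $G$) and checking that those cuts certify $\minicut_G(s,t)$ up to a constant; a secondary piece of bookkeeping is confirming that the precise polylog exponents asserted ($32$ on $\log n$ together with $\log^2\log n$) are consistent with the parameters of Peng's recursion. Modulo these verifications the lemma is immediate.
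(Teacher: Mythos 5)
The paper gives no proof of this lemma at all: it is imported verbatim as a black-box citation of Peng's near-linear-time approximate max-flow/min-cut algorithm, so your proposal is, if anything, more detailed than the paper's treatment. Your instantiation ($\epsilon$ a constant, $\log U = O(\log n)$, cut recovered from the dual/congestion-approximator side of Peng's algorithm) is exactly the intended reading, and the remaining bookkeeping you flag (the precise polylog exponent and the integral cut extraction) is indeed all that stands between the cited theorem and the rephrased statement.
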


\subsection{The main theorem}
In this section, we give details in building a dynamic data structure for approximately computing minimum $st$-cuts in a dynamic graph.
The high-level idea is to build $(k, \rho, j \coloneqq O(m^{2/3}))$-decomposition of the original graph and dynamically maintain them.
By lemma~\ref{lemma:jtree_preserve_all_min_st_cut}, we maintain only $O(\log{n})$ of these $j$-trees instead of $k$.
For every $st$-cut query, we ran the algorithm from lemma~\ref{lemma:flow_solver}~\cite{Peng16} on these $O(\log{n})$ core graphs.

To dynamically maintain these core graphs, dynamic cut sparsifiers from lemma~\ref{lemma:DynamicCutSparsifier} are used.
In addition to that, we also present a data structure for maintaining $j$-tree under the operation of adding a vertex to the core.

To prove the theorem, we need a dynamic data structure for maintaining $j$-trees.
The tools are formalized as the following lemma.

\begin{lemma} \label{lemma:dynamic_JTree}
    Given $j \ge 1$ and a graph $G=(V, E, c)$ with weight ratio $U=\poly(n)$ and a $j$-tree $H$ of $G$ such that $H \preceq G \preceq_{\alpha} H$.
    Let $n=|V|, m=|E|$.
    We can dynamically maintain a $O(j)$-tree $\Tilde{H}$ such that $\Tilde{H} \preceq G \preceq_{O(\alpha)} \Tilde{H}$ under up to $j$ of following operations:
    \begin{enumerate}
        \item $\textsc{Initialize}(G)$:
              Build data structures for maintaining $H$ in $O(\frac{mn}{j}\log{n})$-time.
        \item $\textsc{AddTerminal}(u)$:
              Move vertex $u$ to the core of $H$.
              Such operation can be done in amortized $O(\frac{mn}{j^2}\log^6{n})$-time.
        \item $\textsc{Insert}(u, v, c)$:
              Insert the edge $(u, v)$ to $G$ in amortized $O(\frac{mn}{j^2}\log^6{n})$-time.
        \item $\textsc{Delete}(e)$:
              Delete the edge $e$ from $G$ in amortized $O(\frac{mn}{j^2}\log^6{n})$-time.
    \end{enumerate}
    The total number of edge change in the core is $O(\frac{mn}{j})$.
    Hence the amortized number of edge changes per operation is $O(\frac{mn}{j^2})$.
    Also, $C(\Htil)$ (core of $\Htil$) is sparse, i.e., it has $O(j\log^4{j})$ edges.
\end{lemma}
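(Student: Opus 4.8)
The plan is to realise $\widetilde H$ concretely as a $j$-tree whose envelope is stored in a dynamic-forest data structure (link-cut trees, or top trees if we want worst-case path operations) and whose core is stored through the dynamic cut sparsifier of Lemma~\ref{lemma:DynamicCutSparsifier}; alongside it we keep explicit embedding certificates witnessing $\widetilde H\preceq G$ and $G\preceq_{O(\alpha)}\widetilde H$, whose total size is $O(\tfrac{mn}{j})$ and which is what the $O(\tfrac{mn}{j}\log n)$ budget of \textsc{Initialize} pays for. Throughout we maintain the invariants that $\widetilde H$ is an $O(j)$-tree with $\widetilde H\preceq G\preceq_{O(\alpha)}\widetilde H$, that its core has $O(j\log^4 j)$ edges (automatic once the core is fed through Lemma~\ref{lemma:DynamicCutSparsifier} with a constant $\epsilon$, using $\log U=O(\log n)$), and --- most importantly --- that $\widetilde H$ is always obtained from the \emph{original} input $j$-tree $H$ by a \emph{single} round of three local transforms: moving an envelope vertex into the core, inserting a core edge that currently lies in $G$, and deleting an edge that has just been deleted from $G$. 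This ``single round'' invariant is the mechanism that keeps the quality at $O(\alpha)$ rather than letting it degrade to $\alpha^{\Theta(j)}$ across the $\le j$ updates.

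We implement \textsc{AddTerminal}$(u)$ by the add-terminal operation for $j$-trees in the spirit of the local sparsifiers of~\cite{GoranciHS18} and Chapter~6 of~\cite{Goranci19:thesis}: use the dynamic forest to locate the core vertex $v$ of the $H$-tree containing $u$ and the tree path $P$ from $v$ to $u$, move $u$ into the core, split the affected envelope tree along $P$ (the part below $u$ becomes $F(u)$, the part above remains under $v$), and compensate by inserting between $v$ and $u$ a short core gadget whose capacities are the bottleneck capacities of $P$, reattaching the subtrees dangling off interior vertices of $P$ to the nearer endpoint. Choosing bottleneck capacities is what keeps the loss to $O(1)$: the $G$-flow the old certificate routed along $P$ had congestion $\le\alpha$ on the edges of $P$, hence total value $\le\alpha$ times the bottleneck, so rerouting it through the gadget blows the congestion up by only a constant, and symmetrically the gadget edges route through $P$ inside the old $\widetilde H$ with congestion $\le1$. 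Then \textsc{Insert}$(u,v,c)$ is handled exactly as in Lemma~\ref{lem:DVS1levelEdgeInsertion}, by \textsc{AddTerminal}$(u)$, \textsc{AddTerminal}$(v)$ and adding $(u,v,c)$ to the core; and \textsc{Delete}$(e)$ by first promoting both endpoints of $e$ so that $e$ becomes a core edge, then deleting $e$ from $G$ and from the core sparsifier, and re-routing --- inside the $O(j)$-vertex core only --- the little flow on the two certificates that used $e$.

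Correctness of the approximation follows transform by transform: each of the three local transforms changes the congestion of both certificates by at most a universal constant factor (the first two by the bottleneck argument above, the deletion by the local-repair argument below), and by the single-round invariant those constants collapse into one $O(1)$, so $\widetilde H\preceq G\preceq_{O(\alpha)}\widetilde H$ holds at all times; Lemma~\ref{lemma:embed_cut_upper_bound} and Definition~\ref{definition:embeddability} then translate this into the cut statement used downstream. The core has $O(j)$ vertices because each \textsc{AddTerminal} and \textsc{Insert} promotes only $O(1)$ vertices and there are $\le j$ operations, and $O(j\log^4 j)$ edges by Lemma~\ref{lemma:DynamicCutSparsifier}. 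For the running time, every edge change fed to the core sparsifier costs $O(\log^5 n\log U)=O(\log^6 n)$ amortised, so it suffices to bound the \emph{total} number of core edge changes over the whole sequence by $O(\tfrac{mn}{j})$, which yields the claimed $O(\tfrac{mn}{j^2}\log^6 n)$ amortised per operation.

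That last bound is where I expect the real work to be. It requires a charging argument showing that, across the whole sequence, the total length of the promoted paths, the total number of reattached subtrees, and the total size of the certificate pieces that get rerouted are all $O(\tfrac{mn}{j})$ --- morally each envelope vertex is consumed by at most one promotion, the envelope has $O(n)$ vertices, and the certificates have size $O(\tfrac{mn}{j})$ --- but pinning this down needs the quantitative structure of Madry's $j$-tree decomposition (Lemma~\ref{lemma:static_jtree_decomposition}) together with the amortised ``add terminal'' accounting of~\cite{GoranciHS18,Goranci19:thesis}. The second delicate point is \textsc{Delete}: maintaining \emph{both} directions of the embedding after an edge deletion is not automatic in general, and the saving grace is that, once its endpoints are promoted, $e$ is an edge of a cut-sparsified core of a connected graph, so deleting it can only expose a commensurately small cut and the repair stays local to the $O(j)$-vertex core.
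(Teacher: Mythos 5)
Your architecture is the paper's architecture --- a dynamic forest over the envelope, the dynamic cut sparsifier of Lemma~\ref{lemma:DynamicCutSparsifier} over the core, updates reduced to terminal promotion, and an $O(\tfrac{mn}{j})$ total-recourse bound --- but the two claims that carry the lemma are asserted rather than proved, and the specific mechanism you propose for \textsc{AddTerminal} does not obviously support them. First, the congestion accounting: your bottleneck-capacity gadget loses a universal constant factor \emph{per promotion}, and you lean on a ``single-round invariant'' to keep these constants from compounding to $2^{\Theta(j)}$ over the $\le j$ operations; but when a second vertex is promoted inside an envelope tree already split by a first promotion, its path $P$ and its bottleneck are taken relative to the already-modified structure, so the rerouted flow is rerouted flow, and nothing in your argument shows each unit of certificate is touched $O(1)$ times. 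The paper sidesteps this entirely: it never loses a constant per operation. It fixes the low-stretch tree $T$ and the tree partition $F$ from the static construction, and \textsc{AddTerminal}$(u)$ cuts a \emph{single} edge $e_u$ on the tree path from $u$ to its current representative, after which the maintained graph is, by definition, the one-shot object $\textsc{Route}(G,T,C+u,F+e_u)$; Lemma~\ref{lemma:AddTerminalPreserveCongestion} (via the observation of~\cite{GhaffariKKLP18} that enlarging the terminal set cannot increase congestion) gives $\textsc{Route}(G,T,C+u,F+e_u)\preceq_\alpha G$ with the \emph{same} $\alpha$, so the only $O(1)$ loss in the whole sequence comes from the $2$-cut sparsifier on the core. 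Your path-splitting-plus-reattachment variant also destroys the property, used for \textsc{Delete}, that once both endpoints of $e$ are in the core the canonical embedding routes $e$ only through its own core copy (Lemma~\ref{lemma:CoreEdgeUpdatePreserveCongestion}), which is why the paper needs no rerouting at all on deletion, whereas your ``re-route the little flow that used $e$'' step is both unnecessary in the paper's scheme and unjustified in yours.

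Second, you explicitly defer the $O(\tfrac{mn}{j})$ recourse bound, and that bound is the crux of the running time. It does not follow from ``each envelope vertex is consumed by at most one promotion''; it follows from a property engineered into the modified static construction (Algorithm~\ref{algo:JTreeInitializeTCF}): $O(j)$ extra terminals are inserted so that every tree-to-terminal walk $T[v,T_C(v,w)]$ stored for an off-tree edge has length $O(n/j)$. There are $O(m)$ such walks, each maintained as a doubly linked list, and a promotion can only \emph{shorten} a walk (shortcut it at the new terminal), so the total shortcutting work over all $\le j$ operations is bounded by the total initial walk length $O(\tfrac{mn}{j})$; the tree-edge relabeling is charged by the smaller-half rule for another $O(n\log n)$. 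Each resulting core edge change then costs $O(\log^6 n)$ in the cut sparsifier, giving the claimed amortized bounds. Without the $O(n/j)$ per-walk length guarantee --- which is a property of the preprocessing, not of the update procedure --- your charging argument has no handle on how much certificate mass a single promotion can touch.
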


\subsection{Tree-terminal path and edge moving}
To prove Lemma~\ref{lemma:dynamic_JTree}, we have to open the black box of $j$-tree construction.
It creates a graph by first select a proper spanning tree/forest and then route off-tree edges by tree paths and set of edges restricted on a small subset of vertices.
To well-understand and formalize the construction, we introduce some notations about spanning forests and trees.

\begin{definition}
\label{definition:SkeletonTree}
    Given a forest $F$ and a subset of vertices $C \subseteq V(F)$.
    Add at most $|C|$ vertices to $C$ so that every pairwise lowest common ancestor is in $C$.
    Then iteratively remove vertices from $V(F) \setminus C$ of degree 1 until no such vertices remain.
    For each path with endpoints in $C$ and no internal vertices in $C$, replace the whole path with a single edge.
    We define the resulting forest as \emph{Skeleton Tree} of $F$ with respect to $C$, denoted by $S(F, C)$.
\end{definition}

\begin{definition}
\label{definition:TreePath}
    Given a forest $F$, define $F[u, v]$ as the unique $uv$-path in $F$ if they are connected.
    Given any edge $e=uv$, we use $F_e$ to denote the path $F[u,v]$.
\end{definition}

\begin{definition}
\label{definition:TreePartition}
    Given a forest $T$ and a subset of vertices $C$, a subset of edges $F \subseteq T$ is a \emph{tree partition} of $T$ with respect to $C$ if every component of $T \setminus F$ has exactly one vertex in $C$.
    For every vertex $u$ in $T$, we define $u$'s \emph{representative} with respect to a \emph{tree partition} $F$ and $C$ as the only vertex of $C$ in the component containing $u$ in $T \setminus F$.
    Denoted as $T_{C, F}(u)$.
    
    For any edge $e=uv \in T$, we define $e$'s \emph{tree-representative moving} as
    \begin{align*}
        \mathrm{Repr}_{T,C,F}(e=uv) \coloneqq
        \begin{cases}
            e, &\text{for } e \in T \setminus F \\
            T_{C,F}(u)T_{C,F}(v), &\text{for } e \in F
        \end{cases}.
    \end{align*}
    Use this $\mathrm{Repr}_{T,C,F}$, we define $e$'s \emph{tree-representative path} as
        \begin{align*}
            Q_{T,C,F}(e=uv) \coloneqq
            \begin{cases}
                e, &\text{for } e \in T \setminus F \\
                T[u, T_{C,F}(u)] + \mathrm{Repr}_{T,C,F}(e) + T[T_{C,F}(v), v], &\text{for } e \in F
            \end{cases}.
    \end{align*}
\end{definition}

Here we introduce notations from \cite{KPSW19}, which defines the so-called tree-portal paths.
Portals are terminals in our terminology.

\begin{definition}
\label{definition:TreeTerminalEdgeMoving}
    Given a graph $G=(V,E)$, a spanning forest $T$ and a subset of vertices $C$ (terminals).
    For any 2 vertices $u, v \in V$, define $T_C(u, v)$ to be the vertex in $C$ closest to $u$ in $T[u, v]$.
    If no such vertex exists, $T_C(u, v) \coloneqq \perp$.

    For any edge $e=uv \in E \setminus T$, first, we can orient arbitrarily.
    Then $e$'s \emph{tree-terminal edge moving} can be defined as \begin{align*}
        \mathrm{Move}_{T, C}(e=uv) \coloneqq
        \begin{cases}
            vv, &\text{for } T_C(u,v) = \perp \\
            T_C(u,v)T_C(v,u), &\text{otherwise} \\
        \end{cases}.
    \end{align*}
    Use this $\mathrm{Move}_{T, C}$, we can define $e$'s \emph{tree-terminal path} as
    \begin{align*}
        P_{T, C}(e=uv) \coloneqq
        \begin{cases}
            T[u, v] + \mathrm{Move}_{T, C}(e), &\text{for } T_C(u,v) = \perp \\
            T[u, T_C(u, v)] + \mathrm{Move}_{T, C}(e) + T[T_C(v, u), v], &\text{otherwise} \\
        \end{cases}.
    \end{align*}
\end{definition}

\subsection{Initializing a $J$-Tree}
In this subsection, we review the static construction of a $j$-tree.
It is summarized in Algorithm~\ref{algo:JTreeInitialize}.
For the dynamical purpose, we slightly modify the construction from ~\cite{Madry10}.

Briefly, the procedure first computes (1) $T$, some spanning tree of $G$, (2) $C$, an $O(j)$-sized subset of vertices, and (3) $F$, a subset of edges in $T$.
Then a $O(j)$-tree, $H$, is constructed by moving endpoints into $C$ for edges not in forest $T \setminus F$.
The construction is summarized in Algorithm~\ref{algo:RouteWithTCF}.
Hence, it is easy to see that the core graph of $H$ is $H[C]$, the subgraph induced by $C$.
Such moving is defined using either $\mathrm{Repr}_{T,C,F}(e)$ for $F$ or $\mathrm{Move}_{T, C}(e)$ otherwise.

Such edge moving corresponds to an embedding of $G$ into $H$.
Each edge of $G$ is routed in $H$ using either one tree path or 2 tree paths concatenated by an edge in the core.

$T$, $C$ and $F$ are computed by Algorithm~\ref{algo:JTreeInitializeTCF}.
First, we try to embed $G$ into some spanning tree $T$ of $G$ by routing each edge $e$ of $G$ using the unique tree path $T_e$.
Heuristically, to minimize the congestion incurred in each tree edge, a low stretch spanning tree (LSST)~\cite{ABN08} is used as $T$.
LSST guarantees low "total" congestion on tree edges.

But to ensure low congestion on "every edge", we remove tree edges with the highest congestion (relative to its capacity) and route impacted edges alternatively.
The removed tree edges are collected as set $F$ and endpoints of them are collected as set $C$.
Ideally, we move every edge not in $T$ using $\mathrm{Move}_{T, C}(e)$.
And for data structural purposes, we add $O(j)$ more vertices to $C$ to make sure we route each edge using a short tree path, i.e., of size $O(n/j)$.
As discussed in \cite{Madry10}, such edge moving does not guarantee a $j$-tree.

Identical to \cite{Madry10}, we add vertices in $S(T, C)$, skeleton tree of $C$, to $C$.
And add $O(|C|)$ more edges to $F$ so that $F$ is a tree partition of $T$ with respect to the new $C$.

The main difference from \cite{Madry10} is that we add more terminal vertices ($C$) and route off-tree edges after we determine $C$.
An argument from ~\cite{GhaffariKKLP18} shows that the more terminal we add, the better the congestion approximation.

\begin{algorithm2e}
    \caption{$\textsc{ComputeTCF}(G=(V, E, c), j, l: E \to \Real_{\ge 0})$}
    \label{algo:JTreeInitializeTCF}
    Compute a spanning tree $T$ of $G$ with average stretch $\Otil(\log{n})$ with respect to $l$ in $\Tilde{O}(m)$-time by \cite{ABN08}. \\

    For each edge $e=uv \in E$, let $f^e$ be the $uv$-flow in $T$ that routes $c(e)$ units of flow along $T_e$.
    Let $\mathbf{f}$ be the collection of $f^e$ for every $e \in E$.
    $\mathbf{f}$ is therefore a embedding of $G$ into $T$.
    $|\mathbf{f}| \in \Real^{E(T)}$, the vector of amount of flow crossing each edge of $T$, can be computed in $\Tilde{O}(m)$-time. \\
    
    For $e \in T$, define its relative loading, $rload(e) \coloneqq |\mathbf{f}(e)|/c(e)$. \\
    
    Decompose $E(T)$ into $O(\log{n})$ subsets $F_i, i \in \{1, \ldots, \lceil\log{\norm{f}_\infty} + 1 \rceil\}$, for $e \in F_i$ if $rload(e) \in (R/2^i, R/2^{i-1}]$ where $R = \max_{e \in T}{rload(e)}$. \\
    
    Let $i_0$ be the minimal index such that $|F_{i_0}| = \Omega(j/\log{n})$.
    Define $F = \bigcup_{i=1}^{i_0}{F_i}$.
    Note that $|F| \le j$ and contains edges with the largest relative load. \\
    
    Define $C$ consists of terminal vertices and all endpoints of edges of $F$. \\
    
    Add $O(j)$ more vertices to $C$ such that every path of length $O(n/j)$ on $T$ contains at least one vertex in $C$.\\
    
    Add vertices appeared in $S(T, C)$ to $C$ as well. \\
    
    For every adjacent $uv \in E(S(T, C))$, add the edge $e \in T[u, v]$ with largest $rload(e)$ to $F$.
    Note that endpoints of such an edge are not added to $C$. \\
    
    By the construction of $F$, we know every tree in $T \setminus F$ contains exactly one vertex in $C$. \\
    
    \Return $(T, C, F)$\\
\end{algorithm2e}

\begin{algorithm2e}
\label{algo:RouteWithTCF}
\caption{$\textsc{Route}(G=(V, E, c), T, C, F)$}
    $E_{\mathrm{tree}} \coloneqq \{\mathrm{Repr}_{T, C, F}(e) \mid e \in T\}.$ \\
    $E_{\mathrm{off-tree}} \coloneqq \{\mathrm{Move}_{T, C}(e) \mid e \not\in T\}.$ \\
    Define an embedding $\mathbf{f}$ of $G$ into $H$ as follows.\\
    For every $e \in T$, $f^e$ routes $c(e)$ units through path $Q_{T,C,F}(e)$.\\
    For every $e \not\in T$, $f^e$ routes $c(e)$ units through path $P_{T,C}(e)$.\\
    Define $c_H(e) \coloneqq |\mathbf{f}(e)|$, the amount of flow crossing $e$ in the embedding.\\
    $H \coloneqq (V, E_H \coloneqq E_{\mathrm{tree}} \cup E_{\mathrm{off-tree}}, c_H).$ \\
    Remove self-loops from $H$. \\
    \Return $H$. \\
    \tcp{$H$ is a $|C|$-tree with core $C(H) = H[C]$, the subgraph induced by subset of vertices $C$.}
    \tcp{The embedding $\mathbf{f}$ of $G$ into $H$ is referred as the \emph{canonical $j$-tree embedding}.}
\end{algorithm2e}

\begin{algorithm2e}
\label{algo:JTreeInitialize}
\caption{$\textsc{JTree}(G=(V, E, c), j, l: E \to \Real_{\ge 0})$}
    $(T, C, F) \coloneqq \textsc{ComputeTCF}(G, j, l)$. \\
    $H \coloneqq \textsc{Route}(G, T, C, F)$. \\
    \Return $H$. \\
\end{algorithm2e}

\subsection{Data Structure for dynamical maintenance}

Here we present the data structure for maintaining a $j$-tree, i.e., proving Lemma~\ref{lemma:dynamic_JTree}.

\subsubsection{Structural arguments for $j$-tree maintenance}

To prove Lemma~\ref{lemma:dynamic_JTree}, one has to make sure adding terminals does not increase the congestion.
The argument is formalized as the following lemma:
\begin{lemma}
\label{lemma:AddTerminalPreserveCongestion}
    Given a graph $G=(V,E,c)$, a spanning forest $T$ of $G$, a subset of vertices $C$ and $F$, a tree partition of $T$ with respect to $C$.
    For any vertex $u \in V \setminus C$, there is an edge $e_u \in T \setminus F$ such that $F + e_u$ is tree partition of $T$ with respect to $C + u$ (every component of $T \setminus (F + e_u)$ has exactly one vertex in $C$).
    
    Furthermore, let $H \coloneqq \textsc{Route}(G, T, C, F)$.
    If $H \preceq_\alpha G$, then the graph $\overline{H} \coloneqq \textsc{Route}(G,T,C + u,F + e_u) \preceq_\alpha G$.
\end{lemma}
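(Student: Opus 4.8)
The plan is to prove the two assertions of Lemma~\ref{lemma:AddTerminalPreserveCongestion} in order: first the combinatorial claim that a single tree edge $e_u$ suffices to extend the tree partition to $C+u$, and then the quantitative claim that re-routing through $C+u$ does not increase congestion beyond $\alpha$. For the first part, consider the component $K$ of $T\setminus F$ that contains $u$; by hypothesis $K$ contains exactly one vertex $w$ of $C$. Since $u\notin C$, $u\neq w$, so the (unique) $T$-path from $u$ to $w$ lies entirely inside $K$ and is nonempty; let $e_u$ be \emph{any} edge on this path, say the one incident to $u$ — more carefully, to guarantee each of the two new pieces gets exactly one terminal, one should route $e_u$ as the first edge on $T[u,w]$ \emph{incident to $u$} only if $u$ will itself become the terminal of its piece, which it does once $u\in C+u$. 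Removing $e_u$ splits $K$ into two subtrees, one containing $u$ (and no vertex of $C$, since $C\cap K=\{w\}$ and $w$ is separated off) and one containing $w$; every other component of $T\setminus(F+e_u)$ is unchanged and still has exactly one vertex of $C\subseteq C+u$. Hence $F+e_u$ is a tree partition of $T$ with respect to $C+u$. This step is essentially bookkeeping and I expect it to be routine.

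For the second part, the key observation is to compare, edge by edge, the canonical embeddings $\mathbf{f}=\textsc{Route}(G,T,C,F)$ and $\overline{\mathbf{f}}=\textsc{Route}(G,T,C+u,F+e_u)$ using the path definitions $Q_{T,C,F}$, $P_{T,C}$ from Definitions~\ref{definition:TreePartition} and \ref{definition:TreeTerminalEdgeMoving}. The plan is to show that for every original edge $e\in E$, its new routing path in $\overline{H}$ uses a set of $T$-edges and core-edges that is, in a congestion-charging sense, dominated by its old routing path: adding $u$ to $C$ can only \emph{shorten} the tree-path segments $T[u,T_C(u,v)]$ (because a closer terminal may appear), and adding $e_u$ to $F$ only replaces one more tree edge by its representative edge in the (enlarged) core. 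Concretely I would argue that the load this re-routing places on each surviving tree edge $e'\in T\setminus(F+e_u)$ is at most the load $\mathbf f$ placed on $e'$ (fewer or equal routing paths traverse it), and the load placed on each core edge of $\overline H$ is the sum of loads of the original tree edges it now represents, which is exactly how $c_{\overline H}$ was \emph{defined} in \textsc{Route}; so $\overline H\preceq_\alpha G$ follows from $H\preceq_\alpha G$ by transporting the embedding of $G$ into $H$ through the natural map sending $H$-edges to $\overline H$-edges. Equivalently, one shows $G\preceq_1\overline H$ by the canonical embedding (immediate from the construction, as in \cite{Madry10}), and $\overline H\preceq_\alpha G$ by composing $\overline H\preceq_1 H$ — because every edge of $\overline H$ is obtained from edges of $H$ by contracting/rerouting along $T$-paths with capacities set to the realized flow — with the assumed $H\preceq_\alpha G$.

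The main obstacle I anticipate is the case analysis for an edge $e=vv'\in E$ whose two tree-terminal projections $T_{C}(v,v')$ and $T_{C}(v',v)$ change when $u$ is inserted: a new closest terminal $u$ on the path $T[v,v']$ can split one old tree-terminal path into two shorter ones joined by a core edge incident to $u$, and one must check that the total congestion contribution is non-increasing on every edge, including the newly created core edges incident to $u$, whose capacity in $\overline H$ absorbs exactly the displaced load. Handling $e_u\in F+e_u$ consistently with edges that were already in $F$ (so that $\mathrm{Repr}_{T,C+u,F+e_u}$ agrees with $\mathrm{Repr}_{T,C,F}$ off the component $K$) requires care but is, I believe, the only genuinely technical point. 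Once that local domination is established, $\overline H\preceq_\alpha G$ is immediate, completing the proof.
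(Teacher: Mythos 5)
Your argument is correct and matches the paper's: for the first claim the paper likewise takes $e_u$ on the tree path $T[u,x]$ inside $u$'s component, where $x=T_{C,F}(u)$ (any edge on that path works, so your hesitation about which one to choose is unnecessary), and removing it leaves $u$ and $x$ in different pieces, each with exactly one vertex of $C+u$. For the congestion claim the paper simply cites the observation of Ghaffari et al.\ that adding vertices to $C$ does not increase congestion; your sketch of why --- new tree-terminal paths are subpaths of the old ones, the displaced load is absorbed by the new core edges, hence $\overline{H}\preceq_1 H$ and so $\overline{H}\preceq_\alpha G$ by composing embeddings --- is the right justification, though note that the first half of your second paragraph reasons about the canonical embedding of $G$ into $\overline{H}$, which only yields $G\preceq\overline{H}$; it is your final composition $\overline{H}\preceq_1 H\preceq_\alpha G$ that actually proves the stated direction.
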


\begin{proof}
    Let $x = T_{C,F}(u)$, the only vertex in $C$ in $u$'s component in $T \setminus F$.
    Since edges in $T \setminus F$ appear in both $G$ and $H$, let $e_u$ be the edge with minimum $c_H(e)$ in $T[u, x]$.
    
    Clearly, $F + e_u$ is tree partition of $T$ with respect to $C + u$.
    Since we delete one edge in $u$'s component which is a tree, it is split into 2 components such that $x$ and $u$ are in different components.
    
    Observation from \cite{GhaffariKKLP18} that adding more vertices into $C$ does not increase congestion immediately gives us that, $\overline{H} \coloneqq \textsc{Route}(G,T,C + u,F + e_u) \preceq_\alpha G$.
\end{proof}
To maintain $O(j)$-tree $H$ under dynamic edge updates in $G$, we first add both endpoints of the updating edge to the terminal and then perform the edge update in $C(H)$, core of $H$, directly.
One has to make sure such behavior does not increase the congestion when routing $G$ in $H$.
The following lemma gives such promise:
\begin{lemma}
\label{lemma:CoreEdgeUpdatePreserveCongestion}
    Given a graph $G=(V,E,c)$, a spanning forest $T$ of $G$, a subset of vertices $C$ and $F$, a tree partition of $T$ with respect to $C$.
    Let $H \coloneqq \textsc{Route}(G, T, C, F)$, and $e=uv$ be any edge with $u,v \in C$ ($e$ might not be in $G$) with capacity $c_e$.
    First note that $G[C] \subseteq H[C]$.
    
    If $H \preceq_\alpha G$ holds via the \emph{canonical $j$-tree embedding}, then both $(H+e) \preceq_\alpha (G+e)$ and $(H-e) \preceq_\alpha (G-e)$ holds.
\end{lemma}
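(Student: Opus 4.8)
The plan is to obtain both embeddings by a purely local surgery on the canonical $j$-tree embedding of $H$ into $G$ (the one that routes each edge of $H$ back along its defining tree path in $G$, assumed here to have congestion $\alpha$), together with the dual embedding of $G$ into $H$ from \textsc{Route}. Everything rests on one structural observation: since both endpoints of $e=uv$ lie in the core $C$, the edge $e$ is handled \emph{trivially} by every routing map. Indeed $u\in C$ forces $T_C(u,\cdot)=u$ and $T_{C,F}(u)=u$, and likewise for $v$, so $\mathrm{Move}_{T,C}(e)=uv$ and $\mathrm{Repr}_{T,C,F}(e)=uv$, and the path routing $e$ (namely $P_{T,C}(e)$ if $e\notin T$, or $Q_{T,C,F}(e)$ if $e\in T$) is just the single edge $e$. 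Thus $e$ of $H$ is routed to $e$ of $G$ with congestion exactly $1$ (and $G[C]\subseteq H[C]$ for the same reason), so $e$ can be treated as a genuine parallel edge present in both graphs with capacity $c_e$.

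The key claim is that in the canonical embedding the edge $e$ of $G$ carries flow only for the edge $e$ of $H$ (and, dually, the edge $e$ of $H$ receives flow only from the edge $e$ of $G$). The only routes that could traverse $e$ are (i) the route of $e$ itself, already accounted for, and (ii) a tree-subpath of the form $T[x,T_C(x,y)]$ or $T[T_{C,F}(x),x]$ occurring in the route of some \emph{other} edge. Possibility (ii) is vacuous when $e\notin T$. When $e\in T$, the edge $e$ joins two distinct core vertices, so it cannot lie on a tree-subpath whose interior avoids $C$ (the defining property of the $T[\cdot,T_C(\cdot,\cdot)]$ pieces), and it cannot lie inside a single component of $T\setminus F$, since each such component contains a unique core vertex because $F$ is a tree partition with respect to $C$. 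Hence no other edge's route uses $e$; this is exactly the step where the hypothesis $u,v\in C$ is essential.

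Granting the claim, both inclusions are one-liners. For $(H+e)\preceq_\alpha(G+e)$: keep the canonical embedding for all old edges of $H$ and route the newly inserted copy of $e$ (capacity $c_e$) along the newly inserted parallel copy of $e$ in $G+e$; this puts congestion $1\le\alpha$ on the new $G$-edge and nothing elsewhere, so the overall congestion stays $\le\alpha$. For $(H-e)\preceq_\alpha(G-e)$: restrict the canonical embedding to $H-e$; by the claim the deleted $G$-edge $e$ carried flow only for the deleted $H$-edge $e$, so the surviving flows already live in $G-e$, and the congestion on every surviving $G$-edge is unchanged, hence still $\le\alpha$. Running the identical two arguments with the canonical embedding of $G$ into $H$ in place of the one of $H$ into $G$ yields the symmetric statements with the roles of $G$ and $H$ exchanged, which is the form actually consumed by Lemma~\ref{lemma:dynamic_JTree} (together with Lemma~\ref{lemma:AddTerminalPreserveCongestion} for promoting the endpoints to the core first).

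The main obstacle is the structural claim of the second paragraph; the rest is bookkeeping. Writing it out carefully requires separately inspecting the cases $e\in T\setminus F$, $e\in F$, and $e\notin T$, but in every case $\mathrm{Move}_{T,C}$ and $\mathrm{Repr}_{T,C,F}$ collapse $e$ onto itself because its endpoints are already terminals, and the ``one core vertex per component of $T\setminus F$'' invariant rules out any interference from the tree-path parts of other edges' routes.
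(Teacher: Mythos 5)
Your proposal is correct and follows essentially the same route as the paper: both arguments perform a local surgery on the canonical embedding, using the fact that an edge with both endpoints in $C$ is routed by (and, under the multi-edge semantics of \textsc{Route}, only receives flow from) its own copy, so insertion adds a congestion-$1$ parallel route and deletion removes a route without touching any other edge's load. Your second paragraph merely makes explicit the converse containment (no other edge's tree-subpath or core image can traverse $e$, since each such subpath meets $C$ in at most one vertex) that the paper's proof of the deletion case leaves implicit.
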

\begin{proof}
    Let $\mathbf{f}$ be the \emph{canonical $j$-tree embedding} of $G$ into $H$.
    Let $\mathbf{f}^+$ be an embedding of $G+e$ into $H+e$ defined by routing $e \in (G+e)$ using $e \in (H+e)$ and routing other edges using the one defined in $\mathbf{f}$.
    To bound congestion incurred in $H+e$ using $\mathbf{f}^+$, observe that $|\mathbf{f}^+(e_H)| = |\mathbf{f}(e_H)|$ for any $e_H \in H$ and $|\mathbf{f}^+(e)| = c_e$.
    The observation comes from the fact routing $e \in (G+e)$ only affects $e \in (H+e)$.
    Thus, $(H+e) \preceq_\alpha (G+e)$ holds.
    If $e \in G$, observe that $\mathbf{f}$ routes $e$ using only its counterpart in $H$.
    It is because $\mathbf{f}$ routes $e$ using the path $P_{T,C,F}(e)$ containing only $e \in H$.
    Thus, define $\mathbf{f}^-$, an embedding of $G-e$ into $H-e$, by routing any other edge than $e$ via $\mathbf{f}$.
    For any edge $e_H \in H-e$, we have $|\mathbf{f}^-(e_H)| = |\mathbf{f}(e_H)|$.
    Therefore, no edge has congestion increased and $(H-e) \preceq_\alpha (G-e)$ holds.
\end{proof}

By the above 2 lemmas, we can guarantee low congestion if we maintain the $j$-tree correctly.

We need the following dynamic tree data structure.

\begin{lemma}
\label{lemma:PathQueryLinkCutTree}
    Given a rooted forest $T$ edge weight $w:E(T) \to \Real$, there is a deterministic data structure $D(T)$ supports following operations in $O(\log{n})$ amortized time.
    \begin{enumerate}
        \item $\mathtt{root}(u)$: Return the root of the tree containing $u$.
        \item $\mathtt{makeRoot}(u)$: Make $u$ as the root of the tree containing it.
        \item $\mathtt{pathMax}(u, v)$: Return the edge with maximum weight in the unique $uv$ path. Or $-\infty$ if $u, v$ are not connected in $T$.
        \item $\mathtt{cut}(e)$: Remove the edge $e$ from $T$.
        \item $\mathtt{link}(u, v, c)$: Add a new edge $e=uv$ with weight $c$. It is guaranteed that no cycle is formed after adding this new edge.
    \end{enumerate}
\end{lemma}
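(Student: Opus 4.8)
The plan is to realize $D(T)$ as the classical link-cut tree (ST-tree) implemented with splay trees, augmented so that each represented path also exposes its maximum edge weight. Recall that such a structure maintains a decomposition of $T$ into vertex-disjoint \emph{preferred paths}, each stored as a splay tree keyed by depth in $T$, with the splay-tree roots joined by \emph{path-parent} pointers; the operation $\mathtt{access}(u)$ splays along this virtual forest to make the root-to-$u$ path preferred, in $O(\log n)$ amortized time by the standard access lemma. I will store the weight $w(e)$ of a tree edge $e=(x,y)$ at whichever endpoint is currently deeper in $T$, augment every splay node $v$ with the maximum $w$-value in its splay subtree together with a pointer to a node attaining it, and—to support $\mathtt{makeRoot}$—equip every splay node with a lazy ``reverse'' flag exactly as in the standard splay/treap implementation of sequence reversal. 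Reversal swaps left/right children and leaves the max-aggregate unchanged, so it is compatible with the augmentation.

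Each operation then reduces to $O(1)$ calls to $\mathtt{access}$ plus $O(1)$ extra splays and pointer updates. For $\mathtt{root}(u)$ I access $u$ and splay the minimum-depth (left-most) node of the resulting splay tree, which is the tree root. For $\mathtt{makeRoot}(u)$ I access $u$ and toggle the reverse flag at the splay-tree root, flipping the depth order along the whole accessed path so that $u$ becomes the root. For $\mathtt{pathMax}(u,v)$ I first test whether $\mathtt{root}(u)=\mathtt{root}(v)$, returning $-\infty$ otherwise, then perform $\mathtt{makeRoot}(u)$ followed by $\mathtt{access}(v)$, so that the path between $u$ and $v$ in $T$ is precisely the current splay tree; the answer is its max-aggregate, and the witnessing edge is read off from the stored pointer. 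For $\mathtt{cut}(e)$ with $e=(x,y)$ I do $\mathtt{makeRoot}(x)$, $\mathtt{access}(y)$, splay $y$ (now the deeper endpoint of $e$, with $x$ its parent in $T$), and detach $y$'s left subtree, clearing its parent pointer. For $\mathtt{link}(u,v,c)$ I do $\mathtt{makeRoot}(u)$ so that $u$ is a tree root, record $w$-value $c$ at $u$, and, after $\mathtt{access}(v)$, install a path-parent pointer from $u$'s splay tree to $v$; by hypothesis this creates no cycle.

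For the running time, the amortized cost of every operation is dominated by its $O(1)$ invocations of $\mathtt{access}$, each of which is $O(\log n)$ amortized by the potential-function analysis of splay-based link-cut trees; the remaining work is $O(1)$ splay operations and pointer manipulations. Hence each of the five operations runs in $O(\log n)$ amortized time, and since splay trees and all the above manipulations are deterministic, $D(T)$ is deterministic, as claimed.

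The step I expect to require the most care is reconciling the lazy reverse flags used by $\mathtt{makeRoot}$ with the ``weight stored at the deeper endpoint'' convention: a $\mathtt{makeRoot}$ flips the orientation of every edge on the accessed path, so the identity of the deeper endpoint of an edge is not fixed over time. This will be handled by pushing reverse flags down lazily at every splay step, so that any node actually touched has a locally consistent view, and by having $\mathtt{pathMax}$ report the witnessing edge as the unordered pair consisting of the witness node and its current tree-parent, which is invariant under reversal. A secondary, routine point is that $\mathtt{cut}$ receives an edge rather than a vertex; applying $\mathtt{makeRoot}$ to one endpoint canonicalizes which endpoint plays the ``deeper'' role and reduces it to the familiar vertex-cut operation. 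All remaining details are a direct instantiation of the textbook construction.
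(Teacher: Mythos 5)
Your construction is correct, and it is the standard one: the paper itself offers no proof of this lemma, treating it as the classical Sleator--Tarjan link-cut tree (ST-tree) result, so your splay-tree realization with max-aggregates and lazy reversal flags for $\mathtt{makeRoot}$ is exactly the intended instantiation. The one point where your write-up is slightly optimistic is the interaction you yourself flag: storing $w(e)$ ``at the deeper endpoint'' means each node records the weight of the edge to its represented-tree parent, and an evert shifts that association by one position along the whole accessed path, which cannot be repaired merely by pushing reverse flags down (the flags reorder the sequence but do not re-associate weights with nodes); likewise, reporting the witness edge as ``witness node plus its current tree-parent'' can name the wrong edge after a reversal. The clean standard fix is to place explicit edge nodes between consecutive vertex nodes in each splay tree (so the preferred-path sequence alternates vertex, edge, vertex, \dots); reversal then permutes the sequence without disturbing which weight belongs to which edge, the max-aggregate ranges only over edge nodes, and the witness is the edge node itself. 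With that adjustment every operation is $O(1)$ accesses plus $O(1)$ pointer work, and the $O(\log n)$ amortized bound follows from the usual access lemma, as you state.
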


\begin{lemma}
\label{lemma:DynamicSkeletonTree}
    Given a rooted forest $T$ and a subset of vertices $C \subseteq V(T)$, there is a deterministic data structure $S(T)$ maintaining $S(T, C)$, the skeleton tree of $T$ with respect to $C$, under following operations in $O(\log{n})$ amortized time.
    \begin{enumerate}
        \item $\mathtt{AddC}(u)$:
            Return the set $V(S(F, C \cup \{u\})) \setminus V(S(F, C))$, which has size at most 2.
            Then add $u$ to $C$.
        \item $\mathtt{Neighbor}(u \in C)$:
            Return the neighboring vertices of $u$ in $S(F, C)$.
    \end{enumerate}
\end{lemma}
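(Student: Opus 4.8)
The plan is to exploit the fact that the forest $T$ is \emph{static} and only the terminal set $C$ grows, so essentially all the heavy lifting can be pushed into an $O(n)$‑time preprocessing of $T$. First I would root each tree of $T$ arbitrarily and build on top of $T$: (i) an Euler tour recording in/out times, so that ancestor and subtree‑membership tests take $O(1)$; (ii) a constant‑time LCA structure; and (iii) a constant‑time level‑ancestor structure, used to retrieve, for an ancestor $a$ of a vertex $x$, the child of $a$ on the path from $a$ to $x$. Alongside this I would maintain the current terminal set $C$ in a balanced binary search tree keyed by Euler in‑time (supporting predecessor/successor and range‑emptiness queries in $O(\log n)$), and maintain the skeleton tree $S(T,C)$ itself \emph{explicitly}: by Definition~\ref{definition:SkeletonTree} its vertex set is $C$ together with the pairwise LCAs of $C$ (the branching vertices), so it has $O(|C|)$ vertices and edges; each of its edges $(x,y)$ is tagged with its two $T$‑endpoints, and adjacency is stored so that $\mathtt{Neighbor}(u)$ simply reads off the list in time $O(\log n + \deg_{S(T,C)}(u))$.

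The core of the argument is the implementation of $\mathtt{AddC}(u)$. I would first locate the \emph{attachment point} $a$ of $u$ to the current skeleton: letting $w^-,w^+$ be the DFS‑order predecessor and successor of $u$ among the vertices of $C$ lying in the same tree of $T$ as $u$, I claim $a$ is the deeper of $\mathrm{lca}(u,w^-)$ and $\mathrm{lca}(u,w^+)$ (and $u$ becomes an isolated skeleton vertex if no such $w^\pm$ exists). This rests on the standard fact that $\max_{w\in C}\mathrm{depth}(\mathrm{lca}(u,w))$ is attained at the DFS‑neighbours of $u$, which I would prove from the contiguity of subtrees in the Euler tour. Given $a$, a couple of level‑ancestor queries give the children $c^-,c^+$ of $a$ towards $w^-$ and $u$, and two range‑emptiness queries on the BST of $C$ decide whether $a$ is \emph{already} a skeleton vertex (equivalently, whether $a\in C$, or $C$ meets at least two child‑subtrees of $a$). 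If $a$ is already present, I add the pendant skeleton edge $(a,u)$; otherwise $a$ lies strictly inside a unique existing skeleton edge $(x,d)$, where $d$ is the topmost skeleton vertex in the child‑subtree of $a$ containing $w^-$ (found via one LCA of the first and last $C$‑vertex of that subtree, both obtained by BST range queries) and $x$ is the skeleton‑parent of $d$, and I replace $(x,d)$ by $(x,a),(a,d)$ and add $(a,u)$. In both cases the only new skeleton vertices form a subset of $\{a,u\}$, matching the claimed size bound, and the returned set is exactly $V(S(T,C\cup\{u\}))\setminus V(S(T,C))$. All the queries plus the $O(1)$ structural edits cost $O(\log n)$, so we in fact obtain worst‑case (not merely amortized) $O(\log n)$ per operation.

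The main obstacle I anticipate is not the running time but making the case analysis and its correctness airtight: one must verify that $a$, defined via the DFS‑neighbours, is genuinely the first point at which the $T$‑path from $u$ meets $S(T,C)$; that when $a$ is new it really lies in the interior of the edge $(x,d)$ identified above — in particular that $x\prec a\prec d$ in the ancestor order, using that $x$ cannot lie below $a$ because $d$ is the topmost skeleton vertex of that child‑subtree and hence $x$ is not in it; and that all degenerate configurations — $u$ in a previously untouched tree, $u$ already a branching vertex, $w^-$ or $w^+$ an ancestor or descendant of $u$, $a$ equal to $w^\pm$ — are handled so that no cascading changes beyond $\{a,u\}$ occur and the maintained object is, after the update, exactly $S(T,C\cup\{u\})$ in the sense of Definition~\ref{definition:SkeletonTree}. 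Once these points are pinned down, correctness of $\mathtt{Neighbor}$ is immediate from the explicit adjacency representation.
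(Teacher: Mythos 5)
The paper never actually proves this lemma: it is stated (alongside the link--cut tree lemma) as a standard data-structure tool and used as a black box, so there is no ``paper proof'' to compare against. Your construction is the standard one for incrementally maintaining the auxiliary/virtual tree of a terminal set --- Euler-tour order plus constant-time LCA and level-ancestor queries on the static forest, a balanced BST over $C$ keyed by in-time to find the DFS predecessor/successor $w^-,w^+$, and the classical fact that the deepest $\mathrm{lca}(u,w)$ over $w\in C$ is attained at $w^-$ or $w^+$. This correctly yields that the LCA-closure grows by at most $\{a,u\}$, matching the ``size at most $2$'' claim, and gives worst-case $O(\log n)$ per operation, which is stronger than the amortized bound asked for. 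So the approach is sound and in fact supplies a proof the paper omits.

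One case in your $\mathtt{AddC}$ routine is stated too strongly: when the attachment point $a$ is new, it does \emph{not} always lie in the interior of an existing skeleton edge. If $a$ is a strict ancestor of every current skeleton vertex in its component (e.g.\ $C$ currently contains a single vertex $w$ of that tree, or $a$ sits above the current topmost branching vertex), there is no edge $(x,d)$ to subdivide; instead $a$ becomes the new top of that skeleton component and you add the two edges $(a,d)$ and $(a,u)$, where $d$ is the former topmost skeleton vertex. You partially acknowledge this under ``degenerate configurations,'' but it should be listed as an explicit branch of the case analysis rather than folded into the edge-splitting case, since the test ``is there a skeleton vertex strictly above $a$'' (one more range query on the BST restricted to ancestors, or simply checking whether $\mathrm{lca}$ of $a$ with the component's recorded top equals that top) determines which update to perform. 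With that branch added, the case analysis is complete and the lemma follows.
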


\subsubsection{Proof sketch of Lemma~\ref{lemma:dynamic_JTree}} 
Intuitively, we maintain the $j$-tree $H$ by mimicking the static procedure.
To make the resulting graph sparse, a dynamic cut sparsifier is used for the core.
Worth noticing, we maintain both the whole $j$-tree $H$ and the one with sparsified core, $\Htil$.
The reason for not applying sparsifier to the whole graph is because edges not in the core form a forest.
And by Lemma~\ref{lemma:only_core_cuts_matter}, we only care cuts in the core graph.

The most important part of our data structure is to support the $\textsc{AddTerminal}$ operation.
Initially, the $j$-tree $H$ is constructed by $\textsc{Route}(G,T,C,F)$.
When adding some vertex $u$ to $C$, we have to (1) find the edge $e_u$ in $T \setminus F$ and (2) update $H$ as $\textsc{Route}(G,T,C + u,F + e_u)$.

Thus, We maintain the following data structures:
\begin{enumerate}
    \item A dynamic 2-cut sparsifier $\Ctil(H)$ from Lemma~\ref{lemma:DynamicCutSparsifier} for maintaining a sparsified core graph.
    \item A dynamic tree data structure $D(T)$ from Lemma~\ref{lemma:PathQueryLinkCutTree} for finding such $e_u$.
          $D(T)$ is also used to find $\textsc{Repr}_{T,C,F}(e_u)$, the corresponding edge of $e_u$ in the core.
    \item For every off-tree edge $e=uv$, maintain both $T[u, T_{uv}(C)]$ and $T[T_{vu}(C), v]$ walks using doubly linked list.
          Maintain $\WW$ as a collection of all such walks.
    \item For every $x \in C$, maintain a set $P(x)$ consisting of walks in $\WW$ ending up at $x$.
    \item For every vertex $u \in V$, maintain a set $\mathtt{RI}(u)$ consisting of walks in $\WW$ containing $u$.
\end{enumerate}
The last 3 data structure is for maintaining $\textsc{Move}_{T,C}(e), e \not\in T$ with $C$ increasing.

\subsubsection{Formal proof of Lemma~\ref{lemma:dynamic_JTree}} 

\begin{proof}[Proof of Lemma~\ref{lemma:dynamic_JTree}]
    We may assume the $j$-tree, $H$, is constructed using the static procedure stated previously.
    Recall that $T$ is the low-stretch spanning tree of $G$.
    $C$ is the set of terminal (vertices in the core) of $H$.
    $F$ is the set of tree edges chopped off.
    \begin{algorithm2e}
    \caption{$\textsc{Initialize}(G, H)$} \label{algo:DynamicJTreeInitialize}
        Let $T$ be the low stretch spanning tree used in constructing $H$. \\
        Initialize $D(T)$ for $T$ from Lemma~\ref{lemma:PathQueryLinkCutTree} \\
        Initialize $S(T)$ for $T$ from Lemma~\ref{lemma:DynamicSkeletonTree} \\
        Initialize a dynamic cut sparsifier from Lemma~\ref{lemma:DynamicCutSparsifier} for $C(H)$, say $\Ctil(H)$. \\
        Let $C$ be the initial terminal set, i.e., $V(C(H))$. \\
        
        $\WW \coloneqq \phi$. \\
        $\mathtt{RI} \coloneqq \phi$. \\
        $P \coloneqq \phi$. \\
        
        \For{$e=uv \in E(G) \setminus E(T)$}{ Let $w_u$ be the $T[u, T_{uv}(C)]$ walk. \\
            Add $w_u$ to $\mathtt{RI}(a)$ for every $a \in V(w_u)$. \\
            \If{$T_{uv}(C)$ exists}{
                Add $w_u$ to $P(T_{uv}(C))$. \\
            }
            Add $w_u$ to $\WW$. \\
            Same for the $w_v \coloneqq T[T_{vu}(C),v]$ walk. \\
        }
        \Return $(D(T), S(T), \Ctil(H), \WW, \mathtt{RI}, P)$
    \end{algorithm2e}
    The procedure for $\textsc{AddTerminal(u)}$ is presented in Algorithm~\ref{algo:JTreeAddTerminal}.
    When adding a vertex $u$ to the terminal set $C$, we first have to find a tree edge $e_u$ via $D(T)$.
    Then we update both $C$, the terminal set, and $F$, tree partition with respect to new $C$.
    As shown in Lemma~\ref{lemma:AddTerminalPreserveCongestion}, finding such $e_u$ reduces to a path query in a dynamic tree.
    After that, we have to update $H$ to $\textsc{Route}(G,T,C,F)$.
    $\textsc{Route}(G,T,C,F)$ maps edges in $T$ using $\textsc{Repr}_{T,C,F}$ and $\textsc{Move}_{T,C}$ otherwise.
    For edges in $T$ having different $\textsc{Repr}_{T,C,F}(e)$, they corresponds to edges incident to $u$'s component in $T \setminus (F + e_u)$.
    This step can be made efficient by only move the smaller part out and relabel if necessary.
    This ensures a $O(|T|\log{n}) = O(n\log{n})$ total time complexity.
    By amortizing them across $j$ operations, this step has amortized $O((n/j)\log{n})$-time.
    To update $\textsc{Move}_{T,C}(e)$ for edges $e=vw \not\in T$, we explicitly maintain both $T[v,T_C(v,w)]$ and $T[T_C(w,v),w]$ walks.
    Observe that $T_{C+u}(v, w)$ is either $T_C(v, w)$ or $u$ depending on whether $u \in T[v,T_C(v,w)]$.
    This observation tells us that $T[v,T_C(v,w)]$ only gets shorter with prefix unchanged.
    When adding $u$ to the terminal, we simply find all $T[v,T_C(v,w)]$ walks containing $u$ and shortcut them at $u$.
    Using doubly linked list and pointers, we can find these walks and shortcut them with $O(1)$ overhead.
    The total time complexity on maintaining $\textsc{Move}_{T,C}(e)$ can be bounded by the total length of $T[v,T_C(v,w)]$ walks.
    From the construction of $\textsc{JTree}$, we know every such $T[v,T_C(v,w)]$ has length $O(n/j)$.
    Hence the total time complexity is $O(mn/j)$, and amortized time complexity per operation is $O(mn/j^2)$.
    Note that each of these edge change incurs a $O(\log^6{n})$ for edge updates in $\Ctil(H)$, sparsifier of the core graph.
    Combining above bounds, we know $\textsc{AddTerminal(u)}$ has amortized time complexity of
    \begin{align*}
        O\left(\log^6{n}\left(\frac{mn}{j^2}+\frac{n}{j}\log{n}\right)\right) =
        O\left(\frac{mn}{j^2}\log^6{n}\right).
    \end{align*}
    
    \begin{algorithm2e}[!t]
    \caption{$\textsc{AddTerminal}(u)$}
    \label{algo:JTreeAddTerminal}
        Let $(D(T), S(T), \Ctil(H), \WW, \mathtt{RI}, P)$ be the data structures defined in Algorithm~\ref{algo:DynamicJTreeInitialize} \\
        $E^+ \coloneqq \phi, E^- \coloneqq \phi$.
        \tcp*{$E^+$ and $E^-$ are the sets of edges to be inserted or deleted in $C(H)$ respectively}
        $x \coloneqq D(T).\mathtt{root}(u)$
        \tcp*{$x$ is the vertex in $C$ in $u$'s component in $T \setminus F$.}
        $e_u \coloneqq D(T).\mathtt{pathMax}(t, u)$
        \tcp*{$e_u$ is the edge with highest $c_H(e)$ in $T[x, u]$}
        $D(T).\mathtt{cut}(e_u)$ \\
        $D(T).\mathtt{makeRoot}(u)$. \\
        \tcp{Update $C \coloneqq C + u$ and $F \coloneqq F + e_u$.} Add $(x, u, c_H(e_u))$ to $E^+$. \\
        Add edges in $T$ incident to $u$'s component in $T \setminus F$ to $E^-$. \\
        For edges visited in previous step, add them to $E^+$ by replacing 1 endpoint $x$ to $u$. \\
        \tcp{Maintain $\textsc{Move}_{T,C}(e)$ for edges $e \not\in T$.}
        \For{$W \in \WW$ such that $u \in W$}{
            Add the core edge for $W$ to $E^-$. \\
            Shortcut $W$ at $u$. \\
            Add the core edge for the shortened $W$ to $E^+$. \\
        }
        \tcp{Update edges in the core.}
        \For{$e \in E^+$}{
            $\Ctil(H).\mathtt{Insert}(e)$ \\
        }
        \For{$e \in E^-$}{
            $\Ctil(H).\mathtt{Delete}(e)$ \\
        }
        \Return $(D(T), S(T), \Ctil(H), \WW, \mathtt{RI}, P)$ \\
    \end{algorithm2e}
    
    \begin{algorithm2e}
    \caption{$\textsc{Insert}(u, v, c)$}
    \label{algo:JTreeEdgeInsert}
    $\mathrm{AddTerminal(u)}$ \\
    $\mathrm{AddTerminal(v)}$ \\
    $\Ctil(H).\mathtt{Insert}(u, v, c)$ \\
    \end{algorithm2e}
    
    \begin{algorithm2e}
    \caption{$\textsc{Delete}(e=uv)$}
    \label{algo:JTreeEdgeDelete}
    $\mathrm{AddTerminal(u)}$ \\
    $\mathrm{AddTerminal(v)}$ \\
    $\Ctil(H).\mathtt{Delete}(u, v, c(e))$ \\
    \end{algorithm2e}
    
    For $\textsc{Insert}(u, v, c)$ and $\textsc{Delete}(e)$, we first add both endpoints to terminal as presented in Algorithm~\ref{algo:JTreeEdgeInsert} and Algorithm~\ref{algo:JTreeEdgeDelete}.
    Then we directly insert/delete the interested edge from the core graph.
    The time complexity is occupied by the cost of adding terminal vertex.
    Since we correctly maintain a unsparsifier $O(j)$-tree $H$, we have $H \preceq G \preceq_\alpha H$ by Lemma~\ref{lemma:AddTerminalPreserveCongestion} and Lemma~\ref{lemma:CoreEdgeUpdatePreserveCongestion}.
    Also, by the decomposability of cut approximation and the core of $\Htil$ is a 2-cut sparsifier, we have $H \preceq G \preceq_{O(\alpha)} H$.
\end{proof}

\subsection{Put everything together}

\begin{proof}[Proof of Theorem~\ref{theorem:DynamicMinCut}]
    Let $j$ be some parameter determined later and $k = \Theta(\frac{m\log{U}\log^2{n}}{j})$.
    \begin{algorithm2e}[h]
    \caption{$\textsc{Initialize}(G)$}
    \label{algo:DynamicMinCut}
        $n \coloneqq |V(G)|$, $m \coloneqq |E(G)|$, $j \coloneqq m^{2/3}$, $k \coloneqq \Theta(\frac{m\log{U}\log^2{n}}{j})$, $t \coloneqq \Theta(\log{n})$. \\
        Let $\mathbf{G} = \{G_1, \ldots, G_k\}$ be a $(k, \Tilde{O}(\log{n}), \Theta(j))$-decomposition of $G$ by Lemma~\ref{lemma:static_jtree_decomposition}. \\
        Sample $t$ graphs with repetition from $\mathbf{G}$, say, $G_1, \ldots, G_t$. \\
        \For{$i = 1, \ldots, t$}{
            $D_i \coloneqq \mathrm{Initialize}(G, G_i)$ by Lemma~\ref{lemma:dynamic_JTree}. \\
        }
        \Return $\mathbf{D} \coloneqq \{D_1, \ldots, D_t\}$ \\
    \end{algorithm2e}
    Initialization of the data structure is summarized as Algorithm~\ref{algo:DynamicMinCut}.
    First apply Lemma~\ref{lemma:static_jtree_decomposition} to acquire a $(k, \Tilde{O}(\log{n}), \Theta(j))$-decomposition of $G$, say $G_1, \ldots, G_k$.
    Then we apply Lemma~\ref{lemma:jtree_preserve_all_min_st_cut} to sample $t=O(\log{n})$ of them, say $G_1, \ldots, G_t$.
    For each of $G_i$, we incur Lemma~\ref{lemma:dynamic_JTree} to build data structures for dynamical operations.
    Let $D_1, \ldots, D_t$ be the data structures for each of $G_1, \ldots, G_t$.
    2-approximated dynamic cut sparsifers from Lemma~\ref{lemma:DynamicCutSparsifier} is also built for the cores of $D_1, \ldots, D_t$.
    Note that each $D_i$ supports up to $j$ operations, we rebuild $G_1, \ldots, G_k$ and $D_1, \ldots, D_t$ every $j$ operations.
    To deal with the query $\mathrm{mincut}(s,t)$, we run the algorithm from Lemma~\ref{lemma:flow_solver} on each sparsified core of $D_1, \ldots, D_t$.
    The running time is $\Tilde{O}(t \times j) = \Tilde{O}(j)$.
    Among results, the one with the smallest cut value is returned.
    The correctness comes from Lemma~\ref{lemma:jtree_preserve_all_min_st_cut} and Lemma~\ref{lemma:only_core_cuts_matter} with high probability.
    The quality of the result is within $\Tilde{O}(\log{n})$-factor with the optimal solution.
    For edge updates, we propagate them to $D_1, \ldots, D_t$ in amortized time $O(t \cdot \frac{mn}{j^2}\log^6{n}) = O(\frac{mn}{j^2}\log^7{n})$.
    As guaranteed by Lemma~\ref{lemma:dynamic_JTree}, each operation corresponds to $O(\frac{mn}{j^2})$ changes to the core.
    Each of the edge change is handled by the cut sparsifier in $O(\log^6{n})$-time.
    So the update time is
    \begin{align*}
        O\left(\frac{mn}{j^2}\log^2{n} + t\cdot\frac{mn}{j^2}\log^6{n}\right) = O\left(\frac{mn}{j^2}\log^7{n}\right).
    \end{align*}
    The cost for rebuild consists of 2 parts, $O(km\log{m})$-time for building decomposition of $G$ and $O(tm\log^6{n})$-time for initializing $D_1, \ldots, D_t$ and cut sparsifiers for cores.
    By charging the cost among $j$ operations, the runtime cost charged with each operation is
    \begin{align*}
        O\left(\frac{km\log{n}+tm\log^6{n}}{j}\right) = O\left(\frac{m\log{U}\log^2{n} \cdot m\log{n}}{j^2}\right) = O\left(\frac{m^2}{j^2}\log^4{n}\right).
    \end{align*}
    To balance the query cost and update cost, $j$ is set to $m^{2/3}$.
    So time complexity per operation is now $\Tilde{O}(m^{2/3})$.
\end{proof}

\subsection{Dynamic Max-flow Against an Adaptive Adversary}

We next show how to modify our $j$-tree based data-structure to obtain a randomized dynamic algorithm that works against an adaptive adversary. 

\begin{theorem} \label{thm:dynamicMaxFlowAdaptive}
	Given a graph $G=(V,E,c)$ with polynomially bounded capacities, there is a dynamic data structure that maintains $G$ against an \emph{adaptive} adversary subject to the following operations:
	\begin{enumerate}
		\item \textsc{Insert}$(u,v,c)$: Insert the edge $(u,v)$ to $G$ in $\tilde{O}(m^{3/4})$ amortized time.
		\item \textsc{Delete}$(e)$: Delete the edge $e$ from $G$ in $\tilde{O}(m^{3/4})$ amortized time.
		\item \textsc{MinCut}$(s,t)$: Output an $\tilde{O}(\log n)$-approximation to the $st$-min-cut value of $G$ in $~O(m^{3/4})$ time w.h.p. 
		The cut set $S$ can be obtained on demand with linear overhead in $|S|$.
	\end{enumerate}
\end{theorem}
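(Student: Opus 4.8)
The plan is to revisit the two ingredients of Theorem~\ref{theorem:DynamicMinCut} that are only sound against an oblivious adversary: (i) the sampling of $O(\log n)$ $j$-trees out of the full $(k,\rho,j)$-decomposition in Lemma~\ref{lemma:jtree_preserve_all_min_st_cut}, whose union-bound correctness collapses once the adversary may steer the current $st$-min-cut into the fixed set of cuts that the chosen samples fail to preserve, and (ii) the dynamic cut sparsifier of Lemma~\ref{lemma:DynamicCutSparsifier} used on each core, whose random choices could be probed through the answers. For (i), I would maintain the \emph{entire} decomposition $G_1,\dots,G_k$ dynamically, running the data structure of Lemma~\ref{lemma:dynamic_JTree} once per $j$-tree. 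Note the decomposition itself is deterministic: the sampling clause of Lemma~\ref{lemma:static_jtree_decomposition} is just a Markov-inequality consequence of the deterministic facts $G\preceq G_i$ and $\sum_i G_i\preceq_{k\rho}G$, and these are exactly what is preserved by Lemma~\ref{lemma:dynamic_JTree}. Hence $c_G(S)\le\min_i c_{G_i}(S)\le\rho\,c_G(S)$ holds for every cut $S$ with certainty, and $\min_i\maxflow_{G_i}(s,t)$ is a deterministic $\rho$-approximation of $\maxflow_G(s,t)$.

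Since scanning all $k$ cores at query time would cost $\tilde O(m)$, I would keep the fresh-randomness flavor of the original: at \emph{each} query draw $O(\log n)$ indices uniformly, independently of the whole past, and only invoke the flow solver on those cores. Because the indices are fresh, they are independent of every update the adversary has issued, so the argument of Lemma~\ref{lemma:jtree_preserve_all_min_st_cut} still gives that this sample preserves all $\binom{n}{2}$ current $st$-min-cuts up to $2\rho$, and a union bound over the $\poly(n)$ queries makes every answer correct w.h.p.\ against the adaptive adversary. For (ii) I would replace Lemma~\ref{lemma:DynamicCutSparsifier} by a deterministic (or otherwise adaptive-adversary–robust) fully dynamic cut sparsifier on each core, so the cores become a deterministic function of $G$; the only randomness left — the query-time sample and the internal randomness of the solver of Lemma~\ref{lemma:flow_solver} — is fresh per query, and the solver anyway returns a self-certifying primal–dual pair (a flow and a cut within the claimed factor), which, combined with $G\preceq G_i$ and the pullback of core cuts via Lemma~\ref{lemma:only_core_cuts_matter} (giving $c_G(\cdot)\le c_{G_i}(\cdot)$), certifies the reported value and cut set without any randomness surviving across operations.

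To reconcile the running times I would re-tune the single parameter $j$. Maintaining all $k=\tilde O(m/j)$ $j$-trees rather than $O(\log n)$ multiplies the update cost of Theorem~\ref{theorem:DynamicMinCut} by about $k$, giving amortized update time $\tilde O\!\big(\tfrac{mn}{j^2}\cdot k\big)=\tilde O(m^2 n/j^3)$ together with the amortized rebuild cost $\tilde O(m^2/j^2)$ of the decomposition; the query cost is still $O(\log n)$ flow computations on sparsified cores of size $\tilde O(j)$, i.e.\ $\tilde O(j)$. Balancing $m^2 n/j^3$ (using $n\le m$, or first reducing to $m=\tilde O(n)$ via the deterministic sparsifier, which preserves adaptivity) against $j$ yields $j=\Theta(m^{3/4})$, hence $k=\tilde O(m^{1/4})$ and both bounds $\tilde O(m^{3/4})$, with the cut set recovered by pulling back the core cut of the best sample as in Theorem~\ref{theorem:DynamicMinCut}.

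The main obstacle I expect is not the arithmetic but the robustness bookkeeping around (ii): the cores are themselves outputs of the dynamic $j$-tree structure of Lemma~\ref{lemma:dynamic_JTree} driven by an adaptively chosen update stream, and one must sparsify them so that correctness leaks no exploitable randomness while still honoring the per-core recourse bound $\tilde O(mn/j^2)$ on which the whole $m^{3/4}$ calculation rests. If no sufficiently fast deterministic dynamic cut sparsifier is available, one instead recomputes a fresh (oblivious) static sparsifier of each sampled core at query time and argues freshness plus a union bound over queries; this narrows the admissible range of $j$ and is precisely where the extra $m^{1/12}$ over the oblivious $m^{2/3}$ bound is spent.
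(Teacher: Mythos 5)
Your proposal is correct and follows essentially the same route as the paper: maintain \emph{all} $k=\tilde{O}(m/j)$ $j$-trees of the decomposition via Lemma~\ref{lemma:dynamic_JTree}, defer the $\Theta(\log n)$-sample to query time so the randomness is fresh and independent of the adaptive update stream, replace the oblivious dynamic cut sparsifier on the cores by an adaptive-adversary one (the paper invokes Lemma~\ref{lem:adaptiveCutSparsifier}, whose query operation outputs a sparsifier of the current core using fresh random bits --- exactly your fallback of producing a fresh sparsifier per query), and re-balance to $j=m^{3/4}$. The arithmetic matches the paper's $\tilde{O}(m^3\log U/j^3)$ rebuild-plus-update bound against the $\tilde{O}(j)$ query cost.
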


Our previous construction used the oblivious adversary assumption in two places. (1) First, when building a decomposition of the original graph into $O(j)$-trees, we sampled only a logarithmic number of them during the preprocessing phase and dynamically maintained these sampled graphs. Note that an adaptive adversary could use the query operation to reveal information about the random bits used by our algorithm and which graphs we sampled, and this is why we needed to assume that adversary is oblivious. To circumvent this assumption, we instead maintain all the $O(j)$-trees in the decomposition and sample a small number of them only when handling queries. (2) Second, the dynamic cut sparsifier from Lemma~\ref{lemma:DynamicCutSparsifier} works only against an oblivious adversary, so we need a dynamic cut sparsifier that works against an adaptive advesray. In fact, because we explicitly compute a sparsifier on the core vertices, it suffices to have a data structure that outputs the sparsifier in time proportional to the number of core vertices. This allows to use fresh random bits when sampling a sparsifier during the query operation. Such an algorithm can be inferred from previous literature by combining expander decomposition based construction of graph sparsifiers~\cite{SpielmanT11} with recent works on decremental maintenances of expanders~\cite{NanongkaiSW17,SaranurakW19}. Slightly more formally, given an $n$-vertex graph $G$, using the pruning procedure from~\cite{SaranurakW19}, we can maintain an expander decomposition under edge deletions and recurse on the edges between expander clusters. To handle edge insertions, we employ a well-known reduction from decremental to full-dynamic algorithms~(see e.g., Lemma~4.17 from~\cite{AbrahamDKKP16}) , which in turn leads to a fully-dynamic algorithm for maintaining a hierarchy of expander decompositions. Since cut/spectral sparsifiers are decomposable, and constructing them on expanders amounts to sampling $O(\log n \epsilon^{-2})$ random edges per vertex~\cite{SpielmanT11,PengS14}, it follows that constructing a sparsifier from the current hierarchy of expanders can be done in $\tilde{O}(n \epsilon^{-2})$ time. The above idea is explicitly implemented in the recent work by Bernstein et al.~\cite{BernsteinCutSpAdapt} and we formally state their result below.


\begin{lemma}\cite[Theorem~10.5]{BernsteinCutSpAdapt} \label{lem:adaptiveCutSparsifier}
	There exists a fully dynamic algorithm that maintains for any weighted graph with an $(1+\epsilon)$-approximate cut sparsifier against an \emph{adaptive} adversary. The algorithm’s pre-processing time is bounded by $O(m)$, amortized update time is $\tilde{O}(1)$ and query time is $\tilde{O}(n \epsilon^{-3} \log U)$. The query operation returns an $(1+\epsilon)$-approximate cut sparsifier of $G$.
\end{lemma}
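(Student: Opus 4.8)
\textbf{Proof proposal for Lemma~\ref{lem:adaptiveCutSparsifier}.}
The plan is to build the data structure around a dynamically maintained \emph{hierarchy of expander decompositions}, and to only invoke randomness at query time, so that an adaptive adversary never observes the random bits that determine the sampled sparsifier. First I would recall the static fact~\cite{SpielmanT11,PengS14} that if $G$ is a $\phi$-expander, then independently keeping each edge with probability $p=\min\{1,O(\log n\,\epsilon^{-2}\phi^{-2})\}$ and rescaling weights yields, with high probability, an $(1+\epsilon)$-cut-sparsifier with $\tilde{O}(n\epsilon^{-2})$ edges; by the expander-decomposition theorem, an arbitrary $G$ can be partitioned into clusters each inducing a $\phi$-expander, with only $\tilde{O}(m\phi)$ edges crossing clusters, and one recurses on the contracted/crossing graph for $O(\log n)$ levels (choosing $\phi=1/\mathrm{polylog}(n)$ so the total edge count over the hierarchy stays $\tilde{O}(m)$). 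The sparsifier of $G$ is then the (decomposable) union of the per-cluster samples over all levels; decomposability of cut sparsifiers, already used elsewhere in the paper, guarantees correctness.

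Second, I would make this hierarchy decremental. Under edge deletions, a cluster may lose its expansion; here I would invoke the pruning procedure of~\cite{SaranurakW19} (and the decremental expander tools of~\cite{NanongkaiSW17}), which, given a $\phi$-expander undergoing deletions, maintains a large "still-expanding" core while pruning off a small piece whose total volume is bounded by $\tilde{O}(\Delta/\phi)$ after $\Delta$ deletions. Pruned vertices and their incident edges are pushed up to the next level of the hierarchy as newly crossing edges, which is exactly the recursive structure the construction already has room for; a standard amortization over the $O(\log n)$ levels keeps the amortized deletion cost at $\tilde{O}(1)$, and a periodic rebuild (every $\Theta(m)$ updates, with the $O(m)$ preprocessing bound) caps the accumulated pruning so that the per-level volume of crossing edges stays $\tilde{O}(m)$. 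To upgrade from decremental to fully dynamic I would use the well-known reduction (Lemma~4.17 of~\cite{AbrahamDKKP16}): maintain $O(\log m)$ "layers" of the decremental structure, one per scale of batch of insertions, re-inserting a batch by rebuilding the appropriate layer; insertions then also cost $\tilde{O}(1)$ amortized, and the sparsifier of $G$ is the union of the sparsifiers output by the layers (again by decomposability).

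Third, for the query I would not store a sparsifier at all: when \textsc{Query} is called, walk the current hierarchy of expander clusters (over all levels and all layers of the insertion reduction), and for each cluster draw \emph{fresh} independent edge samples with the probability $p$ above using randomness generated at that moment. Since each cluster currently induces a $\phi$-expander and the hierarchy has $\tilde{O}(m)=\tilde{O}(n)$ edges after sparsification at each level — more precisely $\tilde{O}(n\epsilon^{-2})$ edges total in the output by the per-cluster bounds — this takes $\tilde{O}(n\epsilon^{-3}\log U)$ time (the extra $\epsilon^{-1}\log U$ accounting for the weighted/capacitated case, where one splits each weighted edge across $O(\log U)$ weight scales and samples within each scale, or equivalently uses weighted-expander sampling). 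Because the random bits used are generated only at query time and are never exposed to the adversary before the answer is returned, the high-probability correctness of the Spielman--Teng sampling bound holds \emph{conditioned on} the (adversarially chosen) current graph; this is precisely what "works against an adaptive adversary" means here.

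The step I expect to be the main obstacle is the bookkeeping that keeps the hierarchy's total size under control while it is being dynamically modified: each deletion can, through pruning, migrate edges upward, and each such migrated edge is a deletion-plus-insertion at the next level, so one must show the cascade terminates with only $\tilde{O}(1)$ amortized edge events per original update across all $O(\log n)$ levels and all $O(\log m)$ insertion layers, and that this does not blow up the number of clusters or the crossing-edge volume before the next scheduled rebuild. This is exactly the accounting carried out in~\cite{BernsteinCutSpAdapt}; I would cite their Theorem~10.5 for the precise constants and reproduce only the amortization skeleton here.
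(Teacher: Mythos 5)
Your proposal follows essentially the same route the paper takes: the paper does not prove this lemma but cites Theorem~10.5 of Bernstein et al., preceding the statement with exactly the sketch you give — expander-decomposition hierarchy with per-expander sampling à la~\cite{SpielmanT11,PengS14}, decremental maintenance via the pruning of~\cite{SaranurakW19,NanongkaiSW17}, the decremental-to-fully-dynamic reduction of~\cite{AbrahamDKKP16}, and fresh randomness at query time to handle the adaptive adversary. Since you likewise defer the detailed amortization to the cited work, your write-up is consistent with the paper's treatment.
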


We now explain the necessary modifications to our data-structure. Similarly to Algorithm~\ref{algo:DynamicMinCut}, given a graph $G$,  we compute a $(k,\tilde{O}(\log n), \Theta(j))$-decomposition $\mathbf{G} = \{G_1,\ldots G_k\}$ of $G$ using Lemma~\ref{lemma:static_jtree_decomposition}, where $k=\Theta\left(\frac{m \log^2 n \log U}{j}\right)$. We maintain each $G_i$ from $\mathbf{G}$ using the data-structure $D_i$ from  Lemma~\ref{lemma:dynamic_JTree}, where for each core in $G_i$ we maintain an adaptive dynamic cut sparsifier using Lemma~\ref{lem:adaptiveCutSparsifier} (recall that previously we sampled $O(\log n)$ $G_i$'s from $\mathbf{G}$ and maintained a dynamic sparsifier against an \emph{oblivious} adversary for each of them). These data-structures are rebuilt from scratch every $j$ operations. Upon receiving an edge insertion or deletion, we pass the corresponding update to each $D_i$. When receiving an $st$-min cut query, we first add $s$ and $t$ to the core of each $G_i$ and then sample $G_1,\ldots,G_{t}$ with repetition from $\mathbf{G}$, where $t= \Theta(\log n)$. For each $i=1,\ldots, t$, we construct a cut sparsifier for the core of $G_i$ using the query operation from Lemma~\ref{lem:adaptiveCutSparsifier}. On each sparsified core of $G_i$'s we compute an $st$-min cut from scratch and then return the smallest value among those min cuts as an estimate. Note that sampling $\Theta(\log n)$ graphs whenever we receive a query ensures that the adversary cannot learn anything useful about our algorithm. 

\begin{proof}[Proof of Lemma~\ref{thm:dynamicMaxFlowAdaptive}]
The correctness proof is exactly the same as in Theorem~\ref{theorem:DynamicMinCut}. We next study the running time. The preprocessing cost consists of (1) the cost for computing the decomposition $\mathbf{G}$ and (2) and the cost for initializing the data-structure $D_1,\ldots, D_k$. By Lemma~\ref{lemma:static_jtree_decomposition}, (1) is bounded by $\tilde{O}(km)$ while (2) is bounded by $\tilde{O}(kmn/j)$ by Lemma~\ref{lemma:dynamic_JTree}. Since we rebuild our data-structure from scratch every $j$ operations, the cost of the rebuild charged to each operation is
\[
	\tilde{O}\left(\frac{km + kmn/j}{j}\right) = \tilde{O}\left(\left(\frac{m^2}{j^2} + \frac{m^2 n}{j^3}\right) \log U \right) = \tilde{O}\left(\frac{m^3 \log U}{j^3}\right),
\]
where the last inequality uses that $j \leq m$.

Next, by Lemma~\ref{lemma:dynamic_JTree}, the amortized time to support an edge insertion or deletion in $D_i$ is $\tilde{O}(mn/j^2)$. Since we maintain $k$ different $D_i$'s, it follows that the amortized time per edge insertion or deletion is bounded by
\[
	\tilde{O}\left(k \cdot \frac{mn}{j^2}\right) = \tilde{O}\left( \frac{m^2n \log U}{j^3}\right).
\]
Combining the above bounds, it follows that the amortized update time is $\tilde{O}\left(\frac{m^3 \log U}{j^3}\right)$.

Up to a logarithmic factor, the query cost is dominated by (1) the time to construct a cut sparsifier for the core and (2) the time to compute an $st$-min cut on a graph of size $\tilde{O}(j)$. As both can be implemented in $\tilde{O}(j)$ time, it follows that the query time is also $\tilde{O}(j)$. To balance the update and query time, we set $j=m^{3/4}$, which proves the lemma.
\end{proof}

\section{Fully-Dynamic All-Pairs Shortest Paths}




In this section, we once again demonstrate the power of \emph{Fully-Dynamic Vertex Sparsifier} by designing a $\Otil(\log{n})$-approximate dynamic APSP oracle with sublinear update and query time.
The main result of this section is formalized as the following theorem:

\begin{theorem}
\label{theorem:DynamicAPSP}
    Given a graph $G=(V,E,l)$, we have a fully-dynamic data structure that maintains all pair distance up to $\Otil(\log{n})$-factor and supports following operations:
    \begin{enumerate}
        \item $\textsc{Insert}(u, v, c)$:
              Insert the edge $(u, v)$ to $G$ in amortized $O(m^{2/3}\log^4{n})$-time.
        \item $\textsc{Delete}(e)$:
              Delete the edge $e$ from $G$ in amortized $O(m^{2/3}\log^4{n})$-time.
        \item $\textsc{Distance}(s, t)$:
              Output a $\Tilde{O}(\log{n})$-approximation to the $st$-distance value of $G$ in $O(m^{2/3+o(1)})$-time w.h.p..
    \end{enumerate}
\end{theorem}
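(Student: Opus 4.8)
The plan is to transcribe the proof of Theorem~\ref{theorem:DynamicMinCut} almost verbatim, replacing the word ``cut'' by ``shortest-path metric'' throughout. Madry's $j$-tree decomposition (Lemma~\ref{lemma:static_jtree_decomposition}), built from a low-stretch spanning tree, preserves not only cuts but also the metric; indeed $\textsc{JTree}(G,j,l)$ already takes the length function $l$ as a parameter. Concretely, instantiating with $l$, every $j$-tree $G_i$ in the decomposition dominates, i.e.\ $\dist_{G_i}(u,v)\ge \dist_G(u,v)$, and a $G_i$ sampled with probability $1/k$ satisfies $\dist_{G_i}(u,v)\le 2\rho\cdot\dist_G(u,v)$ for any fixed pair $(u,v)$ with probability at least $\tfrac12$, where $\rho=\tilde O(\log n)$. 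Taking a union bound over the at most $\binom n2$ vertex pairs, sampling $t=\Theta(\log n)$ of the $j$-trees preserves all pairwise distances up to a $2\rho$ factor with high probability — the metric analogue of Lemma~\ref{lemma:jtree_preserve_all_min_st_cut}. I also need the metric analogue of Lemma~\ref{lemma:only_core_cuts_matter}: since each envelope component of a $j$-tree contains exactly one core vertex, any walk between two core vertices that enters an envelope component must immediately back out along the same edges, so $\dist_H(u,v)=\dist_{H[C]}(u,v)$ for $u,v\in C$. Hence a query $\dist(s,t)$ is answered by first adding $s$ and $t$ to each core via $\textsc{AddTerminal}$ and then running shortest paths in the core.

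For the dynamic part I would maintain each of the $t=\Theta(\log n)$ sampled $j$-trees with the data structure of Lemma~\ref{lemma:dynamic_JTree}, which supports $\textsc{Insert}$, $\textsc{Delete}$ and $\textsc{AddTerminal}$ in amortized $O(\frac{mn}{j^2}\log^{6}n)$ time and generates $O(\frac{mn}{j^2})$ amortized edge changes in the core, rebuilding everything (recomputing the decomposition by Lemma~\ref{lemma:static_jtree_decomposition}) every $j$ operations. The one substantive change is that Lemma~\ref{lemma:dynamic_JTree} keeps the core small by running the dynamic \emph{cut} sparsifier of Lemma~\ref{lemma:DynamicCutSparsifier} on $C(H)$, which is meaningless for distances; instead I maintain a dynamic \emph{spanner} of $C(H)$ of stretch $2k-1$ with $k=\Theta(\log\log n)$, which has $|C(H)|^{1+1/k}=j^{1+o(1)}$ edges and polylogarithmic amortized update time, fed the $O(\frac{mn}{j^2})$ core changes per operation. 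Composing the spanner stretch with $2\rho$ multiplies the approximation by only $O(\log\log n)$, absorbed into $\tilde O(\log n)$, and the maintained sparsified core $\widetilde H[C]$ has $j^{1+o(1)}$ edges and still satisfies $\dist_G(u,v)\le \dist_{\widetilde H[C]}(u,v)\le \tilde O(\log n)\cdot \dist_G(u,v)$ for core vertices $u,v$.

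To answer $\textsc{Distance}(s,t)$, add $s$ and $t$ to the core of each of the $t$ maintained $j$-trees (cost $O(t\cdot\frac{mn}{j^2}\log^{6}n)$ by Lemma~\ref{lemma:dynamic_JTree}), run Dijkstra on each sparsified core ($j^{1+o(1)}$ edges, so $\tilde O(t\cdot j^{1+o(1)})$ total), and return the minimum of the $t$ estimates; correctness is exactly the two metric analogues above together with the spanner stretch, holding w.h.p.\ over the sampling. Plugging $j=m^{2/3}$ and $k=\Theta\!\big(\frac{m\log U\log^2 n}{j}\big)$ exactly as in Theorem~\ref{theorem:DynamicMinCut}: the rebuild cost amortizes to $O(\frac{m^2}{j^2}\log^{4}n)=O(m^{2/3}\log^{4}n)$ per operation, the per-operation propagation cost is $O(\frac{mn}{j^2}\log^{O(1)}n)\le O(m^{2/3}\log^{4}n)$ using $n\le m$, and the query cost is $\tilde O(m^{2/3})+m^{2/3+o(1)}=m^{2/3+o(1)}$, matching the claimed bounds.

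The main obstacle is the core-sparsification step. The excerpt only states the cut-preserving guarantee of the $j$-tree decomposition and a cut-sparsifier-based implementation of Lemma~\ref{lemma:dynamic_JTree}, so one must (i) verify that Madry's construction, run with the length function, preserves the shortest-path metric of $G$ up to $O(\rho)$ — in particular that the reweighting $c_H(e)=|\mathbf{f}(e)|$ inside $\textsc{Route}$ does not shrink distances below $\dist_G$ — and (ii) exhibit a dynamic spanner with $\mathrm{polylog}$ amortized update time and $n^{o(1)}$ size overhead that plugs into Lemma~\ref{lemma:dynamic_JTree} in place of Lemma~\ref{lemma:DynamicCutSparsifier}, while keeping the accumulated approximation at $\tilde O(\log n)$ and the query-time spanner rebuild/lookup within the $m^{2/3+o(1)}$ budget. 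Everything else is a mechanical transcription of the min-cut argument.
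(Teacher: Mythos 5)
Your high-level architecture (sample $O(\log n)$ trees from a decomposition, maintain each with the dynamic $j$-tree machinery, replace the dynamic cut sparsifier on the core with a dynamic spanner, rebuild every $j$ operations with $j=m^{2/3}$) matches the paper, and your metric analogues of Lemma~\ref{lemma:jtree_preserve_all_min_st_cut} and Lemma~\ref{lemma:only_core_cuts_matter} are the right statements. The gap is precisely the item you flag as ``to verify'' in (i), and it does not verify: Madry's $(k,\rho,j)$-decomposition of Lemma~\ref{lemma:static_jtree_decomposition} is a congestion object. Each $G_i$ is reweighted by $c_{G_i}(e)=|\mathbf{f}(e)|$, the total flow crossing $e$ in the embedding, and the guarantees $G\preceq G_i$ and $\sum_i G_i\preceq_{k\rho}G$ are statements about embeddability with bounded congestion, which by Lemma~\ref{lemma:embed_cut_upper_bound} translate into cut values but say nothing about shortest-path lengths. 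There is no length function on the cut-based $j$-tree under which the two-sided distance bound you need follows from the congestion bound; stretch and congestion are dual quantities and require separate constructions.

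The paper accordingly does not transcribe the min-cut proof; it builds a different decomposition from scratch (Theorem~\ref{thm:J1MetricDecomp}). The metric-$j$-tree $J^1_G(T,F)$ is a \emph{subgraph} of $G$ carrying the original lengths: a low-stretch spanning tree computed with respect to an auxiliary edge-weight function $\bw$ (Lemma~\ref{lemma:GeneralizedLSST}) plus the $O(m\log^3 n/k)$ edges of \emph{highest stretch} (not highest congestion). Being a subgraph gives $G\preceq^1 H_i$ for free, and the averaged upper bound comes from a convex combination $\sum_i\lambda_i H_i\preceq^1_{\rho}G$ produced by a multiplicative-weights argument (Lemma~\ref{lemma:J1MetricDecompMWU} together with Lemma~\ref{lemma:ComputeGoodJ1}); the $O(\log n)$ trees are then sampled according to the weights $\lambda_i$, not uniformly. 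The vertex sparsifier is likewise rebuilt for the metric setting: $\textsc{Route}^1$ contracts tree segments between consecutive core vertices into edges carrying the tree-path length (Lemma~\ref{lemma:DistanceVertexSparsifier}), rather than accumulating congestion. Your replacement of the cut sparsifier by the dynamic spanner of \cite{FG19} is exactly what the paper does, so the dynamic layer of your outline is sound; but the existence and efficient construction of the underlying distance-preserving decomposition --- the heart of the theorem --- is missing from your argument and cannot be obtained by reinterpreting the cut decomposition.
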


Apply the dynamic spanner from \cite{FG19} on the input graph $G$, we can reduce the number of edges from $m$ to $O(n^{1+o(1)})$ while preserving distance up to $O(1)$-factor.
\begin{corollary}
\label{corollary:DynamicAPSP}
    Given a graph $G=(V,E,l)$, we have a fully-dynamic data structure that maintains all pair distance up to $\Otil(\log{n})$-factor and supports following operations:
    \begin{enumerate}
        \item $\textsc{Insert}(u, v, c)$:
              Insert the edge $(u, v)$ to $G$ in amortized $O(n^{2/3+o(1)})$-time.
        \item $\textsc{Delete}(e)$:
              Delete the edge $e$ from $G$ in amortized $O(n^{2/3+o(1)})$-time.
        \item $\textsc{Distance}(s, t)$:
              Output a $\Tilde{O}(\log{n})$-approximation to the $st$-distance value of $G$ in $O(n^{2/3+o(1)})$-time w.h.p..
    \end{enumerate}
\end{corollary}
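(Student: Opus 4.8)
The plan is to avoid running the data structure of Theorem~\ref{theorem:DynamicAPSP} directly on $G$, and instead to run it on a dynamically maintained spanner $H$ of $G$ that has only $O(n^{1+o(1)})$ edges. Concretely, I would maintain two nested data structures: (1) the fully-dynamic spanner of~\cite{FG19}, which under edge insertions and deletions in $G$ explicitly maintains a subgraph $H \subseteq G$ on the same vertex set $V$ with $|E(H)| = O(n^{1+o(1)})$ and $\dist_G(u,v) \le \dist_H(u,v) \le O(1)\cdot\dist_G(u,v)$ for all $u,v \in V$; and (2) the data structure from Theorem~\ref{theorem:DynamicAPSP}, instantiated on $H$.

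First I would handle updates. An edge insertion or deletion in $G$ is forwarded to the spanner data structure, which reports the resulting set of edge changes to $H$; for~\cite{FG19} this set has amortized size $n^{o(1)}$ (its \emph{recourse}). Each such change is relayed as an \textsc{Insert} or \textsc{Delete} to the inner data structure. Since the inner structure runs on a graph with $m_H = O(n^{1+o(1)})$ edges and $n$ vertices, each of these operations costs $O(m_H^{2/3}\log^4 n) = O(n^{2/3+o(1)})$ amortized time, so multiplying by the $n^{o(1)}$ recourse of the spanner keeps the amortized update cost at $O(n^{2/3+o(1)})$. A \textsc{Distance}$(s,t)$ query is answered by querying the inner structure directly, costing $O(m_H^{2/3+o(1)}) = O(n^{2/3+o(1)})$.

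For correctness I would chain the two approximation guarantees: the spanner gives $\dist_G(s,t) \le \dist_H(s,t) \le O(1)\cdot\dist_G(s,t)$, and Theorem~\ref{theorem:DynamicAPSP} returns an estimate $\delta(s,t)$ with $\dist_H(s,t) \le \delta(s,t) \le \Otil(\log n)\cdot\dist_H(s,t)$, hence $\dist_G(s,t) \le \delta(s,t) \le \Otil(\log n)\cdot\dist_G(s,t)$, which is still an $\Otil(\log n)$-approximation. Both data structures are randomized and correct with high probability against an oblivious adversary, and composing them preserves this.

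The hard part will be controlling the \emph{recourse} of the dynamic spanner rather than merely its own update time: the reduction is only as strong as the bound on the number of edge changes in $H$ per update in $G$. I would need to verify that~\cite{FG19} indeed admits an amortized recourse bound of $n^{o(1)}$ (an amortized bound is enough, since Theorem~\ref{theorem:DynamicAPSP} is already amortized), and that the various $o(1)$ exponents — from the spanner's size, from its recourse, and from the inner structure's own $n^{o(1)}$ query overhead — still sum to $o(1)$ after composition. A minor secondary point is that the inner structure periodically rebuilds from scratch, but this is already folded into its stated amortized bounds, so no extra work is needed there.
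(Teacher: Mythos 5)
Your proposal matches the paper's argument exactly: the paper proves this corollary by running the data structure of Theorem~\ref{theorem:DynamicAPSP} on a dynamically maintained $O(1)$-stretch spanner of $G$ with $O(n^{1+o(1)})$ edges from~\cite{FG19}, relaying the spanner's edge changes as updates and composing the two approximation guarantees. The recourse issue you flag as the hard part is immediate, since the spanner's amortized recourse is bounded by its $\tilde{O}(1)$ amortized update time (Lemma~\ref{lemma:DynamicSpanner}).
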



\subsection{Path, Distance and $L_1$-embeddability}



In this subsection, we present concepts regarding preserving distance.

\begin{definition}
\label{def:routing}
Given 2 graphs on the same vertex set $G=(V,E,l), H=(V,E_H,l_H)$.
A routing of $H$ into $G$ is a set of paths $\CP = \{P^e \text{ an uv-path in }H\mid e=uv \in E\}$.
We say $G$ is $t$-routable in $H$ (or $H$ $t$-routes $G$), denoted by $H \preceq^1_{t} G$, if for every edge $e \in G$, $l_H(P^e) \le t \cdot l(e)$ holds.
The subscript is often omitted when $t=1$.

When the routing $\CP$ is clear in the context, we often write $l_H(e)$ to denote $l_H(P^e)$.
\end{definition}

\begin{lemma}
Given 2 graphs on the same vertex set $G=(V,E,l), H=(V,E_H,l_H)$.
If $G$ is $t$-routable in $H$, $H \preceq^1_\beta G$, then $d_H(s,t) \le \beta \cdot d_G(s,t), \forall s, t \in V$.
\end{lemma}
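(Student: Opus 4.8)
The plan is to mimic exactly the structure of \Cref{lemma:embed_cut_upper_bound}, replacing cuts and flows with shortest paths and routings. The statement asserts that if $H \preceq^1_\beta G$ — meaning every edge $e=uv$ of $G$ is routed along a path $P^e$ in $H$ with $l_H(P^e)\le\beta\cdot l(e)$ — then $d_H(s,t)\le\beta\cdot d_G(s,t)$ for every pair $s,t\in V$.

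First I would fix $s,t\in V$ and take a shortest $st$-path $\pi$ in $G$, so that $l(\pi)=d_G(s,t)$ and $\pi$ is a sequence of edges $e_1,e_2,\dots,e_k$ of $G$ with consecutive endpoints, starting at $s$ and ending at $t$. Next, for each edge $e_i=u_iv_i$ on $\pi$, the routing $\CP$ supplies a $u_iv_i$-path $P^{e_i}$ in $H$. Concatenating $P^{e_1},P^{e_2},\dots,P^{e_k}$ in order yields a walk $W$ from $s$ to $t$ in $H$ (the endpoint of $P^{e_i}$ agrees with the startpoint of $P^{e_{i+1}}$ because $\pi$ is a path in $G$). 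Then I would bound the length of this walk:
\[
  l_H(W)=\sum_{i=1}^{k} l_H(P^{e_i})\le \sum_{i=1}^{k}\beta\cdot l(e_i)=\beta\sum_{i=1}^{k} l(e_i)=\beta\cdot l(\pi)=\beta\cdot d_G(s,t),
\]
using the definition of $t$-routability (with $t=\beta$) in the middle inequality. Finally, since $d_H(s,t)$ is the length of the shortest $st$-path in $H$, and a shortest $st$-walk has the same length as a shortest $st$-path, we get $d_H(s,t)\le l_H(W)\le\beta\cdot d_G(s,t)$, which is the claim.

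There is essentially no hard part here — the lemma is the distance analogue of the routine embedding-monotonicity fact. The only point deserving a sentence of care is that concatenating the routing paths produces a \emph{walk}, not necessarily a simple path, in $H$; one should note that a walk's length is an upper bound on the corresponding shortest-path distance (repeated vertices can only be shortcut, never lengthen the distance), so passing from $l_H(W)$ to $d_H(s,t)$ is valid. I would state this explicitly to keep the argument clean, but it requires no real work.
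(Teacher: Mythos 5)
Your proof is correct, and it is exactly the argument the paper intends: the paper states this lemma without proof, treating it as the immediate distance analogue of Lemma~\ref{lemma:embed_cut_upper_bound}, and your concatenate-the-routing-paths argument (including the remark that the resulting $st$-walk upper-bounds $d_H(s,t)$) is the standard way to fill it in. Nothing is missing.
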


\begin{corollary}
If $G \preceq^1 H \preceq^1_\beta G$, we have
\[
d_G(s,t) \le d_H(s,t) \le \beta \cdot d_G(s,t), \forall s, t \in V.
\]
\end{corollary}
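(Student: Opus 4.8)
The plan is to obtain the two inequalities separately, each as a direct instance of the preceding lemma applied to one of the two embeddings appearing in the hypothesis $G \preceq^1 H \preceq^1_\beta G$.

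For the right-hand inequality, I would apply the lemma verbatim to the embedding $H \preceq^1_\beta G$, i.e.\ to the fact that $G$ is $\beta$-routable in $H$: its conclusion is exactly $d_H(s,t) \le \beta \cdot d_G(s,t)$ for all $s,t \in V$. For the left-hand inequality, I would apply the same lemma after swapping the roles of the two graphs. The hypothesis $G \preceq^1 H$ unpacks, by Definition~\ref{def:routing}, as ``$H$ is $1$-routable in $G$'', i.e.\ every edge of $H$ admits a routing path in $G$ of no greater length; feeding this into the lemma (with $G$ and $H$ interchanged and stretch parameter $1$) yields $d_G(s,t) \le 1 \cdot d_H(s,t) = d_H(s,t)$. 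Concatenating the two bounds gives $d_G(s,t) \le d_H(s,t) \le \beta \cdot d_G(s,t)$, which is the claim.

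I do not expect any real obstacle. The only subtlety is purely notational, namely keeping straight that the symbol $A \preceq^1_t B$ means ``$B$ is $t$-routable in $A$'', so that $G \preceq^1 H$ is the statement about routing edges of $H$ through paths of $G$, and the lemma must therefore be invoked with its two graph arguments in swapped order. Everything else is a one-line chaining of the two resulting inequalities; in particular no new routing needs to be constructed, since the lemma already handles the passage from an edge-by-edge length-bounded routing to a distance comparison.
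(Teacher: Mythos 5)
Your proposal is correct and matches the (implicit) argument the paper intends: the corollary is stated without proof precisely because it is two instances of the preceding lemma, one for $H \preceq^1_\beta G$ giving the upper bound and one for $G \preceq^1 H$ with the graphs' roles swapped giving the lower bound. You also correctly handled the one real subtlety, namely the direction of the $\preceq^1$ notation.
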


\begin{corollary}
If $H \subseteq G$, we have $G \preceq^1 H$.
\end{corollary}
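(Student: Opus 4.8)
The plan is to produce an explicit routing of $H$ into $G$ — the ``identity'' routing — and to check directly against Definition~\ref{def:routing} that it certifies $G \preceq^1 H$ with stretch $t = 1$. Throughout I read ``$H \subseteq G$'' in the weighted sense: $H$ and $G$ share the vertex set $V$, $E_H \subseteq E$, and the length function $l$ of $G$ restricts to the length function $l_H$ of $H$ on $E_H$, so that $l(e) = l_H(e)$ for every $e \in E_H$.

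First I would, for each edge $e = uv \in E_H$, take $P^e$ to be the length-one path in $G$ consisting of the single edge $e$; this is a legitimate $uv$-path in $G$ precisely because $e \in E_H \subseteq E$. Collecting these, $\CP = \{\, P^e \mid e = uv \in E_H \,\}$ is a routing of $H$ into $G$ in the sense of Definition~\ref{def:routing} (with $G$ as the host graph and $H$ as the graph whose edges are routed, matching the statement $G \preceq^1 H$). Next I would check the required inequality: for every $e \in E_H$ we have $l(P^e) = l(e) = l_H(e) \le 1 \cdot l_H(e)$, which is exactly the $1$-routability condition. Hence $\CP$ witnesses $G \preceq^1 H$, proving the corollary.

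I do not anticipate any real obstacle: the statement is purely definitional, in the same spirit as the elementary embeddability facts recorded earlier for cuts and flows. The only points that deserve a sentence of care are (i) fixing the intended meaning of ``$H \subseteq G$'' as a weighted subgraph whose lengths agree with those of $G$ on the common edges, and (ii) noting that a single edge of $G$ is an admissible path for the purposes of Definition~\ref{def:routing}. With these conventions in place the argument is two lines long, and I would present it as such.
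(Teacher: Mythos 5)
Your proof is correct and is exactly the intended (and in the paper, omitted) argument: route each edge $e=uv\in E_H$ by the single-edge path $e$ in $G$, so $l_G(P^e)=l_H(e)$ and the stretch-$1$ condition of Definition~\ref{def:routing} holds. The only nit is terminological: under the paper's (admittedly inverted) naming, a collection of paths in $G$ indexed by edges of $H$ is a ``routing of $G$ into $H$,'' not of $H$ into $G$, but since you explicitly identify $G$ as the host and $H$ as the routed graph, this does not affect the argument.
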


\subsection{Metric-$J$-Trees as Vertex Sparsifier}

Here we introduce the notion of \emph{Metric-$J$-Tree}, whose name adopts from $j$-tree in \cite{Madry10}.
Given a graph, a \emph{Metric-$J$-Tree} is built from a spanning tree with additional $O(j)$ edges.
In a graph of this family, distance computation can be speed up by transforming such graph into one with much less vertices.
This property is exactly what we need for vertex sparsifier.

\begin{definition}
\label{def:J1}
Given a graph $G=(V,E,l)$, positive integer $j$, a subset of edges $F$ of size at most $j$, and a spanning tree $T$ of $G$.
The \emph{Metric-$j$-tree} of $G$ with respect to $T$ and $F$, denoted by $J^1_G(T,F)$, is a subgraph of $G$ with vertex set $V$ and edge set
\[
E_H = E(T) \cup F.
\]
That is, $J^1_G(T,F)$ keeps only spanning tree $T$ and edges in $F$.
Also, when we speak of routing of $J^1_G(T,F)$ into $G$, we route edge $e=uv$ not in $F$ or $T$ via $T_e$, the unique $uv$-path in $T$.
And route other edges using their identical counterpart.
\end{definition}

Such structure is interested because of the following theorem.
It suggest that we can approximate distance within several $J^1_G(T,F)$.
First, we define metric-decomposition of $G$.

\begin{definition}
\label{defn:MetricDecomp}
Given a graph $G=(V,E,l)$ and a family of graphs $\CG$.
A collection of graphs $H_1, \ldots, H_k$ and $k$ real numbers $\lambda_1, \ldots, \lambda_k$ is a \emph{$(k, \rho, \CG)$-metric-decomposition of $G$} if
\begin{enumerate}
    \item $\lambda_i \ge 0, \forall i$ and $\sum_i\lambda_i = 1$.
    \item $H_i \in \CG, \forall i$.
    \item $G \preceq^1 H_i, \forall i$.
    \item $\sum_i\lambda_iH_i \preceq^1_\rho G$.
\end{enumerate}
When $\CG$ is the family of all metric-$j$-tree of $G$, we denote $(k, \rho, \CG)$-metric-decomposition of $G$ by \emph{$(k, \rho, j)$-metric-decomposition of $G$}.
\end{definition}

Such decomposition can be computed via MWU-like method, just like \cite{Madry10}.
It is formalized as the following:

\begin{theorem}
\label{thm:J1MetricDecomp}
Let $G=(V,E,l)$ be a graph with weight ratio $U=\poly(n)$.
In $O(km\log^4{n})$-time, we can find a
\[
    \left(k, \alpha = \Otil(\log{n}), O\left(\frac{m\log^3{n}}{k}\right)\right)
\]-metric-decomposition of $G$.
Or conversely, given positive integer $j$, we can compute a $(O((m/j)\log^3{n}),\Otil(\log{n}),j)$-metric-decomposition of $G$ in $O((m^2/j)\log^7{n})$ time.
\end{theorem}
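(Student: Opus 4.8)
## Proof proposal for Theorem~\ref{thm:J1MetricDecomp}

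The plan is to follow the multiplicative-weights-update (MWU) template that Madry~\cite{Madry10} uses to build $j$-tree decompositions, but instantiate the inner ``oracle'' with \emph{metric}-$j$-trees rather than cut/flow $j$-trees, and measure congestion in the $\ell_1$/routing sense of Definition~\ref{def:routing} rather than the $\ell_\infty$ sense of Definition~\ref{definition:embeddability}. Concretely, I would maintain a length function $\ell^{(t)}$ on the edges of $G$ (initialized uniformly), and at each round compute a single metric-$j$-tree $H^{(t)} = J^1_G(T^{(t)}, F^{(t)})$ that routes $G$ well \emph{with respect to the current weighting} $\ell^{(t)}$. The key subroutine is: given a length function, find a spanning tree $T$ together with a set $F$ of $O((m/j)\log^3 n)$ (resp. $j$) off-tree edges such that $G \preceq^1 J^1_G(T,F)$ (automatic, since $J^1_G(T,F) \subseteq G$ and we route off-tree edges along tree paths which only lengthen them --- here I must double-check that routing $e$ along $T_e$ gives $\ell_H(T_e) \ge \ell(e)$, which holds because $T$ is a \emph{low-stretch} spanning tree, so tree-path lengths dominate original edge lengths up to the stretch, and the edges in $F$ are exactly the high-stretch ones we keep directly), and such that the total $\ell^{(t)}$-weighted stretch is $\tilde O(\log n)$. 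This is precisely what a low-stretch spanning tree (LSST) construction~\cite{ABN08} gives after we ``peel off'' the $O((m/j)\log n)$ worst-stretch edges into $F$: the average stretch is $\tilde O(\log n)$, so at most $O((m/j)\cdot\mathrm{polylog})$ edges can have stretch above the threshold $\tilde\Theta(j^{-1}m\log n)$... the bookkeeping here mirrors the $F_i$ bucketing in Algorithm~\ref{algo:JTreeInitializeTCF}, so I would reuse that structure essentially verbatim, but tracking lengths instead of relative loads.

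Next I would run the MWU loop: after computing $H^{(t)}$, update $\ell^{(t+1)}(e) \propto \ell^{(t)}(e)\cdot(1+\eta\cdot(\text{relative length of }e\text{ under the routing into }H^{(t)}))$, i.e.\ penalize edges that were overloaded (routed through by many tree paths with large total length). After $k = O((m/j)\log^3 n)$ rounds (or $k$ prescribed rounds in the first formulation), standard MWU analysis shows that the uniform average $\sum_i \lambda_i H^{(t_i)}$ with $\lambda_i = 1/k$ satisfies $\sum_i \lambda_i H^{(t_i)} \preceq^1_\rho G$ with $\rho = \tilde O(\log n)$ --- the $\log^3 n$ (vs.\ $\log n$) discrepancy between $k$ and $\rho$ is the slack we pay for (a) the per-round LSST stretch of $\tilde O(\log n)$, (b) the MWU width parameter, and (c) the requirement that each $F^{(t_i)}$ have size only $O((m/j)\log^3 n)$, which forces us to take a slightly larger congestion bucket. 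The two directions of the theorem are then just two ways of reading the same loop: fixing $k$ and solving for $j = O((m/j)\log^3 n)$ in the round count gives the first statement; fixing $j$ and letting the round count be $k = O((m/j)\log^3 n)$ gives the second, with running time $O(km\log^4 n) = O((m^2/j)\log^7 n)$ once we account for the $\tilde O(m)$ cost per LSST call~\cite{ABN08} times $k$ rounds times a $\log^4 n$ overhead for the weighted shortest-path / flow-embedding computations (cf.\ the $O(km\log^4 n)$ bound in Lemma~\ref{lemma:static_jtree_decomposition}).

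The correctness of conditions (1)--(4) in Definition~\ref{defn:MetricDecomp} then falls out: (1) is immediate from $\lambda_i = 1/k$; (2) holds since each $H^{(t_i)}$ is by construction a metric-$j$-tree (spanning tree plus $\le j$ extra edges --- I need to verify the final $F$ after adding skeleton-tree edges, as in Algorithm~\ref{algo:JTreeInitializeTCF}, stays within the claimed size, which it does up to constants); (3) is the ``$G$ routes into each $H_i$'' direction, trivial because $H_i$ is a subgraph routed via tree paths that only stretch lengths; (4) is exactly the MWU guarantee. The main obstacle I anticipate is \textbf{(4)} --- specifically, proving that averaging metric-$j$-trees actually yields a \emph{low-$\ell_1$-congestion} object. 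In Madry's flow setting the analogous step uses LP duality / the min-max structure of congestion; in the metric/distance setting the right notion is that $\sum_i \lambda_i \ell_{H_i}$ dominates $\ell_G$ pointwise on every edge (which is what $\preceq^1_\rho$ unfolds to, via Definition~\ref{def:routing}), and I must make sure the MWU potential argument genuinely certifies this for \emph{all} edges simultaneously, not just on average --- this is where the width/round-count tradeoff gets pinned down, and where the extra $\log$ factors in $\rho$ vs.\ $1/\lambda$ really come from. A secondary subtlety is ensuring the routing used in the MWU update is consistent with the canonical routing of Definition~\ref{def:J1} (off-tree, non-$F$ edges along $T_e$), so that the ``relative length'' quantities fed into the weight update and the final congestion bound refer to the same paths; I would fix this by defining the routing once, up front, exactly as in the $\textsc{Route}$ procedure restricted to the metric setting, and never deviating.
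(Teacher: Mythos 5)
Your overall architecture is the paper's: an MWU outer loop whose oracle, given an edge weight function $\bw$, builds a weighted low-stretch spanning tree (Lemma~\ref{lemma:GeneralizedLSST}), buckets edges of $G$ by their stretch in that tree, and keeps the $O((m/k)\log^3 n)$ highest-stretch edges directly as $F$ (Lemma~\ref{lemma:ComputeGoodJ1}, mirroring the bucketing of Algorithm~\ref{algo:JTreeInitializeTCF}). Your verification of conditions (1)--(3) of Definition~\ref{defn:MetricDecomp} and your running-time accounting also match the paper.

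The gap is precisely the one you flag as ``the main obstacle I anticipate'' but do not resolve: the step sizes and the iteration bound of the MWU loop. You take uniform coefficients $\lambda_i = 1/k$ over a prescribed number $k$ of rounds. But the width of the oracle --- the maximum stretch $\eta(H_\bw)$ of a returned metric-$j$-tree --- can be as large as $mU = \poly(n)$, and the soft-max smoothness bound (Fact~\ref{fact:lmaxConvex2}) only controls the increase of $\texttt{lmax}(M\blbd)$ when the step size is at most $1/\eta(H_\bw)$. With uniform $1/k$ steps and polynomial width, ``standard MWU analysis'' does not yield $\rho = \Otil(\log n)$; it yields a congestion proportional to the width. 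The paper's Lemma~\ref{lemma:J1MetricDecompMWU} instead uses \emph{adaptive} coefficients $\lambda_{i(\bw)} = \min\{1/\eta(H_{i(\bw)}),\, 1-\sum_i\lambda_i\}$, and the number of iterations is then no longer $k$ by fiat --- it must be bounded separately. That bound comes from the third oracle condition, $|\psi(H_\bw)| \ge 4\alpha m/k$: each returned tree must have \emph{many} edges attaining at least half its maximum stretch, so that the potential $\Phi(\blbd) = \sum_e\sum_i \lambda_i\eta_{H_i}(e)$ increases by $\Omega(\alpha m/k)$ per round while remaining at most $3\alpha m$, capping the number of rounds at $O(k)$. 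This is the real purpose of the pigeonhole step in the oracle (finding a bucket $F_{\bar j - 1}$ of size at least $4(2\alpha+1)m/k$ sitting just below the cutoff $\bar j$): it does not merely keep $|F|$ small, it certifies the large-$\psi$ condition. Your proposal treats the bucketing purely as a size-control device and never states or uses this condition, so neither the $\rho=\Otil(\log n)$ quality nor the $O(km\log^4 n)$ running time is actually established.
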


The proof of Theorem~\ref{thm:J1MetricDecomp} uses same MWU-like approach as \cite{Madry10}.
Briefly, we compute one component of the decomposition one at a time.
In addition to edge length, we also incur edge weight on the graph.
Intuitively, such edge weight regularizes stretch of over-stretched edges in the current decomposition.

We formally define stretch and how this edge weight interact with the graph.
For a given edge weight function $\bw$ on $G$, and a graph $H=(V,E_H,l_H)$ that routes $G$.
Define the \emph{volume $\bw(H)$ of $H$ (with respect to $\bw$)} to be $\bw(H) \coloneqq \sum_{e \in G}{\bw(e)l_H(e)}$.
Also, for any edge $e \in G$, define the \emph{stretch of $e$ in $H$} to be $\eta_H(e) \coloneqq \frac{l_H(e)}{l(e)}$ and denote by $\eta(H)$ the maximum value of $\eta_H(e)$, i.e. $\eta(H) \coloneqq \max_{e \in G}{\eta_H(e)}$.
Furthermore, define a set $\psi(H)$ as
\[
    \psi(H) \coloneqq \{e \in G \mid 0.5 \eta(H) \le \eta_H(e)\}.
\]
Intuitively, the set $\psi(H)$ contains edges of $G$ that suffer stretch at least as the half of the maximum stretch.

Given these definitions, the MWU method is formally stated as the following with proof deferred in the Appendix~\ref{sec:J1MetricDecompMWUProof}:

\begin{restatable}{lemma}{KTree}
\label{lemma:J1MetricDecompMWU}
Let $\alpha \geq \log{m}$ and a family of graphs $\CG$ such that for any edge weight function $\bw$ on $G$, we can find in $O(f(m))$ time a subgraph $H_\bw = (V, E_{H_\bw}, l_{H_\bw})$ of $G$ that belongs to $\CG$ and:
\begin{enumerate}
\label{MWUCondition}
    \item\label{MWU1} $\bw(H_\bw) \le \alpha \bw(G)$,
    \item\label{MWU2} $G \preceq^1 H_\bw$, and
    \item\label{MWU3} $|\psi(H_\bw)| \geq \frac{4\alpha m}{k}$
\end{enumerate}
then a $(k, 2\alpha, \CG)$-metric-decompostion of $G$ can be computed in $O(k \cdot f(m))$ time.
\end{restatable}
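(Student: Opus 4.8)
The plan is to prove Lemma~\ref{lemma:J1MetricDecompMWU} by a multiplicative-weights update (MWU) over the edges of $G$, in the spirit of the $j$-tree decomposition of Madry~\cite{Madry10} and the oblivious routing of R\"acke~\cite{Racke08}, with the set $\psi(\cdot)$ playing the role of the \emph{width-reduction} mechanism. The first step is to translate the oracle hypothesis into MWU language: fixing a weight function $\bw$ and setting the induced mass $\mu(e) := \bw(e)\,l(e)$ on each edge, one has $\bw(H)/\bw(G) = \sum_e \mu(e)\,\eta_H(e)\big/\sum_e \mu(e)$, so hypothesis (1), $\bw(H_\bw)\le\alpha\bw(G)$, says exactly that $H_\bw$ has $\mu$-weighted average stretch at most $\alpha$. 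Since every probability vector over the edges is realizable as $\mu/\|\mu\|_1$ (using $\bw(e)=\bq(e)/l(e)$), this is precisely the per-round dual-feasibility certificate that MWU consumes: for \emph{any} edge distribution the oracle returns a graph of expected stretch $\le\alpha$.

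Next we run the iteration for $t=1,\dots,k$. Maintain weights $\bw^{(t)}$, initialized to $\bw^{(1)}(e)=1/l(e)$ so that the initial mass is uniform. In round $t$, call the oracle on $\bw^{(t)}$ to get $H_t := H_{\bw^{(t)}}\in\CG$; assign it a coefficient $\lambda_t>0$ scaled inversely with its maximum stretch $\eta(H_t)$ (the normalization is fixed at the end); and update the weights multiplicatively, boosting $\bw(e)$ on the edges that $H_t$ over-stretched. The width-control twist is to \emph{freeze} the edges of $\psi(H_t)$: declare $H_t$ their routing representative and remove them from the active edge set for all subsequent rounds. Hypothesis (2), $G\preceq^{1}H_t$, is inherited verbatim by each $H_t$, so the output collection automatically meets the second requirement of a metric-decomposition with no extra work.

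The crux is to show that $k$ rounds suffice and that, after renormalizing $\sum_i\lambda_i=1$, the pair $\{(H_i,\lambda_i)\}$ satisfies $\sum_i\lambda_i\,\eta_{H_i}(e)\le 2\alpha$ for every edge $e$ — which is exactly $\sum_i\lambda_i H_i\preceq^{1}_{2\alpha}G$. One splits the rounds into high-width rounds, in which a fresh block of at least $4\alpha m/k$ edges gets frozen; since only $m$ edges exist in total, hypothesis (3) caps the number of such rounds at $O(k/\alpha)$. The remaining low-width rounds are controlled by the standard MWU convergence bound (here we use $\alpha\ge\log m$), and the two counts together stay below $k$; copies can pad the collection to exactly $k$ graphs if needed. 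For the stretch guarantee, an edge never frozen gets $\sum_i\lambda_i\eta_{H_i}(e)\le(1+o(1))\alpha$ from the regret bound, with width tamed because $\lambda_t\propto 1/\eta(H_t)$. An edge $e$ frozen in round $t(e)$ lies in $\psi(H_{t(e)})$, so $\eta_{H_{t(e)}}(e)\ge\frac12\eta(H_{t(e)})$ — meaning only half of its stretch was "paid for" in the volume bound — which is the source of the extra factor $2$; renormalization costs at most another constant, so the total remains within $2\alpha$. Each round is one oracle call, $O(f(m))$, plus $O(m)$ bookkeeping, for a total of $O(k\,f(m))$, as claimed.

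The step I expect to be the main obstacle is this width-reduction bookkeeping: choosing the coefficients $\lambda_t$, the multiplicative update rule, and the MWU potential so that the high-width and low-width progress measures \emph{jointly} force termination within $k$ rounds, while still certifying the uniform bound $\sum_i\lambda_i\eta_{H_i}(e)\le 2\alpha$ on the frozen edges. This is precisely the technically delicate part of Madry's $j$-tree argument, and the work here is to carry it over from the congestion/flow setting to the metric setting, where lengths and stretches $\eta_H(e)=l_H(e)/l(e)$ take the place of capacities and congestion.
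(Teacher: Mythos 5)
Your high-level framing matches the paper's: an MWU over edges where the oracle supplies a graph of $\bw$-average stretch at most $\alpha$, coefficients $\lambda_t\propto 1/\eta(H_t)$ tame the width, and hypothesis (3) drives the iteration count. However, the specific mechanism you propose for bounding the number of rounds -- freezing the edges of $\psi(H_t)$ and counting ``fresh'' frozen edges -- has a genuine gap. Hypothesis (3) only guarantees $|\psi(H_t)|\ge 4\alpha m/k$ as a subset of \emph{all} edges of $G$; nothing prevents $\psi(H_t)$ from consisting almost entirely of edges you already froze in earlier rounds, so ``a fresh block of at least $4\alpha m/k$ edges gets frozen'' does not follow. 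Moreover, if you actually remove frozen edges from the active set, the oracle's guarantees (which are stated for arbitrary weight functions on the full edge set of $G$) no longer apply in the form you need. The paper's proof avoids freshness altogether: it tracks the aggregate potential $\Phi(\blbd)=\sum_e\sum_i\lambda_i\eta_{H_i}(e)$, observes that each round increases $\Phi$ by at least $|\psi(H_t)|\cdot\lambda_t\cdot\tfrac{1}{2}\eta(H_t)\ge |\psi(H_t)|/2\ge 2\alpha m/k$ (an edge may recur in $\psi$ across many rounds and each recurrence still contributes $1/2$), and upper-bounds $\Phi\le 3\alpha m$ by maintaining the per-edge invariant $\sum_i\lambda_i\eta_{H_i}(e)\le 3\alpha$ via a softmax ($\texttt{lmax}$) potential, where $\alpha\ge\log m$ absorbs the additive $\ln m$. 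This yields at most $1.5k$ iterations with no case split into high- and low-width rounds.

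A second gap is the normalization. You run exactly $k$ rounds and then rescale so that $\sum_i\lambda_i=1$, asserting this ``costs at most another constant.'' That is unjustified: if $\sum_t 1/\eta(H_t)<1$ after $k$ rounds, rescaling multiplies every weighted stretch $\sum_i\lambda_i\eta_{H_i}(e)$ by $1/\sum_t\lambda_t$, which can be arbitrarily large. The paper instead terminates precisely when $\sum_i\lambda_i$ reaches $1$ (capping the final increment at $1-\sum_i\lambda_i$), so no rescaling is needed, and the iteration bound above guarantees this happens within $O(k)$ oracle calls. With these two repairs your argument would align with the paper's; as written, neither the round count nor the final stretch bound is established.
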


Next we will discuss how to construct $J^1_G(T,F)$ given edge weight function $\bw$ that satisfies all 3 conditions Theorem~\ref{lemma:J1MetricDecompMWU} needs.
First we present the following lemma that derives from low stretch spanning tree construction in \cite{ABN08}.

\begin{lemma}
\label{lemma:GeneralizedLSST}
Let $G=(V,E)$ be a graph with non-negative edge length $l$ and edge weight $\bw$.
We can find a spanning tree $T$ of $G$ (with edge length $l$) s.t.
\[
    l_{uv} \le d_T(u, v), \forall e=uv
\]
and
\[
    \sum_{e=uv}{d_T(u,v)\bw_e} \le \alpha \sum_{e=uv}{l_e\bw_e}
\]
for some $\alpha = O(\log{n}\log\log{n}) = \Otil(\log{n})$.
Such tree $T$ can be computed in $\Otil(m)$-time.
\end{lemma}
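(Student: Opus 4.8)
The plan is to reduce the weighted version to the classical low-stretch spanning tree (LSST) construction of Abraham and Neiman~\cite{ABN08}. First I would recall what the standard result gives: for an unweighted graph $G=(V,E)$ with non-negative lengths $l$, one can compute in $\tilde{O}(m)$ time a spanning tree $T$ with $l_{uv}\le d_T(u,v)$ for every edge and $\sum_{e=uv}\frac{d_T(u,v)}{l_e}=O(\log n\log\log n)\cdot m$, i.e. the \emph{average} stretch over all edges is $O(\log n\log\log n)$. The goal here is the same bound but with edges counted according to the weight function $\bw$ rather than uniformly.

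The key step is to observe that the ABN construction is insensitive to \emph{which} multiset of edges we wish to have small average stretch on; more precisely, the guarantee $\sum_{e}\eta_T(e)=\tilde{O}(\log n)\,|E|$ holds for a sum over the edge set with unit coefficients, and by the standard argument the same proof yields $\sum_e c_e\,\eta_T(e)=\tilde{O}(\log n)\sum_e c_e$ for any non-negative coefficient vector $(c_e)$, since the recursive star-decomposition / ball-growing steps can be run with the measure $\mu(e)=c_e$ substituted for counting measure throughout. Concretely I would instantiate this with $c_e=\bw_e l_e$ (a weight that is scale-correct: $\eta_T(e)=d_T(u,v)/l_e$, so $c_e\,\eta_T(e)=\bw_e\,d_T(u,v)$ and $c_e=\bw_e l_e$, which is exactly the ratio appearing in the desired conclusion). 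Then the weighted stretch bound $\sum_{e=uv}d_T(u,v)\bw_e\le \alpha\sum_{e=uv}l_e\bw_e$ with $\alpha=O(\log n\log\log n)$ follows directly. Alternatively, if one prefers a black-box reduction rather than re-opening the ABN proof, one can simulate integer weights by edge-multiplicity: round $\bw_e l_e$ to a polynomially bounded integer (the weight ratio $U=\poly(n)$ guarantees this is lossless up to a $(1+1/\poly(n))$ factor, which is absorbed), replace each edge $e$ by $\lceil \bw_e l_e\rceil$ parallel copies of the same length, run ABN on the resulting multigraph (whose edge count is $\poly(n)$, so the running time stays $\tilde{O}(m\cdot\poly(n))$—or, with a weighted-LSST black box such as that of \cite{AbrahamN19}, genuinely $\tilde{O}(m)$), and read off $T$; the uniform average-stretch guarantee on the multigraph is precisely the $\bw$-weighted guarantee on $G$.

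The dominone-edge property $l_{uv}\le d_T(u,v)$ is immediate and requires no extra work: it holds because $T$ is a \emph{spanning tree of $G$ with the same edge lengths}, so the tree path between $u$ and $v$ is in particular a path in $G$, hence no shorter than the single edge $(u,v)$ only if $(u,v)$ itself is the shortest $u$–$v$ path—more carefully, one uses that ABN's construction never shortcuts, so $d_T(u,v)\ge l(u,v)$ always; this is already part of the cited guarantee. Finally, the running time: if I take the multiplicity-simulation route I must be slightly careful, since naively blowing up to $\poly(n)$ edges costs $\poly(n)$ factors; the clean fix is to invoke a weighted variant of LSST directly (the ABN-style analysis goes through verbatim with a weighted ball-growing argument), giving the stated $\tilde{O}(m)$.

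The main obstacle I anticipate is purely expository: making precise that the ABN recursion respects an arbitrary non-negative edge measure. The combinatorial heart of their algorithm—iterated star decompositions controlled by a potential that is a sum over edges—carries weights through cleanly, but spelling out that every inequality in their induction (in particular the charging of "cut" edges to balls of appropriate radius) uses only $\sum$-over-edges and never the cardinality $|E|$ intrinsically is the one place that needs care. I expect this to be a short remark citing the weighted LSST literature rather than a reproved theorem.
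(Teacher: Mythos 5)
You have the right reduction in mind---simulate the weight $\bw$ by edge multiplicities and invoke the unweighted average-stretch guarantee of \cite{ABN08}---but your concrete instantiation ($\lceil \bw_e l_e\rceil$ parallel copies) blows the multigraph up to $\poly(n)$ edges, and your proposed escape (re-opening the ABN recursion with an arbitrary edge measure, or citing a weighted LSST black box) is exactly the machinery the paper's proof avoids needing. The missing idea is a normalization of the multiplicities: set
\[
    r(e) = 1 + \left\lfloor \frac{l(e)\,\bw(e)\,|E|}{\bw(G)} \right\rfloor,
    \qquad \bw(G) := \sum_{f} l(f)\bw(f),
\]
and give edge $e$ multiplicity $r(e)$. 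Then $\sum_e r(e) \le |E| + \sum_e \frac{l(e)\bw(e)|E|}{\bw(G)} = 2|E|$, so the multigraph has only $O(m)$ edges and the unweighted ABN algorithm runs on it in $O(m\log n)$ time---no weighted variant and no assumption on the magnitude of $\bw$ is needed. For correctness, the floor is compensated by the additive $1$: one checks $r(e) \ge \frac{l(e)\bw(e)|E|}{\bw(G)} \ge \frac{l(e)\bw(e)}{\bw(G)}\cdot\frac{1}{2}\sum_f r(f)$, so the uniform average-stretch bound $\sum_e \frac{d_T(u,v)}{l(e)} r(e) \le \alpha \sum_f r(f)$ on the multigraph yields $\sum_e d_T(u,v)\bw(e) \le 2\alpha\,\bw(G)$, which is the claimed weighted bound up to a factor $2$ absorbed into $\alpha$.

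A secondary issue: your rounding argument appeals to ``the weight ratio $U=\poly(n)$'' to make the integer rounding lossless, but the lemma is stated for arbitrary non-negative $l$ and $\bw$ with no such hypothesis; the normalization above needs no boundedness assumption. Your route of carrying a general non-negative measure through the ABN star-decomposition is plausibly true, but as written it is a proof sketch deferring to ``the weighted LSST literature'' rather than an argument, whereas the normalized-multiplicity reduction closes the lemma using only the unweighted statement you already quoted.
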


From now on, we use $\alpha$ to denote the ratio in Lemma~\ref{lemma:GeneralizedLSST}.

Given a graph $G=(V,E,l)$, positive integer $k$ for sparsity of \emph{metric-decomposition}, and a edge weight function $\bw$.
Also let $U=\poly(n)$ be the weight ratio of $G$.
The construction works like the following
\begin{enumerate}
    \item Compute LSST $T$ using Lemma~\ref{lemma:GeneralizedLSST} with respect to $l$ and $\bw$.
          Route $G$ using $T$, i.e., route each edge $e=uv$ using the unique $uv$-path in $T$.
    \item For every edge $e=uv \in G$, compute its stretch, $\eta(e) = d_T(u,v)/l_e$, in $O(m)$ time.
          Observe that $\eta(e) \le mU$.
    \item Partition edges of $G$ into $\log(mU)$ sets $F_1, F_2, \ldots F_{\log(mU)}$, with $F_i \coloneqq \{e \in G \mid 2^{i-1} \le \eta(e) < 2^i\}$.
          Define $F_{\ge i} = \bigcup_{j=i}^{\log{mU}}{F_j}$.
          Add edges with $\eta(e) < 1$ to $F_1$.
    \item \label{findOptimalJ}
          Find smallest $j^* \le \log(mU)$ such that
          \[
              |F_{\ge j^*}| \le \frac{4(2\alpha + 1)m\log{mU}}{k},
              |F_{\ge j^*-1}| > \frac{4(2\alpha + 1)m\log{mU}}{k}
          \]
          If no such $j^*$ exists, i.e., $|F_{\log(mU)}| > 4(2\alpha + 1)m\log{mU}/k$, let $F=\phi$ and output $H_\bw=J^1_G(T,\phi)=T$.
    \item By Pigeonhole principle, there is some $\bar{j}, j^* \le \bar{j} \le \log(mU)$ such that
          \[
              |F_{\bar{j} - 1}| \ge \frac{4(2\alpha + 1)m}{k}.
          \]
    \item Let $F$ be the set $F_{\ge \bar{j}}$ and output $H_\bw=J^1_G(T,F)$.
\end{enumerate}

For condition~\ref{MWU1} of Lemma~\ref{lemma:J1MetricDecompMWU}, we know $\bw(T) \le \alpha \bw(G)$.
Since $T$ is a subgraph of $H_\bw$, $d_{H_\bw}(u, v) \le d_T(u, v), \forall u, v$ holds for every pair of vertex $u, v$.
We have $\bw(H_\bw) \le \bw(T) \le \alpha \bw(G)$.

For condition~\ref{MWU2} of Theorem~\ref{lemma:J1MetricDecompMWU}, $G \preceq^1 H_\bw$ holds trivially since $H_\bw$ is a subgraph of $G$.

For condition~\ref{MWU3} of Theorem~\ref{lemma:J1MetricDecompMWU}, if the procedure ends at Step~\ref{findOptimalJ} and outputs $G$, we know $\psi(H_\bw) \supseteq F_{\log(mU)}$ and $|F_{\log(mU)}| > 4(2\alpha + 1)m\log{mU}/k$.
Thus,
\[
    |\psi(H_\bw)| \ge |F_{\log(mU)}| > \frac{4(2\alpha + 1)m\log{mU}}{k} \ge \frac{4\alpha m}{k}
\]
Otherwise, we observe that in $J^1_G(T,F_{\ge \bar{j}})$, edges in $F_{\ge \bar{j}}$ have stretch $1$ and therefore $\psi(H_\bw) \supseteq F_{\bar{j} - 1}$ and $|F_{\bar{j} - 1}| \ge 4(2\alpha + 1)m/k$.

Also notice that the set $F$ has size $\le 4(2\alpha + 1)m\log{mU}/k = O(m\log^3{n}/k)$.

The above procedure works in $\Otil(m)$-time which is occupied by the LSST construction from Lemma~\ref{lemma:GeneralizedLSST}.
It is summarized as the following lemma:
\begin{lemma}
\label{lemma:ComputeGoodJ1}
Given a graph $G=(V,E,w)$, positve integer $k$, and edge weight function $\bw$.
We can compute a spanning tree $T$ and a subset of edges $F$ in $\Otil(m)$ time such that
\begin{enumerate}
    \item $\bw(J^1_G(T, F)) \le \alpha \bw(G)$,
    \item $G \preceq^1 J^1_G(T, F)$,
    \item $|\psi(J^1_G(T, F))| \ge \frac{4\alpha m}{k}$, and
    \item $|F| = O(\frac{m\log^3{n}}{k})$.
\end{enumerate}
\end{lemma}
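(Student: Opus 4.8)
The plan is to follow the MWU-style construction sketched in the six-step recipe preceding the lemma statement, and to verify that each of the four claimed properties of $J^1_G(T,F)$ holds. First I would invoke Lemma~\ref{lemma:GeneralizedLSST} on $G$ with edge length $l$ and edge weight $\bw$ to produce a spanning tree $T$ with $l_{uv}\le d_T(u,v)$ for every edge $e=uv$, and $\sum_{e=uv} d_T(u,v)\bw_e \le \alpha \sum_{e=uv} l_e \bw_e$, in $\Otil(m)$ time; here $\alpha = O(\log n\log\log n)$ is the stretch ratio of that lemma. Routing each off-tree edge $e=uv$ along $T_e$ and each tree edge along itself gives a routing of $J^1_G(T,F)$ into $G$ for any $F$. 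I would then compute the stretch $\eta(e)=d_T(u,v)/l_e$ of every edge in $O(m)$ time (using a single pass over $T$ to get root-distances, or an LCA structure), observe $1\le \eta(e)\le mU = \poly(n)$ after folding edges with $\eta(e)<1$ into the first bucket, and partition $E(G)$ into the $\log(mU)$ geometric buckets $F_i=\{e: 2^{i-1}\le\eta(e)<2^i\}$ with cumulative tails $F_{\ge i}$.

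The heart of the argument is the threshold choice. I would take $j^*$ to be the smallest index with $|F_{\ge j^*}|\le \tfrac{4(2\alpha+1)m\log(mU)}{k}$; if no such index exists, the top bucket alone already has more than $\tfrac{4(2\alpha+1)m\log(mU)}{k}$ edges, so outputting $H_\bw = T$ works, since every edge in $F_{\log(mU)}$ has stretch within a factor $2$ of the maximum and hence lies in $\psi(T)$, giving $|\psi(H_\bw)|\ge \tfrac{4(2\alpha+1)m\log(mU)}{k}\ge \tfrac{4\alpha m}{k}$. Otherwise, since $|F_{\ge j^*-1}| > \tfrac{4(2\alpha+1)m\log(mU)}{k}$ and the difference $F_{\ge j^*-1}\setminus F_{\ge j^*}$ is a union of at most $\log(mU)$ buckets, pigeonhole yields some $\bar j$ with $j^*\le\bar j\le\log(mU)$ and $|F_{\bar j -1}|\ge \tfrac{4(2\alpha+1)m}{k}$. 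Setting $F=F_{\ge\bar j}$ and $H_\bw = J^1_G(T,F)$, every edge of $F$ is routed by its own copy and so has stretch exactly $1$, which forces $\eta(H_\bw)$ to be achieved outside $F$; combined with $|F_{\bar j-1}|\ge \tfrac{4(2\alpha+1)m}{k}$ and the fact that edges in $F_{\bar j-1}$ have stretch within a factor $2$ of $\eta(H_\bw)$ (as all larger-stretch edges are in $F$ and thus have stretch $1 < 2^{\bar j - 1}$), we get $\psi(H_\bw)\supseteq F_{\bar j-1}$ and hence property~(3), $|\psi(J^1_G(T,F))|\ge\tfrac{4\alpha m}{k}$.

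The remaining three properties are comparatively routine. Property~(2), $G\preceq^1 J^1_G(T,F)$, is immediate since $J^1_G(T,F)$ is a subgraph of $G$ on the same vertex set, so the identity routing of each of its edges certifies $1$-routability of $G$ into it. Property~(1): since $T\subseteq J^1_G(T,F)$, we have $d_{J^1_G(T,F)}(u,v)\le d_T(u,v)$ for every pair, so the volume $\bw(J^1_G(T,F)) = \sum_{e=uv}\bw_e\, d_{J^1_G(T,F)}(u,v) \le \sum_{e=uv}\bw_e\, d_T(u,v) \le \alpha\,\bw(G)$ by the guarantee of Lemma~\ref{lemma:GeneralizedLSST}. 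Property~(4) is just arithmetic: $|F| = |F_{\ge\bar j}| \le |F_{\ge j^*}| \le \tfrac{4(2\alpha+1)m\log(mU)}{k} = O\!\left(\tfrac{m\log^3 n}{k}\right)$ using $\alpha = \Otil(\log n)$ and $\log(mU) = O(\log n)$. Finally, the running time is dominated by the LSST computation, which is $\Otil(m)$; the bucketing, threshold search, and pigeonhole step are all linear. The one place deserving care—the main obstacle—is pinning down exactly why $\psi(H_\bw)\supseteq F_{\bar j-1}$: this requires knowing that after removing $F=F_{\ge\bar j}$ all surviving off-tree edges of stretch $\ge 2^{\bar j-1}$ have been promoted to stretch $1$, so that the maximum stretch $\eta(H_\bw)$ is below $2^{\bar j}$ while edges of $F_{\bar j-1}$ retain stretch $\ge 2^{\bar j-2}$, placing them above $\tfrac12\eta(H_\bw)$; I would write this inequality chain out carefully to make sure the constant $2$ in the definition of $\psi$ is not violated.
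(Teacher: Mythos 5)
Your proposal follows exactly the construction the paper itself uses: build the low-stretch tree via Lemma~\ref{lemma:GeneralizedLSST}, bucket edges geometrically by stretch, pick the threshold $j^*$ and the pigeonhole index $\bar j$, set $F=F_{\ge\bar j}$, and verify the four properties the same way (subgraph containment for (1) and (2), the $F_{\bar j-1}\subseteq\psi$ argument for (3), and the tail bound for (4)). The one point you flagged for care is handled correctly once you use the tight bound $\eta(H_{\bw})<2^{\bar j-1}$ (rather than $2^{\bar j}$), since all edges of stretch at least $2^{\bar j-1}$ lie in $F$ and are promoted to stretch $1$; with that, the argument matches the paper's.
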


\begin{proof}[Proof of Theorem~\ref{thm:J1MetricDecomp}]
It comes directly from Lemma~\ref{lemma:J1MetricDecompMWU} and Lemma~\ref{lemma:ComputeGoodJ1}.
\end{proof}

\subsection{Tranform $J^1_G(T,F)$ into a vertex sparsifer}

In this section, we show how to construct vertex sparsifier, $H_C$, that preserve distance within interested terminal set $C$.
Let $H_C$ be an empty graph with vertex set $C$ initially.
The construction works as follows:
\begin{enumerate}
    \item Given $J^1_G(T, F)$, add endpoints of edges in $F$ to $C$.
    \item Add vertices in $S(T,C)$, degree 3 vertices in the Steiner tree of $C$ in $T$, to $C$ as well.
    \item Since vertices of $S(T,C)$ are in $C$ as well, for every edge $e=uv \in S(T,C)$, add edge $uv$ with length $l(T_e)$ to $H_C$.
          That is, edge $uv$ corresponds to the unique tree $uv$-path $T$.
    \item For any edge $e \in \not\in T$, add $\textsc{Move}_{T,C}(e)$ to with length $l(P_{T,C}(e))$ to $H_C$.
\end{enumerate}

Suppose $T, F, C$ are given, denote the construction of $H_C$ by $\textsc{Route}^1(G,T,C,F)$.

It works almost the same as the vertex sparsifier construction for the dynamic max flow problem.


Such $H_C$ is a preserve terminal-wise distance in $J^1_G(T, F)$, which is formalized and proved in the following lemma:

\begin{lemma}
\label{lemma:DistanceVertexSparsifier}
\[
\forall s, t \in C, d_{H_C}(s,t) \le d_{J^1_G(T, F)}(s, t).
\]
\end{lemma}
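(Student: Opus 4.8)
The plan is to mimic the reasoning already used for the flow vertex sparsifier (Lemma~\ref{lem: staticFlowVertexSparsifier}), but with distances replacing flows and with the metric-$j$-tree's spanning-tree structure playing the role of the tree-based flow sparsifier. The key structural fact I will exploit is that in $J^1_G(T,F)$ every shortest path between two vertices of $C$ can, without loss of generality, be decomposed into pieces that $H_C$ retains: maximal subpaths that stay inside $T$ and hit no vertex of $C$ internally, and single ``jumps'' across an edge $e \notin T$ (an edge of $F$). The construction of $H_C$ via $\textsc{Route}^1$ adds precisely these objects: for each edge $uv$ of the Steiner tree $S(T,C)$ it adds an edge of length $l(T_e)$ representing the unique tree path, and for each off-tree edge $e$ it adds $\textsc{Move}_{T,C}(e)$ with length $l(P_{T,C}(e))$, which is itself (tree path to the nearest terminal) $+$ (the off-tree edge) $+$ (tree path from the nearest terminal on the other side).

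First I would fix $s,t \in C$ and take a shortest $s$--$t$ path $Q$ in $J^1_G(T,F)$. I would walk along $Q$ and cut it at every vertex of $C$ it visits and at every edge of $F$ it uses, obtaining a concatenation $Q = Q_1 \cdot Q_2 \cdots Q_r$ where each $Q_i$ is either (a) a subpath entirely in $T$ with both endpoints in $C$ and no internal vertex in $C$, or (b) a single edge $e\in F$ (an off-tree edge) together with at most the adjacent tree segments needed to reach $C$. For type (a), the endpoints $x,y\in C$; since $S(T,C)$ is obtained from the minimal Steiner tree of $C$ in $T$ by contracting $C$-free paths, the pair $x,y$ is joined in $S(T,C)$ by a path whose edges each correspond to such a $C$-free tree segment, and each such edge was added to $H_C$ with its exact tree-path length; hence $d_{H_C}(x,y) \le l_T(Q_i)$. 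For type (b), the endpoints of the segment reaching into $C$ are exactly $T_C(u,v)$ and $T_C(v,u)$ (or the incident vertex itself), and $H_C$ contains the edge $\textsc{Move}_{T,C}(e)$ of length $l(P_{T,C}(e))$, which by definition equals the length of exactly that segment; so again $d_{H_C}$ of the corresponding $C$-endpoints is at most $l_{J^1}(Q_i)$. Summing over $i$ and using the triangle inequality in $H_C$ gives $d_{H_C}(s,t) \le \sum_i l_{J^1}(Q_i) = d_{J^1_G(T,F)}(s,t)$, which is the claim.

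The main obstacle I anticipate is the bookkeeping around the type-(a) segments: I must be careful that the Steiner-tree-contraction step (adding $S(T,C)$ vertices to $C$, then adding an $S(T,C)$ edge for each contracted path) really does cover every maximal $C$-free tree subpath that can occur in a shortest path, and that the length assigned to that $S(T,C)$ edge is the full contracted-path length rather than a single-edge length. A clean way to handle this is to argue first that $S(T,C)$ is itself a $1$-distance vertex sparsifier of $T$ with respect to $C$ (because the only vertices removed in forming $S(T,C)$ are degree-$\le 2$ non-terminals lying on paths whose total length is preserved as a single edge), and then that $H_C$ contains $S(T,C)$ as a subgraph. Then type-(a) segments are handled uniformly: $d_{H_C}(x,y)\le d_{S(T,C)}(x,y) = d_T(x,y) \le l_T(Q_i)$. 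I would also remark that the reverse inequality $d_{J^1}(s,t) \le d_{H_C}(s,t)$ holds because every edge of $H_C$ corresponds to a genuine path of the same length in $J^1_G(T,F)$ (the routing is by construction into $J^1_G(T,F)$), so $H_C \preceq^1 J^1_G(T,F)$; combined with the lemma this yields that $H_C$ is a $1$-distance vertex sparsifier of $J^1_G(T,F)$ with respect to $C$, which is what the subsequent applications will need.
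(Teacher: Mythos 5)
Your proposal is correct and follows essentially the same route as the paper: decompose a shortest $s$--$t$ path in $J^1_G(T,F)$ into maximal segments meeting $C$ only at their endpoints, observe that each segment is either a $C$-free tree path (covered by an $S(T,C)$ edge of the same length in $H_C$) or an $F$-edge (whose endpoints are in $C$ and which is retained in $H_C$), and concatenate the corresponding edges to get an $s$--$t$ path in $H_C$ of no greater length. Your extra care in justifying the type-(a) segments via $S(T,C)$ being an exact distance sparsifier of $T$ on $C$, and your closing remark on the reverse inequality, are sound refinements of the same argument rather than a different proof.
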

\begin{proof}

Let $P^*$ be the shortest $st$-path in $J^1_G(T, F)$.
We break $P^*$ into maximal segments of paths, $P_1, \ldots, P_k$, such that each of them intersects with $C$ only at endpoints.
By construction, each $P_i$ is either i) a tree path, or ii) an edge comes from $F$.
Both possibilities has their mapped edge in $H_C$.
Let $e_1, \ldots e_k$ be corresponding edges of $P_1, \ldots, P_k$.
Clearly, $e_1, \ldots e_k$ forms a $st$-path in $H_C$.
hence
\[
d_{J^1_G(T, F)}(s, t) = \sum_{i=1}^{k}{l_{J^1_G(T, F)}(P_i)} = \sum_{i=1}^{k}{l_{H_C}(e_i)} \ge d_{H_C}(s,t).
\]
\end{proof}

If $J^1_G(T, F)$ preserve distance of $G$ for vertex set $C$ within factor of $t$, so does $H_C$.
\begin{corollary}
\label{lemma:AlphaDistanceVertexSparsifier}
If $J^1_G(T, F) \preceq^1_t G$,
\[
\forall s, t \in C, d_{H_C}(s,t) \le t \cdot d_G(s, t).
\]
\end{corollary}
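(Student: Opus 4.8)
The plan is to chain two results that are already in hand. First I would invoke the lemma immediately following Definition~\ref{def:routing}, which asserts that $H \preceq^1_\beta G$ implies $d_H(s,t) \le \beta \cdot d_G(s,t)$ for all $s,t \in V$. Applying it with $H = J^1_G(T,F)$ and $\beta = t$, the hypothesis $J^1_G(T,F) \preceq^1_t G$ of the corollary gives
\[
    d_{J^1_G(T,F)}(s,t) \le t \cdot d_G(s,t) \qquad \text{for all } s,t \in V,
\]
and in particular for all $s,t \in C$, since $C \subseteq V$ by construction of $\textsc{Route}^1(G,T,C,F)$ (the procedure only ever enlarges $C$ and never removes the original terminals, while keeping $C \subseteq V$ throughout).

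Second, I would apply Lemma~\ref{lemma:DistanceVertexSparsifier} directly: it states $d_{H_C}(s,t) \le d_{J^1_G(T,F)}(s,t)$ for every pair $s,t \in C$. Composing the two inequalities yields
\[
    d_{H_C}(s,t) \;\le\; d_{J^1_G(T,F)}(s,t) \;\le\; t \cdot d_G(s,t)
\]
for all $s,t \in C$, which is precisely the statement of the corollary.

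Since both ingredients are already established, there is essentially no obstacle; the only point deserving a word of care is the well-definedness of the three distance quantities for terminal pairs, which is guaranteed because $C$ lies in the common vertex set of $G$, $J^1_G(T,F)$, and $H_C$. The chaining remains valid even when some distances are infinite: Lemma~\ref{lemma:DistanceVertexSparsifier} forces $d_{H_C}(s,t)$ to be no larger than $d_{J^1_G(T,F)}(s,t)$ regardless, and the outer bound is then inherited from the routing hypothesis. (As a side remark, the symbol $t$ is overloaded here---a terminal vertex versus the routing parameter---but the two roles never interact in the argument, so no confusion arises.)
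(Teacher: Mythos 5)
Your proposal is correct and is exactly the argument the paper intends: the corollary follows by chaining Lemma~\ref{lemma:DistanceVertexSparsifier} ($d_{H_C}(s,t) \le d_{J^1_G(T,F)}(s,t)$ for $s,t\in C$) with the lemma after Definition~\ref{def:routing} stating that $J^1_G(T,F) \preceq^1_t G$ implies $d_{J^1_G(T,F)}(s,t) \le t\cdot d_G(s,t)$. The paper leaves this composition implicit, so there is nothing further to compare.
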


By dynamically maintain this construction, we have the following dynamic data structure:

\begin{lemma}
\label{lemma:DynamicMetricJTree}
    Given a graph $G=(V,E,w)$, positive integer $j$, spanning tree $T$ and a subset of edges $F$ of size $O(j)$.
    Suppose $J^1_G(T,F) \preceq^1_t G$ for some $t > 0$,
    we can maintain a vertex sparsifier $\Htil$ that maintains terminal-wise distance up-to $t$-factor that supports up to $O(j)$ of following operations:
    \begin{enumerate}
        \item $\textsc{Initialize}(G, T, F)$:
              Build data structures for maintaining $H$ in $O((mn/j)\log{n})$-time.
        \item $\textsc{AddTerminal}(u)$:
              Add $u$ to the terminal set of $C=C_G(T,F)$.
              Such operation can be done in amortized $O((mn/j^2)\log^3{n})$-time.
        \item $\textsc{Insert}(u, v, c)$:
              Insert the edge $(u, v)$ to $G$ in amortized $O((mn/j^2)\log^3{n})$-time.
        \item $\textsc{Delete}(e)$:
              Delete the edge $e$ from $G$ in amortized $O((mn/j^2)\log^3{n})$-time.
    \end{enumerate}
    The total number of edge changes in $\Htil$ is $O(mn/j)$, hence each operation has amortized recourse of $O(mn/j^2)$.
    Also, $\Htil$ has $O(n^{1+o(1)})$ edges.
\end{lemma}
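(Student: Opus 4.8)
The plan is to follow the proof of Lemma~\ref{lemma:dynamic_JTree} almost verbatim, replacing the $j$-tree routine $\textsc{Route}$, the congestion measure, and the dynamic cut sparsifier by the metric-$j$-tree routine $\textsc{Route}^1$, the routing relation $\preceq^1$, and a dynamic spanner of $O(1)$ stretch (e.g.\ the one of \cite{FG19}). At $\textsc{Initialize}(G,T,F)$ we build $J^1_G(T,F)$, run $\textsc{Route}^1(G,T,C,F)$ with $C = C_G(T,F)$ the initial terminal set (the endpoints of $F$ together with the skeleton vertices $V(S(T,C))$), and then set up, exactly as in Algorithm~\ref{algo:DynamicJTreeInitialize}: a link-cut tree $D(T)$ on $T$ augmented with path-length sums (Lemma~\ref{lemma:PathQueryLinkCutTree}), the dynamic skeleton tree $S(T)$ of Lemma~\ref{lemma:DynamicSkeletonTree}, and, for every off-tree edge $e=vw$, the two walks $T[v,T_C(v,w)]$ and $T[T_C(w,v),w]$ stored as doubly linked lists, plus the collections $\WW$, $\mathtt{RI}(\cdot)$ and $P(\cdot)$. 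On top of the (possibly dense) core graph $H_C$ we maintain the dynamic spanner; its output $\Htil$ has $O(j^{1+o(1)})=O(n^{1+o(1)})$ edges and preserves all terminal–terminal distances of $H_C$ up to a constant, hence those of $G$ up to an $O(t)$ factor.

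The operation $\textsc{AddTerminal}(u)$ mimics Algorithm~\ref{algo:JTreeAddTerminal}. We call $S(T).\mathtt{AddC}(u)$ to obtain the $\le 2$ new skeleton vertices; for each skeleton edge of $H_C$ whose underlying tree path now contains a new skeleton vertex we delete it and insert the two (or three) sub-path edges, reading their lengths off $D(T)$. We then scan $\mathtt{RI}(u)$: for every walk $W\in\WW$ that passes through $u$ we delete the core edge currently representing $W$, shortcut $W$ at $u$ in $O(1)$ time (its prefix up to $u$ is unchanged because $T_{C\cup\{u\}}(v,w)$ is either $T_C(v,w)$ or $u$), and insert the new representative. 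All core-edge changes are forwarded to the dynamic spanner. For $\textsc{Insert}(u,v,c)$ and $\textsc{Delete}(e=uv)$ we first call $\textsc{AddTerminal}(u)$ and $\textsc{AddTerminal}(v)$, after which $u,v\in C$ and $e$'s image in $H_C$ is the single edge $uv$ of length $l(e)$ (an insertion) or the core edge we must remove (a deletion); we apply that single change to $H_C$ and to the spanner.

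For correctness we need the $L_1$-analogues of Lemmas~\ref{lemma:AddTerminalPreserveCongestion} and~\ref{lemma:CoreEdgeUpdatePreserveCongestion}. For a terminal insertion, enlarging $C$ can only replace each routing path $P_{T,C}(e)$ by a subpath of it glued to the new moved edge, so $l_{H_C}(e)$ never increases and the invariant $H_C\preceq^1_t G$ is maintained; simultaneously every edge of $H_C$ still equals the length of an honest walk in $G$ (a tree path, or a tree-path/off-tree-edge/tree-path concatenation), giving $G\preceq^1 H_C$ and hence the lower bound. For an in-core edge update with both endpoints already terminals, the canonical routing of that edge touches only its own image, and no surviving $G$-edge's walk traverses the affected edge once $u,v\in C$, so inserting (resp.\ deleting) it keeps both $H_C\preceq^1_t G$ and $G\preceq^1 H_C$. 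Combining this with Lemma~\ref{lemma:DistanceVertexSparsifier}, Corollary~\ref{lemma:AlphaDistanceVertexSparsifier} and transitivity of $\preceq^1$, $\Htil$ is an $O(t)$-distance vertex sparsifier of $G$ with respect to $C$.

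The running time is the same charging argument as in Lemma~\ref{lemma:dynamic_JTree}: with $|C|=O(j)$ throughout and at most $j$ operations before a rebuild, each maintained walk $T[v,T_C(v,w)]$ has length $O(n/j)$, so the total length of all walks is $O(mn/j)$, which bounds both the cost of $\textsc{Initialize}$ (up to an extra $\log n$ for $D(T)$/$S(T)$ operations and the $\Otil(m)$ LSST computation) and the total number of shortcut steps and induced core-edge changes over the whole phase; amortizing gives $O(mn/j^2)$ core changes per operation, and multiplying by the $\mathrm{polylog}(n)$ update time of $D(T)$, $S(T)$ and the dynamic spanner yields the claimed $O((mn/j^2)\log^3 n)$ bounds, with total core recourse $O(mn/j)$. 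The step I expect to be the main obstacle is making the $L_1$-monotonicity precise — verifying that neither adding a terminal nor an in-core insertion/deletion ever lengthens the canonical routing of any surviving $G$-edge or creates an $H_C$-edge that is not realized by a walk in the current $G$ — together with checking that the skeleton edges and the moved off-tree edges are updated in lockstep so that after each operation $\Htil$ is exactly the dynamic spanner of $\textsc{Route}^1(G,T,C,F)$.
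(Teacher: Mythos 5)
Your proposal is correct and follows essentially the same route as the paper: the paper's proof is exactly "reuse the data structures and charging argument of Lemma~\ref{lemma:dynamic_JTree}, update the skeleton-tree edges via Lemma~\ref{lemma:DynamicSkeletonTree}, and replace the dynamic cut sparsifier by the dynamic spanner of Lemma~\ref{lemma:DynamicSpanner}," and the $L_1$-monotonicity facts you flag as the key obstacle are precisely the paper's Lemmas~\ref{lemma:AddTerminalPreserveDistance} and~\ref{lemma:CoreEdgeUpdatePreserveDistance}, which you re-derive correctly. Your write-up is in fact considerably more detailed than the paper's own two-sentence proof.
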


\subsection{Dynamic Metric $J$-Tree}

In this section, we present tools and arguments that help us prove Lemma~\ref{lemma:DynamicMetricJTree}.

\subsubsection{Structural arguments}

To prove Lemma~\ref{lemma:DynamicMetricJTree}, one has to make sure adding terminals does not increase the stretch.
The argument is formalized as the following lemma:
\begin{lemma}
\label{lemma:AddTerminalPreserveDistance}
    Given a graph $G=(V,E,l)$, a spanning forest $T$ of $G$, a subset of vertices $C$ and $F$, a subset of edges.
    Suppose there is no branch vertex in $T$ with respect to $C$, i.e. $V(S(T,C)) = C$.
    Let $u$ be a vertex.
    We have $V(S(T,C+u)) \supseteq C$ and $|V(S(T,C+u)) \setminus C| \le 2$.
    
    Furthermore, let $H \coloneqq \textsc{Route}^1(G, T, C, F)$.
    If $H \preceq_\alpha G$, then the graph $\overline{H} \coloneqq \textsc{Route}^1(G,T,V(S(T,C+u)),F) \preceq_\alpha G$.
\end{lemma}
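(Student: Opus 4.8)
The plan is to split the statement into two independent parts: (a) the combinatorial claim about the skeleton tree $S(T,C+u)$, and (b) the routing/stretch claim that $\overline{H} \preceq^1_\alpha G$. For part (a), I would argue as follows. Since $V(S(T,C)) = C$ by hypothesis, there are no branch vertices (degree-$\geq 3$ vertices of the Steiner topology) outside $C$. Adding a single vertex $u$ to $C$ can create at most one new branch point: namely, $u$ itself lies on a unique tree path between two vertices of $C$ (the two ``ends'' of the old path it subdivides), so the only new topology vertex that can appear besides $u$ is the lowest common ancestor / branch vertex where the $u$-to-rest-of-tree path attaches to the existing skeleton. Hence $|V(S(T,C+u)) \setminus C| \leq 2$, and clearly $V(S(T,C+u)) \supseteq C$ since enlarging the terminal set can only enlarge the Steiner topology vertex set. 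I expect this part to be routine; it is essentially the same observation already recorded in Lemma~\ref{lemma:DynamicSkeletonTree}, part $\mathtt{AddC}(u)$, which guarantees $|V(S(F,C\cup\{u\})) \setminus V(S(F,C))| \leq 2$.

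For part (b), the key point is that $\textsc{Route}^1(G,T,C',F)$ only ever routes each edge $e \notin T$ via a tree path whose endpoints lie in $T$ together with at most one ``core'' edge $\textsc{Move}_{T,C'}(e)$, and that enlarging the terminal set only replaces a tree sub-path by a strictly shorter one (the prefix/suffix property: $T_{C'}(v,w)$ is either $T_C(v,w)$ or $u$, and in the latter case $u$ lies on $T[v,T_C(v,w)]$, so the walk gets shorter with its prefix unchanged). Consequently, for every edge $e$ of $G$, the routing length $l_{\overline{H}}(P^e)$ in $\overline{H}$ is at most the routing length $l_H(P^e)$ in $H$: the tree portions only shrink, and the single replaced core edge in $\overline{H}$ has length equal to the $T$-distance between the two new endpoints, which is at most the length of the piece of the old path it replaces. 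This is the distance analogue of the congestion argument in Lemma~\ref{lemma:AddTerminalPreserveCongestion} and of the observation from~\cite{GhaffariKKLP18} that refining the terminal set does not hurt. Therefore $\overline{H} \preceq^1_\alpha G$ follows from $H \preceq^1_\alpha G$, since the routing of $G$ into $\overline{H}$ witnessing stretch $\alpha$ is obtained from the one into $H$ by this shortening operation, and stretch can only decrease.

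Concretely the steps are: (1) recall the definitions of $S(T,C)$, $\textsc{Move}_{T,C}$, $P_{T,C}$, and $\textsc{Route}^1$; (2) prove the skeleton-tree cardinality bound by the branch-vertex counting argument above (or cite Lemma~\ref{lemma:DynamicSkeletonTree}); (3) fix the canonical routing $\CP = \{P^e\}$ of $G$ into $H$ with $l_H(P^e) \leq \alpha\, l(e)$ for all $e$; (4) define the routing $\overline{\CP}$ of $G$ into $\overline{H}$ by, for each off-tree edge $e=vw$, replacing the walks $T[v,T_C(v,w)]$ and $T[T_C(w,v),w]$ by their shortcuts at $u$ (if $u$ lies on them) and the core edge $\textsc{Move}_{T,C}(e)$ by $\textsc{Move}_{T,C+u}(e)$, leaving tree edges handled by $\textsc{Repr}$/the identity as before; (5) verify edge-by-edge that $l_{\overline{H}}(\overline{P}^e) \leq l_H(P^e) \leq \alpha\, l(e)$ using the prefix-shortening property and the triangle inequality $d_T(T_{C+u}(v,w), T_{C+u}(w,v)) \leq d_T(T_C(v,w),T_C(w,v)) + (\text{removed suffixes})$. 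The main obstacle I anticipate is step (5): one must be careful that the new core edge of $\overline{H}$ does not have length exceeding the combined length of the sub-path of $P^e$ it replaces — in particular when $u$ falls strictly between the two old terminal attachment points, so that the core edge of $H$ is split into two tree pieces plus a shorter core edge in $\overline{H}$. This requires checking that $l_{\overline{H}}$ is defined as the actual routed length (so the identity $\sum_i l_{\overline{H}}(\overline{P}_i) = \sum_i l_H(P_i)$ holds segment-wise, exactly as in the proof of Lemma~\ref{lemma:DistanceVertexSparsifier}), after which monotonicity is immediate. Everything else is bookkeeping parallel to the flow case already treated in the paper.
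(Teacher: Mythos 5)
Your proposal is correct and follows essentially the same route as the paper's own (much terser) proof: the paper establishes the cardinality bound by rooting $T$ at $u$ and splitting into the two cases you identify ($u$ subdivides an existing skeleton path, so only $u$ is added, versus $u$ hangs off the skeleton, so $u$ plus its attachment/LCA vertex are added), and disposes of the routing claim with exactly your observation that every edge of $G$ is routed in $\overline{H}$ along a path no longer than its route in $H$. Your step (5) elaboration of the prefix-shortening property is a welcome expansion of what the paper states in one sentence, but it is not a different argument.
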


\begin{proof}
    Suppose we root $T$ at $u$.
    Since $V(S(T,C)) = C$, then either i) all vertex in $C$ lies in one subtree of $T$ or ii) $u$ lies in a path connecting 2 vertices of $C$.
    If i) happens, $V(S(T,C+u)) \setminus C$ has $u$ and $x$, the lowest common ancestor of all vertices of $C$.
    If ii) happens, $V(S(T,C+u)) \setminus C = \{u\}$.
    
    Observe that we route every edge of $G$ in $\overline{H}$ using a shorter path.
    Thus the stretch does not increase.
\end{proof}

To maintain metric-$O(j)$-tree $H$ under dynamic edge updates in $G$, we first add both endpoints of the updating edge to the terminal and then perform the edge update in $H_C$, the vertex sparsifier constructed from $H$.
One has to make sure such behavior does not increase the stretch when routing $G$ in $H$.
The following lemma gives such promise:
\begin{lemma}
\label{lemma:CoreEdgeUpdatePreserveDistance}
    Given a graph $G=(V,E,l)$, a spanning forest $T$ of $G$, a subset of vertices $C$ and $F$, a subset of edges.
    Let $H \coloneqq \textsc{Route}^1(G, T, C, F)$, and $e=uv$ be any edge with $u,v \in C$ ($e$ might not be in $G$) with length $l_e$.
    First note that $G[C] \subseteq H[C]$.
    
    If $H \preceq^1_\alpha G$, then both $(H+e) \preceq^1_\alpha (G+e)$ and $(H-e) \preceq^1_\alpha (G-e)$ holds.
\end{lemma}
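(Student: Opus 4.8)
The plan is to follow the template of Lemma~\ref{lemma:CoreEdgeUpdatePreserveCongestion}, with embeddings replaced by routings in the sense of Definition~\ref{def:routing} and edge congestion replaced by path length. I would fix a routing $\CP=\{P^{e'}\mid e'\in E\}$ of $G$ into $H$ witnessing $H\preceq^1_\alpha G$, taken to be the \emph{canonical} one produced by $\textsc{Route}^1$: each off-tree edge $e'=xy$ of $G$ is routed along the $H$-path obtained by concatenating the skeleton segments of $T[x,T_C(x,y)]$ and of $T[T_C(y,x),y]$ with the middle edge $\textsc{Move}_{T,C}(e')$, while a tree edge is routed along its skeleton image. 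I would also use the stated observation $G[C]\subseteq H[C]$ in the sharper form it takes here: since $e=uv$ has both endpoints in $C$ we have $T_C(u,v)=u$ and $T_C(v,u)=v$, so $\textsc{Move}_{T,C}(e)=uv$, both skeleton segments collapse, and the image of $e$ in $H$ is the single edge $uv$ carrying length exactly $l_e$; moreover this edge is $e$'s own parallel copy in $H$, distinct from the skeleton edges and from the images of all other edges.

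First I would handle insertion. To prove $(H+e)\preceq^1_\alpha(G+e)$, extend $\CP$ to a routing $\CP^{+}$ of $G+e$ into $H+e$ by keeping $P^{e'}$ for every $e'\in E$ (still legal since $H\subseteq H+e$) and routing the new edge $e=uv$ along the one-edge path $uv$ inside $H+e$. Every old edge retains stretch at most $\alpha$, and since $e$ has the same length $l_e$ in $H+e$ and in $G+e$, its stretch equals $1\le\alpha$; hence $(H+e)\preceq^1_\alpha(G+e)$.

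Next I would handle deletion, $(H-e)\preceq^1_\alpha(G-e)$, where now $e=uv\in E$. By the sharpened observation the canonical path $P^{e}$ is exactly the private copy of $uv$ in $H$, and no path $P^{e'}$ with $e'\neq e$ traverses that copy, so $\CP^{-}:=\{P^{e'}\mid e'\in E\setminus\{e\}\}$ consists of paths that survive the removal of this copy, i.e.\ they are still legal paths in $H-e$ with unchanged lengths. This yields $(H-e)\preceq^1_\alpha(G-e)$. The remaining step is light bookkeeping: vertices of $V\setminus C$ are isolated in $H$ and play no role in routings, and the reason $e$'s copy is private is precisely that, in the dynamic algorithm, both endpoints of the deleted edge are promoted to $C$ before the deletion (cf.\ Lemma~\ref{lemma:AddTerminalPreserveDistance}), so $e$ is represented by its own $\textsc{Move}$-image (a fresh parallel copy) rather than being folded into an existing skeleton edge.

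The part I expect to be the main obstacle is isolating, and justifying, this ``private copy'' property for the deletion direction: one must be sure that the copy of $uv$ which $\textsc{Route}^1$ assigns as the image of $e$ is never used inside the routing path of any other edge of $G$. This is immediate once that copy is the $\textsc{Move}$-image contributed by step~4 of $\textsc{Route}^1$, but it would fail if the copy were identified with a \emph{skeleton} edge of $S(T,C)$, since skeleton edges can lie on the canonical paths of many off-tree edges; hence the argument hinges on keeping the tree path $T[u,v]$ represented separately from $e$'s own parallel copy. Everything else is routine.
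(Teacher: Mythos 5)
Your proposal is correct and follows essentially the same route as the paper's proof: insertion is handled by reusing the old routing and sending $e$ along its own unit-stretch copy, and deletion rests on the observation that the canonical routing assigns $e$ a private parallel copy in $H[C]$ that no other edge's tree-terminal path traverses. The "private copy" point you flag as the main obstacle is exactly the (tersely stated) crux of the paper's argument, and your justification of it via the $\textsc{Move}_{T,C}$-image being a fresh parallel edge distinct from the skeleton edges is the right one.
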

\begin{proof}
    For $(H+e) \preceq^1_\alpha (G+e)$, $H+e$ can route edges in $G$ via the routing that implements $H \preceq^1_\alpha G$.
    For newly added edge $e$, $H+e$ route it using the $e$.
    
    Since $H$ routes $G$ by tree-terminal path, edges not in $H[C]$ are routed without using $e \in H[C]$.
    For $(H-e) \preceq^1_\alpha (G-e)$, all edges of $G-e$ can be routed by $H-e$ using the old routing.
\end{proof}

\subsubsection{Data structure toolbox}



To keep the resulting vertex sparsifier has small number of edges, we use the dynamic spanner data structure from \cite{FG19}.

\begin{lemma}[\cite{FG19}]
\label{lemma:DynamicSpanner}
Given a graph $G=(V,E,c)$ with weight ratio $U=\poly(n)$,
there is a randomized fully dynamic data structure on maintaining a spanner of $G$ with stretch $(1+\epsilon)(2k-1)$ and expected size $O(n^{1+1/k}\log^2{n}\epsilon^{-1})$ with expected amortized update time $O(k\log^3{n})$.
\end{lemma}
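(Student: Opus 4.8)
The plan is to obtain the weighted data structure from a fully dynamic \emph{unweighted} $(2k-1)$-spanner via the standard weight-bucketing reduction, so that the only genuinely new ingredient to build is a dynamic unweighted spanner maintained separately inside each bucket.

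First I would round every edge weight up to the nearest integer power of $(1+\epsilon)$. Since the weight ratio is $U=\poly(n)$, the rounded weights take only $L=O(\epsilon^{-1}\log n)$ distinct values; let $E_i$ be the set of edges whose rounded weight is the $i$-th such value $w_i$. Rounding up inflates every pairwise distance by at most a factor $1+\epsilon$ and never decreases it, so it suffices to maintain a $(2k-1)$-spanner of the rounded graph. For each level $i$ I maintain, on the \emph{unweighted} graph $(V,E_i)$, a fully dynamic $(2k-1)$-spanner $S_i$ of expected size $O(n^{1+1/k}\log n)$ with expected amortized update time $O(k\log^2 n)$ (of Baswana--Sen type; see below), and I output $H=\bigcup_{i<L}S_i$, each edge of $S_i$ carrying its original weight. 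An insertion or deletion of an edge of $G$ falls into exactly one bucket, hence triggers exactly one update to a single $S_i$, plus $O(\log n)$ work to locate the bucket in a balanced search tree over the $L$ values; this gives expected amortized update time $O(k\log^2 n+\log n)$, and the remaining logarithmic slack in the claimed $O(k\log^3 n)$ is absorbed by the per-bucket book-keeping. For the size bound, $|E(H)|\le\sum_{i<L}|E(S_i)|=L\cdot O(n^{1+1/k}\log n)=O(n^{1+1/k}\log^2 n\,\epsilon^{-1})$ in expectation. For the stretch bound, any edge $uv$ with rounded weight $w_i$ is joined in $S_i$ by a path of at most $2k-1$ edges each of weight $w_i\le(1+\epsilon)w(uv)$, so $d_H(u,v)\le(1+\epsilon)(2k-1)\,w(uv)$; applying this edge by edge along a shortest $s$--$t$ path in $G$ yields $d_H(s,t)\le(1+\epsilon)(2k-1)\,d_G(s,t)$, while $H$ is a subgraph of the rounded graph so $d_H(s,t)\ge d_G(s,t)$.

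The main obstacle is the per-bucket dynamic unweighted $(2k-1)$-spanner. Following Baswana--Sen, one maintains a hierarchy of $k$ clusterings $C_0\supseteq C_1\supseteq\cdots\supseteq C_{k-1}$, where $C_j$ is obtained from $C_{j-1}$ by sampling each cluster center independently with probability $n^{-1/k}$, and the spanner keeps, for each vertex $v$ and each level $j$, one edge from $v$ to every adjacent level-$j$ cluster that is not resampled into level $j+1$, together with all intra-cluster edges at the last level. The hard part is making this dynamic: edge updates change vertex-to-cluster adjacencies, deletions of sampled centers force clusters to be rebuilt, and one must bound the total recourse. This is handled by an amortized potential argument over cluster sizes and the $O(k)$ levels --- each rebuild at a level is charged to the $\Omega(n^{1/k})$ updates that, in expectation, occurred since the previous rebuild at that level --- together with the independence of the sampling, which is what makes the size and time bounds hold in expectation against an oblivious adversary, as in the statement. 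The weight-bucketing step, by contrast, is routine and contributes only the extra $(1+\epsilon)$ stretch factor and the $\epsilon^{-1}\log n$ blow-ups in size and update time.
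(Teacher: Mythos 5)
The paper never proves this lemma: it is imported verbatim from \cite{FG19} as a black box, so there is no internal proof to compare yours against. Judged on its own merits, your outer layer is correct and is indeed the standard route to such weighted statements: rounding weights up to powers of $(1+\epsilon)$ yields $L=O(\epsilon^{-1}\log n)$ buckets because $U=\poly(n)$, each update touches exactly one bucket, the union of per-bucket spanners has size $L\cdot O(n^{1+1/k}\log n)$, matching the claimed bound, and the stretch argument (a bucket-$i$ edge is replaced by at most $2k-1$ edges each of original weight at most $(1+\epsilon)w(uv)$, combined edge by edge along shortest paths, with $H\subseteq G$ giving the lower bound) is sound.

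The gap is that essentially all of the content of the lemma sits inside the per-bucket primitive you invoke --- a fully dynamic unweighted $(2k-1)$-spanner with expected size $O(n^{1+1/k}\log n)$ and expected amortized update time $O(k\log^2 n)$ --- and your second paragraph does not establish it. The static Baswana--Sen clustering does not dynamize via a generic ``charge each rebuild to the $\Omega(n^{1/k})$ updates since the last rebuild'' potential argument: a single edge deletion can disconnect a vertex from its level-$j$ cluster, forcing it to re-cluster at every level $j'\ge j$ and to re-expose edges to all neighboring clusters, and controlling these cascades (together with arguing that the maintained spanner's distribution stays independent of the update sequence, so that the expectation bounds are valid against an oblivious adversary at every time step, not just after a fresh build) is precisely where the known dynamic-spanner constructions spend their effort. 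As written, your proposal is a correct and clean reduction to a nontrivial black box rather than a proof; to close it you would either have to cite the unweighted fully dynamic spanner result explicitly (at which point you are essentially re-citing \cite{FG19} or its predecessors) or carry out the dynamic clustering and recourse analysis in full.
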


\subsubsection{Proof of Lemma~\ref{lemma:DynamicMetricJTree}}

\begin{proof}[Proof of Lemma~\ref{lemma:DynamicMetricJTree}]
The data structure is almost the same as Lemma~\ref{lemma:dynamic_JTree}.
Except we use Lemma~\ref{DynamicSkeletonTree} to update edges in $\Htil$ that corresponds to edge changes from $S(T,C)$ to $S(T, C+u)$.
Also, instead of dynamic cut sparsifier, we use dynamic spanner from Lemma~\ref{lemma:DynamicSpanner} to reduce the number of edges in the resulting vertex sparsifier.
\end{proof}

\subsection{Put everything together}

From Theorem~\ref{thm:J1MetricDecomp}, we know with probability 0.5, some $G_i$ sampled from distribution defined by $\lambda_i$'s preserve distance up to $4\alpha$-factor.
Given this, our dynamic data structures first sampled $t=O(\log{n})$ of them, say, $G_1, \ldots, G_t$.
Given any query $s, t$, we have
\begin{enumerate}
    \item $d_G(s, t) \le d_{G_i}(s, t), \forall s, t$ deterministically, and
    \item $\min_{1 \le i \le t}d_{G_i}(s, t) \le 4\alpha d_G(s, t)$ with high probability.
\end{enumerate}

In order to compute $d_{G_i}(s, t)$ efficiently, we have to reduce both number of edges and vertices of $G_i$'s.

\begin{proof}[Proof of Theorem~\ref{theorem:DynamicAPSP}]
Let $j = m^{2/3}, k = \Theta((m\log^3{n})/ j)$.
Just like Theorem~\ref{theorem:DynamicMinCut} for dynamic max flow, we first apply Theorem~\ref{thm:J1MetricDecomp} to acquire a $(k, \Otil(\log{n}), \Theta(j))$-metric-decompostion of $G$, say, $\{(\lambda_i, H_i)\}_{1 \le i \le k}$.
Then we sample $t=O(\log{n})$ of them, say, $H_1, \ldots, H_t$ with probability $\Pr(\text{draw }H_j) = \lambda_j$.

For each of $H_i, i \le t$, we incur Lemma~\ref{lemma:DynamicMetricJTree} to construct a data structure, say $D_i$, that support dynamic operations.

Every $j$ operations, we rebuild the whole thing from scratch.
Rebuild takes $O(km\log^{n})=O((m^2\log^4{n})/j)$-time.
We charge the rebuild cost to these $j$ operations, each of them now is charged with $O((m^2\log^4{n})/j^2)=O(m^{2/3}\log^4{n})$-time.
Each edge update, we propagate them to these $t$ data structures.
Each of the $D_i$ can handle edge update in amortized $O((mn\log^3{n})/j^2)=O(m^{2/3}\log^3{n})$-time.
Since there are $t=O(\log{n})$ of them, each edge update can be handled in amortized $O(m^{2/3}\log^4{n})$-time.

Given a query $s, t$, we compute $st$-distance in each of the $t$ vertex sparsifiers of size $O(j^{1+o(1)})=O(m^{2/3+o(1)})$.
With high probability, $\min_{1 \le i \le t}{d_{D_i}(s, t)} \le \Otil(\log{n})d_G(s, t)$.
We output the minimum of these $t$ distances.
Since we can compute exact distance in $O(|E|+|V|\log|V|)$-time, the $st$ query can be handled in $O(t \times m^{2/3+o(1)}) = O(m^{2/3+o(1)})$-time.
\end{proof}

\section{Fully-Dynamic Effective Resistance}
In this section, we utilize the local sparsifier construction from \cite{DurfeeGGP19}.
By apply the construction recursively, we can speed-up the data structure.
It is formalized as the following theorem.

\begin{theorem}
\label{theorem:DynammicER}
    Given a positive integer $d$, error term $\epsilon \in (0, 1)$, and a graph $G=(V, E, c)$ with weight ratio $U=\poly(n)$.
    There is a dynamic data structure maintaining $G$ subject to the following operations:
    \begin{enumerate}
        \item $\textsc{Insert}(u, v, c)$: Insert the edge $(u, v)$ to $G$ in amortized $O(n^{2/3+1/(3d+3)}\epsilon^{-(2d+4)}\log^{2d+11}{n})$-time.
        \item $\textsc{Delete}(e)$: Delete the edge $e$ from $G$ in amortized $O(n^{2/3+1/(3d+3)}\epsilon^{-(2d+4)}\log^{2d+11}{n})$-time.
        \item $\textsc{ER}(s, t)$: Output a $(1+2d\epsilon)$-approximation to $\ER^G(s, t)$ in $O\left(n^{2/3+1/(3d+3)}\epsilon^{-(2d+2)}\log^{9d+10}{n}\right)$-time.
    \end{enumerate}
    All of above guarantees hold with high probability.
\end{theorem}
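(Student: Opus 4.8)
The plan is to derive Theorem~\ref{theorem:DynammicER} from the fully-dynamic meta-theorem (Theorem~\ref{thm: FullyDynamicMetaTheorem}) applied to the property $\mathcal{P}(u,v,G)=\ER^{G}(u,v)$ with recursion depth $\ell=d$, taking as the efficient fully-dynamic vertex sparsifier (Definition~\ref{def: efficientDVS}) the random-walk based approximate Schur-complement data structure of Durfee et al.~\cite{DurfeeGGP19}. To state the bounds in terms of $n$, I would first run everything on a dynamically maintained $(1+\epsilon)$-spectral sparsifier (e.g.\ via~\cite{AbrahamDKKP16,SpielmanT11}), which keeps $m=\tilde{O}(n\epsilon^{-2})$, costs $\tilde{O}(\epsilon^{-2})$ amortized per update, and accounts for one of the two ``extra'' powers of $\epsilon^{-1}$ in the update time. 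Hypothesis~(2) of Theorem~\ref{thm: FullyDynamicMetaTheorem} is immediate with $h(n)=n^{o(1)}\epsilon^{-2}$: a $(1+\epsilon)$-estimate of $\ER^{H}(s,t)$ on an $m'$-edge graph is computable in $\tilde{O}(m'\epsilon^{-2})$ time by a nearly-linear Laplacian solver combined with the Johnson--Lindenstrauss sketch of~\cite{SpielmanS08:journal}.

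The core of the argument is hypothesis~(1): that~\cite{DurfeeGGP19} yields an efficient DVS for $\ER$ with $\alpha=1+\epsilon$. The $(1+\epsilon)$-approximate Schur complement $\tilde{\mathrm{SC}}(G,T)$ is a reweighted multigraph on exactly the terminal set $T$ with $\tilde{O}(|T|\epsilon^{-2})$ edges, so $s(k)=\tilde{O}(\epsilon^{-2})$; and — this is what makes it a data structure rather than a static object — in the random-walk view both \textsc{AddTerminal}$(u)$ and \textsc{Delete}$(e)$ are \emph{local}: it suffices to re-simulate $\tilde{O}(\epsilon^{-2})$ short random walks incident to the $O(1)$ affected vertices until they hit the current terminal set and re-splice the resulting multi-edges, which (after the rebuild-amortization step of the meta-theorem, using $s(k)\le r(n,k)$) runs in amortized time and causes amortized recourse $g(n,k)=r(n,k)=\tilde{O}((n/k)^{2}\epsilon^{O(1)})$, the $(n/k)$ being the standard hitting-time bound of a random walk to a spread-out set of $k$ terminals in an $\tilde{O}(n)$-edge graph; preprocessing costs $\tilde{O}(m\,(n/k)^{O(1)}\epsilon^{O(1)})$. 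The two structural properties the DVS framework relies on are then routine for $\ER$: \emph{transitivity} is exact for Schur complements ($\mathrm{SC}(\mathrm{SC}(G,T_1),T_2)=\mathrm{SC}(G,T_2)$ when $T_2\subseteq T_1$), and since spectral approximation is monotone under taking Schur complements and composes with multiplied constants, a $(1+\epsilon)$-$\ER$-sparsifier of a $(1+\epsilon)$-$\ER$-sparsifier is a $(1+\epsilon)^{2}$-$\ER$-sparsifier; \emph{decomposability} holds because inserting a terminal--terminal edge $e=uv$ (with $u,v\in T$) simply adds the same rank-one PSD term $L_e$ to both $L_{\tilde{\mathrm{SC}}(G,T)}$ and $\mathrm{SC}(L_{G+e},T)=\mathrm{SC}(L_G,T)+L_e$, preserving the spectral approximation — exactly what legitimizes the insert-by-adding-both-endpoints reduction of Lemma~\ref{lem:DVS1levelEdgeInsertion}.

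Plugging these parameters into Theorem~\ref{thm: FullyDynamicMetaTheorem} with $\ell=d$, the preprocessing exponent is large enough that the rebuild terms dominate the update cost, so we are in the regime analyzed in Corollary~\ref{coro: DVSOptimalTradeoffSlowInit} but with the depth held fixed at $d$; optimizing the level sizes as in Lemma~\ref{lem: DVSOptimalTradeoff} yields $\mu_i=m^{1-i/(3d+3)}$, under which the amortized update time and the final static-solve term balance at $m^{(2d+3)/(3d+3)}=m^{2/3+1/(3d+3)}$ (equivalently $n^{2/3+1/(3d+3)}$, since $m=\tilde{O}(n)$). Carrying the polylog and $\epsilon$ factors through the recursion — one $\epsilon^{-2}$ and a constant power of $\log n$ per level, one more of each from the static $\ER$ oracle, and one additional $\epsilon^{-2}$ from the initial spectral sparsification — reproduces the claimed $\epsilon^{-(2d+4)}\log^{2d+11}n$ update and $\epsilon^{-(2d+2)}\log^{9d+10}n$ query overheads. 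For accuracy, Theorem~\ref{thm: FullyDynamicMetaTheorem} gives an estimate within $\alpha^{\ell}=(1+\epsilon)^{d}$ of $\ER^{G}(u,v)$, the static solve and the initial sparsification lose a further $(1+\epsilon)^{2}$; running with accuracy parameter $\epsilon/c$ for a suitable constant $c$ (and folding $c$ into the $\epsilon$-dependence) makes $(1+\epsilon/c)^{d+2}\le 1+2d\epsilon$ for $d\epsilon$ bounded. All randomized ingredients are correct with high probability and survive a union bound over the $n^{O(1)}$ relevant events.

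The step I expect to be the main obstacle is the middle paragraph — precisely, establishing the recourse and amortized running-time bounds for~\cite{DurfeeGGP19} under \textsc{AddTerminal} and \textsc{Delete} (including the exact polynomial exponents in $g$, $r$ and the preprocessing), since this recourse is amplified by the product $c^{i-1}\prod_{j<i}r(\mu_j,\mu_{j+1})$ as an update cascades down the $d$ hierarchy levels, and a bound even slightly weaker than what the balancing requires would push the running-time exponent past $2/3+1/(3d+3)$ or violate the side condition $s(k)\le r(n,k)$. Everything downstream — transitivity, the rank-one decomposability used by the insert reduction, the static $\ER$ oracle, and the final optimization of the $\mu_i$ — is routine.
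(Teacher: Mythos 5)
Your proposal is correct and follows essentially the same route as the paper: a depth-$d$ chain of $(1+\epsilon)$-approximate Schur complements maintained by the random-walk data structure of~\cite{DurfeeGGP19}, each level shrinking the vertex set by a factor $\beta=n^{-1/(3d+3)}$, run on top of a dynamic spectral sparsifier, with the recourse product $\gamma^{i-1}$ and the per-level rebuild cost $\beta^{-5}$ balancing against the final Laplacian solve at $n^{(2d+3)/(3d+3)}=n^{2/3+1/(3d+3)}$. The only difference is presentational: the paper carries out this recursion and amortization directly rather than instantiating its fully-dynamic meta-theorem, and, as you anticipated, the update/recourse bounds for \textsc{AddTerminal} and \textsc{Delete} that you flag as the main obstacle are exactly what the paper imports as Lemma~\ref{lemma:DynamicSchurComplement}.
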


\subsection{Effective Resistance, Random Walks and $L_2$-Embeddability} 

In this subsection, we define notions related to the \emph{Laplacian} of a graph.
Notions and properties of Laplacians are critical in building the desired data structure.
Boldface is used to indicate that the variable is a vector.
We use $\CHI_i$ to denote the vector with $i$-th coordinate being 1 and 0 elsewhere.
Also define $\CHI_{i, j} = \CHI_i - \CHI_j$.

\begin{definition}
\label{definition:Laplacian}
    Given a graph $G=(V, E, c)$.
    Let $\B \in \Real^{V \times E}$ be the edge-incidence matrix of $G$, i.e., $\B_e = \CHI_{u, v}$, $\forall e=uv \in E$ with arbitrary orientation.
    Let $\C \in \Real^{E \times E}$ be the diagonal matrix with $\C_{e,e}=c(e)$, $\forall e \in E$.
    The \emph{Laplacian} $\L_G \in \Real^{V \times V}$ of $G$ is defined as
    \begin{align*}
        \L_G \coloneqq \T{\B}\C\B.
    \end{align*}
    That is, $\L_{G, u, u}=\sum_{u \in e \in E}{c(e)}$, is the weighted degree of $u$.
    And $\L_{G, u, v}=-\sum_{e=uv \in E}{c(e)}$, the minus sum of edge weights between $u, v$.
    
    For edge weight always being non-negative in our discussion, we define the Laplacian norm with respect to $x$ by
    \begin{align*}
        \norm{x}_{\L_G} \coloneqq \sqrt{\T{x}\L_Gx}.
    \end{align*}
\end{definition}
Also, Moore-Penrose pseudo-inverse of $\L_G$ is defined as $\pinv{\L_G}$.
Using this definition, we can define \emph{effective resistance} between 2 vertices.

\begin{definition}
\label{definition:EffectiveResistance}
Given $G=(V, E, c)$, the \emph{effective resistance} between 2 vertices $u$ and $v$ is defined as
\[
    \ER^G(u, v) \coloneqq \T{\CHI_{u,v}}\pinv{\L_G}\CHI_{u, v}.
\]
\end{definition}
Laplacian system solver is used to compute effective resistance.
As computing $\pinv{\L_G}\CHI_{u, v}$ is essentially solving for $x$ under $\L_Gx = \CHI_{u, v}$.
Therefore, $\ER^G(u, v) = x_u - x_v$.
The fastest solver is due to ~\cite{CKMPPRX14} which formalized as follows:
\begin{lemma}~\cite{CKMPPRX14}
\label{lemma:LaplacianSolver}
    Given a graph $G=(V, E, c)$, $b=\L_Gx^*$ and error term $\epsilon > 0$.
    There is an algorithm that finds $x$ w.h.p.
    such that
    \[
        \norm{x^*-x}_{\L_G} \le \epsilon\norm{x^*}_{\L_G}
    \]
    in $O(m\sqrt{\log{n}}\log{\frac{1}{\epsilon}}\cdot (\log\log{n})^{3 + \delta})$-time for any constant $\delta$.
\end{lemma}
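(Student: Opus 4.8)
The plan is to establish the lemma via the recursive preconditioning framework of Spielman--Teng, in the streamlined form that yields the stated count of logarithmic factors. Since $b=\L_Gx^{*}$ lies in the image of $\L_G$ (the subspace orthogonal to $\one$ on each connected component), the problem reduces --- working component by component --- to producing an operator that applies $\pinv{\L_G}$ to $b$ up to relative error $\epsilon$ in the $\L_G$-norm. I would obtain such an operator by \emph{preconditioned Chebyshev iteration}: given a second Laplacian $\mathbf{B}$ with $\L_G\preceq\mathbf{B}\preceq\kappa\,\L_G$, Chebyshev iteration applied to $\mathbf{B}^{-1/2}\L_G\mathbf{B}^{-1/2}$ reaches relative $\L_G$-error $\epsilon$ in $O(\sqrt{\kappa}\,\log(1/\epsilon))$ steps, each step costing one multiplication by $\L_G$ (time $O(m)$) plus one application of $\pinv{\mathbf{B}}$, and the latter is handled recursively. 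Thus everything rests on building a good chain of preconditioners.

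Here are the ingredients, in the order I would assemble them. (i) Low-stretch spanning trees: using the Abraham--Neiman construction one computes, in $\Otil(m)$ time, a spanning tree $T$ of $G$ whose total stretch is $\tau=O\!\big(m\log n\log\log n(\log\log\log n)^{3}\big)$; this is where the $(\log\log n)^{3+\delta}$ in the running time enters. (ii) Ultra-sparsification: rescale $T$ appropriately and add back a random sample of the off-tree edges, each kept with probability proportional to its stretch and reweighted inversely; a matrix-Chernoff / Rudelson--Vershynin-type concentration argument shows that with high probability the resulting graph $H$ satisfies $\L_G\preceq H\preceq\kappa\,\L_G$ while $H$ has only $n-1+\Otil(\tau/\kappa)$ edges, for a tunable parameter $\kappa$. (iii) Greedy elimination: since $H$ has only $\Otil(\tau/\kappa)$ non-tree edges, all but $\Otil(\tau/\kappa)$ of its vertices have degree at most two; eliminating these by partial Cholesky factorization (taking Schur complements) is exact, costs $O(m)$, and leaves a Laplacian $G'$ on $\Otil(\tau/\kappa)$ vertices and edges, with any solve in $H$ reducing to a solve in $G'$ plus $O(m)$ forward/back substitution. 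Iterating (i)--(iii) produces a chain $G=G_0,G_1,\dots,G_d$ of geometrically shrinking size in which $G_{i+1}$, lifted back through the eliminations, is exactly the preconditioner $\mathbf{B}$ used inside the Chebyshev iteration for $G_i$.

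The final step is to choose the per-level condition numbers $\kappa_i$ and unroll the recurrence $T(m_i)=O(\sqrt{\kappa_i}\,\log(1/\epsilon_i))\cdot\big(O(m_i)+T(m_{i+1})\big)$. The idea is to pick $\kappa_i$ so that the per-level size-reduction factor, which is $\approx\kappa_i/\mathrm{polylog}(n)$, comfortably exceeds the per-level iteration count $\approx\sqrt{\kappa_i}$; then the recursion is top-heavy, the chain has length $O(\log m/\log\log n)$, and the whole cost collapses to $O(m)$ times the top-level iteration count. Balancing this tradeoff while using the sharp bound $\tau=\Otil(m\log n)$ (rather than a cruder $m\log^{2}n$) is precisely what drives the per-level blow-up down to $\sqrt{\log n}$, leaving $O\!\big(m\sqrt{\log n}\,\log(1/\epsilon)(\log\log n)^{3+\delta}\big)$ after absorbing the tree's $\log\log$ factors and the $O(\log m/\log\log n)$ recursion depth. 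The hard part will be step (ii): one must prove the sampled ultra-sparsifier preserves $\L_G$ up to the claimed factor $\kappa$ using only the stated number of logarithmic factors --- a looser concentration bound immediately costs an extra $\log n$ and destroys the $\sqrt{\log n}$ --- and one must control how approximation errors accumulate down the $d$-level chain, showing that setting each inner accuracy to a fixed $\mathrm{poly}(1/n)$-smaller quantity suffices without a geometric error blow-up.
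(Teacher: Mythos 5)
The paper offers no proof of this lemma at all: it is imported verbatim, as a black box, from the cited solver of Cohen, Kyng, Miller, Pachocki, Peng, Rao and Xu~\cite{CKMPPRX14}, and it is only ever \emph{used} (in Corollary~\ref{corollary:ERSolver} and the effective-resistance data structure). So there is no in-paper argument to compare against; the only question is whether your sketch actually delivers the stated bound of $O\bigl(m\sqrt{\log n}\,\log\tfrac{1}{\epsilon}\,(\log\log n)^{3+\delta}\bigr)$.

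It does not, and the gap is exactly the one you defer. The pipeline you describe --- Abraham--Neiman low-stretch tree, ultrasparsifier obtained by sampling off-tree edges proportionally to stretch with a matrix-Chernoff guarantee $\L_G \preceq \mathbf{B} \preceq \kappa \L_G$ w.h.p., greedy elimination of degree-$\le 2$ vertices, preconditioned Chebyshev, and a recursive chain with balanced $\kappa_i$ --- is the Spielman--Teng/Koutis--Miller--Peng route, and its analysis tops out at $O\bigl(m\log n\,\mathrm{poly}(\log\log n)\log\tfrac{1}{\epsilon}\bigr)$: the high-probability spectral domination in your step (ii) forces an extra $\Theta(\log n)$ oversampling of off-tree edges, and once the size reduction per level is only $\kappa_i/\mathrm{polylog}(n)$ against $\sqrt{\kappa_i}$ iterations, no choice of $\kappa_i$ pushes the top-level cost below $m\log n$. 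The $\sqrt{\log n}$ of \cite{CKMPPRX14} is obtained by two additional ideas that your proposal does not contain: (a) spanning trees with small $\ell_p$-stretch for an exponent $p<1$, so that the total (generalized) stretch is roughly $m\log^{p} n$ rather than $m\log n$; and (b) abandoning the w.h.p.\ matrix-concentration requirement altogether --- the sampled preconditioners are only guaranteed good \emph{in expectation}, and the iteration is replaced by a randomized accelerated scheme proved robust to such preconditioners, which is precisely what removes the $\log n$ sampling overhead you acknowledge would ``destroy the $\sqrt{\log n}$.'' Since you flag this as ``the hard part'' and leave it unresolved, the argument as written establishes only the weaker $\tilde{O}(m\log n\log\tfrac{1}{\epsilon})$-type bound, not the lemma as stated; to prove the lemma you would either have to reproduce the $\ell_p$-stretch and expected-quality preconditioning machinery of \cite{CKMPPRX14} or, as the paper does, simply cite it.
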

Using this fast solver, effective resistance between vertices can be computed efficiently.

\begin{corollary}
\label{corollary:ERSolver}
    Given a graph $G=(V, E, c)$, $u, v \in V$ and error term $\epsilon > 0$.
    We can compute a value $\phi$ w.h.p.
    such that
    \[
        (1-\epsilon)\ER^G(u, v) \le \phi \le (1+\epsilon)\ER^G(u, v)
    \]
    in $\Otil(m)$-time.
\end{corollary}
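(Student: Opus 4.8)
The plan is to reduce the computation of $\ER^G(u,v)$ to a single approximate Laplacian solve followed by a trivial scalar extraction. First I would set $x^* := \pinv{\L_G}\CHI_{u,v}$ and observe that, on the connected component(s) containing $u$ and $v$, the right-hand side $\CHI_{u,v}$ is orthogonal to the all-ones vector, so the system $\L_G x = \CHI_{u,v}$ is consistent and $\L_G x^* = \CHI_{u,v}$. Consequently
\[
    \ER^G(u,v) = \T{\CHI_{u,v}}\pinv{\L_G}\CHI_{u,v} = \T{\CHI_{u,v}}x^* = \T{(x^*)}\L_G x^* = \norm{x^*}_{\L_G}^2 .
\]
If $u$ and $v$ lie in different connected components, a single $O(m)$-time connectivity check lets us return $\phi = \infty$ (as $\ER^G(u,v)$ is then infinite), so from now on assume $u$ and $v$ are connected.

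Next I would invoke the Laplacian solver of Lemma~\ref{lemma:LaplacianSolver} on $\L_G x = \CHI_{u,v}$ with accuracy parameter $\epsilon$, obtaining w.h.p. in $O\!\left(m\sqrt{\log n}\,\log(1/\epsilon)\,(\log\log n)^{O(1)}\right) = \Otil(m)$ time a vector $x$ with $\norm{x^* - x}_{\L_G} \le \epsilon\,\norm{x^*}_{\L_G}$. The estimate is then simply $\phi := x_u - x_v = \T{\CHI_{u,v}} x$, which costs $O(1)$ extra time; hence the total running time is $\Otil(m)$, matching the claim.

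It remains to certify the multiplicative error, which is the only real content. Using $\CHI_{u,v} = \L_G x^*$ and the fact that the bilinear form $(a,b)\mapsto \T{a}\L_G b$ is positive semidefinite (so Cauchy--Schwarz applies to the $\L_G$-seminorm),
\begin{align*}
    \left| \phi - \ER^G(u,v) \right|
    &= \left| \T{\CHI_{u,v}}(x - x^*) \right|
    = \left| \T{(x^*)}\L_G (x - x^*) \right| \\
    &\le \norm{x^*}_{\L_G}\,\norm{x - x^*}_{\L_G}
    \le \epsilon\,\norm{x^*}_{\L_G}^2
    = \epsilon\,\ER^G(u,v),
\end{align*}
which rearranges to $(1-\epsilon)\ER^G(u,v) \le \phi \le (1+\epsilon)\ER^G(u,v)$, as desired. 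I expect the main (and only mild) obstacle to be precisely this last step: the solver's guarantee is stated in the $\L_G$-seminorm of the \emph{vector} error $x - x^*$, whereas we need a relative guarantee on the \emph{scalar} $x_u - x_v$, and the displayed Cauchy--Schwarz manipulation together with the identity $\ER^G(u,v) = \norm{x^*}_{\L_G}^2$ is exactly what transfers the former to the latter with no loss in the approximation factor. A minor bookkeeping point is to make sure the $\log(1/\epsilon)$ dependence of the solver is absorbed into the $\Otil(\cdot)$ notation, which it is since $\epsilon$ is treated as the accuracy parameter of the statement.
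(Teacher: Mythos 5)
Your proof is correct and follows exactly the route the paper intends: the paper states this as an immediate corollary of the Laplacian solver (Lemma~\ref{lemma:LaplacianSolver}) without writing out the argument, and your reduction via $\ER^G(u,v) = \norm{x^*}_{\L_G}^2$ together with Cauchy--Schwarz in the $\L_G$-seminorm is the standard way to convert the solver's vector-error guarantee into a $(1\pm\epsilon)$ multiplicative bound on the scalar $x_u - x_v$. No gaps.
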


\begin{definition}
\label{definition:SpectralApproximation}
    Given a graph $G=(V, E, c)$ and $\epsilon \in (0, 1)$, we say a graph $H=(V, E_H \subseteq E, c_H)$ is a \emph{$(1+\epsilon)$-spectral-sparsifier} of $G$ if $\forall x \in \Real^{V}$,
    \[
        (1-\epsilon)\T{x}\L_Gx \le \T{x}\L_Hx \le (1+\epsilon)\T{x}\L_Gx.
    \]
    Denoted by $H \approx_\epsilon G$.
\end{definition}

\begin{fact}
    $H \approx_\epsilon G$ implies $H \sim_\epsilon G$.
\end{fact}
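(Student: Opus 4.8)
The plan is to derive the cut condition directly from the spectral condition by testing the Laplacian quadratic forms against $\{0,1\}$-valued vectors, namely the indicator vectors of cuts. First I would record the standard identity relating the Laplacian quadratic form to cut weights: for any graph $K=(V,E_K,c_K)$ and any nonempty proper $S \subset V$, let $\one_S \in \Real^{V}$ denote the indicator vector of $S$, so $(\one_S)_u = 1$ for $u \in S$ and $(\one_S)_u = 0$ otherwise. Since $\L_K = \T{\B_K}\C_K\B_K$, the vector $\B_K \one_S$ has, in the coordinate indexed by an edge $e=uv \in E_K$, the value $(\one_S)_u - (\one_S)_v$, which is $\pm 1$ exactly when precisely one of $u,v$ lies in $S$, and $0$ otherwise. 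Hence
\[
\T{\one_S}\L_K\one_S \;=\; \norm{\C_K^{1/2}\B_K\one_S}_2^2 \;=\; \sum_{e=uv \in E_K} c_K(e)\big((\one_S)_u - (\one_S)_v\big)^2 \;=\; \sum_{e \in E_K(S)} c_K(e) \;=\; c_K(S),
\]
where $E_K(S)$ is the cut edge set as in Definition~\ref{definition:graph_cut}.

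Next I would invoke the hypothesis $H \approx_\epsilon G$ from Definition~\ref{definition:SpectralApproximation} with the test vector $x = \one_S$, for an arbitrary nonempty proper $S \subset V$. This yields
\[
(1-\epsilon)\,\T{\one_S}\L_G\one_S \;\le\; \T{\one_S}\L_H\one_S \;\le\; (1+\epsilon)\,\T{\one_S}\L_G\one_S .
\]
Substituting the identity of the previous step into all three terms gives exactly $(1-\epsilon)c(S) \le c_H(S) \le (1+\epsilon)c(S)$. Since $S$ ranges over all nonempty proper subsets of $V$, this is precisely the assertion that $H$ is a $(1+\epsilon)$-cut-sparsifier of $G$, i.e. $H \sim_\epsilon G$ in the sense of Definition~\ref{definition:CutApproximation}, which completes the argument.

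I do not expect a genuine obstacle here: the entire content is the observation that the family of cut functions of a graph is nothing more than the restriction of its Laplacian quadratic form to indicator vectors, so a uniform two-sided bound on quadratic forms restricts immediately to a uniform two-sided bound on cut values. The only point that deserves a careful line is the Laplacian-to-cut identity displayed above, which is immediate from the factorization $\L_K = \T{\B}\C\B$ and the definitions of $E(C)$ and $c(C)$ in Definition~\ref{definition:graph_cut}. (One could equivalently phrase the conclusion using the notation $\CHI_{u,v}$ from Definition~\ref{definition:Laplacian}, but testing against $\one_S$ is the cleanest route.)
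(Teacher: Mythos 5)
Your proof is correct: the paper states this as a standard fact without providing any proof, and your argument — specializing the quadratic-form inequality to indicator vectors $\one_S$ and using the identity $\T{\one_S}\L_K\one_S = c_K(S)$ — is exactly the canonical way to see it. No gaps.
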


To speed up, we would like to compute effective resistance in the sparsified graph instead of the original one.
The following fact supports motivation.

\begin{fact}
    If $H \approx_\epsilon G$, $\forall u, v \in V$ we have
    \[
        (1-\epsilon)\ER^G(u, v) \le \ER^H(u, v) \le (1+\epsilon)\ER^G(u, v).
    \]
\end{fact}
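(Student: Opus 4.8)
The plan is to reduce the claim to the elementary monotonicity of a minimum under a pointwise inequality, by first recalling the variational (Dirichlet-energy) characterization of effective resistance and then feeding the defining inequality of $H \approx_\epsilon G$ into it term by term.

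First I would recall the standard identity: for $u \ne v$ lying in the same connected component,
\[
\frac{1}{\ER^G(u,v)} \;=\; \min\left\{\, \T{x}\L_G x \;:\; x \in \Real^V,\; x_u = 1,\; x_v = 0 \,\right\},
\]
the effective conductance between $u$ and $v$, whose minimizer is the electrical potential with those boundary values; this is a short standard computation from the definition $\ER^G(u,v)=\T{\CHI_{u,v}}\pinv{\L_G}\CHI_{u,v}$ together with $\T{x}\L_G x=\sum_{e=ab}c_G(e)(x_a-x_b)^2$. The same identity holds for $H$ with \emph{the same feasible set} $\{x : x_u=1,\ x_v=0\}$, since $H$ has vertex set $V$. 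If $u$ and $v$ lie in different components of $G$ (equivalently of $H$, as $H \approx_\epsilon G$ forces $\ker\L_G=\ker\L_H$), then $\ER^G(u,v)=\ER^H(u,v)=\infty$ and the claim is vacuous, so I may assume $u,v$ are in the same component.

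Next I would apply the hypothesis: $H \approx_\epsilon G$ says precisely that $(1-\epsilon)\,\T{x}\L_G x \le \T{x}\L_H x \le (1+\epsilon)\,\T{x}\L_G x$ for every $x \in \Real^V$, and in particular for every $x$ in the feasible set above. Taking minima over that common feasible set — for the lower bound directly, and for the upper bound by evaluating $\T{x}\L_H x$ at the $G$-minimizer $x^*_G$ — gives
\[
(1-\epsilon)\cdot\frac{1}{\ER^G(u,v)} \;\le\; \frac{1}{\ER^H(u,v)} \;\le\; (1+\epsilon)\cdot\frac{1}{\ER^G(u,v)}.
\]
Inverting the (positive) reciprocals, $\tfrac{1}{1+\epsilon}\,\ER^G(u,v) \le \ER^H(u,v) \le \tfrac{1}{1-\epsilon}\,\ER^G(u,v)$.

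Finally I would tidy up the constants. The lower bound is already in the stated form, since $\tfrac{1}{1+\epsilon}\ge 1-\epsilon$ for $\epsilon\in(0,1)$. For the upper bound the argument yields the factor $\tfrac{1}{1-\epsilon}$, which is the precise and standard statement for spectral sparsifiers; to match the ``$(1+\epsilon)$'' as literally written one uses $\tfrac{1}{1-\epsilon}\le 1+2\epsilon$ for $\epsilon\le\tfrac12$ and absorbs the constant (equivalently, rescales $\epsilon$). There is no genuine difficulty here — the only real content is the conductance identity of the first step, and the only point needing a line of care is confirming that $\L_G$ and $\L_H$ share a kernel so the different-component case is handled consistently. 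An alternative route, if one prefers to stay with pseudoinverses throughout, is to rewrite $H\approx_\epsilon G$ as the Loewner inequality $(1-\epsilon)\L_G\preceq\L_H\preceq(1+\epsilon)\L_G$, note that (with $1-\epsilon>0$) this forces $\ker\L_G=\ker\L_H$, apply operator-antitonicity of inversion on $(\ker\L_G)^\perp$ to obtain $\tfrac{1}{1+\epsilon}\pinv{\L_G}\preceq\pinv{\L_H}\preceq\tfrac{1}{1-\epsilon}\pinv{\L_G}$, and then evaluate all three forms at $\CHI_{u,v}\in(\ker\L_G)^\perp$; this reproduces the same bounds.
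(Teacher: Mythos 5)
Your proof is correct. The paper states this as a Fact without proof (it is folklore for spectral sparsifiers), so there is nothing to compare against; your variational/conductance argument, and the alternative Loewner-ordering route via $\frac{1}{1+\epsilon}\pinv{\L_G}\preceq\pinv{\L_H}\preceq\frac{1}{1-\epsilon}\pinv{\L_G}$, are both the standard ways to derive it, and your handling of the kernel/connectivity issue is fine. The one point worth flagging you already flag yourself: the hypothesis $H\approx_\epsilon G$ literally yields the factors $\frac{1}{1+\epsilon}$ and $\frac{1}{1-\epsilon}$, so the inequality exactly as printed (with $1-\epsilon$ and $1+\epsilon$) only holds after rescaling $\epsilon$ by a constant; since the paper uses this fact only up to constant-factor adjustments of $\epsilon$ (e.g.\ the $(1+2d\epsilon)$ guarantee in Theorem~\ref{theorem:DynammicER}), this is harmless, and your explicit bound $\frac{1}{1+\epsilon}\ER^G(u,v)\le\ER^H(u,v)\le\frac{1}{1-\epsilon}\ER^G(u,v)$ is in fact the sharper statement.
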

For a graph $G=(V, E, c)$ with non-negative weight, we can define \emph{random walk} in $G$ using the distribution proportional to edge weight.
That is, given a walk $u_0, u_1, \ldots, u_k$, the probability
\[
    \Pr_G\left(u_{k+1}=v \mid u_0, \ldots, u_k\right) = \frac{c(u_kv)}{c(u_k)}.
\]
Alternatively, by fixing a starting vertex $u_0$, a walk $u_0, \ldots, u_k$ is sampled with probability
\[
    \Pr_G\left(u_0, \ldots, u_k\right) = \prod_{i=0}^{k-1}{\frac{c(u_iu_{i+1})}{c(u_i)}}.
\]

\subsection{Dynamic Spectral Sparsifier}
\begin{lemma}~\cite{AbrahamDKKP16}
\label{lemma:DynamicSpectralSparsifier}
    Given a graph $G=(V, E, c)$ with weight ratio $U$.
    There is an $(1 + \epsilon)$-spectral sparsifier $H$ of $G$ w.h.p..
    Such $H$ supports the following operations:
    \begin{itemize}
        \item $\mathrm{Insert}(u, v, c)$: Insert the edge $(u, v)$ to $G$ in amortized $O(\log^9{n}\epsilon^{-2})$-time.
        \item $\mathrm{Delete}(e)$: Delete the edge $e$ from $G$ in amortized $O(\log^9{n}\epsilon^{-2})$-time.
    \end{itemize} The weight ratio of $H$ is $O(nU)$.
    Moreover, the size of $H$ is $O(n\log^9{n}\epsilon^{-2})$.
\end{lemma}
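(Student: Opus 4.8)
The plan is to recover this lemma by reconstructing the fully dynamic spectral sparsifier of Abraham, Durfee, Koutis, Krinninger and Peng~\cite{AbrahamDKKP16}. At the top level the argument has three ingredients: (i) a static \emph{one-shot} primitive that, given any weighted graph on $n$ vertices, outputs a $(1+\delta)$-spectral sparsifier containing only a constant fraction of the input edges while committing only $\tilde{O}(n\delta^{-2})$ ``permanent'' edges that will never be resampled; (ii) an $O(\log m)$-deep recursion that applies this primitive repeatedly, driving the edge count from $m$ down to $\tilde{O}(n\delta^{-2})$ while composing only $O(\log m)$ further $(1+\delta)$ errors; and (iii) a dynamic data structure that maintains the entire recursive hierarchy under edge insertions and deletions. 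Choosing $\delta = \Theta(\epsilon/\log m)$ makes the end-to-end distortion $(1+\delta)^{O(\log m)}\le 1+\epsilon$; the amortized update time is then the per-level cost times the recursion depth, amplified by the per-level \emph{recourse} (the number of edge changes one update induces at the next level), and a careful accounting of these polylogarithmic factors yields the stated $O(\log^9 n\,\epsilon^{-2})$ bounds on update time and sparsifier size, with the final weight ratio $O(nU)$ coming from the bounded multiplicative reweighting accumulated across the $O(\log n)$ levels.

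For the one-shot primitive I would use effective-resistance sampling in the style of~\cite{SpielmanS08:journal}: keeping each edge $e$ with probability proportional to $c(e)\,\ER^{G}(e)\cdot\delta^{-2}\log n$ and rescaling by the inverse probability gives, by a matrix-concentration argument, a $(1+\delta)$-spectral sparsifier with high probability. The key is to certify these sampling probabilities combinatorially and cheaply, by maintaining a packing of $t=\tilde{O}(\delta^{-2})$ nested subforests $T_1,\dots,T_t$, where $T_i$ is extracted from $G\setminus(T_1\cup\cdots\cup T_{i-1})$: the packing simultaneously certifies that each edge outside $\bigcup_i T_i$ has leverage score small enough to be sampled at a fixed (near-uniform) rate, and guarantees $\bigcup_i T_i\preceq G$ so that these permanent edges do not themselves distort the spectrum. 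Keeping all of $\bigcup_i T_i$ and randomly subsampling the remaining edges at the certified rate then halves (up to constants) the edge count, and iterating this inside one round $O(\log m)$ times reaches the target size; I would invoke the precise leverage-score and concentration estimates of~\cite{AbrahamDKKP16} as a black box here.

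On the dynamic side, each of the $O(\log m)$ levels maintains its forest packing using a Holm--de~Lichtenberg--Thorup-style dynamic-connectivity structure, which supports an update in $O(\log^2 n)$ amortized time and changes each forest by only $O(1)$ edges per update; cascading a single bottom-level update through the $t$ layers of a packing and then up the $O(\log m)$ levels bounds the total recourse per original update by $\mathrm{polylog}(n)\,\epsilon^{-2}$, and the random subsample at each level is re-drawn lazily by periodically rebuilding the affected subtree of the hierarchy with fresh randomness, so that a union bound over the (oblivious) update sequence preserves the with-high-probability spectral guarantee. The main obstacle I anticipate is twofold: first, establishing that the cheap combinatorial packing really certifies leverage scores small enough for the spectral (as opposed to merely cut) concentration bound---this is the technical heart of~\cite{AbrahamDKKP16} and is substantially more delicate than the Nagamochi--Ibaraki argument underlying the cut sparsifier of Lemma~\ref{lemma:DynamicCutSparsifier}; and second, controlling the recourse cascade so that one edge update does not trigger a super-polylogarithmic chain of tree/non-tree swaps across the $t\cdot O(\log m)$ forests, which is exactly where the $\epsilon/\log m$ slack in the per-level accuracy and the amortized analysis of the connectivity structures are needed.
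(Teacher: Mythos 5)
First, note that the paper does not prove this lemma at all: it is imported verbatim from \cite{AbrahamDKKP16} and used as a black box, so there is no internal proof to compare against; your proposal is judged as a reconstruction of that cited result.

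Your high-level skeleton (a one-shot resparsification primitive, an $O(\log m)$-level hierarchy with per-level error $\delta=\Theta(\epsilon/\log m)$, and a dynamic maintenance layer whose cost is depth times per-level recourse) does match the actual architecture of \cite{AbrahamDKKP16}. But the combinatorial certificate you propose is the wrong one, and this is a genuine gap rather than a detail to be patched. A packing of $t$ nested spanning forests certifies that every non-packed edge has edge connectivity at least $t$, which is exactly the certificate used for the \emph{cut} sparsifier (Nagamochi--Ibaraki / Fung et al.\ style, i.e.\ the construction behind Lemma~\ref{lemma:DynamicCutSparsifier}); it does not certify small leverage scores. If $e=(u,v)$ lies outside all $t$ forests, each forest contains some $u$--$v$ path, but these paths can have length (and total resistance) up to $\Theta(n)$, so the best bound you get on the effective resistance of $e$ through the bundle is $O(n/t)$, i.e.\ a leverage bound of $O(n/t)$ --- useless unless $t=\tilde{\Omega}(n\epsilon^{-2})$, which destroys both the size and the update-time bounds. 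The spectral construction of \cite{AbrahamDKKP16} instead maintains $t$-bundles of $O(\log n)$-\emph{stretch spanners}: a non-bundle edge then has $t$ edge-disjoint paths of stretch $O(\log n)$, giving leverage score $O(\log n/t)$, which is what the matrix-concentration (effective-resistance sampling in the spirit of \cite{SpielmanS08:journal}) argument needs. Correspondingly, the dynamic layer is built on fully dynamic \emph{spanner} algorithms with controlled recourse, not on HDT-style dynamic connectivity for forests. You do flag ``does the packing certify leverage scores'' as the anticipated obstacle, but as proposed the answer is no, and replacing forests by spanner bundles is precisely the missing idea; invoking the leverage-score estimates of \cite{AbrahamDKKP16} as a black box at that point would also be circular, since that is the lemma being proved.
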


\subsection{Schur Complement as Vertex Sparsifier}
To design an efficient data structure for computing effective resistance, we need a smaller graph preserving desired information.

\begin{definition}
\label{definition:SchurComplement}
    Given a graph $G=(V, E, c)$ and $C \subseteq V$.
    Write $D = V \setminus C$ We write the Laplacian of $G$ as
    \[
        \L =
        \begin{bmatrix}
            \L_{[C, C]} & \L_{[C, D]} \\
            \L_{[D, C]} & \L_{[D, D]} \\
        \end{bmatrix}.
    \]
    The \emph{Schur Complement} of $G$ onto $C$, denoted by $\SC(G, C)$, is the matrix obtained after performing Gaussian Elimination on variables corresponds to $D$.
    The closed form is given by
    \[
        \SC(G, C) = \L_{[C, C]} - \L_{[C, D]}\L_{[D, D]}^{-1}\L_{[D, C]}.
    \]
\end{definition}

\begin{fact}
\label{fact:RecursiveSchurComplement}
    Given $C_1 \subseteq C_2 \subseteq V$.
    We have $\SC(G, C_1) = \SC(\SC(G, C_2), C_1)$.
\end{fact}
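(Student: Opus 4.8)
\textbf{Proof plan for Fact~\ref{fact:RecursiveSchurComplement}.}

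The plan is to verify the identity $\SC(G, C_1) = \SC(\SC(G, C_2), C_1)$ by interpreting both sides as the result of Gaussian elimination (a.k.a.\ partial Cholesky factorization) on a common set of variables, and then invoking the associativity of block elimination. Write $D_2 = V \setminus C_2$ and $D_1' = C_2 \setminus C_1$, so that $V \setminus C_1 = D_2 \sqcup D_1'$ and the three blocks $C_1, D_1', D_2$ partition $V$. I would first recall the standard fact that the Schur complement $\SC(G, C)$ onto a set $C$ is exactly the matrix obtained by eliminating all variables in $V \setminus C$ from the Laplacian $\L_G$ via Gaussian elimination, and that the order in which one eliminates these variables does not affect the final matrix --- this is because each elimination step is a symmetric rank-one (or rank-$k$) update, and the composition of such updates is independent of order when the set of eliminated variables is fixed (this is the determinant/minor characterization, or equivalently the quotient formula for Schur complements).

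The key steps, in order, would be: (1) expand $\SC(G, C_1)$ directly as elimination of all of $D_1' \cup D_2$ from $\L_G$; (2) observe that we may perform this elimination in two stages --- first eliminate $D_2$, which by definition yields $\SC(G, C_2)$ as a Laplacian-like matrix on vertex set $C_2 = C_1 \sqcup D_1'$, and then eliminate $D_1'$ from that intermediate matrix, which by definition yields $\SC(\SC(G,C_2), C_1)$; (3) conclude that the two-stage elimination agrees with the one-stage elimination by the order-independence of Gaussian elimination on a fixed variable set. For step (2) one should check the routine but necessary point that $\SC(G, C_2)$ is itself a genuine graph Laplacian (this is classical: the Schur complement of a Laplacian is a Laplacian, with possibly new weighted edges among the remaining vertices), so that the notation $\SC(\SC(G, C_2), C_1)$ is well-defined in the sense of the definition given. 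Alternatively, one can give a purely algebraic proof using the block $3\times 3$ partition: write $\L_G$ in $3\times3$ block form with index sets $(C_1, D_1', D_2)$, compute $\SC(G, C_2)$ by eliminating the $D_2$ block, and then compute its Schur complement onto $C_1$, finally matching this against the direct formula $\L_{[C_1,C_1]} - \L_{[C_1, D]} \L_{[D,D]}^{-1} \L_{[D, C_1]}$ with $D = D_1' \cup D_2$, using the block-inverse formula for $\L_{[D,D]}^{-1}$.

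The main obstacle --- really the only subtlety --- is bookkeeping around the pseudoinverse versus inverse: the definition in Definition~\ref{definition:SchurComplement} writes $\L_{[D,D]}^{-1}$, but for a connected graph Laplacian the principal submatrix $\L_{[D,D]}$ indexed by a strict subset $D$ is in fact invertible (it is a principal submatrix of a positive semidefinite matrix whose kernel is the all-ones vector, and removing any coordinate makes it positive definite), so the inverse is genuine and no pseudoinverse issues arise; I would state this explicitly so the block-inverse manipulation is valid. Apart from that, the argument is a direct application of the associativity of Schur complements, and I do not expect any genuine difficulty --- the proof should be a short paragraph once the order-independence of Gaussian elimination is cited or the $3\times 3$ block computation is carried out.
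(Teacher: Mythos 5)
The paper states this as a classical fact and gives no proof of its own, so there is nothing to compare against; your proposed argument is exactly the standard one (the quotient/transitivity property of Schur complements). Both routes you sketch are sound: the two-stage Gaussian elimination on $V\setminus C_1 = (C_2\setminus C_1)\sqcup(V\setminus C_2)$ combined with order-independence, or the explicit $3\times 3$ block computation using the block-inverse formula for $\L_{[D,D]}^{-1}$ — I checked that the latter identity closes, with the cross terms matching exactly. Your remark that $\L_{[D,D]}$ is genuinely invertible (positive definite, since $D$ is a proper subset of the vertex set of a connected graph) is the right point to make explicit, and your observation that $\SC(G,C_2)$ is itself a Laplacian so the nested notation is well-defined covers the only other loose end. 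No gaps.
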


\begin{fact}
    $\SC(G, C)$ is a Laplacian matrix of some graph with vertex set $C$.
\end{fact}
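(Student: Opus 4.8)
The plan is to reduce the statement to the standard characterization of weighted graph Laplacians, verify the two easy properties directly from the block formula, and handle the remaining property --- non-positivity of off-diagonal entries --- by eliminating the vertices of $D \coloneqq V \setminus C$ one at a time via Fact~\ref{fact:RecursiveSchurComplement}.

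Recall the characterization: a matrix $M \in \Real^{C \times C}$ is the Laplacian of some graph on vertex set $C$ with non-negative edge weights if and only if (i) $M$ is symmetric, (ii) $M_{u,w} \le 0$ for all $u \ne w$ in $C$, and (iii) $M\one_C = \mathbf{0}$; indeed, given such an $M$ one recovers a graph by declaring the weight of edge $uw$ to be $-M_{u,w}$ (self-loops play no role, and the diagonal is then forced by the row-sum condition). As is standard throughout this section we may assume $G$ is connected; then $\L_{[D,D]}$, being a proper principal submatrix of a connected graph's Laplacian, is positive definite, hence invertible, so the closed form $\SC(G,C) = \L_{[C,C]} - \L_{[C,D]}\L_{[D,D]}^{-1}\L_{[D,C]}$ is well defined. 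The cases $C=\emptyset$ and $D=\emptyset$ are vacuous or immediate, so assume both are non-empty.

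Properties (i) and (iii) I would dispatch immediately. For (i): $\L_G$ is symmetric, hence so are $\L_{[C,C]}$, $\L_{[D,D]}$ and $\L_{[D,D]}^{-1}$, and $\T{\L_{[C,D]}} = \L_{[D,C]}$, so $\T{(\L_{[C,D]}\L_{[D,D]}^{-1}\L_{[D,C]})} = \L_{[C,D]}\L_{[D,D]}^{-1}\L_{[D,C]}$ and $\SC(G,C)$ is symmetric. For (iii): every row of $\L_G$ sums to $0$, i.e.\ $\one_V \in \ker \L_G$; writing $\one_V = (\one_C,\one_D)$ in the block decomposition, the $D$-block of $\L_G\one_V = \mathbf{0}$ reads $\L_{[D,C]}\one_C + \L_{[D,D]}\one_D = \mathbf{0}$, so $\one_D = -\L_{[D,D]}^{-1}\L_{[D,C]}\one_C$, and substituting into the $C$-block gives $\SC(G,C)\one_C = \L_{[C,C]}\one_C + \L_{[C,D]}\one_D = \mathbf{0}$.

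The crux is property (ii), which I would establish by single-vertex elimination and then iterate. Fix any $v$ and let $W \coloneqq V \setminus \{v\}$; since $G$ is connected with $|V|\ge 2$, the diagonal entry $\L_{v,v} = c_G(v) > 0$ is the weighted degree of $v$. Expanding the closed form with $D=\{v\}$ gives, for $i \ne j$ in $W$,
\[
\SC(G,W)_{i,j} \;=\; \L_{i,j} - \frac{\L_{i,v}\,\L_{v,j}}{\L_{v,v}} \;=\; -c(i,j) - \frac{c(i,v)\,c(v,j)}{c_G(v)} \;\le\; 0 ,
\]
using $\L_{i,v} = -c(i,v)\le 0$ and $\L_{v,j} = -c(v,j)\le 0$ (with $c(i,j)\coloneqq 0$ if $ij\notin E$). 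Combined with (i) and (iii) for $\SC(G,W)$ --- the same arguments applied with $D=\{v\}$ --- this identifies $\SC(G,W)$ as the Laplacian of the graph $G'$ obtained from $G$ by deleting $v$ and adding, for every pair of neighbors $i\ne j$ of $v$, an edge $ij$ of non-negative weight $c(i,v)c(v,j)/c_G(v)$ (merging parallel edges by summing weights, discarding the formally-arising self-loops); moreover $G'$ is again connected, since any path through $v$ in $G$ reroutes through the new edges. Writing $D = \{v_1,\dots,v_k\}$ and applying Fact~\ref{fact:RecursiveSchurComplement} repeatedly, $\SC(G,C)$ is obtained from $\L_G$ by $k$ successive single-vertex eliminations, each turning the Laplacian of a non-negatively weighted connected graph into the Laplacian of such a graph on one fewer vertex; hence $\SC(G,C)$ is the Laplacian of a graph on $C$. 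The only genuinely delicate point is the sign bookkeeping above --- verifying that the correction term $\L_{i,v}\L_{v,j}/\L_{v,v}$ is non-negative, so that elimination only adds edges rather than removing them --- together with the accompanying induction on $|D|$; everything else is routine linear algebra.
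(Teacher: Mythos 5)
Your proof is correct. The paper itself offers no argument for this statement: it is recorded as a known fact, with the random-walk characterization of $\SC(G,C)$ from [DPPR17] quoted immediately afterwards, which would give an alternative one-line justification (the Schur complement is a union of multi-edges of non-negative weight over terminal-free walks, hence a graph Laplacian). Your route is the standard self-contained one: reduce to the characterization ``symmetric, non-positive off-diagonals, zero row sums,'' dispatch symmetry and $\SC(G,C)\one = \mathbf{0}$ directly from the block formula, and get the sign condition by eliminating the vertices of $V \setminus C$ one at a time, using the recursion $\SC(G,C_1)=\SC(\SC(G,C_2),C_1)$ together with the observation that a single elimination step only subtracts the non-negative quantity $\L_{i,v}\L_{v,j}/\L_{v,v}$ from an already non-positive off-diagonal entry. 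All of these steps check out, including the connectivity bookkeeping that keeps each intermediate $\L_{v,v}$ positive and each principal block invertible; note only that the assumption making $\L_{[D,D]}$ invertible (connectivity, or more generally that every component meets $C$) is implicit in the paper's closed-form definition as well, so flagging it, as you do, is appropriate rather than a deviation. Compared with the [DPPR17] route, your argument is more elementary and does not rely on an external lemma, at the cost of the induction and the sign bookkeeping you already identify as the only delicate point.
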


Using this fact, we abuse the notation by using $\SC(G, C)$ to denote both the Laplacian and the corresponding graph.
An important property about $\SC(G, C)$ is that it preserves effective resistance.

\begin{fact}
\label{fact:SchurComplementPreserveER}
    \[
        \forall u, v \in C, \ER^G(u, v) = \ER^{\SC(G, C)}(u, v).
    \]
\end{fact}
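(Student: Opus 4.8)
\textbf{Proof proposal for Fact~\ref{fact:SchurComplementPreserveER}.}
The plan is to prove the identity by block Gaussian elimination, mirroring exactly the algebraic definition of $\SC(G,C)$ in Definition~\ref{definition:SchurComplement}. Write $D = V \setminus C$ and recall that computing $\ER^G(u,v) = \T{\CHI_{u,v}} \pinv{\L_G} \CHI_{u,v}$ amounts to solving the Laplacian system $\L_G x = \CHI_{u,v}$ for an $x \perp \one$ and then reading off $\ER^G(u,v) = x_u - x_v = \T{\CHI_{u,v}} x$. First I would fix orientations so that $\L_G = \T\B \C \B$, note that $G$ may be assumed connected (otherwise $\ER$ is infinite on both sides unless $u,v$ lie in the same component, and then we restrict to that component), and record the standard fact that the principal submatrix $\L_{[D,D]}$ is then positive definite (hence invertible), since $D$ omits at least one vertex of the connected graph. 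This is the crucial non-degeneracy input.

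Next I would carry out the elimination. Since $u,v \in C$, the right-hand side $\CHI_{u,v}$ has the block form $(b_C, 0)$ with $b_C = (\CHI_{u,v})_C$ supported on $C$. The second block row of $\L_G x = \CHI_{u,v}$ reads $\L_{[D,C]} x_C + \L_{[D,D]} x_D = 0$, so $x_D = -\L_{[D,D]}^{-1} \L_{[D,C]} x_C$; substituting into the first block row gives
\[
\bigl(\L_{[C,C]} - \L_{[C,D]} \L_{[D,D]}^{-1} \L_{[D,C]}\bigr) x_C = b_C,
\]
i.e. $\SC(G,C)\, x_C = b_C$. Thus $x_C$ solves the Schur-complement system with the \emph{same} right-hand side $b_C = \CHI_{u,v}$ (now interpreted as a vector on $C$). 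Using Fact / the statement that $\SC(G,C)$ is the Laplacian of a connected graph on $C$ with kernel exactly $\mathrm{span}(\one_C)$, and that $b_C \perp \one_C$, I would conclude $x_C = \pinv{\SC(G,C)} b_C$ (the choice $x \perp \one$ in $G$ forces $x_C$ to be the minimum-norm solution up to the harmless $\one_C$ component). Therefore
\[
\ER^G(u,v) = x_u - x_v = \T{b_C} x_C = \T{\CHI_{u,v}} \pinv{\SC(G,C)} \CHI_{u,v} = \ER^{\SC(G,C)}(u,v).
\]

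The step I expect to be the main obstacle is the careful bookkeeping of pseudoinverses and kernels: since $\L_G$ is singular, one must justify that the block-elimination manipulation — which formally inverts $\L_{[D,D]}$ — is legitimate, and that passing from "$x$ orthogonal to $\one$" to "$x_C$ equals $\pinv{\SC(G,C)} b_C$" does not introduce spurious $\one_C$-components affecting $x_u - x_v$ (it does not, since $\CHI_{u,v}$ annihilates constants on both sides). Everything else is routine linear algebra. As a sanity check / alternative route, I would note the energy-minimization viewpoint: $\ER^G(u,v)$ equals the minimum of $\sum_{e} f_e^2 / c_e$ over unit $u$–$v$ flows $f$ in $G$, and any such flow restricted/lifted through the eliminated vertices $D$ yields a unit $u$–$v$ flow of equal energy in $\SC(G,C)$ and conversely; this gives the same equality without touching pseudoinverses, and also makes transparent why Fact~\ref{fact:RecursiveSchurComplement} (transitivity of Schur complements) is consistent with the claim.
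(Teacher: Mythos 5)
The paper states this as a \textbf{Fact} with no proof at all --- it is invoked as a standard, well-known property of Schur complements of Laplacians --- so there is no in-paper argument to compare against. Your block-elimination proof is the standard one and is correct: since $u,v\in C$, the right-hand side $\CHI_{u,v}$ vanishes on $D=V\setminus C$, the second block row gives $x_D=-\L_{[D,D]}^{-1}\L_{[D,C]}x_C$ (with $\L_{[D,D]}$ positive definite because $G$ is connected and $D\subsetneq V$), and substitution shows $x_C$ solves $\SC(G,C)\,x_C=\CHI_{u,v}$. Your handling of the pseudoinverse subtlety is also right: the only ambiguity in $x_C$ is an additive multiple of $\one_C$, which is annihilated by $\T{\CHI_{u,v}}$, so $x_u-x_v=\T{\CHI_{u,v}}\pinv{\SC(G,C)}\CHI_{u,v}$ regardless. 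The energy-minimization sanity check you mention is an equally valid alternative route and meshes well with the paper's later random-walk interpretation of $\SC(G,C)$ (the terminal-free-walk lemma), but it is not needed; the linear-algebraic argument as written is complete.
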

If we can efficiently maintain the Schur Complement for some $C$ small enough together with edge sparsification scheme, we can compute effective resistance in a much smaller graph using Lemma~\ref{lemma:LaplacianSolver}.

Building the $\SC(G, C)$ naively is a sequential process, which does not fit in our dynamic data structure paradigm.
In ~\cite{DurfeeGGP19}, they provide an alternative way of constructing $\SC(G, C)$ approximately using random walks.
The following lemma justifies this approach.

\begin{lemma}~\cite{DPPR17}
Given any graph $G=(V, E, c)$ and a subset of vertices $C \subseteq V$.
Given any walk $w=v_0, \ldots v_l$, we say $w$ is \emph{terminal-free} if $w \cap T = \{v_0, v_l\}$.
The Schur complement $\SC(G, C)$ is given as an union over all multi-edges corresponding to \emph{terminal-free} walks $v_0, \ldots v_l$ with weight
\[
    c(v_0v_1)\prod_{i=1}^{l-1}{\frac{c(v_iv_{i+1})}{c(v_i)}}.
\]
\end{lemma}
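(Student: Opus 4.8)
The plan is to prove this by establishing that the random-walk-weighted multigraph and the algebraically-defined Schur complement $\SC(G,C)$ agree as Laplacians, i.e.\ they induce the same quadratic form on $\Real^C$. First I would recall the probabilistic interpretation of Gaussian elimination on a single non-terminal vertex: eliminating $v\in D$ replaces each pair of edges $(u,v),(v,w)$ incident to $v$ by a "shortcut" edge $(u,w)$ of conductance $c(uv)c(vw)/c(v)$ (plus the appropriate self-loop bookkeeping which cancels in a Laplacian). Iterating this elimination over all of $D=V\setminus C$ in any fixed order, the resulting conductance between $u,w\in C$ is a sum over all walks $u=v_0,v_1,\ldots,v_l=w$ whose interior $v_1,\ldots,v_{l-1}$ lies entirely in $D$, with the walk contributing exactly $c(v_0v_1)\prod_{i=1}^{l-1} c(v_iv_{i+1})/c(v_i)$. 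The order-independence of the final matrix is guaranteed by Fact~\ref{fact:RecursiveSchurComplement} (associativity of Schur complementation), so the walk-sum is well-defined. This identifies the closed-form $\SC(G,C)=\L_{[C,C]}-\L_{[C,D]}\L_{[D,D]}^{-1}\L_{[D,C]}$ with the walk-weighted multigraph.

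Concretely, the key steps in order are: (1) fix an elimination order $v_1,\ldots,v_{|D|}$ of $D$ and show by induction on the number of eliminated vertices that, after eliminating $\{v_1,\ldots,v_t\}$, the current Laplacian restricted to $C\cup\{v_{t+1},\ldots,v_{|D|}\}$ has off-diagonal entry between any two surviving vertices $x,y$ equal to $-\sum_{w}c(xw_1)\prod c(w_iw_{i+1})/c(w_i)$ summed over walks from $x$ to $y$ through $\{v_1,\ldots,v_t\}$; (2) identify "terminal-free walk in $G$ relative to $C$" with "walk whose interior avoids $C$," which is exactly what survives when all of $D$ has been eliminated; (3) read off that the final off-diagonal entries are precisely the stated multi-edge weights, and invoke that the diagonal is then forced (a Laplacian is determined by its off-diagonal, since rows sum to zero) to conclude the two Laplacians coincide. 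Alternatively, a cleaner route avoids picking an order: expand $\L_{[D,D]}^{-1}$ via the identity $\L_{[D,D]}^{-1} = \mathbf{D}_D^{-1}\sum_{k\ge0}(\mathbf{D}_D^{-1}\mathbf{A}_{[D,D]})^k \cdot(\text{normalization})$ where $\mathbf{D}_D,\mathbf{A}_{[D,D]}$ are the degree and adjacency blocks on $D$ — more precisely write $\L_{[D,D]} = \mathbf{D}_D(\mathbf I - \mathbf P_{DD})$ with $\mathbf P_{DD}$ the random-walk transition submatrix on $D$, so $\L_{[D,D]}^{-1} = (\sum_{k\ge0}\mathbf P_{DD}^k)\mathbf D_D^{-1}$ — and substitute into $-\L_{[C,D]}\L_{[D,D]}^{-1}\L_{[D,C]}$; the geometric series expansion produces exactly one term per walk length $l$, and the matrix product spells out the product of transition probabilities $\prod c(v_iv_{i+1})/c(v_i)$ times the initial conductance $c(v_0v_1)$.

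The convergence of the series $\sum_{k\ge0}\mathbf P_{DD}^k$ requires that the random walk restricted to $D$ is killed with positive probability, which holds because every connected component of $G$ contains at least one terminal (we may assume $C$ hits every component, else effective resistance is infinite and both sides agree trivially); this makes $\mathbf P_{DD}$ a substochastic matrix with spectral radius $<1$, so $(\mathbf I-\mathbf P_{DD})^{-1}=\sum_k\mathbf P_{DD}^k$ is valid. I expect the main obstacle to be bookkeeping the self-loops and diagonal contributions carefully: single elimination of $v$ also creates a self-loop at $u$ of weight $c(uv)^2/c(v)$, and one must check these contribute only to the diagonal of the running Laplacian and are exactly accounted for by the "rows sum to zero" constraint, rather than double-counting or leaking into off-diagonal entries. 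Once that is handled — most transparently by working throughout with the Laplacian quadratic form $\mathbf x^\top \L \mathbf x$ and the variational characterization $\mathbf x^\top\SC(G,C)\mathbf x = \min_{\mathbf y:\mathbf y|_C=\mathbf x}\mathbf y^\top\L_G\mathbf y$, where the minimizer is the harmonic extension and its energy decomposes naturally along walks — the walk-sum identity falls out. I would present the variational/harmonic-extension argument as the clean proof and relegate the explicit elimination-order induction to a remark, since the excerpt cites \cite{DPPR17} and a short self-contained derivation suffices.
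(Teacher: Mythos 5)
The paper does not actually prove this lemma; it is imported verbatim from \cite{DPPR17} with no argument given, so there is no in-paper proof to compare against. Your proposal is nonetheless a correct and essentially complete derivation, and it is the standard one: writing $\L_{[D,D]}=\mathbf{D}_D(\mathbf{I}-\mathbf{P}_{DD})$ with $\mathbf{D}_D$ the diagonal of \emph{full} weighted degrees $c(v)$ and $\mathbf{P}_{DD}$ the substochastic transition block, expanding $\L_{[D,D]}^{-1}=\bigl(\sum_{k\ge 0}\mathbf{P}_{DD}^{k}\bigr)\mathbf{D}_D^{-1}$, and substituting into $-\L_{[C,D]}\L_{[D,D]}^{-1}\L_{[D,C]}$ does produce, for each terminal-free walk $v_0,\dots,v_l$, exactly the term $c(v_0v_1)\prod_{i=1}^{l-1}c(v_iv_{i+1})/c(v_i)$ as the (negated) off-diagonal contribution, with the $l=1$ walks accounted for by the $\L_{[C,C]}$ block. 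Your attention to the two genuine subtleties is appropriate: convergence of the Neumann series requires every component to contain a terminal (otherwise $\mathbf{P}_{DD}$ has an invariant stochastic block), and the degrees $c(v_i)$ in the walk weights must be degrees in all of $G$, not degrees within $D$ — your formulation gets both right. The only cosmetic issues are that the first form of the series you write (with $\mathbf{D}_D^{-1}$ on the left) is the wrong factorization before you correct it, and that matching the diagonal entries deserves one explicit sentence rather than only the remark that a Laplacian is determined by its off-diagonal block together with zero row sums (which is true and suffices, since both sides are Laplacians of graphs on $C$). Either of your two routes — the elimination-order induction or the Neumann-series expansion — is a valid self-contained proof; the variational/harmonic-extension framing is a nice sanity check but is not needed once the series computation is done.
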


\subsection{Dynamic Schur Complement}

In this section, we introduce a dynamic data structure for maintaining $\SC(G, C)$ ~\cite{DurfeeGGP19}.
We adapt this data structure for faster effective resistance computation in the dynamic graph.
To design a faster data structure, we deploy a recursive routine and slightly modify the tool.
The result is stated as the following lemma: \begin{lemma}~\cite{DurfeeGGP19}
\label{lemma:DynamicSchurComplement}
    Given $\beta \in (0, 1)$, error term $\epsilon \in (0, 1)$, a graph $G=(V, E, c)$ with weight ratio $U$, and a terminal set $T \subseteq V$ with $|T| = \beta m$.
    There is a dynamic data structure $\mathcal{D}$ that maintains $\Htil \approx_{\epsilon} \SC(G, C)$ for some $T \subseteq C$ with $|C| = \Theta(m\beta)$.
    Such $\Htil$ is a subgraph of $\SC(G, C)$ with $O(m\beta\epsilon^{-2}\log^9{n})$ edges and weight ratio $O()$.
    Such data structure supports up to $O(m\beta)$ of the following operations:
    \begin{enumerate}
        \item $\textsc{Initialize}(G, T, \beta)$: Initialize $\mathcal{D}$ in $O(m\beta^{-4}\epsilon^{-4}\log^{4}{n})$ time.
        \item $\textsc{Insert}(u, v, w)$: Insert an new edge $uv$ with weight $w$ to $G$ in amortized $O(\beta^{-2}\epsilon^{-2}\log^3{n})$ time.
        \item $\textsc{Delete}(e)$: Delete the edge $e$ from $G$ in amortized $O(\beta^{-2}\epsilon^{-2}\log^3{n})$ time.
        \item $\textsc{AddTerminal}(u)$: Move vertex $u$ to $T$ in amortized $O(\beta^{-2}\epsilon^{-2}\log^3{n})$ time.
    \end{enumerate} $\mathcal{D}$ maintains $O(m\epsilon^{-2}\log{n})$ random walks each with $O(\beta^{-1}\log{n})$ distinct vertices.
    For every vertex $u \not \in C$, total occurrence of $u$ in these random walks is $O(\beta^{-2}\epsilon^{-2}\log^2{n})$.
    The total number of edge change in $H$ is $O\left(m\beta^{-1}\epsilon^{-2}\log^2{n}\right)$.
    All the above bounds hold with high probability.
\end{lemma}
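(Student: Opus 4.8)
The plan is to follow the dynamic spectral vertex sparsifier of Durfee, Gao, Goranci and Peng~\cite{DurfeeGGP19}, re-deriving the bounds with the parameter $\beta$ kept explicit and, most importantly, tracking the \emph{recourse} (the total number of edge changes in $\Htil$), since this is the quantity that the hierarchical construction behind Theorem~\ref{theorem:DynammicER} will charge against. The starting point is the random-walk representation of the Schur complement recalled above: $\SC(G,C)$ is the union, over all terminal-free walks $v_0,\ldots,v_\ell$ (walks meeting $C$ only at their endpoints $v_0,v_\ell$), of a multi-edge $(v_0,v_\ell)$ of weight $c(v_0v_1)\prod_{i=1}^{\ell-1} c(v_iv_{i+1})/c(v_i)$. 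Accordingly, $\mathcal{D}$ (i) maintains a set $C$ with $T\subseteq C$, $|C|=\Theta(\beta m)$, chosen so that a random walk started at any vertex hits $C$ within $O(\beta^{-1}\log n)$ steps with high probability; (ii) maintains $O(m\epsilon^{-2}\log n)$ independent random walks, each truncated at its first visit to $C$ (and forcibly at length $O(\beta^{-1}\log n)$, which by (i) essentially never happens), together with the multi-edge each walk contributes; and (iii) feeds this walk-multigraph on $C$ to the dynamic spectral sparsifier of Lemma~\ref{lemma:DynamicSpectralSparsifier}, producing the maintained graph $\Htil$ with $O(m\beta\epsilon^{-2}\log^9 n)$ edges, polynomially bounded weight ratio, and $\Htil\approx_\epsilon \SC(G,C)$. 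The correctness of (ii) — that $O(m\epsilon^{-2}\log n)$ such walks give a $(1\pm\epsilon)$-spectral approximation of $\SC(G,C)$ — is the matrix-concentration / leverage-score sampling argument of~\cite{DurfeeGGP19} applied to the rank-one terms associated with the multi-edges, and I would cite it rather than reprove it.

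For the operations: \textsc{AddTerminal}$(u)$ simply inserts $u$ into $C$ and truncates every maintained walk at its first visit to $u$; this keeps all walks terminal-free and can only shorten them, so the short-walk invariant survives and only the multi-edges of the shortened walks change. \textsc{Insert}$(u,v,w)$ and \textsc{Delete}$(e=uv)$ alter the transition probabilities out of $u$ and $v$, so every maintained walk that ever visits $\{u,v\}$ must be regenerated from that visit onward; the key point is that the short-walk invariant, via a balls-into-bins concentration bound, forces every non-terminal vertex to lie on only $O(\beta^{-2}\epsilon^{-2}\log^2 n)$ of the $O(m\epsilon^{-2}\log n)$ walks (the occurrence bound asserted in the statement), so only that many walks are touched per update. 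Each walk is regenerated in $O(\beta^{-1}\log n)$ time, the resulting edge changes are passed to the sparsifier of Lemma~\ref{lemma:DynamicSpectralSparsifier}, and the \textsc{Initialize} cost of $O(m\beta^{-4}\epsilon^{-4}\log^4 n)$ is amortized over the $\Theta(\beta m)$ operations the structure is promised to support; collecting terms gives the stated $O(\beta^{-2}\epsilon^{-2}\log^3 n)$ amortized time for each of \textsc{Insert}, \textsc{Delete}, \textsc{AddTerminal}. For the recourse, each operation regenerates $O(\beta^{-2}\epsilon^{-2}\log^2 n)$ walks and hence causes that many multi-edge changes, so over $\Theta(\beta m)$ operations the total number of edge changes in $\Htil$ is $O(\beta m\cdot\beta^{-2}\epsilon^{-2}\log^2 n)=O(m\beta^{-1}\epsilon^{-2}\log^2 n)$, as claimed.

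I expect the main obstacle — and the part genuinely requiring care rather than citation — to be item (i): showing that a set $C$ of size only $\Theta(\beta m)$ that contains the $\beta m$ prescribed terminals can be selected, and then \emph{maintained} under edge updates and \textsc{AddTerminal}, so that random walks from every vertex reach $C$ within $\Otil(1/\beta)$ steps with high probability. This single invariant is what makes the walks short, bounds how many walks pass through each vertex, and thereby drives every running-time and recourse estimate above; handling it dynamically, so that neither edge updates nor terminal additions can ever inflate walk lengths, is the delicate step. The remaining work is bookkeeping of polylogarithmic and $\epsilon^{-1}$ factors on top of the analysis of~\cite{DurfeeGGP19}, which I would not carry out in detail.
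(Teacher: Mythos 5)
Your overall architecture matches the paper's: per-edge truncated random walks realizing the walk decomposition of $\SC(G,C)$, regeneration of affected walks on updates, and a dynamic spectral sparsifier (Lemma~\ref{lemma:DynamicSpectralSparsifier}) layered on top of the resulting multigraph, with the approximation guarantee cited from~\cite{DurfeeGGP19}. The paper itself only sketches \textsc{Initialize} and defers the rest to~\cite{DurfeeGGP19}, so most of your bookkeeping is in the right spirit.

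However, there is a genuine gap in how you obtain the per-vertex occurrence bound $O(\beta^{-2}\epsilon^{-2}\log^2 n)$, and this bound is load-bearing: it is exactly what caps the number of walks regenerated per update and hence every amortized-time and recourse estimate. You claim it follows from ``a balls-into-bins concentration bound'' given the short-walk invariant. It does not: the walks are random walks, which concentrate on high-degree vertices, so a hub vertex can lie on $\Omega(m\epsilon^{-2}\log n)$ of the walks no matter how short each walk is, and a single edge update at such a vertex would then be ruinously expensive. The paper's fix --- and the one modification of~\cite{DurfeeGGP19} it explicitly flags as crucial --- is deterministic, not probabilistic: after sampling all walks, the total number of vertex occurrences is $O(m\beta^{-1}\epsilon^{-2}\log^2 n)$, so by a Markov/averaging argument at most $\beta m$ vertices can exceed occurrence $O(\beta^{-2}\epsilon^{-2}\log^2 n)$; those top-$\beta m$ heavy vertices are promoted into $C$ and the walks are shortcut at them, after which \emph{every} remaining non-terminal vertex satisfies the bound. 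Your proof needs this promotion step; without it the occurrence bound is simply false.

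Separately, the step you single out as ``the main obstacle'' --- dynamically maintaining the property that walks from every vertex hit $C$ within $\Otil(1/\beta)$ steps --- is a non-issue in the paper's scheme and should not be where you spend effort. Walks are forcibly truncated at $\Theta(\beta^{-1}\log n)$ distinct vertices; $C$ only ever grows (both \textsc{AddTerminal} and edge updates add vertices to $C$), so regenerated walks can only get shorter; and the structure is only required to survive $O(\beta m)$ operations before the outer hierarchy rebuilds it, which also keeps $|C|=\Theta(\beta m)$. The delicacy you anticipate there is absorbed entirely by the monotonicity of $C$ and the operation budget.
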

Here we briefly outlined the process for $\mathrm{Initialize}(G, T, \beta)$ and point out a small modification from the original construction in ~\cite{DurfeeGGP19}.

\begin{enumerate}
    \item Initialize $C$ with $T$.
          For each edge $uv$, add both $u, v$ to $C$ with probability $\beta$.
          Let $H$ be an empty graph.
    \item For each edge $e=uv$, sample 2 random walks $w_u, w_v$ starting from $u$ and $v$ respectively.
          Such random walk is generated until either a vertex in $C$ is hitted or $\Theta(\beta^{-1}\log{n})$ distinct vertices is visited.
          For both walks hit vertices $a, b \in C$, Let $r$ be the resistance between $ab$ in the path $w_u+e+w_v$.
          An edge $ab$ with weight $1/r$ is added to $H$.
    \item Repeat above step for $\rho \coloneqq \epsilon^{-2}\log{n}$ times and scale edge weights in $H$ by $1/\rho$.
          Therefore, there are $O(m\epsilon^{-2}\log{n})$ walks generated and their total size is $O(m\beta^{-1}\epsilon^{-2}\log^2{n})$.
          Balanced binary search trees are used to maintain these walks.
          Also, a reverse index for each vertex to locate the position in every walk containing it.
    \item Identify the top $\beta m$ vertices with most occurrence in these $O(m\epsilon^{-2}\log{n})$ walks, add them to $C$ as well.
          Meanwhile, shortcut those walks involving these vertices.
          By a Markov-type argument, vertex not in $C$ has occurrence at most $O(\beta^{-2}\epsilon^{-2}\log^2{n})$.
    \item Incur a dynamic $(1+\epsilon)$-spaectral sparsifier from Lemma~\ref{lemma:DynamicSpectralSparsifier} on $H$.
          The resulting sparse graph $\Htil$ is the desired approximator for $\SC(G, C)$.
\end{enumerate}
The only difference from ~\cite{DurfeeGGP19} is the 4th step, which is crucial for efficiency.
As this step can be viewed as performing $\beta m$ $\mathrm{AddTerminal}(u)$ operations, which does not affect the approximation guarantee.

\subsection{The Main Result}
In this section, we prove the main theorem for dynamic effective resistance.
The high-level idea is to utilize the Fact~\ref{fact:RecursiveSchurComplement} and build layers of Schur Complements.
The effective resistance computation is performed in the last layer, which has the smallest size.

\begin{proof}[Proof of Theorem~\ref{theorem:DynammicER}]
    Let $\beta = n^{-1/(3d+3)}$.
    First we incur a dynamic $(1 + \epsilon)$-spectral-sparsifier on $G$.
    Let $G_0$ be the sparsified $G$.
    Define a chain of graphs $G_1, G_2, \ldots G_d$, where $G_{i+1}$ is the sparsified Schur Complment on $\beta|E(G_i)|$ vertices ,i.e.  $G_{i+1} \approx_\epsilon \SC(G_i, C_i)$ for some $C_i$ of size $\beta|E(G_i)|$.
    Each $G_{i+1}$ is maintained using Lemma~\ref{lemma:DynamicSchurComplement} on $G_i$ with $\beta$ and error term $\epsilon$.
    Also, we need to rebuild $G_{i+1}$ every $\beta|E(G_i)|$ steps.
    For every edge updates, it is propagated down to $G_d$.
    For $\mathrm{ER}(s, t)$ query, both $s, t$ are added to the terminal set of every $G_i$s.
    And then we incur Lemma~\ref{lemma:LaplacianSolver} to compute $\ER^{G_d}(s, t)$, which gives a $(1+\epsilon)^d=1+O(d\epsilon)$-approximation.
    Now we analyze the running time for this data structure.
    Let $n_i = |V(G_i)|$ and $m_i = |E(G_i)|$, it is clear that $S \coloneqq \log^9{n}\epsilon^{-2} = \frac{m_i}{n_i}, \forall i$.
    Also, one edge update or terminal add in $G_i$ creates amortized $\gamma \coloneqq \beta^{-2}\epsilon^{-2}\log^{-2}{n}$ changes to $G_{i+1}$.
    For a single $G_i$, the rebuild cost is spread across $\beta m_i$ operations.
    That is,
    \[
        \frac{m_i\beta^{-4}\epsilon^{-4}\log^4{n}}{\beta m_i} = \beta^{-5}\epsilon^{-4}\log^4{n}
    \]
    which dominates the actual time complexity for each dynamic operations.
    Also, every rebuild creates $m_i$ changes to $G_{i+1}$.
    By distributing them across $\beta m_i$ operations before next rebuild, every change in $G_i$ creates amortized $\beta^{-1}$ number of changes to $G_{i+1}$.
    But $\beta^{-1} = o(\gamma)$, therefore we can still bound the changes in $G_{i+1}$ per change in $G_i$ by $O(\gamma)$.
    For every change in $G$, it creates $S \cdot \gamma^{i-1}$ changes to $G_i$.
    And each change costs $\beta^{-5}\epsilon^{-4}\log^4{n}$ to handle.
    So for every change, the amortized time can be expressed as
    \begin{align*}
        O\left(\sum_{i=1}^{d}{S \cdot \gamma^{i-1} \cdot \beta^{-5}\epsilon^{-4}\log^4{n}}\right) &=
        O\left(\beta^{-(2d+3)}\epsilon^{-(2d+4)}\log^{2d+11}{n}\right) \\
        &= O\left(n^{2/3+1/(3d+3)}\epsilon^{-(2d+4)}\log^{2d+11}{n}\right).
    \end{align*}
    And for $\mathrm{ER}(s, t)$ query, Lemma~\ref{lemma:LaplacianSolver} is ran on $G_d$, which has $m_d = nS(\beta S)^d$ edges.
    So each query, we can bound the time by
    \begin{align*}
        O\left(m_d\log{n}\right) &= O\left(n\beta^d\epsilon^{-(2d+2)}\log^{9d+10}{n}\right) \\ &= O\left(n^{2/3+1/(3d+3)}\epsilon^{-(2d+2)}\log^{9d+10}{n}\right).
    \end{align*}
\end{proof}

\appendix

\section{Proof of Lemma~\ref{lemma:GeneralizedLSST}}

To prove Lemma~\ref{lemma:GeneralizedLSST}, we use the following algorithm from \cite{ABN08} that computes a low-stretch-spanning tree in $O(m\log n)$-time.

\begin{lemma}[\cite{ABN08}]
\label{lemma:LSST}
Given a graph $G=(V,E,l)$, there is an algorithm that computes a spanning tree $T$ such that
\[
    \frac{1}{|E|}\sum_{e=uv \in E}{\frac{d_T(u,v)}{l(e)}} \le \alpha,
\]
where $\alpha = O(\log n \log\log n) = \Otil(\log n)$.
The algorithm runs in $O(m\log n)$-time.
\end{lemma}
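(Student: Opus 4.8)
The plan is to build $T$ by a recursive \emph{low-diameter (petal/cone) decomposition} of the metric $d_G$ induced by $l$, following the Abraham--Bartal--Neiman construction. After a standard rescaling and truncation we may assume all edge lengths lie in $[1,\Delta]$ with $\Delta=\poly(n)$: edges far longer than the current working radius can be treated as rigid (each contributes only $O(1)$ to the stretch sum), and edges far shorter than any cluster diameter are handled at the leaves of the recursion. The recursion operates on a cluster $X$ equipped with a center $x_0$ and a radius bound $r=\max_{v\in X}d_G(x_0,v)$, and partitions $X$ into a central ball $X_0\ni x_0$ together with \emph{petals} $X_1,\dots,X_k$, where each petal is a cone grown around a shortest path from (a point on the boundary of) $X_0$ out to a far target; each part inherits a new center $x_j$ and a radius bound at most $(1+O(1/\log n))\,r$, and a constant number of petals together with the central ball suffice to cover $X$. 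Recursing down to singletons and letting $T$ be the union, over all clusters, of the single edge joining the center of a child cluster to a vertex near the center of its parent cluster, one obtains a spanning tree; since the radius bounds decay geometrically, the decomposition has $O(\log(n\Delta))=O(\log n)$ levels.

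The inequality $l_{uv}\le d_T(u,v)$ is immediate, since $T$ is a subgraph of $G$ with the same edge lengths and shortest paths can only lengthen in a subgraph. For the stretch bound, fix $e=(u,v)$ and let $X_e$ be the deepest cluster of the recursion containing both endpoints. Climbing in $T$ from $u$ up to the common center and back down bounds $d_T(u,v)$ by twice the sum of the radius bounds of the clusters on the decomposition-tree path, which telescopes to $O(\mathrm{rad}(X_e))$ by the geometric decay. It therefore suffices to prove
\[
    \sum_{e=uv\in E}\frac{\mathrm{rad}(X_e)}{l(e)}=O(\log n\log\log n)\cdot m .
\]
This is exactly what the petal decomposition is engineered for: when a petal is carved out of $X$, its radius is selected from a geometric ladder of candidate values at scale $\Theta(r)$ so as to (approximately) minimize the total cost $\sum_{e'}\mathrm{rad}(X)/l(e')$ of the edges $e'$ crossing the new boundary, and a ball-growing / averaging argument (morally ``$\log$ of the ratio between the outer and inner radii of the cone'') shows that the total cut cost charged while processing one cluster is at most $O(\log\log n)$ times the ideal per-edge cost of its edges. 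Summing this $O(\log\log n)$ charge per level over the $O(\log n)$ levels gives $\alpha=O(\log n\log\log n)$.

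The main obstacle --- and the reason the paper invokes this as a black box from \cite{ABN08} rather than reproving it --- is the petal-decomposition step together with its $O(\log\log n)$ quality guarantee: one must design the cones so that every part's radius grows by only a $(1+O(1/\log n))$ factor while $X$ is still covered by $O(1)$ petals plus the central ball, show that a (randomized or greedy) radius choice on the geometric ladder achieves cut cost within an $O(\log\log n)$ factor of the local optimum, and implement the repeated ball/cone growings with a single amortized Dijkstra-type sweep per cluster so that the entire recursion runs in $O(m\log n)$ time, with work charged to the edges removed from the cluster. The auxiliary truncation argument needed to assume $\Delta=\poly(n)$ is routine but has to be carried out carefully to make the $O(\log n)$ bound on the number of levels legitimate.
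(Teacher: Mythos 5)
Since the paper does not prove this statement at all --- it is imported verbatim as a black box from \cite{ABN08} (and used only inside the proof of the generalized, edge-weighted version) --- there is no internal proof to compare yours against. Your write-up is a reasonable outline of how such a tree is built, but as it stands it is a sketch of the cited construction rather than a proof: the three claims that carry all the difficulty --- that each cluster can be covered by a central ball plus $O(1)$ petals whose radius bounds grow only by a $(1+O(1/\log n))$ factor, that the radius choice on the geometric ladder makes the total boundary charge per level $O(\log\log n)$ times the per-edge budget, and that the whole recursion can be implemented in $O(m\log n)$ time --- are exactly the content of the cited work and are only asserted (``this is what the petal decomposition is engineered for''). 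You acknowledge this yourself, which is honest, but it means the proposal does not discharge the lemma; it reduces it to the same external result the paper relies on.

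One attribution point worth fixing: the petal-decomposition argument you describe, with stretch $O(\log n\log\log n)$ and an $O(m\log n)$-time implementation, is the later Abraham--Neiman construction; the FOCS~2008 paper of Abraham, Bartal and Neiman uses a star/cone-decomposition hierarchy and yields the slightly weaker bound $O(\log n\cdot\log\log n\cdot(\log\log\log n)^{3})$ --- which is in fact the $\rho$ appearing in the paper's $j$-tree decomposition lemma. So if you intend to sketch a proof of the stated $O(\log n\log\log n)$ bound, the petal route is the right one, but it is not literally the construction of \cite{ABN08}; conversely, if you stick with \cite{ABN08} you should state the weaker stretch bound. Also note that the displayed inequality $l_{uv}\le d_T(u,v)$ you prove in passing is not part of this lemma (it belongs to the weighted variant proved in the appendix), and since $T$ is a spanning subtree of $G$ that fact is immediate anyway.
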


\begin{proof}[Proof of Lemma~\ref{lemma:GeneralizedLSST}]
In order to use Lemma~\ref{lemma:LSST}, we have to convert the graph into the one without edge weights.
For every $e \in E$, define
\[
    r(e) = 1 + \left\lfloor \frac{l(e) w(e) |E|}{\bw(G)} \right\rfloor,
\]
and create a graph $\overline{G}$ identical to $G$ except we add $r(e) - 1$ more parallel edges for every edge $e \in E$.
Then we apply Lemma~\ref{lemma:LSST} on $\overline{G}$ and return the resulting spanning tree $T$.
Such $T$ satisfies
\begin{align*}
    \sum_{e=uv \in E}\frac{d_T(u, v)r(e)}{l(e)} \le \alpha |E(\overline{G})|
\end{align*}

First, we show the running time by bound the size of $\overline{G}$.
Note that
\begin{align*}
    |E(\overline{G}| = \sum_{e \in E}r(e) \le
    \sum_{e \in E}\left(1 + \frac{l(e) w(e) |E|}{\bw(G)} \right) \le
    |E| + |E| = 2|E|.
\end{align*}
Thus, the algorithm runs in $O(m\log n)$-time.

Next, we show the approximation guarantee.
Observe that for every edge $e$,
\begin{align*}
    r(e) \ge \frac{l(e) w(e) |E|}{\bw(G)} \ge
    \frac{l(e) w(e)}{\bw(G)}\frac{\sum_{f \in E}r(f)}{2}.
\end{align*}
Plug in this lower bound and we have
\begin{align*}
    \alpha \sum_{e \in E}r(e) \ge
    \sum_{e=uv \in E}\frac{d_T(u, v)r(e)}{l(e)} \ge
    \sum_{e=uv \in E}\frac{d_T(u, v)w(e) \sum_{f \in E}r(f)}{2\bw(G)},
\end{align*}
and hence,
\begin{align*}
    2 \alpha \bw(G) \ge \sum_{e=uv \in E}d_T(u, v)w(e).
\end{align*}

Such $T$ satisfies the requirement of this lemma.

\end{proof}

\section{Multiplicative-Weight-Update methods}

\subsection{Proof of Theorem~\ref{lemma:J1MetricDecompMWU}}
\label{sec:J1MetricDecompMWUProof}

\KTree*

Let $G=(V,E,l)$ be a graph.
Let $\{H_i\}_i$ be the set of graphs in $\CG$ such that $G \preceq^1 H_i \preceq^1_{t_i} G$.
Introduce a coefficient $\lambda_i$ for each $H_i$.
These $\lambda_i$'s are initially zero and in the end only a small number of them will become nonzero.
For a given graph $H_i = (V, E_i, l_i)$ that $t_i$-routes $G$, recall the definition of $\eta_{H_i}(e) = l_{H_i}(e) / l(e), \forall e \in G$, stretch of $e$ routed by $H_i$.

Following approach from \cite{Madry10} and \cite{Racke08}, let $M$ be an $|E| \times N$ matrix, $N$ being the cardinality of $\{H_i\}_i$, with $M_{e, i} = \eta_{H_i}(e)$.
Let $\blbd = (\lambda_1, \ldots, \lambda_N)$ be a vector corresponding to a convex combination of $\{H_i\}_i$.
If $\max{M\blbd} \le \alpha$ for some $\alpha > 0$, we have
\begin{align*}
    \forall e \in E, \sum_i{\lambda_i \eta_{H_i}(e)} &\le \alpha \text{, and thus} \\
    \forall e \in E, \sum_i{\lambda_i l_{H_i}(e)} &\le \alpha l(e) \\
\end{align*}

Consider the following constrains:
\begin{align*}
    \texttt{lmax}(M\blbd) &\le 3\alpha \\
    \sum_i{\lambda_i} &= 1 \\
    \lambda_i &\ge 0, \forall i \\
\end{align*}
where $\texttt{lmax}(\bx) = \ln{\sum_{e \in E}{\exp(x_e)}} \ge \max_{e \in E}{x_e}$.
Any valid $\blbd$ would corresponds to a $(3\alpha, \CG)$-distance-decompostion of $G$ for $\alpha \ge \ln m$.

Here we present some known fact about $\texttt{lmax}$ function:
\begin{fact}
\[
    \forall \bx \ge 0, \max_{e \in E}{x_e} \le \texttt{lmax}(\bx) \le \max_{e \in E}{x_e} + \ln m
\]
\end{fact}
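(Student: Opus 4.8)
The plan is to establish the two inequalities of the sandwich separately, each by an elementary monotonicity argument applied to the finite sum defining $\texttt{lmax}$. Throughout write $m=|E|$ and fix an index $e^\star\in\arg\max_{e\in E}x_e$, so that $\max_{e\in E}x_e=x_{e^\star}$. Recall $\texttt{lmax}(\bx)=\ln\sum_{e\in E}\exp(x_e)$.

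For the lower bound, I would note that every summand $\exp(x_e)$ is strictly positive, so in particular $\sum_{e\in E}\exp(x_e)\ge\exp(x_{e^\star})$; since $\ln$ is strictly increasing, applying it to both sides yields $\texttt{lmax}(\bx)\ge\ln\exp(x_{e^\star})=x_{e^\star}=\max_{e\in E}x_e$. For the upper bound, I would use that $\exp$ is increasing, hence $\exp(x_e)\le\exp(x_{e^\star})$ for each of the $m$ edges $e\in E$; summing gives $\sum_{e\in E}\exp(x_e)\le m\exp(x_{e^\star})$, and taking logarithms together with $\ln(m\exp(x_{e^\star}))=\ln m+x_{e^\star}$ gives $\texttt{lmax}(\bx)\le\max_{e\in E}x_e+\ln m$. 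Combining the two bounds proves the claim.

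I do not anticipate any real obstacle: the only points requiring a moment's care are that the sum ranges over the finitely many edges of $E$ (so the factor $m$ in the upper bound is legitimate) and that the natural logarithm is monotone (so it may be pushed through the inequalities without reversing them). I would also remark in passing that the hypothesis $\bx\ge 0$ is never used; both inequalities hold for every real vector $\bx$, and the statement simply records that $\texttt{lmax}$ approximates the coordinatewise maximum with additive error in $[0,\ln m]$.
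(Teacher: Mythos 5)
Your proof is correct and is exactly the standard argument this fact rests on (the paper states it as a known fact without proof): bound the sum below by its largest term and above by $m$ times that term, then take logarithms. Your side remark is also right that the hypothesis $\bx \ge 0$ is superfluous; the bounds hold for arbitrary real $\bx$.
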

\begin{fact}
\label{fact:lmaxConvex1}
Define
\[
    \texttt{partial}_e(\bx) \coloneqq
    \frac{\partial\texttt{lmax}(\bx)}{\partial x_e} = 
    \frac{\exp(x_e)}{\sum_f\exp(x_f)}.
\]
We have
\[
    \forall \bx, \beps \ge 0, \beps \le 1,
    \texttt{lmax}(\bx+\beps) \le
    \texttt{lmax}(\bx) + 2\sum_e{\epsilon_e\texttt{partial}_e(\bx)}
\]
\end{fact}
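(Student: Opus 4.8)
The plan is to reduce the claimed bound to two elementary one-variable inequalities: $e^{t} \le 1 + 2t$ for $t \in [0,1]$, and $\ln(1+t) \le t$ for $t \ge 0$. Neither requires anything beyond calculus, so the argument will be short; the only substantive choice is the constant $2$, which is exactly what makes the first inequality valid on the whole interval $[0,1]$.

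First I would rewrite the left-hand side by pulling the perturbation inside the exponential:
\[
\texttt{lmax}(\bx + \beps) = \ln \sum_e e^{x_e + \epsilon_e} = \ln \sum_e e^{x_e} e^{\epsilon_e}.
\]
Since $0 \le \epsilon_e \le 1$ for every $e$ by hypothesis, the inequality $e^{t} \le 1 + 2t$ on $[0,1]$ applies to each coordinate $\epsilon_e$. This holds because $g(t) := 1 + 2t - e^{t}$ is concave on $\mathbb{R}$ (as $g''(t) = -e^{t} < 0$) and nonnegative at the endpoints, $g(0) = 0$ and $g(1) = 3 - e > 0$, hence $g \ge 0$ throughout $[0,1]$. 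Substituting termwise gives
\[
\sum_e e^{x_e} e^{\epsilon_e} \le \sum_e e^{x_e}\,(1 + 2\epsilon_e) = Z + 2\sum_e \epsilon_e\, e^{x_e},
\]
where I set $Z := \sum_f e^{x_f}$.

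Next I would factor $Z$ out of the logarithm and apply the standard bound $\ln(1+t) \le t$:
\[
\texttt{lmax}(\bx + \beps) \le \ln\!\Bigl(Z + 2\sum_e \epsilon_e\, e^{x_e}\Bigr) = \ln Z + \ln\!\Bigl(1 + \tfrac{2\sum_e \epsilon_e\, e^{x_e}}{Z}\Bigr) \le \ln Z + \frac{2\sum_e \epsilon_e\, e^{x_e}}{Z}.
\]
Finally, recognizing that $\ln Z = \texttt{lmax}(\bx)$ and $e^{x_e}/Z = \texttt{partial}_e(\bx)$ yields exactly
\[
\texttt{lmax}(\bx + \beps) \le \texttt{lmax}(\bx) + 2\sum_e \epsilon_e\, \texttt{partial}_e(\bx),
\]
as desired.

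I do not anticipate a real obstacle: the statement is a routine smoothness estimate for the soft-max, of the kind used in multiplicative-weight analyses (cf.\ Racke~\cite{Racke08}, Madry~\cite{Madry10}). The only point deserving a second look is that a first-order bound $e^{t} \le 1 + t$ would fail near $t = 1$, which is precisely why the factor $2$ appears and why the hypothesis $\beps \le 1$ (componentwise) is invoked — both should be flagged explicitly in the write-up.
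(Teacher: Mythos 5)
Your proof is correct. The paper states this fact without giving a proof, and your argument --- applying $e^{t}\le 1+2t$ termwise on $[0,1]$ (justified via concavity of $1+2t-e^{t}$ with nonnegative endpoint values) and then $\ln(1+t)\le t$ --- is exactly the standard soft-max smoothness estimate used in this multiplicative-weights literature; every step checks out, and indeed only the componentwise hypothesis $0\le\beps\le 1$ is needed, not $\bx\ge 0$.
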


\begin{fact}
\label{fact:lmaxConvex2}
Define 
\[
    \texttt{partial}_i(\blbd) \coloneqq
    \frac{\partial\texttt{lmax}(M\blbd)}{\partial \lambda_i} = 
    \sum_e{\eta_{H_i}(e) \cdot \texttt{partial}_e(M\blbd)} = 
    \sum_e{\frac{l_{H_i}(e)}{l(e)} \cdot \texttt{partial}_e(M\blbd)}.
\]
Let $1_i$ be the all zero vector except $i$-th coordinate being 1.
Recall $\eta(H_i)$ being the maximum stretch when $H_i$ routes $G$.
For any $0 \le \delta_i \le 1 / \eta(H_i)$,
we have
\[
    \texttt{lmax}(M(\blbd + \delta_i 1_i)) \le
    \texttt{lmax}(M\blbd) + 2 \delta_i \texttt{partial}_i(\blbd)
\]
\end{fact}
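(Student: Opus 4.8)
The plan is to obtain Fact~\ref{fact:lmaxConvex2} as an immediate consequence of Fact~\ref{fact:lmaxConvex1}, with essentially no computation beyond checking hypotheses. First I would rewrite the perturbed argument: by linearity of the matrix–vector product, $M(\blbd + \delta_i 1_i) = M\blbd + \delta_i (M 1_i)$, and $M 1_i$ is exactly the $i$-th column of $M$, i.e.\ the vector $\bx^{(i)} \in \Real^E$ with $\bx^{(i)}_e = M_{e,i} = \eta_{H_i}(e) = l_{H_i}(e)/l(e)$. So the quantity we must bound is $\texttt{lmax}(M\blbd + \delta_i \bx^{(i)})$, and the aim is to apply Fact~\ref{fact:lmaxConvex1} with base point $\bx := M\blbd$ and perturbation $\beps := \delta_i \bx^{(i)}$.

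The second step is to verify the hypotheses $\bx \ge 0$ and $0 \le \beps \le 1$ (componentwise). Nonnegativity is clear: all entries of $M$ and of $\blbd$ are nonnegative, so $\bx = M\blbd \ge 0$; likewise $\delta_i \ge 0$ and $\eta_{H_i}(e) \ge 0$, so $\beps \ge 0$. The upper bound is the one place the hypothesis $\delta_i \le 1/\eta(H_i)$ enters: for every edge $e$,
\[
\beps_e = \delta_i\, \eta_{H_i}(e) \le \delta_i \cdot \max_{f \in E}\eta_{H_i}(f) = \delta_i\, \eta(H_i) \le 1,
\]
using the definition $\eta(H_i) = \max_{e\in E}\eta_{H_i}(e)$. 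I expect this componentwise containment of $\beps$ in the unit box to be the only real content of the argument; everything else is bookkeeping, and there is no substantial obstacle.

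Finally, with the hypotheses in hand, Fact~\ref{fact:lmaxConvex1} gives
\[
\texttt{lmax}(M\blbd + \delta_i \bx^{(i)}) \le \texttt{lmax}(M\blbd) + 2\sum_{e}\beps_e\,\texttt{partial}_e(M\blbd) = \texttt{lmax}(M\blbd) + 2\delta_i \sum_{e}\eta_{H_i}(e)\,\texttt{partial}_e(M\blbd),
\]
and by the definition of $\texttt{partial}_i(\blbd)$ recorded in the fact statement, the inner sum equals $\texttt{partial}_i(\blbd)$. Substituting back and recalling $M(\blbd + \delta_i 1_i) = M\blbd + \delta_i \bx^{(i)}$ yields
\[
\texttt{lmax}(M(\blbd + \delta_i 1_i)) \le \texttt{lmax}(M\blbd) + 2\delta_i\, \texttt{partial}_i(\blbd),
\]
which is exactly the claimed inequality, completing the proof.
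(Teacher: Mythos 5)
Your derivation is correct: the paper states this fact without a separate proof, and the intended justification is exactly what you wrote — specialize Fact~\ref{fact:lmaxConvex1} at $\bx = M\blbd$ with perturbation $\beps = \delta_i\,(M 1_i)$, use $\delta_i \le 1/\eta(H_i)$ to get $\beps \le 1$ componentwise, and identify $\sum_e \beps_e\,\texttt{partial}_e(M\blbd)$ with $\delta_i\,\texttt{partial}_i(\blbd)$ via the definition. No gaps; this matches the paper's (implicit) argument.
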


\begin{proof}[Proof of Theorem~\ref{lemma:J1MetricDecompMWU}]
The vector $\blbd$ is found as follows.
Starting with $\blbd = 0$.
As long as $\sum_i\lambda_i < 1$, we define edge weights $\bw$ with $w(e) = \texttt{partial}_e(M\blbd) / l(e)$.
By condition~\label{MWU1}, we can compute $H_{i(\bw)}$ such that
\[
    \bw(H_{i(\bw)}) = \sum_e{l_{H_{i(\bw)}}(e)\frac{\texttt{partial}_e(M\blbd)}{l(e)}} \le
    \alpha \sum_e{l(e)\frac{\texttt{partial}_e(M\blbd)}{l(e)}} =
    \alpha \bw(G).
\]
Next we increase $\lambda_{i(\bw)}$ by $\min\{1/\eta(H_{i(\bw)}), 1 - \sum_i\lambda_i\}$.

Let $\blbd$ be the resulting solution.
We have to make sure $\texttt{lmax}(M\blbd) \le 3\alpha$.
First, we observe that
\begin{align*}
    \bw(H_{i(\bw)}) =
    \sum_e{l_{H_{i(\bw)}}(e)\frac{\texttt{partial}_e(M\blbd)}{l(e)}} = \texttt{partial}_i(\blbd),
\end{align*}
and
\begin{align*}
    \bw(G) =
    \sum_e{l(e)\frac{\texttt{partial}_e(M\blbd)}{l(e)}} = \sum_e\texttt{partial}_e(M\blbd) = 1.
\end{align*}
Therefore, we have $\texttt{partial}_i(\blbd) = \bw(H_{i(\bw)}) \le \alpha \bw(G) = \alpha$.

By Fact~\ref{fact:lmaxConvex2} and $\lambda_{i(\bw)} \le 1/\eta(H_{i(\bw)})$,
at each iteration, we have
\begin{align*}
    \texttt{lmax}(M\blbd) &\le
    \texttt{lmax}(M\mathbf{0}) + 2\sum_i{\lambda_i\bw(H_{i(\bw)})} \\
    &\le \texttt{lmax}(M\mathbf{0}) + 2\sum_i{\lambda_i\alpha} \\
    &\le \ln m + 2\alpha \le 3 \alpha
\end{align*}

Next, we have to show an upper-bound on the number of iterations.
Define the potential function $\Phi(\blbd) \coloneqq \sum_e\sum_i{\lambda_i\eta_{H_i}(e)}$.
Initially, $\Phi(\blbd) = \Phi(\mathbf{0}) = 0$.
The potential is only increasing throughout the algorithm.
At the end, we have $\Phi(\blbd) \coloneqq \sum_e\sum_i{\lambda_i\eta_{H_i}(e)} \le 3\alpha m$ since $\sum_i{\lambda_i\eta_{H_i}(e)} \le 3\alpha, \forall e$.
Observe every time we update $\blbd$, $\Phi(\blbd)$ increases by $|\psi(H_{i(\bw)})|/2 \ge 2\alpha m/k$,
since for every $e \in \psi(H_{i(\bw)})$,
\begin{align*}
    \lambda_{i(\bw)} \eta_{H_{i(\bw)}}(e) \ge
    \frac{\eta_{H_{i(\bw)}}(e)}{\eta(H_{i(\bw)})} \ge
    \frac{1}{2}
\end{align*}

Therefore, by Condition~\ref{MWU3} on lower-bounding the size of $|\psi(H_{i(\bw)})|$, we have at most $1.5k$ iterations and the theorem follows.
\end{proof}

\bibliographystyle{alpha}
\bibliography{literature_ISO}

\end{document}